\documentclass[11pt, lettersize]{article}

\usepackage[UKenglish]{babel}
\usepackage[a4paper,margin=1in]{geometry}

\usepackage{amsmath,amssymb,amsthm}
\usepackage{graphicx}
\usepackage{xcolor}
\usepackage{hyperref}
\usepackage[nameinlink,noabbrev]{cleveref}

\usepackage{booktabs}
\usepackage{algorithm}
\usepackage{algorithmic}
\usepackage{todonotes}
\usepackage{changepage}
\usepackage{cite}
\usepackage{comment}
\usepackage{enumitem}
\usepackage{moresize}
\usepackage{pifont}
\usepackage{xspace}
\usepackage{lineno}
\usepackage{thm-restate}

\bibliographystyle{plainurl}

\title{The Contiguous Art Gallery Problem is in $\Theta(n\log n)$}

\author{
Sarita de Berg\thanks{IT University of Copenhagen, Denmark. \texttt{debe@itu.dk}}
\and
Jacobus Conradi\thanks{University of Copenhagen, Denmark. \texttt{jaco@di.ku.dk}}
\and
Ivor van der Hoog\thanks{IT University of Copenhagen, Denmark. \texttt{ivva@itu.dk}}
\and
Eva Rotenberg\thanks{IT University of Copenhagen, Denmark. \texttt{erot@itu.dk}}
}

\date{}

\newtheorem{problemstatement}{Problem Statement}
\newtheorem{datastructure}{Data Structure}
\newtheorem*{invariants}{Invariants}
\newtheorem{invariant}{Invariant}
\newtheorem*{properties}{Properties}

\graphicspath{{figures/}}

\newcommand{\realRAM}{\textnormal{\texttt{realRAM}}\xspace}
\newcommand{\wordRAM}{\textnormal{\texttt{wordRAM}}\xspace}

\newcommand{\ts}{\textsuperscript}

\newcommand{\core}[2]{\ensuremath{\mathcal{E}[#1,#2]}}
\newcommand{\reducedDom}{\ensuremath{\mathcal{D}}\xspace}
\newcommand{\nex}[1]{\ensuremath{\textnormal{\texttt{next}}(#1)}}
\newcommand{\nexFunc}{\ensuremath{\textnormal{\texttt{next}}}\xspace}

\newcommand{\short}[2]{\ensuremath{S(#1,#2)}}
\def\polylog{\operatorname{polylog}}
\newcommand{\slidingds}{\textsc{Sliding-Window} data structure~(\ref{ds:sliding})\xspace}
\newcommand{\setInter}{\textsc{SetDisjointness}\xspace}
\newcommand{\contArt}{\textsc{Contiguous Art Gallery}\xspace}
\newcommand{\artGall}{\textsc{Art Gallery}\xspace}
\newcommand{\baddomsigma}{\ensuremath{\mathcal{B}_\sigma}\xspace}
\newcommand{\baddom}{\ensuremath{\mathcal{B}}\xspace}

\newcommand{\mysubpara}[1]{%
  \par\vspace{0.3\baselineskip}%
  \noindent\textsf{\textbf{#1}}\hspace{0.5em}%
}

\newtheorem{lemma}{Lemma}
\newtheorem{observation}{Observation}
\newtheorem{definition}{Definition}
\newtheorem{theorem}{Theorem}
\newtheorem{corollary}{Corollary}
\newtheorem{remark}{Remark}

\begin{document}
\nolinenumbers

\maketitle

\begin{abstract}
Recently, a natural variant of the \textsc{Art Gallery} problem, known as the \textsc{Contiguous Art Gallery} problem was proposed. 
Given a simple polygon $P$, the goal is to partition its boundary $\partial P$ into the smallest number of contiguous segments such that each segment is completely visible from some point in~$P$. 
Unlike the classical \textsc{Art Gallery} problem, which is NP-hard, this variant is polynomial-time solvable. 
Three independent submissions to SoCG 2025 presented algorithms for this problem, each achieving a running time of $O(k n^5 \log n)$ (or $O(n^6\log n)$), where $k$ is the size of an optimal solution. 
Interestingly, these results were obtained using entirely different approaches, yet all led to roughly the same asymptotic complexity.

In the \realRAM-model, the prevalent model in computational geometry, we present an $O(n \log n)$-time algorithm, achieving an $O(k n^4)$ factor speed-up over the previous state-of-the-art. 
We also give a straightforward lower bound reduction from the set intersection problem. 
We thus show that the Contiguous Art Gallery problem is in $\Theta(n \log n)$.
\end{abstract}

\paragraph{Funding.} This work was supported by the VILLUM Foundation grant (VIL37507) ``Efficient Recomputations for Changeful Problems'' and by the Carlsberg Foundation, grant CF24-1929.

\paragraph{Acknowledgements.} The authors wish to thank Frank Staals and Jack Spalding-Jamieson for their helpful discussions.

\thispagestyle{empty}
\setcounter{page}{0}

\newpage
\section{Introduction}\label{sec:intro}

\setcounter{page}{1}

The \artGall problem is a classical problem in computational geometry. 
Given a simple polygon $P$ with $n$ vertices, the task is to compute the smallest set of guards such that every point $p \in P$ is visible to at least one guard. 
The problem was first posed by Klee in 1973 and later formalized by Chvátal~\cite{chvatal1975combinatorial}, who proved that $\lfloor n/3 \rfloor$ guards always suffice and that this combinatorial bound is tight.
Since its introduction, the problem has been extensively studied. 
O’Rourke and Supowit~\cite{Rourke1983hard} showed the problem is NP-hard if $P$ is allowed to have holes. Lee and Lin~\cite{lee1986VertexNPhard} strengthened this by proving NP-hardness even when~$P$ is a simple polygon. 
Eidenbenz, Stamm, and Widmayer~\cite{EidenbenzStammWidmayer2001} established APX-hardness, and Bonnet and Miltzow~\cite{Bonnet2020Parametrized} showed the problem is W[1]-hard parametrized by the number of guards. 
Finally, Abrahamsen, Adamaszek, and Miltzow~\cite{abrahamsen2018ETRhard} proved that the problem is $\exists \mathbb{R}$-complete.

\mysubpara{Problem variants.}
The \artGall problem has many variants. We briefly review three types of variants to make two points: that \emph{many} such derivatives are actively studied, and that they are often computationally hard or only have known high-degree polynomial-time solutions.

The first family of variants restricts the structure of the polygon $P$. 
Krohn and Nilsson~\cite{KrohnNilsson2012} showed that the problem remains NP-hard when $P$ is $x$-monotone. 
Schuchardt and Hecker~\cite{SchuchardtHecker1995} established NP-hardness for orthogonal polygons, and Tom\'{a}s~\cite{Tomas2013ThinOrth} extended this to the case of thin orthogonal polygons. 
King and Krohn~\cite{15D_terrain} proved NP-hardness even when $P$ is restricted to be a 1.5D terrain.
A second line of work constrains guard placement.  
Lee and Lin~\cite{lee1986VertexNPhard} proved NP-hardness when guards are restricted to the polygon vertices. 
Rieck and Scheffer~\cite{Rieck2024Dispersive} studied a dispersive variant where guards must maintain a minimum distance from one another, and showed NP-hardness as well. 
A third family of variants modifies the notion of visibility. 
Lee and Lin~\cite{lee1986VertexNPhard} considered \emph{edge guards}, where a point of $P$ is visible if it can be seen from some point on a guarding edge. 
Mahdavi, Seddighin, and Ghodsi~\cite{MAHDAVI2020163} studied edge guards with orthogonal vision that can see through up to $k$ walls for rectilinear polygons in which new guarding edges may be introduced in $P$. 
Biedl and Mehrabi~\cite{Biedl2017Mehrabi} analyzed rectilinear visibility, where two points see each other if there exists a rectilinear path between them contained in the orthogonal polygon $P$ that may contain holes. All of these variants were shown to be NP-hard.
Motwani, Raghunathan, and Saran~\cite{s-covering} showed that this problem with rectilinear vision can be solved in $O(n^{8})$ time in a rectilinear polygon without holes.
Worman and Keil~\cite{Worman2007Decomposition} considered rectangle vision, where two points $p,q \in P$ see each other if there exists a rectangle contained in $P$ having $p$ and $q$ as opposite corners; they obtained an $\tilde{O}(n^{17})$ polynomial-time algorithm.

\mysubpara{Low-polynomial variants.}
Only very few \artGall variants are solvable in low-polynomial time, and all known results rely on very strong structural restrictions. 
Lee and Preparata~\cite{Lee1979ptimal} gave a linear-time algorithm to decide whether a polygon can be guarded by a single guard. 
De Berg, Durocher, and Mehrabi~\cite{DBLP:journals/jda/BergDM17} presented a linear-time algorithm to minimize the number of edge guards under orthogonal vision in a monotone rectilinear polygon. 
Palios and Tzimas~\cite{Palios2013-star-cover} restrict each guard’s visibility region to an orthogonal $r$-star and restrict the polygon to class-3 orthogonal polygons, and obtained near-linear time. 
Biedl and Mehrabi~\cite{BiedlMehrabi2016_rGuardingThinOrthogonal} combined several restrictions---$P$ must be orthogonal~\cite{SchuchardtHecker1995}, thin~\cite{Tomas2013ThinOrth}, and rectilinear vision~\cite{Biedl2017Mehrabi}.

\mysubpara{\contArt.}
Laurentini~\cite{Laurentini1999GuardingWalls} suggested guarding only the boundary of the polygon, based on the observation that in most galleries, artworks are displayed on walls. This leads to the natural question:
`Can~we compute a smallest set of guards such that every point on $\partial P$ is visible to at least one guard?' 
This problem is NP-hard~\cite{Laurentini1999GuardingWalls}, and Stade~\cite{stade2025interiorBoundaryHard} recently proved that it is $\exists \mathbb{R}$-complete. 
Thomas Shermer proposed the closely-related \contArt problem. 
Here, each guard $g$ can only guard a contiguous segment $[u, v]$ of $\partial P$. 
The goal is to compute the smallest set of contiguous guards which together see $\partial P$ (see Figure~\ref{fig:problem_def}).
At SoCG~2025, three independent \emph{submissions}, which were merged into one paper, presented \realRAM solutions with comparable running times but entirely different methods. Since their techniques were not merged, there now exist three ways to solve this problem:

\begin{figure}[t]

    \centering
   \includegraphics{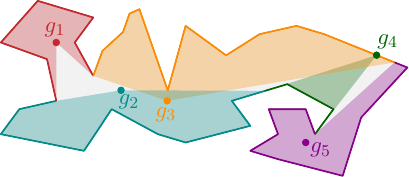}
   \caption{A simple polygon $P$ and five contiguous guards that guard the entire boundary of $P$.}
    \label{fig:problem_def}
\end{figure}

\begin{itemize}[noitemsep]
\item Merrild, Rysgaard, Schou, and Svenning~\cite{merrild2024contiguous} gave an $O(k n^5 \log n)$-time greedy algorithm. 
For any point $u \in \partial P$, let $\nex{u}$ denote the farthest point $v$ along $\partial P$ such that there exists a guard that sees $[u,v]$. 
Starting from $u$, recursively apply $\nexFunc$ until the resulting set $\{ [u, v] \}$ covers $\partial P$. 
Such a sequence defines a \emph{revolution}, which they show yields a solution of size at most $k+1$.  
They compute a revolution in $O(n^2 \log n)$ time and obtain after $O(k n^3)$ revolutions an optimal solution, leading to an $O(k n^5 \log n)$-time algorithm.
    \item  The authors of~\cite{biniaz2024contiguous} constructed a candidate set $\mathbb{C}$ of $O(n^4)$ guards and guaranteed that there exists an optimal solution using a guard from $\mathbb{C}$. 
They compute $\mathbb{C}$, and brute force for each $g \in \mathbb{C}$ the revolution, using $O(k n^5 \log n)$ total time.
\item 
Robson, Spalding-Jamieson, and Zheng~\cite{RobsonSpaldingZheng2024_AnalyticArcCover} took an analytical approach based on the same $\nex{u}$ function. 
They parametrize the boundary $\partial P$ by $[1,2n+1)$, where points $i$ and $i+n$ correspond to the same vertex of $P$. Then we can treat $\nexFunc$ as a function from $[1,n+1)$ to $[1,2n+1)$. 
 $\partial P$ can be partitioned into $O(n^3)$ intervals, where $\nexFunc$ is of constant-complexity in each interval. They extend this to an $O(n^6 \log n)$-time algorithm.
\end{itemize}

\mysubpara{Contribution.}
At first sight, \contArt appears to follow the same pattern as most \artGall variants: guarding $\partial P$ in general is hard, and the restricted version admits high-polynomial solutions. 
Yet, we show that we can significantly improve the running time to $O(n \log n)$.
Our \realRAM algorithm uses linear space and achieves an $O(k n^4)$ factor speed-up over the previous state-of-the-art\footnote{A preliminary version of this paper, which we cannot cite without breaking blindness, achieves  $\tilde{O}(k n^2)$ time.}
We further prove that this bound is optimal via a reduction from sorting, which in the \realRAM has an $\Omega(n \log n)$ lower bound. 
Our result thus provides a significant speedup and tight bounds for the \contArt problem.

\mysubpara{Technical contribution.}
First, we explore the extent to which previous techniques can solve this problem.  
Regarding~\cite{biniaz2024contiguous}, we show the existence of a linear-size candidate set $\mathbb{C}$ which we can compute in near-linear time, yielding a $\tilde{O}(kn^2)$-time algorithm (see the footnote). 
What remains is to remove the brute force approach to compute revolutions. 
We show a sweep-line algorithm that, for a set of points $X \subset \partial P$, computes $\nex{x}$ for all $x \in X$ in $O(n \log n)$ total time.
Applying this second sweep-line algorithm $k$ times yields an $O(k n \log n)$-time algorithm. 

To remove the remaining factor of $k$, we first select an arbitrary $x \in X$ and compute its revolution. If its revolution has $k'$ guards, we know that the optimal solution is either $k'$ or $k' - 1$. 
What remains is to \emph{implicitly} apply the $\nexFunc$ function to all $x \in X$, $k' - 1$ times and verify whether there is a point that then overtakes itself. 
This is the most technical part of this paper, which incorporates the algebraic view of the $\nexFunc$ function introduced by Robson, Spalding-Jamieson, and Zheng~\cite{RobsonSpaldingZheng2024_AnalyticArcCover}. They  showed that $\nexFunc$ can be represented as a piecewise function with $O(n^3)$ constant-complexity pieces. 
We show, through a much more technical argument, that the $\nexFunc$ function can be represented as a piecewise function with only $O(n)$ constant-complexity pieces, and that this representation can be computed in $O(n \log n)$ total time. 
Given $k'$, we  create a simultaneous-evaluation technique that maintains $X$ in a balanced tree $T$ and implicitly applies the $\nexFunc$ function to contiguous subtrees of $T$.
We use this implicit representation to verify if some $x \in X$ has a revolution of size $k' - 1$, yielding the optimal solution.

\section{Preliminaries}\label{sec:prelim}
Our input is a simple polygon $P$: the closed region bounded by a simple closed curve consisting of $n$ vertices defining $n$ edges. 
A vertex of $P$ is a \emph{reflex vertex} if its interior angle exceeds $\pi$. We denote for two points $s,t \in P$ the shortest path from $s$ to $t$, which is a polygonal chain, by $\short{s}{t}$. 
We denote by $\partial P$ the boundary of $P$ and assume that $\partial P$ is given in counter-clockwise order. 
Each edge is an ordered pair $\overline{u\,v}$ following this counter-clockwise ordering, i.e., it is directed from $u$ to $v$.
A point lies \emph{strictly to the right} of $\overline{u\,v}$ if it lies in the open half-plane to the right of the directed line through $u$ and $v$. 
A point \emph{to the right} of $\overline{u\,v}$ may lie on this line. 

\mysubpara{Parameterizing $\partial P$.} 
We assume that for any integer $i \in [1, n-1]$, the consecutive vertices $v_i$ and $v_{i+1}$ of $P$ appear counter-clockwise. 
Consequently, the interior of $P$ lies immediately left of $\overline{ v_i \, v_{i+1}}$. 
For convenience, we define a continuous surjective function $[1, 2n + 1) \rightarrow \partial P$ such that for every vertex $v_j$, both $j$ and $n + j$ map to $v_j$. 
Thus, each point on $\partial P$ can be represented by two real values: one in $[1, n + 1)$ and one in $[n + 1, 2n + 1)$. 
Using this parametrization, we define (open) chains as follows.
A \emph{chain} is a sequence of edges. 
For two points $u, v \in \partial P$, we denote by $[u, v]$ the chain obtained by traversing $\partial P$ from $u$ to $v$ in counter-clockwise order. 
We denote by $(u, v)$ the \emph{open chain}, consisting of all points $x \in [u, v]$ with $x \neq u$ and $x \neq v$.
The chain between any two points on $\partial P$ can be described as $[u, v]$ with $u \in [1, n + 1)$ and $v \in [u, 2n + 1)$.

\mysubpara{Problem statement.}
A point $x$ \emph{sees} a point $y$ if the segment $\overline{x\,y}$ is contained in $P$. 
A \emph{(contiguous) guard} is a tuple $(g, [u, v])$ consisting of a point $g \in P$ and a chain $[u, v]\subseteq\partial P$ such that every point on the chain is visible from $g$. 
The formal problem statement is:  Given a simple polygon $P$ with $n$ vertices, compute a minimum-size set of contiguous guards $G$ such that their corresponding chains cover the entire boundary interval $[1, n + 1]$.

\mysubpara{Domination}
We say that a guard $(g, [u, v])$ \emph{dominates} another guard $(g', [u', v'])$ if $[u', v'] \subseteq [u, v]$. 
It \emph{strictly dominates} $(g', [u', v'])$ if $[u', v'] \subsetneq [u, v]$. 
For a fixed $u \in [1, n + 1)$, we frequently compute a \emph{maximal} $v \in [u, 2n + 1)$ such that there exists a point $g \in P$ for which $(g, [u, v])$ is a guard.
By this we mean that for all other guards $(g', [u, v'])$, it holds that $v'  \in [u, v]$.

\mysubpara{Visibility core.}
For any chain $[u, v] \subseteq \partial P$, we define the \emph{visibility core} $\core{u}{v}$ as the set of points $p \in \mathbb{R}^2$ that lie left of all edges in $[u, v]$. By the following, we may assume that $k \geq 2$:

\begin{observation}[By~\cite{Lee1979ptimal}]\label{obs:leftOfEdges}\label{obs:trivialOneCenter}
If $(g, [u, v])$ is a guard, then $g$ lies left of all edges of $P$ intersecting the open chain $(u, v)$. 
In particular, $g$ lies in $\core{u}{v}$.
Moreover, there exists a guard that can see all of $\partial P$ if and only if $\core{1}{n+1}$ is non-empty which can be computed in linear time.
\end{observation}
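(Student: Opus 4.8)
The plan is to prove the three assertions in order; each reduces to an elementary half-plane argument combined with the local structure of a simple polygon at a boundary point.

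\emph{The first two assertions.} Fix a guard $(g,[u,v])$ and an edge $e=\overline{a\,b}$ of $P$ that intersects the open chain $(u,v)$. Since $(u,v)$ is a relatively open sub-arc of $\partial P$, the intersection $e\cap(u,v)$ is a sub-arc of positive length, so I may pick a point $p$ in its relative interior; then $p$ lies in the relative interior of $e$ and on the chain $[u,v]$, so $g$ sees $p$, i.e.\ $\overline{g\,p}\subseteq P$. In a small neighbourhood of $p$ the polygon $P$ agrees with the closed left half-plane of the directed line through $e$. If $g$ lay strictly to the right of that line, then every point of $\overline{g\,p}$ other than $p$ would lie strictly to the right of the line, hence outside $P$ arbitrarily close to $p$, contradicting $\overline{g\,p}\subseteq P$. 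Thus $g$ lies (weakly) to the left of $e$, proving the first assertion. The second assertion is then immediate: $\core{u}{v}$ is the intersection of the closed left half-planes of the edges appearing in $[u,v]$, and each such edge has a positive-length intersection with the open chain $(u,v)$ (this includes the at most two edges that merely contain the endpoints $u$ or $v$), so $g$ lies left of all of them and hence $g\in\core{u}{v}$.

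\emph{The third assertion.} A guard seeing all of $\partial P$ is a guard of the form $(g,[1,n+1])$, and since $[1,n+1]$ contains every edge of $P$, the second assertion gives $g\in\core{1}{n+1}$; this proves the forward direction. For the converse, suppose $\core{1}{n+1}\neq\emptyset$ and pick $g$ in it, so $g$ lies left of every edge of $P$; I must show $g$ sees every $x\in\partial P$, which is the classical ``kernel of a polygon'' statement. Fix $x\in\partial P$ and let $t^\star\in[0,1]$ be maximal such that $x+t(g-x)\in P$ for all $t\in[0,t^\star]$. If $t^\star<1$, set $z:=x+t^\star(g-x)$; then $z\in\partial P$, and there are points outside $P$ on the ray from $z$ towards $g$ arbitrarily close to $z$. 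If $z$ lies in the relative interior of an edge $e$, then near $z$ the complement of $P$ is the open right half-plane of $e$, forcing the direction $g-x$, and hence $g$ itself, to lie strictly right of $e$ --- a contradiction. If $z$ is a vertex $v_i$ with incident edges $e^-,e^+$, then near $z$ the complement of $P$ is the exterior cone at $v_i$, which --- whether $v_i$ is convex or reflex --- is contained in the union of the open right half-planes of $e^-$ and $e^+$; so $g-x$, and hence $g$, lies strictly right of $e^-$ or of $e^+$, again a contradiction. Therefore $t^\star=1$, so $\overline{x\,g}\subseteq P$ for all $x\in\partial P$ (in particular $g\in P$), and $(g,[1,n+1])$ is a guard that sees all of $\partial P$.

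I expect the main obstacle to be the vertex case in the converse of the third assertion: one must correctly identify the exterior cone at $v_i$ and verify its containment in the union of the two open right half-planes, handling the convex case (where the exterior cone meets each of the two open right half-planes, so the offending direction is strictly right of one incident edge) and the reflex case (where the exterior cone equals the intersection of the two open right half-planes, so the direction is strictly right of both) separately. The remaining steps are one-line half-plane arguments.
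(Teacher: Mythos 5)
Your proof is correct. Note that the paper itself offers no proof of this observation: it is stated as a self-evident fact, the second half being the classical characterization of the kernel of a simple polygon (the star-shapedness criterion going back to Lee and Preparata). Your argument supplies exactly the missing elementary details: the half-plane argument at a point interior to an edge gives the first two assertions (your care in choosing $p$ inside the \emph{open} chain, so that $p$ can be taken in the relative interior of the edge, is the right way to handle the endpoints $u,v$), and the ray-to-first-exit argument with the edge/vertex case split, including the containment of the exterior cone at a vertex in the union of the two open right half-planes for both convex and reflex vertices, correctly establishes the converse that a point left of all edges sees the entire boundary. So your route is a sound, self-contained verification of a statement the paper takes for granted; nothing in it conflicts with how the observation is used later.
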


\section{Technical Overview}
Our primary contribution is a tight analysis of the \contArt problem in the \realRAM-model. 
We present an $O(n \log n)$-time algorithm, achieving a speedup by a factor of $\tilde{O}(k n^4)$ compared to previous works~\cite{theEnemy,merrild2024contiguous,biniaz2024contiguous,RobsonSpaldingZheng2024_AnalyticArcCover}.
We complement this by a relatively simple lower bound of $\Omega(n \log n)$ in the \realRAM-model. 
Our improvements arise from three key ideas:

For any $x \in [1, n + 1)$, let $\nex{x}$ denote the maximal value $v$ such that there exists a guard $(g, [u, v])$. 
We define a \emph{revolution} of $x$ as the sequence of guards obtained by applying $\nexFunc$ recursively until the point overtakes $x$. 
Our first contribution is to construct a set $X$ of points along $\partial P$ with the property that there exists some $x \in X$ for which its revolution yields an optimal solution.
We construct a set $X$ of linear size and compute it in $O(n \log n)$ time which already yields a $\tilde{O}(k n^2)$-time algorithm using the revolution techniques from~\cite{theEnemy,merrild2024contiguous,biniaz2024contiguous}.

Our second idea is a sliding-window algorithm that, for an ordered pair of indices $(i, j)$, repeatedly increments $i$ or~$j$ until $i = n + 1$. 
We use the sliding window to compute $\nex{x}$ for all $x \in X$ in $O(n \log n)$ total time using only $O(n)$ space. 
By applying the sliding-window procedure $k$ times, recursively, starting with $X$ we obtain an $O(k n \log n)$-time algorithm.

Our final idea is by far the most technical, and deviates most from~\cite{theEnemy,merrild2024contiguous,biniaz2024contiguous,RobsonSpaldingZheng2024_AnalyticArcCover}. 
We partition $\partial P$  into a \emph{linear} number of intervals such that, for each interval $[x_1, x_2]$, the function $\nex{x}$ on the domain $[x_1, x_2]$ can be described by a constant-complexity function.
Given this function representation, we show how to lazily apply and implicitly evaluate the \nexFunc function to all $x \in X$.
For some arbitrary $x \in X$ we compute its revolution and suppose that it has size $k'$. 
Then either $k = k'- 1$ or $k = k'$. 
By our lazy evaluation technique, we show how to verify in $O(n \log n)$ time whether there exists a $\hat{x} \in X$ whose revolution has size $k' - 1$. 
Given $\hat{x}$, we can compute the optimal solution by computing its revolution, using $O(n \log n)$ total time. 

\subsection{Upper bound}

For any $x \in [1, n + 1)$, let $\nex{x}$ denote the maximal value $v \in [x, x+n]$ such that there exists a guard $(g, [u, v])$ and denote by $\nexFunc^i(x)$ the result of recursively applying this function $i$ times. 
Our goal is to compute a set $X \subseteq \partial P$ such that for at least one $x \in X$, $\nexFunc^k(x) \geq x + n$.

\mysubpara{Classifying guards.}
\Cref{sec:classification} introduces the structural results to compute $X$ and the $\nexFunc$ function. 
We define three collections of highly structured guards such that every guard is dominated by one of these three types. 
Specifically, we partition the set of all guards into \emph{good} and \emph{bad} guards. 
Intuitively, a guard $(g, [u, v])$ is \emph{good} if the angle $\sphericalangle(u, g, v)$ is at most~$\pi$, and \emph{bad} otherwise. 
We then define three \emph{dominators}: the good, the bad, and the ugly. 

Good dominators are well-structured: for any ordered pair of indices $(i, j)$ there exists at most one good dominator, implying that there are $O(n^2)$ such dominators. Moreover, every good guard is dominated by a good dominator. 
Bad dominators are slightly less structured: for each ordered pair $(i, j)$, we consider the polygon $\core{i-1}{j+1}$ and define a guard $(g, [u_{\max}, v_{\max}])$ for each vertex $g$ of $\core{i-1}{j+1}$, where $[u_{\max}, v_{\max}]$ is some maximal chain visible to $g$. 
Since $\core{i-1}{j+1}$ has $O(n)$ vertices, there are $O(n^3)$ bad dominators. The set of ugly dominators is not as susceptible to discretization. Instead, we prove that each bad guard is dominated by either a bad or an ugly dominator, and that ugly dominators can be computed on the fly. 

We can now provide the intuition for why $\nex{x}$ can be evaluated in logarithmic time. Precompute all good and bad dominators $(g, [u, v])$ and store their corresponding intervals $[u, v]$ in a segment tree $T$. 
For any $x \in \partial P$, perform a stabbing query on $T$ and return the maximal right endpoint $r$ among the intervals stabbed by $x$. 
Let $v = \nex{x}$. 
The guard $(g, [u, v])$ is by maximality either a good, bad or ugly dominator. 
If it is a good or bad dominator, then $v = r$.
Compute the ugly dominator for $x$ on the fly and compare it to $r$ to obtain $\nex{x}$. 

For this approach to run in $O(n \log n)$ time, the segment tree must contain only $O(n)$ dominators. 
To ensure this, we define the \emph{reduced good dominators} $\reducedDom$ as those good dominators that are not \emph{strictly} dominated by another good dominator. 
Similarly, we define the \emph{reduced bad dominators} $\baddom$ as the bad dominators that are not \emph{strictly} dominated by \emph{any} other guard. 
These distinct definitions matter because the reduced good dominators serve a dual purpose:

\mysubpara{Defining a set \boldmath$X$.}
The reduced good dominators $\reducedDom$ are not only essential for computing $\nexFunc$, but they also induce the set $X$. 
Specifically, we prove that there always exists an optimal solution $\textsc{Opt}$ to the \contArt problem such that at least one guard $(g, [u, v]) \in \textsc{Opt}$ either belongs to $\reducedDom$ or has $u$ as a vertex of $P$. 
We define $X$ as the set of all vertices of $P$, together with all points $u \in \partial P$ such that there is a reduced good dominator $(g, [u, v])$. 

\mysubpara{Sliding windows.} 
We use a sliding window over indices $(i,j)$ to compute (a superset of) $\reducedDom$.
The remainder of our 
algorithms rely on a different but very specific sliding window: 

\begin{restatable}{definition}{slidingsequence}
    \label{def:sequence}
    We define a \emph{sliding sequence} as an ordered set of index pairs $\{(i, j)\}$ of linear size where, for every pair of consecutive elements $((i, j), (i', j'))$, we have $(i' - i,\, j' - j) \in \{(1, 0), (0, 1)\}$.  
    Given a polygon $P$, a sliding sequence $\sigma$ is said to be \emph{conforming} if for each $u \in [1, n + 1)$, with $u \in [i - 1, i)$ and $\nex{u} \in (j, j + 1]$, then $(i, j) \in \sigma$.
\end{restatable}

Let $\sigma$ be a conforming sliding sequence.
Each bad dominator has a defining index pair $(i, j)$. Since reduced bad dominators are not strictly dominated by \emph{any} other guard, every guard in $\baddom$ has its corresponding index pair $(i, j) \in \sigma$. 
From this observation, we upper-bound $|\baddom|$.  
Consider iterating over all $(i, j) \in \sigma$. 
By point-line duality, maintaining the convex visibility core $\core{i-1}{j+1}$ during this iteration corresponds to maintaining the convex hull of a point set under first-in–first-out updates. 
Chan, Hershberger, and Pratt~\cite{Chan2019Time-Windowed} show that the convex hull of such an update sequence of length $n$ has $O(n)$ vertices.
This implies an algorithm to dynamically maintain $\core{i-1}{j+1}$ subject to incrementing $i$ and $j$, and that $|\baddom| \in O(n)$.
It remains to compute such a $\sigma$, and derive the sets $\reducedDom$ and $\baddom$, in $O(n \log n)$ time.

\mysubpara{Computing $\sigma$ and the reduced dominators.} In \Cref{sec:computingDiscreteDominators} we compute linear-size sets $\reducedDom'$ and $\baddomsigma$ where $\reducedDom \subseteq \reducedDom'$ and $\baddom \subseteq \baddomsigma$.
We first compute $\reducedDom'$ using a sliding-window algorithm. 
Note that we cannot use a conforming sliding sequence $\sigma$ for this task, as a reduced good dominator may be strictly dominated by a guard that is not a good dominator. 
Consequently, the defining pair $(i, j)$ for a guard $(g, [u, v]) \in \reducedDom$ need not belong to $\sigma$. 

For all subsequent algorithms, we use a conforming sliding sequence $\sigma$, which we compute by starting from $(i, j) = (1, 1)$ and repeatedly incrementing either $i$ or $j$ using  a simple rule:  
while there exists a guard $(g, [u, v])$ such that $[i, j + 1] \subseteq [u, v]$, we increment $j$ and add the new pair $(i, j)$ to $\sigma$;  
otherwise, we increment $i$ and add the new pair $(i, j)$ to $\sigma$.  
By~\cite{Chan2019Time-Windowed}, we can maintain $\core{i - 1}{j + 2}$ as a balanced binary tree while incrementing $i$ and $j$.  
We show that, given $i$, $j$, and $\core{i - 1}{j + 2}$, this decision can be made in $O(\log n)$ time, yielding an overall $O(n \log n)$ algorithm for constructing $\sigma$.
From $\sigma$, we derive a linear-size set $\baddomsigma$ containing $\baddom$. We store the visibility chains of guards in $\reducedDom'$ and $\baddomsigma$ in a segment tree $T$. Thus, we may obtain for any point $x \in \partial P$, the guard $(g, [u, v]) \in \reducedDom' \cup \baddomsigma$ that has $x \in [u, v]$ and maximizes $v$.

\mysubpara{An \boldmath$O(k n \log n)$ algorithm.} 
In \Cref{sec:knlogn}, we consider the set $X$ and iterate over all $(i, j) \in \sigma$. 
For each $x \in X$, we perform a stabbing query on $T$ to find the maximum value $r$ such that there exists a guard $(g, [x, r]) \in \reducedDom' \cup \baddomsigma$. 
The value $\nex{x}$ differs from $r$ only if it is realised by an ugly dominator. 
The value $\lfloor r \rfloor$ serves as a lower bound for the index $j$ such that $\nex{x} \in (j, j + 1]$, and by definition $(i, j) \in \sigma$. 
These observations yield an algorithm that iterates over $\sigma$ and computes $\nex{x}$ for all $x \in X$ in $O(n \log n)$ total time. 
Repeating this process $k$ times produces an optimal solution in $O(k n \log n)$ time. 
This is already a major improvement over the state-of-the-art~\cite{theEnemy,merrild2024contiguous,biniaz2024contiguous,RobsonSpaldingZheng2024_AnalyticArcCover} and serves as a crucial subroutine for our $O(n \log n)$-time algorithm.

\mysubpara{An \boldmath$O(n \log n)$ algorithm.} 
We represent $\nexFunc$ as a piecewise, possibly degenerate, M\"obius transformation. 
This approach was first proposed in~\cite{theEnemy,RobsonSpaldingZheng2024_AnalyticArcCover}, where the resulting function consisted of $O(n^3)$ pieces and was computed in $\tilde{O}(n^5)$ time. 
In contrast, we use our dominators to define a function with only $O(n)$ pieces, which can be computed in $O(n \log n)$ time.
Specifically, in \Cref{sec:functionArrangement}, we partition the boundary $\partial P$ into intervals $I$ such that either:
\begin{itemize}[nolistsep]
    \item a reduced good dominator $(g,[u,v])$ sees $[x,\nex{x}]$ for all $x\in I$, so $\nex{x}=v$;
    \item a reduced bad dominator $(g',[u',v'])$ sees $[x,\nex{x}]$ for all $x\in I$, so $\nex{x}=v'$; or
    \item for all $x\in I$, $\nex{x}=\frac{A+Bx}{C+Dx}$ for constants $A,B,C,D$ depending only on $I$.
\end{itemize}
As there are only $O(n)$ reduced good and bad dominators, the first two cases introduce at most $O(n)$ interval boundaries. 
We next prove that there are at most $O(n)$ intervals $\mathcal{U}$ where
for all $x \in \mathcal{U}$, no reduced good or bad dominator sees $[x, \nex{x}]$, i.e., only an ugly dominator does.

We first prove that the vertices of the shortest path from $x$ to $\nex{x}$ behave in a structured way, unless the optimal solution consists of at most $3$ guards. 
These vertices play a key role in defining the ugly dominator $g_u$, and the resulting structural properties guarantee both the existence and computability of the $O(n)$ M\"obius transformations in $O(n \log n)$ total time.
Finally, in \Cref{sec:nlogn}, we combine all components to obtain our $O(n \log n)$ algorithm. 
We apply our $O(k n \log n)$ algorithm to check whether a solution with $k \leq 3$ guards exists, in $O(n\log n)$ time. 
If not, we use a lazy evaluation technique to compute the smallest $k$ such that there exists a point $x \in X$ for which $\nexFunc^k(x)\geq x+n$, or conversely, $\partial P \subseteq [x,\nexFunc^k(x)]$.

\begin{figure}[b]

    \centering
    \includegraphics[scale=.8]{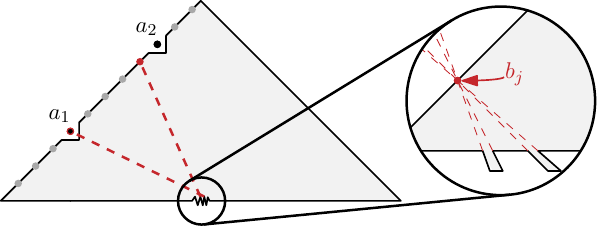}
\caption{The lower-bound construction. Black points belong to $A$, and red points to $B$. }
    \label{fig:sketch}
\end{figure}

\subsection{Lower bound}
In \Cref{sec:lowerBound}, we show a lower bound for the \contArt problem via a reduction from \textsc{Set Disjointness}, where we are given two sets $A, B \subseteq [0, n^3]$, each of size $n$, and must decide whether $A \cap B = \emptyset$. 
While in the \wordRAM there is a deterministic $O(n\log\log n)$-time algorithm for solving this problem (via sorting~\cite{Han2002Deterministic}), in comparison-based models of computation such as the \realRAM, it has a 
worst-case lower bound of $\Omega(n \log n)$, 
even if $A$ is already sorted~\cite{YaoLowerBound}.
Given $A, B \subseteq [1, n^3]$ \hbox{with $A$ sorted, we can in linear} time construct a simple polygon $P$ such that $A \cap B = \emptyset$ if and only if $P$ can be guarded with $2|B| + 2$ guards (see \Cref{fig:sketch}). In particular, we specify $P$ by providing its edges in clockwise order. 

We begin with a 
triangle $T$ on the vertices $(0,0)$, $(n^3,n^3)$, $(2n^3,0)$, where the edge $t:$ $(0,0)$ to $(n^3,n^3)$ represents the interval $[0,n^3]$.
We construct the polygon in clockwise order, starting from $(0,0)$. 
For each of the increasing values $a_i \in A$, we
make an indent in $t$ 
ensuring that the point corresponding to $a_i$ does not lie in the polygon $P$. 
Next, we partition a small segment $(n^3,0)$ to $(n^3 + n, 0)$ on the bottom edge 
($(2n^3,0)$ to $(0,0)$) 
into $|B|$ intervals $I_1,I_2,\ldots$ with $I_j$ from $(n^3 + 0.5 j, 0)$ to $(n^3 + 0.5 j + 0.25j, 0)$.
We iterate over $j$ from low to high and introduce for $b_j \in B$ a \emph{pocket}: two consecutive extensions of $I_j$ such that a guard can see $I_j$ if and only if it stands at the point $(b_j,b_j) \in t$ 
Hence, there exists a guard that can cover $I_j$ if and only if $b_j \notin A$.
Note that this can be done even if $B$ is not sorted, as the placement of this gadget on $I_j$ depends on $j$ and not on $b_j$. 
Finally, we place a \emph{blocker} between consecutive intervals $I_j$ and $I_{j+1}$ such that each $I_j$ and each blocker requires a unique guard (ensuring that we are robust against repetitions in $B$).
The resulting polygon $P$ (specified by its edges in clockwise order) can be constructed in linear time and guarded with $2|B| + 2$ contiguous guards if and only if $A \cap B = \emptyset$, establishing an $\Omega(n \log n)$ lower bound.

\section{A combinatorial classification of guards}\label{sec:classification}
We say each guard is either \emph{good} or \emph{bad}. In Fig.~\ref{fig:problem_def}, $\{ g_2, g_3, g_4 \}$ are good while $\{ g_1, g_5 \}$ are bad.

\begin{definition}
    Let $u$ and $v$ be points on $\partial P$ with $u < v$. 
    A contiguous guard $(g, [u, v])$ is \emph{good} if $g \neq u$ and $g \neq v$ and the angle $\sphericalangle(u,g,v) \leq \pi$, and \emph{bad} otherwise.
\end{definition}

We will define three collections of guards called \emph{dominators}: the good, the bad, and the ugly. 
We prove that all good guards are dominated by a good dominator, which induces a set $X$ of $O(n)$ points along $\partial P$ such that there exists an optimal solution that includes a guard $(g, [x, v])$ for $x \in X$.  
We prove that all bad guards are dominated by either a bad or an ugly dominator (see Figure~\ref{fig:flowchart}). To this end, we first make some observations:

\begin{lemma}\label{lem:verticesAreGood}
    Let $(g, [u,v])$ be a guard where, for the fixed vertex $u$, $v$ is maximal. Then either $v$ is a reflex vertex of $P$ or $\overline{v \,g}$ contains a reflex vertex of $P$ in its interior. 
\end{lemma}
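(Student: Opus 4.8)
The plan is to argue by contradiction: suppose $v$ is not a reflex vertex and $\overline{v\,g}$ contains no reflex vertex in its interior, and show that $g$ could see slightly past $v$ along $\partial P$, contradicting maximality of $v$. The key geometric fact is that the only obstructions to extending visibility along the boundary are reflex vertices of $P$; if none is "blocking" at $v$, the visible portion of $\partial P$ from $g$ extends strictly beyond $v$.

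First I would make precise what it means for $g$ to see $[u,v]$ but not see any point of $\partial P$ immediately after $v$. Let me walk along $\partial P$ starting from $v$ in counter-clockwise direction. Since $(g,[u,v])$ is a guard, the segment $\overline{g\,v}$ lies in $P$. I would distinguish whether $v$ is a vertex of $P$ or lies in the interior of an edge. If $v$ lies in the interior of an edge $e$, then for points $v'$ slightly past $v$ on $e$, the triangle $\triangle(g,v,v')$ is a thin sliver; I would argue this sliver lies in $P$ unless some reflex vertex of $P$ pokes into it — but any such reflex vertex, being arbitrarily close to the segment $\overline{g\,v}$ as $v'\to v$, would have to lie on $\overline{g\,v}$ itself, i.e.\ in its interior (it cannot be $v$ since $v$ is not a vertex, and it cannot be $g$). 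That contradicts our assumption. Hence $g$ sees a bit past $v$, contradicting maximality. So $v$ must be a vertex of $P$.

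Next, assuming $v$ is a (convex, non-reflex) vertex of $P$, I would look at the two edges of $P$ incident to $v$: the incoming edge $e^-$ and the outgoing edge $e^+$ (in ccw order, so $[u,v]$ uses $e^-$ and the continuation uses $e^+$). Since $v$ is convex, the interior of $P$ near $v$ is the wedge of angle less than $\pi$ between $e^-$ and $e^+$. The segment $\overline{g\,v}$ approaches $v$ from inside $P$, hence from inside this wedge. Now consider points $v'$ in the interior of $e^+$ close to $v$. As before, $\triangle(g,v,v')$ is a sliver converging to $\overline{g\,v}$; it lies in $P$ unless some reflex vertex of $P$ intrudes, and any such intruding vertex, in the limit $v'\to v$, lies on $\overline{g\,v}$ — so in its interior (it is not $v$, not $g$). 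Again this contradicts the hypothesis. Therefore $g$ sees an initial portion of $e^+$ past $v$, contradicting maximality of $v$, and the lemma follows.

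The main obstacle I anticipate is handling the "limit" argument rigorously: from "for every $v'$ near $v$ the sliver $\triangle(g,v,v')$ contains a reflex vertex of $P$" I want to conclude "$\overline{g\,v}$ contains a reflex vertex in its interior." Since $P$ has only finitely many reflex vertices, some fixed reflex vertex $r$ must lie in $\triangle(g,v,v')$ for a sequence of $v'\to v$; then $r$ lies in the closed limit $\overline{g\,v}$, and one must rule out $r=v$ (excluded since $v$ is convex, or not a vertex) and $r=g$ (if $g\in\partial P$ is a reflex vertex this needs a separate easy check — e.g.\ then $g=u$ or the chain degenerates, or one picks the sliver on the correct side so $g$ is an endpoint, not interior). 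Care is also needed to ensure the sliver lies on the correct side so that it is genuinely "new" boundary beyond $v$, i.e.\ that $v' \in (v, \cdot)$ in the parametrization and does not wrap back; this uses that the chain $[u,v]$ already has positive length and that $P$ is simple.
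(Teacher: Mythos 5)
Your proposal is correct and follows essentially the same route as the paper: the paper also argues that maximality forces an obstruction for every point $v+\varepsilon$ slightly past $v$, identifies that obstruction as a reflex vertex (there, as an interior vertex of the shortest path from $g$ to $v+\varepsilon$), and lets $\varepsilon\to 0$ to place it either at $v$ or in the interior of $\overline{v\,g}$. Your sliver-triangle phrasing is the same limiting argument; if you want the "a reflex vertex pokes into the sliver" step and the exclusion of $r=g$ for free, take the blocking vertex to be an interior vertex of the shortest path from $g$ to $v'$, which lies in your triangle and is by definition distinct from $g$ and $v'$.
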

\begin{proof}
        By maximality of $v$, $g \neq v$ and there exists an $\varepsilon^* > 0$ such that $\forall \varepsilon \in (0, \varepsilon^*]$ the shortest path from $g$ to $v + \varepsilon$ visits a reflex vertex $x$. 
        If $x = v$ then $v$ is a reflex vertex of $P$. 
        Otherwise, $\overline{v \, g}$ contains $x$ in its interior. 
\end{proof}

\begin{lemma}
\label{cor:left-turning}
Let $(g, [u, v])$ be a bad guard, and suppose that $[u, v]$ is inclusion-wise maximal for this choice of $g$. Then the shortest path from $u$ to $v$ in $P$ is a left-turning chain, with at least one interior vertex if one of $u$ and $v$ is not a vertex of $P$, and at least two interior vertices if neither is a vertex of $P$.

The same left-turning property holds if $g$ and $u$ are fixed and $v$ is chosen maximal; in this case, the shortest path has at least one interior vertex whenever $v$ is not a vertex of $P$.
\end{lemma}

\begin{proof}
    For a guard that is inclusion-wise maximal it cannot be that $g = u$ or $g = v$. That the shortest path from $u$ to $v$ is left-turning follows from the fact that $g$ can see $u$ and $v$ and $\sphericalangle(u, g, v) > \pi$. If $u$ (respectively $v$) is not a vertex of $P$, then \Cref{lem:verticesAreGood} implies that the edge $\overline{u \,g}$ (respectively $\overline{v \,g}$) contains a reflex vertex of $P$ in its interior, which must appear as an interior vertex on the shortest path from $u$ to $v$.

    If only $v$ is maximal, then it again cannot be that $g = v$, but it may be that $g = u$. In this case the shortest path from $u$ to $v$ is the segment $\overline{u\,v}$, which is also left-turning. If $v$ is not a vertex of $P$, \Cref{lem:verticesAreGood} still implies there is at least one vertex on this shortest path.
\end{proof}

\begin{figure}[b]

    \centering
    \includegraphics[width = \linewidth]{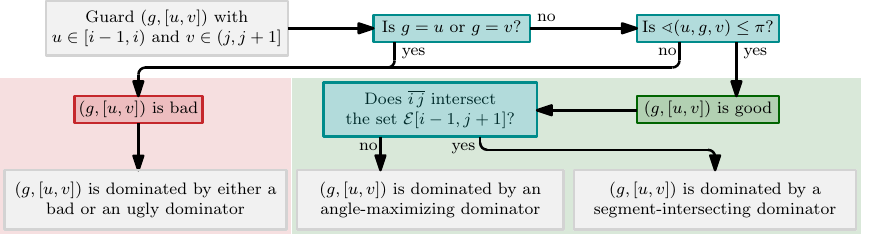}
    \caption{Flowchart illustrating how a guard is dominated. We prove the green block in~\Cref{lem:singleDominator} and the red block in~\Cref{lem:dominated_by_baddom}. }
    \label{fig:flowchart}
\end{figure}

\mysubpara{The good dominators.}
We construct a set $\reducedDom$ of $O(n)$ guards, called the \emph{reduced good dominators}, such that every good guard is dominated by one of them. These need not be good guards themselves! We start by defining two categories of good dominators, each of size $O(n^2)$.

\begin{observation}
For any fixed $g \in P$ and indices $i$ and $j$, with $i \leq j$, there exists at most one maximal chain $[u_{\text{max}}, v_{\text{max}}]$ visible from $g$ that contains $[i, j]$.
\end{observation}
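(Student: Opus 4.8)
The plan is a short union argument. Suppose $[u_1,v_1]$ and $[u_2,v_2]$ are two chains, each visible from $g$, each containing $[i,j]$, and each maximal with these two properties, meaning not strictly contained in another chain that is visible from $g$ and contains $[i,j]$. First I would note that since both chains contain the common sub-chain $[i,j]$, their union $C := [u_1,v_1]\cup[u_2,v_2]$ is a single connected sub-arc of $\partial P$ and hence itself a chain. Then $C$ contains $[i,j]$, and every point of $C$ is visible from $g$ because it lies on $[u_1,v_1]$ or on $[u_2,v_2]$. So $C$ is a chain that is visible from $g$, that contains $[i,j]$, and that contains both $[u_1,v_1]$ and $[u_2,v_2]$. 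Maximality of $[u_1,v_1]$ then forces $[u_1,v_1]=C$, and maximality of $[u_2,v_2]$ forces $[u_2,v_2]=C$; hence $[u_1,v_1]=[u_2,v_2]$, which is the claimed uniqueness.

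I would also record the constructive viewpoint, since it is reused later. Fixing $g$, the set of points of $\partial P$ visible from $g$ is closed: if $g$ sees $x$ then $\overline{gx}\subseteq P$, and this is preserved under limits of $x$. Hence extending $[i,j]$ backward from $i$ and forward from $j$, as far as each extension stays visible from $g$, reaches unique extreme endpoints $u_{\max}$ and $v_{\max}$ --- unless one extension sweeps all the way around $\partial P$, which happens precisely when $g$ sees the whole boundary. In the former case $[u_{\max},v_{\max}]$ is the unique maximal chain; in the latter there is no maximal chain at all and the statement holds vacuously.

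The one place I would take care is the parametrization bookkeeping: a chain is formally written as $[u,v]$ with $u\in[1,n+1)$ and $v\in[u,2n+1)$, so I should confirm that $C$, being the union of two such chains sharing the sub-chain $[i,j]$, can again be put in this normal form. This causes no trouble because the union of two arcs sharing a sub-arc is either a proper arc or all of $\partial P$, and each is a chain; the whole-boundary case coincides with $g$ seeing all of $\partial P$, already dispatched above. I do not expect any genuine geometric obstacle here --- the whole content is that visibility from a single fixed point is interval-like along $\partial P$.
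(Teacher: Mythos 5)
Your union argument is correct and complete: two maximal visible chains that share the common sub-chain $[i,j]$ have a union that is again a connected chain, visible from $g$ pointwise, so maximality forces both to coincide with the union. The paper dispatches the observation differently, by invoking the visibility polygon of $g$: its intersection with $\partial P$ is a fixed collection of pairwise disjoint maximal chains, and only one component can contain $[i,j]$. The two arguments encode the same fact --- that the set of boundary points visible from a fixed $g$ decomposes into disjoint maximal arcs --- but yours proves it from scratch with an elementary order-theoretic step (plus the closedness remark that justifies the extremal endpoints in your constructive variant), whereas the paper leans on a standard geometric object and gets a two-line proof. One small bookkeeping quibble: in the case where the extension wraps all the way around, i.e.\ $g$ sees all of $\partial P$, it is not accurate to say that no maximal chain exists --- the whole boundary is itself a chain under the paper's parametrization (e.g.\ $[u,u+n]$), and it is the unique maximal one as a point set --- but since ``at most one'' still holds in that case, nothing in your proof breaks.
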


\begin{proof}
    For any fixed $g \in P$, the points in $P$ visible from $g$ form a \emph{visibility polygon} $P_g$.
    Then $P_g \cap \partial P$ induces a fixed set of disjoint maximal chains which yields the observation.
\end{proof}

\begin{definition}
    \label{def:dominators}
    Consider an index pair $(i, j)$ with $i \in [2,n+1]$ and $j\in [i,2n]$.
    We define for $(i, j)$ two types of guards that we call \emph{good dominators}. Specifically, a segment-intersecting dominator~(\ref{itm:B}) and angle-maximizing dominator (\ref{itm:C}):
    \begin{enumerate}[label=\(\mathbf{\Alph*}\),ref=\(\mathbf{\Alph*}\), nolistsep, noitemsep]
        \item \label{itm:B} \textbf{(segment-intersecting):} See~\Cref{fig:segmentDom}. If $\core{i-1}{j+1}$ intersects the interior of $\overline{i\,j}$, we define the guard $(g, [u_{\text{max}}, v_{\text{max}}])$, 
where $g$ is the last point of $\overline{i\,j}$ intersecting $\core{i-1}{j+1}$,  and 
$[u_{\text{max}}, v_{\text{max}}]  \subseteq [i-1, j+1]$ is the maximal chain containing $[i, j]$ that is visible to $g$.
        \item \textbf{(angle-maximizing):}\label{itm:C} See~\Cref{fig:angleDom}. If $\core{i-1}{j+1}$ lies  left of  $\overline{i\,j}$, we define the guard $(g, [u_{\text{max}}, v_{\text{max}}])$. We define $g$ as $i$ (or $j$) if $\core{i-1}{j+1}$ intersects $i$ (or $j$). Otherwise, $g$ is the point in $\core{i-1}{j+1}$ that maximizes the angle $\sphericalangle(i,g,j)$. We define 
$[u_{\text{max}}, v_{\text{max}}] \subseteq [i-1, j+1]$ as its maximal visible chain containing  $[i, j]$.
    \end{enumerate}
    For $i = j$, we define $(g, [u_{\text{max}}, v_{\text{max}}])=(i,[i-1,i+1])$, which is a dominator of both type \ref{itm:B} and \ref{itm:C}. If there does not exist a type \ref{itm:B} and \ref{itm:C} guard for $(i,j)$ that adheres to the stated conditions, then there is no good dominator for $(i,j)$.
\end{definition}

Note that by general position if $\core{i-1}{j+1}$ is left of $\overline{i\,j}$, then it can intersect $\overline{i\,j}$ in at most a single point, so the angle-maximizing dominators are well-defined.

\begin{figure}

\begin{minipage}{0.48\textwidth}
  \centering
    \includegraphics[page=1]{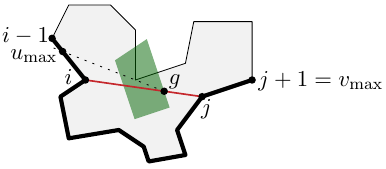}
    \caption{The visibility core $\core{i-1}{j+1}$ in green and the segment-intersecting dominator $(g,[u_\text{max},v_\text{max}])$ for $(i,j)$.}
    \label{fig:segmentDom}
    \end{minipage}
    \hfill
\begin{minipage}{0.48\textwidth}
    \centering
    \includegraphics[page=2]{pictures/goodDominators.pdf}
    \caption{The visibility core $\core{i-1}{j+1}$ in green and the angle-maximizing dominator $(g,[u_\text{max},v_\text{max}]) $ for $(i,j)$.}
    \label{fig:angleDom}
\end{minipage}
\end{figure}

\begin{definition}
The \emph{reduced good dominators} $\reducedDom$ are the good dominators not strictly dominated by another good dominator. 
\end{definition}

Next, we show that any good guard is dominated by a reduced good dominator. Our proof follows the approach illustrated in the green block of the flowchart in \Cref{fig:flowchart}.

\begin{lemma}
    \label{lem:singleDominator}
    For every good guard $(g, [u, v])$ there exists a reduced good dominator $(g', [u', v'])$ in $\reducedDom$ that dominates it.
\end{lemma}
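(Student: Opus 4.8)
The plan is to start from an arbitrary good guard $(g,[u,v])$ and produce a good dominator whose chain contains $[u,v]$, after which we can pass to a reduced good dominator for free (if the good dominator we found is strictly dominated by another good dominator, replace it; since strict domination strictly enlarges the chain and chains have bounded length, this terminates, and the final guard is in $\reducedDom$ and still dominates $(g,[u,v])$). So the real work is: given a good guard $(g,[u,v])$, find indices $(i,j)$ and a good dominator for $(i,j)$ dominating it. The natural choice is to let $i$ be the smallest integer with $i \ge u$ and $j$ the largest integer with $j \le v$, so that $[i,j] \subseteq [u,v] \subseteq [i-1,j+1]$; the edge-containment then means every edge of $P$ meeting the open chain $(u,v)$ also meets $[i-1,j+1]$, hence by \Cref{obs:leftOfEdges} we have $g \in \core{i-1}{j+1}$, so this visibility core is nonempty.

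Next I would split into the two cases of \Cref{def:dominators} according to where $\core{i-1}{j+1}$ sits relative to the directed segment $\overline{i\,j}$. If $\core{i-1}{j+1}$ meets the interior of $\overline{i\,j}$, I want to use the segment-intersecting dominator (\ref{itm:B}): its guard $g^\star$ is the last point of $\overline{i\,j}$ inside the core, and $[u_{\max},v_{\max}] \subseteq [i-1,j+1]$ is the maximal chain visible from $g^\star$ containing $[i,j]$. I must check $[u,v] \subseteq [u_{\max},v_{\max}]$; since $[i,j]\subseteq[u,v]$ and $[u,v]$ is itself a chain visible from *some* point of the core containing $[i,j]$, the point is to argue that moving the guard to the extreme intersection point $g^\star$ only enlarges (or preserves) the maximal visible chain inside $[i-1,j+1]$ — this is where I expect the genuine geometric argument, using that $g^\star$ is "as far along $\overline{i\,j}$ as the core allows" and that goodness ($\sphericalangle(u,g,v)\le\pi$) forces the obstructing reflex vertices to lie on the correct side. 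If instead $\core{i-1}{j+1}$ lies entirely (weakly) left of $\overline{i\,j}$, I use the angle-maximizing dominator (\ref{itm:C}): take $g^\star = i$ or $j$ if the core touches those vertices, else the core-point maximizing $\sphericalangle(i,g,j)$. Here the claim is that the angle-maximizing point sees at least as much of $[i-1,j+1]$ as $g$ does: widening the apex angle at $g^\star$ should only push the visibility-blocking reflex vertices of $\short{u}{v}$ outward, so $[u,v]\subseteq[u_{\max},v_{\max}]$.

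For the boundary/degenerate subcases I would handle them directly: if $u$ or $v$ is itself an integer (a vertex of $P$), the inequalities $i\le u$, $j\ge v$ may be equalities and the argument only simplifies; if $i=j$, \Cref{def:dominators} makes $(i,[i-1,i+1])$ a dominator of both types and $[u,v]$ lies in a single edge-pair so containment is immediate. I would also invoke \Cref{cor:left-turning} in the good-guard analogue form — for a good inclusion-maximal guard the relevant shortest path and blocking structure is controlled — though strictly we only need the weaker statement that goodness bounds where the reflex obstructions can be.

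The main obstacle is the core geometric step in each case: proving that replacing the original good guard $g$ by the canonical dominator position $g^\star$ (the extreme intersection point with $\overline{i\,j}$, or the angle-maximizer in the core) does not *shrink* the maximal visible chain within $[i-1,j+1]$. This requires a careful visibility argument about how the maximal visible chain from a point varies as the point moves within the convex region $\core{i-1}{j+1}$, exploiting that $g^\star$ is chosen extremally and that the guard is good. Everything else — defining $(i,j)$, reducing from good dominator to reduced good dominator, and the degenerate cases — is routine bookkeeping.
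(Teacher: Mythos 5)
There is a genuine gap: the step you yourself flag as ``the main obstacle'' --- showing that the canonical dominator point sees all of $[u,v]$ --- is the entire content of the paper's proof, and your proposal offers only intuition for it (``widening the apex angle should push the blocking reflex vertices outward''), not an argument. The paper's proof works with the wedge polygon $P_g$ bounded by $\overline{v\,g}$, $\overline{g\,u}$ and $[u,v]$: since $(g,[u,v])$ is good, this wedge is convex and contains no boundary points in its interior, so \emph{every} point of $\core{u}{v}\cap P_g$ sees all of $[u,v]$ (note $\core{u}{v}=\core{i-1}{j+1}$ here). It then defines $g^*$ as the angle-maximizer in $\core{u}{v}$ and splits on $\sphericalangle(i,g^*,j)$. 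When this angle exceeds $\pi$, one shows $\overline{i\,j}\subset P_g$ and that it separates $g$ from $g^*$, so by convexity the core meets the interior of $\overline{i\,j}$ and every such intersection point --- in particular the last one, which is the guard of the dominator~\ref{itm:B} --- lies in $P_g$ and hence sees $[u,v]$. When the angle is at most $\pi$ (or $g^*\in\{i,j\}$), showing $g^*\in P_g$ is genuinely delicate: the paper argues by contradiction using the tangent at $g^*$ to the iso-angle circular arc through $i$ and $j$, the supporting lines of the core edges at $g^*$, and a wedge $\ell_i,\ell_j$ containing $\core{u}{v}$, to conclude that $g^*\notin P_g$ is impossible. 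None of this (nor any substitute) appears in your proposal, so as written it does not prove \Cref{lem:singleDominator}.

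A secondary problem is your case split. You branch directly on whether $\core{i-1}{j+1}$ meets the interior of $\overline{i\,j}$ or lies left of it, but these two situations are not a priori exhaustive: the core could lie strictly right of $\overline{i\,j}$, or meet its supporting line only outside the segment, in which case \Cref{def:dominators} provides no good dominator for $(i,j)$ at all. Ruling this out is exactly where the goodness of $(g,[u,v])$ enters ($\sphericalangle(u,g,v)\le\pi$ gives $\sphericalangle(i,g,j)\le\pi$, which in the paper's Case~1 forces $\overline{i\,j}$ into $P_g$ and yields the interior intersection); your proposal never uses goodness in a concrete way. Also, your monotonicity framing in the segment-intersecting case (``moving the guard to the extreme intersection point only enlarges the maximal visible chain'') is not what is needed nor obviously true; the correct statement is that the dominator's guard sees all of $[u,v]$, after which containment of $[u,v]$ in its maximal chain through $[i,j]$ is automatic. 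The surrounding bookkeeping --- choice of $(i,j)$, $g\in\core{i-1}{j+1}$, the degenerate case $i=j$, and passing from a dominating good dominator to a reduced one by finiteness of the $O(n^2)$ good dominators --- is fine and matches the paper.
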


\begin{proof}
    The guard $(g, [u, v])$ is good and thus $g \neq u,v$ and $\sphericalangle(v,g,u) \leq \pi$ by \Cref{def:dominators}. We prove that a good dominator (and thus a reduced good dominator) dominates $(g, [u, v])$. 
    Let $u \in [i-1, i)$ and $v \in (j, j+1]$. If $i = j$ or $i-1 = j$, then the dominator for $(i,j)$, which is the guard $(i,[i-1,j+1])$, dominates $(g,[u,v])$. So, assume $i < j$.
    
    We  define by $P_g$ the polygon formed by the edges $\overline{v\,g}$, $\overline{g\,u}$ and $[u, v]$. Since $(\overline{u \, g}, \overline{v \, g})$ forms a convex wedge, and $P_g$ contains no points of $\partial P$ in its interior,  all points in $\core{u}{v} \cap P_g$ see all of $[u, v]$. By general position, $\core{u}{v}$ either is left of $\overline{i\,j}$ and intersects $\overline{i\,j}$ in at most one point, or contains at least one point strictly right of $\overline{i\,j}$. We define $g^*$ as $i$ (or $j$) if $\core{u}{v}$ contains $i$ (or $j$) and is left of $\overline{i\,j}$, or as
    the point in $\core{u}{v}$ maximizing $\sphericalangle(i,g^*,j)$, otherwise.
    
\begin{figure}[b]

    \centering
    \includegraphics{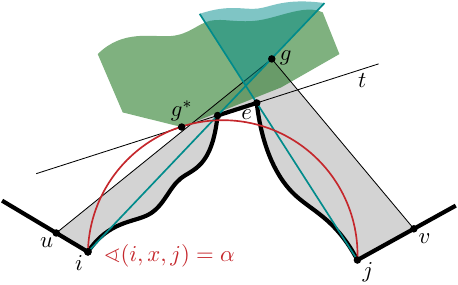}
    \caption{Case 2 of \Cref{lem:singleDominator}. The point $g^*$ that has $\sphericalangle(i,g^*,j)  = \alpha$ is in the visibility core $\core{u}{v}$ (green), but not in $P_g$ (gray). \vspace{-0.5cm}}
    \label{fig:dominator}
\end{figure}

   \textbf{\boldmath Case 1: $\sphericalangle(i, g^*, j) > \pi$.} 
The point $g^*$ must lie strictly right of the edge $\overline{i\,j}$ and, in particular, cannot serve as an angle-maximising dominator. 
Instead, we construct the segment-intersecting dominator (\ref{itm:B}) and show that it dominates $(g, [u, v])$. 
Because $\sphericalangle(u, g, v) \leq \pi$, we also have $\sphericalangle(i, g, j) \leq \pi$, which, together with $\sphericalangle(i, g^*, j) > \pi$, implies that $\overline{i\,j}$ lies in $P_g$. 
In particular, $g^*$ also lies in $P_g$, with $\overline{i\,j}$ separating $g$ and $g^*$. 
By definition, both $g$ and $g^*$ lie in $\core{u}{v}$. 
Since $\core{u}{v}$ is convex, and $\overline{i \, j}$ separates $g$ and $g^*$, $\core{u}{v}$ must intersect $\overline{i\,j}$. 
Moreover, every edge of $\core{u}{v}$ supports an edge in $[i{-}1, j{+}1]$, and as no three vertices of $P$ are collinear, there must exist an intersection point in the interior of $\overline{i \, j}$. 
Any intersection between $\overline{i \, j}$  and $\core{u}{v}$ lies in both $\core{u}{v}$ and $P_g$, and thus sees all of $[u, v]$. It follows that the last point of intersection $g'$ between $\overline{i \, j}$  and $\core{u}{v}$  sees $[u, v]$ and so the segment-intersecting dominator $(g', [u_{\max}, v_{\max}])$ corresponding to $(i, j)$ is well-defined and it dominates $(g, [u, v])$.

\textbf{\boldmath Case 2: $g^* = i$ or $g^* = j$ or $\sphericalangle(i,g^*,j)\leq \pi$.}
    If we can show that $g^*$ is in $P_g$, then it lies in $P_g \cap \core{u}{v}$ and thus sees all of $[u, v]$. So, the angle-maximizing dominator $(g^*, [u_{\max}, v_{\max}])$ (\ref{itm:C}) corresponding to $(i, j)$ dominates $(g, [u, v])$. If $g^* = i$ or $g^* = j$ then $g^*$ in $P_g$, concluding the proof.
    So, assume that $g^* \neq i,j$ and
    for the sake of contradiction that $g^*\not\in P_g$, refer to \Cref{fig:dominator}. Let $\alpha=\sphericalangle(i,g^*,j)$. The points $x\in\mathbb{R}^2$ such that $\sphericalangle(i,x,j)=\alpha$ form a circular arc passing through $i$ and $j$.
    As no other point in $\core{u}{v}$ has larger angle than $\alpha$, and $\core{u}{v}$ is convex, the tangent $t$ of this circular arc at $g^*$ separates the plane into two half-planes: a closed half-plane $H_{g}$ that contains $\core{u}{v}$ and an open half-plane $H_{i, j}$ containing the points $i$ and  $j$ (if $i$ or $j$ lies on $t$ then they coincide with $g^*$, a contradiction). 
    
    Now, consider the edges of $\core{u}{v}$ that intersect $g^*$. As $\core{u}{v}$ is contained in $H_g$, and $g \in \core{u}{v}$, the supporting line $\ell$ of one of these edges must pass between $g$ and $t$. As $P_g$ is contained in the wedge defined by $g,u,v$, the line $\ell$ intersects $P_g$ only in $H_g$.
    Note that $g^* \not \in P_g$ implies that $g^*$ lies strictly left of $\overline{u \, v}$ and so at least one of $u$ or $v$ is in $H_{i,j}$. 
    
    Let $e \in E$ be the edge defining $\ell$. By definition, $e$ lies on $\ell$ and thus $\ell$ separates $e$ from $i$ and $j$. 
    Consider the directed line $\ell_j$ from the right endpoint of $e$ to $j$, and consider the chain from $e$'s right endpoint to $j$.
    This chain must contain at least one edge $e^*$ whose start lies left of $\ell_j$ and whose endpoint right right of $\ell_j$ (recall that we distinguish between right and \emph{strictly} right). It follows that $\core{u}{v}$ lies left of $\ell_j$.
    Defining $\ell_i$ analogously yields a wedge bounded by $\ell_i$ and $\ell_j$ that contains $\core{u}{v}$ (see the blue region in \Cref{fig:dominator}).
    The fact that one of $u$ or $v$ is in $H_{i, j}$ implies that the point of intersection between $\ell_i$ and $\ell_j$ lies in $H_g$ and not on $t$. Moreover, $g$ also lies in $H_g$ and not on $t$, thus no point on $t$ is in $\core{u}{v}$, which contradicts $g^*$ being in $\core{u}{v}$.
\end{proof}

\subsection{Bad and ugly dominators}

In addition to good dominators, we also define bad and ugly dominators. 
Bad dominators are defined for each index pair $(i, j)$ in an analogous manner to good dominators:

\newcommand{\baddomref}{\hyperref[def:bad-dominator]{\textnormal{(\(\badSymbol\))}}\xspace}

\begin{definition}[Bad dominators \baddomref{}]
\label{def:bad-dominator}
     See Figure~\ref{fig:badDom}. Consider an index pair $(i, j)$ with $i \in [2,n+1]$ and $j\in [i,2n]$.
     For every vertex $c$ of $\core{i-1}{j+1}$, we define at most one bad dominator.  Each vertex $c$  of $\core{i-1}{j+1}$ is defined by edges $[a-1, a]$ and $[b, b+1]$. 
     Let $[u_\text{max}, v_\text{max}]$ be the maximal chain containing $[a-1, b+1]$ that is visible from $c$. 
     \begin{itemize}[nolistsep]
         \item        If $[u_\text{max}, v_\text{max}]$ exists and is non-empty, then we define for $c$ the bad dominator  $(c, [u_\text{max}, v_\text{max}])$.
     \end{itemize}
\end{definition}

\begin{figure}[b]

\begin{minipage}{0.48\textwidth}
  \centering
    \includegraphics[page=1]{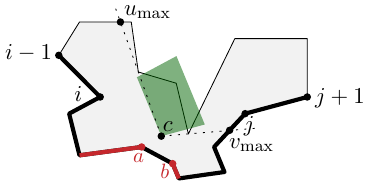}
    \caption{The visibility core $\core{i-1}{j+1}$ in green and the bad dominator $(c,[u_\text{max},v_\text{max}])$ for the vertex $c$ defined by the red edges.}
    \label{fig:badDom}
    \end{minipage}
    \hfill
\begin{minipage}{0.48\textwidth}
    \centering
    \includegraphics[page=2]{pictures/badDominators.pdf}
    \caption{The visibility core $\core{i-1}{j+1}$ in green and the ugly dominator $(g^*,[u,v^*]) $ for the bad guard $(g,[u,v])$ with path $S$ in red.}
    \label{fig:uglyDom}
\end{minipage}
\end{figure}

\begin{definition}\label{def:reduced_bad_dom}
    The \emph{reduced bad dominators} \baddom are the bad dominators that are not strictly dominated by any other guard. 
\end{definition}

\newcommand{\uglydomref}{\hyperref[def:ugly-dominator]{\textnormal{(\uglySymbol)}}}

\noindent
Finally, the ugly dominators are  defined off of an existing bad guard $(g, [u, v])$:
\begin{definition}[Ugly dominators \uglydomref{}] \label{def:ugly-dominator}
See Figure~\ref{fig:uglyDom}. Let $(g, [u, v])$ be a bad guard with $v = \nex{u}$ and  $u \in [i-1, i)$ and $v \in (j, j+1]$. 
         \label{type:2} Let $S$ be the shortest path from $u$ to $j+1$ in $P$, and let $\ell$ be the supporting line of the first edge. 
        If $\ell$ intersects $\core{i-1}{j+1}$, let $g^*$ be the last intersection point along $\ell$, and let $v^*$ be the farthest point on $[j, j+1]$ visible from $g^*$.  
        \begin{itemize}[nolistsep]
            \item      If $g^*$ and $v^*$ exist then we define the \emph{ugly dominator} $(g^*, [u, v^*])$. 
        \end{itemize}
\end{definition}

\begin{lemma} \label{lem:dominated_by_baddom}
For any bad guard $(g, [u, v])$ where $u\in[i,i+1)$ and $v = \nex{u}\in (j,j+1]$, there exists either a bad dominator \baddomref{} or an ugly dominator \uglydomref{} that dominates $(g, [u, v])$.
\end{lemma}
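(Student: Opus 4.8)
The plan is a case analysis driven by the visibility core $C:=\core{i-1}{j+1}$, where throughout $i,j$ are the indices with $u\in[i-1,i)$ and $v\in(j,j+1]$ as in \Cref{def:bad-dominator,def:ugly-dominator}. Since a sub-segment of an edge of $P$ lies on that edge's supporting line, $\core{u}{v}=\core{i-1}{j+1}=C$, a convex set, and $g\in C$ by \Cref{obs:leftOfEdges}. As $v=\nex{u}$ is the largest value reachable by \emph{any} guard starting at $u$, it is in particular maximal for our $g$, so the second part of \Cref{cor:left-turning} applies: $S:=\short{u}{v}$ is a left-turning chain with at least one interior (reflex) vertex whenever $v$ is not a vertex of $P$. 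Let $\ell$ be the supporting line of the first edge of $\short{u}{j+1}$ as in \Cref{def:ugly-dominator}; it passes through $u$, and one checks it coincides with the supporting line of the first edge of $S$. Finally, among all points of $C\cap P$ seeing all of $[u,v]$ --- a nonempty compact set containing $g$ --- let $g^{\circ}$ be one maximizing the angle $\sphericalangle(u,\cdot,v)$, so that $\sphericalangle(u,g^{\circ},v)\ge\sphericalangle(u,g,v)>\pi$. Since domination depends only on the chain, it suffices to dominate $(g^{\circ},[u,v])$.

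\emph{If $g^{\circ}$ is a vertex $c$ of $C$}, then $c$ is the intersection of the supporting lines of two edges $[a-1,a]$ and $[b,b+1]$ of the chain $[i-1,j+1]$. A short index check shows $[a-1,b+1]$ and $[u,v]$ are overlapping sub-chains of $[i-1,j+1]$; $c$ sees all of $[u,v]$ by construction and, using $c\in C$ and general position, all of $[a-1,b+1]$ as well, hence it sees their connected union. Therefore the maximal chain visible from $c$ containing $[a-1,b+1]$ contains $[u,v]$, and the bad dominator \baddomref{} at $c$ dominates $(g^{\circ},[u,v])$.

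\emph{If $g^{\circ}$ is not a vertex of $C$}, the aim is to show $g^{\circ}$ equals the point $g^{*}$ of \Cref{def:ugly-dominator}: the endpoint of $\ell\cap C$ farthest from $u$ in the direction of the first edge of $\short{u}{j+1}$. Granting this, $g^{*}=g^{\circ}$ sees all of $[u,v]$, hence sees $v\in[j,j+1]$, so the farthest point $v^{*}\in[j,j+1]$ visible from $g^{*}$ has $v^{*}\ge v$; then the ugly dominator \uglydomref{} $(g^{*},[u,v^{*}])$ is well defined with $[u,v]\subseteq[u,v^{*}]$ and dominates $(g^{\circ},[u,v])$. To prove $g^{\circ}=g^{*}$: because $g^{\circ}$ maximizes $\sphericalangle(u,\cdot,v)$ over $C\cap P$ with value $>\pi$ and is not a vertex of $C$, the maximum is pinned by a single binding visibility constraint --- of $u$, of $v$, or of an interior vertex of $S$. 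The left-turning shape of $S$ forces the newly relevant obstruction near $v$ to be the reflex vertex on $\overline{v\,g^{\circ}}$, and $v=\nex{u}$ forces that no point of $C\cap P$ sees past $v$; combining these, the binding constraint is exactly ``see from $u$ just past the first bend of $S$'', i.e.\ lie on the half-line along $\ell$, and pushing $g^{\circ}$ as far as this allows inside $C$ lands it at $g^{*}$.

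The main obstacle is precisely this last identification: showing that a non-vertex angle-maximizer lies on $\ell$ at the far end of $\ell\cap C$. This is the one step where the hypotheses $v=\nex{u}$ and ``$\short{u}{v}$ is left-turning'' are genuinely used; the rest is routine. What remains: verifying $\ell$ equals the first-edge line of $\short{u}{v}$ (immediate if $v=v_{j+1}$, otherwise since $v$ and $v_{j+1}$ lie on the common edge $[j,j+1]$ and are reached from $u$ through the same initial link); the index bookkeeping and visibility claim in the vertex case; and the small cases $i=j$ and ``$\short{u}{v}=\overline{u\,v}$'', where the relevant vertex of $C$ lies in $C\cap P$ and sees $[u,v]$, so the vertex case applies.
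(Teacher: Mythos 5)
The decisive step of your argument is missing, at exactly the point you yourself flag as the main obstacle. Your case analysis rests on the claim that if the angle-maximizer $g^{\circ}$ is not a vertex of $C=\core{i-1}{j+1}$, then it must be the last intersection point $g^{*}$ of $\ell$ with $C$; you only gesture at this (``pinned by a single binding visibility constraint \dots pushing $g^{\circ}$ as far as this allows lands it at $g^{*}$''), and it is not clear it is even true. The set over which you maximize is $C$ cut by two half-planes: right of the supporting line $\ell_u$ of the first edge of $\short{u}{v}$ and left of the supporting line of its last edge (this is the wedge characterization used in the correctness part of \Cref{lem:finding_guard}). A maximizer of $\sphericalangle(u,\cdot,v)$ over such a region can a priori sit at a tangency of a level arc with the relative interior of an edge of $C$, or on the constraint coming from the last edge of $\short{u}{v}$; in either situation it is neither a vertex of $C$ nor a point of $\ell\cap C$, so neither branch of your dichotomy applies as argued. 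The paper's proof avoids this by optimizing a different quantity: it intersects $\core{i-1}{j+1}$ with the half-plane right of $\ell_u$, takes a vertex $c$ of the resulting polygon $P'$ that sees farthest along $[j,j+1]$, and observes that $c$ is either a vertex of the core (yielding a bad dominator \baddomref{}) or a point of $\ell\cap\core{i-1}{j+1}$, in which case the left-turning shape of $\short{u}{v}$ shows that the farthest-seeing intersection point is the \emph{last} one, i.e.\ the ugly dominator's guard $g^{*}$, with $v^{*}\ge v$ because $g\in P'$. With that objective the two cases line up exactly with the two dominator types; with the angle objective they do not, and no replacement argument is supplied.

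Two of the steps you defer also need genuine proofs rather than the sketches given. First, that $\ell$ coincides with the supporting line of the first edge of $\short{u}{v}$: your justification (``reached from $u$ through the same initial link'') is circular---this is precisely what must be shown; the paper derives it from \Cref{lem:verticesAreGood}, which by maximality of $v$ places a reflex vertex on $\overline{g\,v}$ (in its interior when $v\neq j+1$). Second, in your vertex case the bad dominator at $c$ is only defined if $c$ sees the entire chain $[a-1,b+1]$ spanned by its two defining edges, and these edges may protrude beyond $[u,v]$ (for instance the portion of the edge $[i-1,i]$ preceding $u$); ``$c\in C$ and general position'' does not yield this. So the proposal identifies the correct target objects (bad dominator at a core vertex versus the ugly dominator on $\ell$), but the core of the lemma is left unproven.
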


\begin{figure}[b]

\begin{minipage}{0.48\textwidth}
  \centering
    \includegraphics{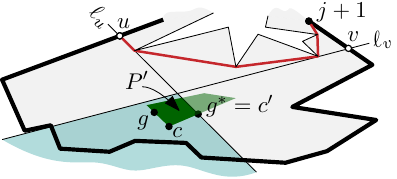}
    \caption{The shortest path from $u$ to $j+1$ (red), and $\core{i-1}{j+1}$ (green). Any point that sees both $u$ and $v$ lies in the blue wedge.}
    \label{fig:guard}
    \end{minipage}
    \hfill
\begin{minipage}{0.48\textwidth}
    \centering
    \includegraphics{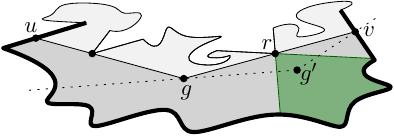}
    \caption{Illustration of \Cref{thm:good_guards} which shows a bad contiguous guard $(g, [u, v])$.\\
    \textcolor{white}{hidden}}
    \label{fig:oneGoodGuard}
\end{minipage}
\end{figure}

\begin{proof}
    Let now $S_v$ be the shortest path from $u$ to $v$ and denote by $\ell_u$ the directed supporting line of its first edge. By \Cref{cor:left-turning}, $S_v$ forms a left-turning chain. 
    Let $S$ be the shortest path from $u$ to $j+1$. 
      Since $(g, [u, v])$ has maximal $v$, \Cref{lem:verticesAreGood} states that $\overline{g \, v}$ contains a reflex vertex of $P$. If $v \neq j+1$, $\overline{g \, v}$ contains a reflex vertex in its interior. This implies the first edge of $S_v$ and $S$ are the same, and thus the line $\ell_u$ equals $\ell$.
    If $v = j+1$, then $S_v$ is equal to $S$, so also $\ell$ equals $\ell_u$. 
    Denote by $P'$ the convex polygon that is $\core{i-1}{j+1}$ intersected with the half-plane right of $\ell_u$, see \Cref{fig:guard}. 
    Denote by $c$ a point in $P'$ that sees farthest along $[j, j+1]$. We can choose $c$ to be a vertex of $P'$.
    We make a case distinction:
    \begin{enumerate}[label=(\roman*),topsep=4pt, partopsep=0pt, parsep=0pt, itemsep=0pt]
    \item If $c$ is also a vertex of $\core{i-1}{j+1}$ then $c$ defines a bad dominator \baddomref{} which is the guard $(c, [u_\text{max}, v_\text{max}])$. By maximality, $[u, v]$ must be contained in $[u_\text{max}, v_\text{max}]$. 
    \item if $c$ is not a vertex of  $\core{i-1}{j+1}$  then $c$ must be an intersection point of $\ell_u = \ell$ and $\core{i-1}{j+1}$. 
    As $\ell$ and $\core{i-1}{j+1}$ intersect, the ugly dominator $(g^*, [u, v^*])$ exists, where $g^*$ is the last point of intersection between $\ell$ and $\core{i-1}{j+1}$. 
    As the chain $S_v$ is a left-turning chain, the intersection point of $\ell$ and $\core{i-1}{j+1}$ that sees furthest along the edge $[j,j+1]$ is the last point of intersection, i.e. $g^* = c'$. As $c'$ sees at least as far as $g$, because $g \in P'$, it must be that $v^* \geq v$.
     \qedhere
    \end{enumerate}
\end{proof}

\subsection{The \nexFunc function and our high-level algorithm}

Define by $X$ a set of points along $\partial P$ that contains all vertices of $P$ and for all guards $(g, [u, v]) \in \reducedDom$ the point $u$. 
We show that we can compute an optimal solution to the \contArt problem by starting at a guard in the set of reduced good dominators $\reducedDom$ or at a vertex of $P$, and greedily applying the `next best' guard.

\begin{definition}[The $\nexFunc$ function \cite{theEnemy, RobsonSpaldingZheng2024_AnalyticArcCover}]
For $u \in [1, n+1]$, $\nex{u}$ returns the largest $v \in [u, 2n+1]$ where there exists a guard $(g, [u, v])$.
\end{definition}

\noindent

\begin{theorem}
    \label{thm:good_guards}
    Let $G$ be a minimal set of contiguous guards guarding $\partial P$ with $|G| > 1$. There exists a set of contiguous guards $G'$ with $|G| = |G'|$ where at least one $(g, [u, v]) \in G'$ is either in $\reducedDom$, or has $u$ a vertex of $P$.
\end{theorem}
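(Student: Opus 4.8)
The plan is to start from an arbitrary optimal solution $G$ and perform a bounded sequence of local modifications, each preserving both feasibility and $|G|$, until one guard has the desired form. First I would \emph{normalize} $G$: replace every guard by an inclusion-wise maximal guard with the same viewpoint $g$ that contains it --- this only enlarges chains, so it preserves feasibility and size --- and assume $G$ is \emph{irredundant}, i.e.\ no guard's chain is contained in the union of the other chains. If some guard of $G$ is redundant, deleting it and adding any reduced good dominator yields a feasible $G'$ of the same size containing an element of $\reducedDom$ (such a dominator exists since $P$ is not one-guardable by \Cref{obs:trivialOneCenter}, hence has a reflex vertex $r$, and $(r,[r-1,r+1])$ is a good dominator, which is dominated by a reduced one); so we may assume $G$ is irredundant with all chains inclusion-wise maximal. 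Irredundancy also lets us record for each guard $i$ a nonempty \emph{private subchain} $[a_i,b_i]\subseteq[u_i,v_i]$ covered by guard $i$ alone, and these appear in the same cyclic order as the guards, with consecutive chains overlapping.

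Next I would dispose of the good case. If some guard $(g,[u,v])\in G$ is good, then by \Cref{lem:singleDominator} there is a reduced good dominator $(g',[u',v'])\in\reducedDom$ with $[u,v]\subseteq[u',v']$. Substituting $(g',[u',v'])$ for $(g,[u,v])$ only enlarges a chain, so the result is feasible, of the same size, and contains a reduced good dominator. Hence from now on we may assume every guard in $G$ is bad.

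Now I would handle anchors at vertices by \emph{re-anchoring a neighbour}. Order the guards cyclically with endpoints $u_i<u_{i+1}$ and chains $[u_i,v_i]$; irredundancy and inclusion-maximality give $u_{i+1}\le v_i<u_{i+2}$, so in particular $v_i<v_{i+1}$. If some $u_i$ is a vertex of $P$ we are done with $G'=G$. If some $v_i$ is a vertex of $P$, then $[v_i,v_{i+1}]$ is a subchain of $[u_{i+1},v_{i+1}]$ and hence still visible from $g_{i+1}$; replacing guard $i{+}1$ by $(g_{i+1},[v_i,v_{i+1}])$ exposes only $[u_{i+1},v_i)\subseteq[u_i,v_i]$, which is still covered by guard $i$, giving a feasible $G'$ of the same size with a chain starting at the vertex $v_i$. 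So we may assume no chain endpoint of $G$ is a vertex of $P$; then by \Cref{cor:left-turning} each shortest path $\short{u_i}{v_i}$ is left-turning with at least two interior vertices, and its first interior vertex $r_i$ --- a reflex vertex of $P$ --- lies in the interior of the segment $\overline{u_i\,g_i}$.

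The remaining case is the crux, and where I expect the main obstacle: $G$ is an irredundant optimal solution with every chain inclusion-wise maximal, every guard bad, and no chain endpoint a vertex, and one must derive a contradiction. The idea is to re-anchor a chain at one of the reflex vertices $r_i$: one shows that $r_i$ lies on the boundary arc strictly between the reach $v_{i-1}$ of the previous guard and $u_i$ --- so re-anchoring guard $i$'s chain to begin at $r_i$ only \emph{enlarges} it and feasibility is automatic --- and then that the enlarged chain $[r_i,v_i]$ is still visible from some point of $P$ (equivalently, that $\core{r_i}{v_i}$ contains such a viewpoint), which produces a feasible $G'$ of the same size with a chain starting at the vertex $r_i$, the desired contradiction. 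The difficulty lies precisely in these two geometric claims --- locating $r_i$ relative to the chains and their overlaps, and exhibiting a viewpoint for the enlarged chain --- both of which rest on a careful analysis of the left-turning structure of $\short{u_i}{v_i}$ guaranteed by \Cref{cor:left-turning} together with the visibility behaviour around the reflex vertex $r_i$; this is the technical heart of \Cref{thm:good_guards}.
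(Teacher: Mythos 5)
Your normalization, the good-guard case via \Cref{lem:singleDominator}, and the re-anchoring step when some chain endpoint is a vertex all match the paper's proof and are fine. The problem is the remaining case, which you yourself flag as unresolved: there the proposal is not just incomplete but rests on claims that are false in general. You take the reflex vertex $r_i$ in the interior of $\overline{u_i\,g_i}$ and propose to \emph{extend} guard $i$'s own chain backwards to $[r_i,v_i]$, asserting (a) that $r_i$ lies on $\partial P$ strictly between $v_{i-1}$ and $u_i$, and (b) that $[r_i,v_i]$ is visible from a single point. Neither holds. The boundary chain from $r_i$ to $u_i$ is exactly the pocket occluded from $g_i$ behind the window $\overline{r_i\,u_i}$; its parameter extent can reach far behind $v_{i-1}$ (a deep or spiralling pocket), so (a) fails, and such a pocket need not be guardable together with $[u_i,v_i]$ by any single point, so (b) fails as well --- nothing in the left-turning structure of $\short{u_i}{v_i}$ constrains the interior of that pocket. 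So the contradiction you hope to derive cannot be obtained along this route.

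The paper's argument at this point is genuinely different and is the part you are missing. It uses the reflex vertex $r$ given by \Cref{lem:verticesAreGood} on the \emph{far} segment $\overline{v\,g}$, with $r\notin[u,v]$, and does not touch the guard $(g,[u,v])$ at all: since $r$ must be covered, there is another guard $(g',[u',v'])$ with $r\in[u',v']$. If $r$ is an endpoint of that chain one is done as in the vertex case; otherwise $g'$ lies left of both edges incident to $r$, hence left of $\overline{v\,g}$, hence inside the region $P_g$ bounded by $[u,v]$, $\overline{v\,g}$, $\overline{g\,u}$. One then \emph{shrinks} the other guard to $(g',[a,b])$ with $[a,b]=[u',v']\setminus(u,v)$ (still a cover, since the removed part lies in $[u,v]$), and the position of $g'$ inside $P_g$ together with $\sphericalangle(v,g,u)>\pi$ forces $\sphericalangle(a,g',b)\leq\pi$, i.e.\ a good guard, which by \Cref{lem:singleDominator} is dominated by an element of $\reducedDom$ (with the degenerate cases $g'=a$ or $g'=b$ again reducing to a vertex endpoint). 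In short: the correct move is to shrink the neighbouring guard that covers the blocking reflex vertex and show the shrunken guard is good, not to enlarge the bad guard across its own occluded pocket.
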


\begin{proof}
    We show how to transform $G$ into such a set $G'$ (see also \Cref{fig:oneGoodGuard}).
    We start by replacing every guard $(g, [u, v]) \in G$ by a guard $(g, [u', v'])$ such that $[u, v] \subseteq [u', v']$ and the interval $[u', v']$ is maximal to form $G'$. Next, we show that we can adapt $G'$ such that either there is a good guard in $G'$ or a guard whose visibility ends at a vertex of $P$. \Cref{lem:singleDominator} implies that this good guard is dominated by a guard in $\reducedDom$, concluding the proof.
    
    Suppose that $G'$ contains only bad guards and let $(g, [u, v]) \in G'$ be such a bad guard. 
    If $(u, v)$ does not contain a vertex of $P$ then $[u, v] \subseteq [i, i+1]$ for some vertex $i$.
    We now replace $(g, [u, v])$ by $(i, [i, i+1])$ and conclude the proof.
    Similarly, if $v$ is a vertex of $P$ then, because $|G| > 1$, there has to be at least one other guard $(g', [u', v']) \in G'$ such that $v \in [u', v']$. We replace this guard by $(g', [v, v'])$ so that the start of its chain is a vertex of $P$ which concludes the proof. 
    Lastly, note that it is not possible for $g$ to be $u$ (resp. $v$), as in that case, $g$ sees all edges of $P$ containing $g=u$ (resp. $g=v$), and in particular, strictly more than~$[u,v]$, contradicting the maximality of~$[u,v]$.

It follows that for all bad guards $(g, [u, v]) \in G'$, the open chain $(u, v)$ contains a vertex of $P$, while neither $u$ nor $v$ is a vertex of $P$. 
By \Cref{lem:verticesAreGood}, the segment $\overline{v\,g}$ contains a reflex vertex $r$ of $P$ with $r \notin [u, v]$. 
Hence, there exists a guard $(g', [u', v'])$ such that $r \in [u', v']$. 
If $u' = r$ or $v' = r$, we conclude the proof as above. 
Otherwise, if $r \in (u', v')$, then $g'$ lies left of both supporting lines of the edges incident to $r$. 
This implies that $g'$ lies left of $\overline{v\,g}$. 
Since $\sphericalangle(v,g,u) > \pi$, any point that sees $r$ and lies left of $\overline{v\,g}$ must also lie within the polygon $P_g$ defined by $[u, v]$, $\overline{v\,g}$, and $\overline{g\,u}$ (see \Cref{fig:oneGoodGuard}). 
Observe that the points in $[u', v'] \setminus (u, v)$ form a single closed chain; otherwise, $[u, v] \subseteq [u', v']$, contradicting the minimality of $G'$. 
Define $[a, b] = [u', v'] \setminus (u, v)$. 
If we replace $(g', [u', v'])$ by $(g', [a, b])$, the resulting set of guards still covers $\partial P$. 
Crucially, $[a, b]$ lies left of $\overline{g'\,v}$ and left of either the supporting line of $\overline{u\,g'}$ or $\overline{g\,g'}$. 
As $g'$ lies in $P_g$ and $\sphericalangle(v,g,u) > \pi$, this implies that, unless $g'=a$ or $g'=b$, $\sphericalangle(a,g',b) \leq \pi$, and thus $G'$ contains a good guard. Finally, if $g'=a$ (resp. $g'=b$), then $g'$ is on $\partial P$ and in particular, sees the entirety of the edge containing $a$ (resp. $b$), hence we can extend $[a,b]$ to start (or end) in a vertex, which concludes the proof.
\end{proof}

\noindent
The following theorem gives the high-level idea of the running times achieved in \Cref{sec:convexHullDataStructures,sec:computingDiscreteDominators,sec:knlogn}.

\begin{theorem}
    \label{thm:basic_runtime}
    Let $P$ be a simple polygon of $n$ vertices and $X$ be a linear-size set that includes all polygon vertices and for all $(g, [u, v]) \in \reducedDom$, the point $u$. 
    Given $P$, $X$ and an $O(T)$-time implementation of the $\nexFunc$ function, we can compute an optimal solution of size $k$ to the \contArt problem in $O( k n \, T)$ time. 
\end{theorem}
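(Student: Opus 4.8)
The plan is to establish Theorem~\ref{thm:basic_runtime} by combining the structural guarantee of Theorem~\ref{thm:good_guards} with a greedy sweep driven by the $\nexFunc$ function, run once from each candidate point in $X$. The key observation is that Theorem~\ref{thm:good_guards} tells us: whenever $|G|>1$, \emph{some} optimal solution contains a guard $(g,[u,v])$ where either $u$ is a polygon vertex or $(g,[u,v])\in\reducedDom$; in both cases, by the definition of $X$, the point $u$ belongs to $X$. So it suffices to build, for each $x\in X$, the greedy revolution starting at $x$, and return the smallest one.

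\textbf{The greedy revolution from a fixed start point.} First I would argue that for a fixed $x\in[1,n+1)$, repeatedly applying $\nexFunc$ produces a covering of $\partial P$ of near-optimal size. Concretely, set $x_0 = x$ and $x_{\ell+1} = \nex{x_\ell}$; each step is computable in $O(T)$ time by hypothesis. By maximality of $\nexFunc$, this sequence is monotonically nondecreasing along $\partial P$, and once some $x_\ell \geq x + n$ we have covered the entire boundary interval $[x, x+n]$, i.e.\ all of $[1,n+1]$ after reindexing. I would prove that if $k$ is the optimum, then for \emph{every} starting point $x$ we have $\nexFunc^{k}(x) \geq x+n$ or $\nexFunc^{k+1}(x)\geq x+n$ — a standard exchange/greedy-stays-ahead argument: any optimal solution of $k$ guards, when ``rotated'' to start at $x$, covers $[x,x+n]$ with at most $k+1$ intervals, and a greedy choice that always jumps as far as possible dominates it interval-by-interval. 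Hence each revolution terminates after at most $k+1$ applications of $\nexFunc$, costing $O(k\,T)$ time per start point.

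\textbf{Hitting the optimum exactly.} Next, by Theorem~\ref{thm:good_guards}, for the specific start point $x^\star\in X$ that coincides with the left endpoint of the distinguished guard $(g,[u,v])$ in the transformed optimal solution $G'$, I would show the greedy revolution from $x^\star$ uses exactly $k$ guards. The point is that $G'$ already contains a guard starting precisely at $x^\star$, and greedily continuing from $\nex{x^\star}$ can only do at least as well as the remaining $k-1$ guards of $G'$ (again greedy-stays-ahead, now with the first interval pinned down so no ``+1'' slack is incurred). Therefore, running the revolution from all $x\in X$ and taking the minimum-size result yields an optimal solution; correctness of ``$\le k$'' for the reported answer follows since any revolution that closes is a valid guard set, and ``$\ge k$'' is by optimality of $k$.

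\textbf{Running time and the main obstacle.} The total cost is $|X|$ revolutions, each of $O(k)$ steps at $O(T)$ per step: $O(k\,n\,T)$ time, since $|X|\in O(n)$. The only mild subtlety — and the step I expect to require the most care — is the greedy-dominance argument showing that starting at the \emph{right} $x\in X$ yields exactly $k$ rather than $k+1$ guards: one must verify that the first guard of the rotated optimal solution can be taken to start at a point of $X$ \emph{and} that from there the greedy jumps never fall behind the optimal intervals, so no extra guard is ever needed. This is exactly what Theorem~\ref{thm:good_guards} is engineered to supply (it produces $G'$ with $|G'|=|G|$ and a guard whose left endpoint lies in $X$), so the argument is: pick that guard as the greedy's first choice, invoke maximality of $\nexFunc$ to dominate each subsequent interval of $G'$, and conclude after $k$ steps. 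Everything else is bookkeeping on the parametrization $[1,2n+1)$.
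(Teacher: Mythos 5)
Your proposal is correct and follows essentially the same route as the paper: run the greedy revolution under $\nexFunc$ from every $x\in X$ (each terminating within $k+1$ steps), and use \Cref{thm:good_guards} together with a greedy-stays-ahead exchange argument to show that the revolution started at the distinguished left endpoint in $X$ closes after exactly $k$ guards, giving total time $O(|X|\cdot k\cdot T)=O(knT)$. No substantive differences from the paper's proof.
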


\begin{proof}
We apply to each $x \in X$ the function $\nex{x}$ recursively until a set of guards covering $\partial P$ is obtained. 
Each such sequence contains at most $k+1$ guards and this thus takes $O(k \, T  n)$ time.

    We claim that one of these guard sequences must have size $k$. 
    By \Cref{thm:good_guards} there exists at least one guarding solution of size $k$ that contains a guard $(g, [x, v])$ where the guard is either in $\reducedDom$ or $x$ is a vertex of $P$.
    A classical argument now implies that recursively applying \nex{x} finds a solution of size $k$ also. 
    For completeness, we write the argument here: 
    Let $G = \{ (g_i, [v_{i-1}, v_{i}])\}$ be minimum ordered sequence of guards that guards $\partial P$ that is generated by recursively applying $\nex{x}$. 
    Then $(g_1, [x_1, x_2]) = (g, [x, \nex{x}])$. 
    
    Suppose, for the sake of contradiction, that $|G| > k$. 
    Let $G^*$ be an optimal solution of size $k$ that includes $(g, [x, v])$ and order this solution along $\partial P$, starting at $(g, [x, v])$. 
    Since $|G| > |G^*|$,  there must exist a minimum index $i$ such that for the $i$\ts{th} guards $(g_i, [v_{i-1}, v_{i}])\in G$ and $(g_i^*, [u_i^*, v_i^*])\in G^*$, it holds that $v_{i} < v_i^*$. 
    However, since $i$ is minimum, it must be that $v_{i-1} \geq u_i^*$. 
    But then $v_{i-1} \in [u_i^*, v_i^*]$ which makes $\nex{v_{i-1}}$ return a value that is at least $v_i^*$---a contradiction.
\end{proof}

\section{Intermezzo: data structures and sliding sequences}\label{sec:convexHullDataStructures}

Our results rely on several classical geometric results concerning simple polygons.

\begin{datastructure}[\cite{Guibas1987Optimal}]\label{ds:shortestPath}
A size-$n$ polygon $P$ can be stored in $O(n)$ time and space so that, for any $s, t \in P$, the shortest path $\short{s}{t}$ can be reported as $O(\log n)$ balanced trees in $O(\log n)$ time. Moreover:
\begin{itemize}[noitemsep, nolistsep]
\item $\short{s}{t}$ is represented by $O(\log n)$ balanced trees whose ordered leaves traverse $\short{s}{t}$.
    \item We can find the first edge $e$ along $\short{s}{t}$ where the path makes a left turn around the first vertex of $e$ and a right turn around the last vertex of $e$, in $O(\log n)$ time. 
\end{itemize}
\end{datastructure}

\begin{proof}
The data structure of~\cite{Guibas1987Optimal} stores a family of shortest paths in $P$ as balanced binary trees. 
For any query points $s, t \in P$, the shortest path $\short{s}{t}$ can be expressed as $O(\log n)$ subtrees, together with $O(\log n)$ newly computed edges called \emph{bridges}. 

The second property follows from a standard adaptation of this data structure (see also~\cite{Eades2020Visibility}): 
for each pair of consecutive edges on a precomputed shortest path, determine whether the turn is left or right. 
For each node in the tree, store a symbol indicating whether all of its descendant edges are left-turning, right-turning, or mixed. 
For a query, we can determine for each of the $O(\log n)$ newly computed bridges whether the turn at its endpoints is left or right. 
Combined with the pre-stored symbols, 
this information suffices to answer the second query.
\end{proof}

\begin{datastructure}[\cite{Hershberger1993Ray}]\label{ds:rayShooting}
A size-$n$ polygon $P$  can be stored in $O(n)$ time and space so that, for any ray $r$ whose origin lies in $P$, the point in $\partial P$ hit by $r$ can be found in $O(\log n)$ time.
\end{datastructure}

\noindent
We always assume that we have access to \Cref{ds:shortestPath} and \Cref{ds:rayShooting} for the given polygon.
We repeatedly use the concept of a \emph{sliding sequence}. 
A sliding sequence can be thought of as a pair of indices $(i, j)$ that can be updated by incrementing either $i$ or $j$. 
Formally, \Cref{def:sequence} specifies a conforming sliding sequence used by almost all of our algorithms. 
For each $(i, j)$, we maintain the visibility core $\core{i-1}{j+1}$ as a convex-hull data structure supporting emptiness, containment, extreme-point, and ray-shooting queries, and an additional \emph{angle query} (defined later). 
For any linear-size sliding sequence $\{ (i_t, j_t) \}$, Chan, Hershberger, and Pratt~\cite{Chan2019Time-Windowed} give a data structure supporting such updates and queries in $O(\log n)$ time:

\begin{lemma}[Paraphrased Lemma 1 in \cite{Chan2019Time-Windowed}] \label{lem:core_total_size}
    For any linear-size sliding sequence $\sigma$, the total number of distinct vertices that get added or removed from the visibility core $\core{i-1}{j+1}$ when iterating over $(i, j) \in \sigma$ is linear.
\end{lemma}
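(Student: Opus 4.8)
\textbf{Proof plan for Lemma~\ref{lem:core_total_size}.}
The plan is to reduce the statement about the visibility core $\core{i-1}{j+1}$ to a statement about the convex hull of a point set under \emph{first-in-first-out} (FIFO, i.e.\ time-windowed) updates, and then invoke the result of Chan, Hershberger, and Pratt~\cite{Chan2019Time-Windowed} verbatim. First I would recall that $\core{u}{v}$ is defined as the intersection of the half-planes lying to the left of each directed edge in $[u,v]$; so when $u,v$ run through integer endpoints, $\core{i-1}{j+1}$ is exactly the intersection of the half-planes $H_{i-1}, H_i, \dots, H_{j}$ associated with the directed edges $\overline{v_{i-1}v_i},\dots,\overline{v_j v_{j+1}}$ of $\partial P$ (indices taken in the parametrization of \Cref{sec:prelim}). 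The key point is that along a conforming sliding sequence $\sigma$, incrementing $j$ \emph{adds} the half-plane $H_{j}$ to the current intersection, and incrementing $i$ \emph{removes} the half-plane $H_{i-1}$ from it --- and crucially, since both $i$ and $j$ are non-decreasing along $\sigma$, a half-plane that has been removed is never re-added. Thus the sequence of half-planes forming $\core{i-1}{j+1}$ over the course of $\sigma$ is precisely a FIFO (queue) update sequence of total length $O(|\sigma|) = O(n)$.

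Next I would pass to the dual. Under the standard point--line duality that maps a non-vertical line $y = ax+b$ to the point $(a,-b)$ (and dually), the intersection of a set of half-planes --- specifically its boundary, the upper or lower envelope of the corresponding lines --- corresponds to (part of) the convex hull of the dual point set, with the two ``sides'' of the half-plane intersection splitting into the upper and lower hulls in the dual. A vertex of $\core{i-1}{j+1}$, being the intersection of two of its bounding edges' supporting lines, dualizes to an edge of the dual convex hull; conversely each vertex of the dual hull corresponds to an edge of the core. Hence the number of distinct vertices ever appearing on $\core{i-1}{j+1}$ as $(i,j)$ ranges over $\sigma$ is, up to an additive constant, the number of distinct edges ever appearing on the dynamic dual convex hull --- equivalently, twice the number of distinct vertices appearing on that hull over the FIFO update sequence. (One should handle the routine bookkeeping: vertical edges / points at infinity can be dealt with by a generic rotation of coordinates, and unbounded cores by the usual trick of adding a large bounding box, neither of which affects the asymptotics.)

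Finally I would invoke Lemma~1 of Chan, Hershberger, and Pratt~\cite{Chan2019Time-Windowed}: for a sequence of FIFO (insert-at-tail, delete-from-head) updates on a planar point set of total length $m$, the total number of points that ever appear as a vertex of the maintained convex hull is $O(m)$. Applying this with $m = O(|\sigma|) = O(n)$ to our dual point set gives that $O(n)$ distinct dual hull vertices appear in total, hence $O(n)$ distinct edges, hence --- translating back through the duality --- $O(n)$ distinct vertices of $\core{i-1}{j+1}$ appear over all $(i,j) \in \sigma$, which is exactly the claim. The main obstacle, and the only place where genuine care is needed rather than routine translation, is verifying that a conforming sliding sequence really does induce a \emph{clean} FIFO update sequence on the half-planes: one must check that consecutive steps $(1,0)$ and $(0,1)$ in the index pair correspond exactly to a single dequeue and a single enqueue respectively, that the monotonicity of $i$ and $j$ forbids any re-insertion (so it is a queue and not an arbitrary sequence of insertions and deletions, for which the linear bound would fail), and that the initial and final windows are themselves of the correct form; once this correspondence is pinned down the rest is a direct citation.
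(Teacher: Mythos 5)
Your proposal is correct and matches the paper's intent: the paper gives no standalone proof of this lemma, presenting it as a paraphrase of Lemma~1 of Chan, Hershberger, and Pratt, and its technical overview sketches exactly your reduction (half-planes of the window form a FIFO update sequence, dualize to points, cite the time-windowed convex hull bound). The only nit is that you invoke conformity of $\sigma$, which is not needed — the FIFO structure follows from the definition of a sliding sequence alone (increments of the form $(1,0)$ or $(0,1)$), and the lemma is stated for arbitrary linear-size sliding sequences.
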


\begin{lemma}\label{lem:contiguousInterval}
    During a traversal of any sliding sequence $\sigma$ each point $p$ appears as a vertex on the visibility core in at most one contiguous time interval.
\end{lemma}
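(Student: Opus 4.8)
The statement says: during a traversal of any sliding sequence $\sigma$, each point $p$ is a vertex of the visibility core $\core{i-1}{j+1}$ during at most one contiguous interval of the sequence. The natural approach is to understand which index pairs $(i,j)$ give a core in which $p$ appears as a vertex, and show that set is an "interval" with respect to the monotone order in which $\sigma$ visits pairs. First I would recall the structure of $\core{i-1}{j+1}$: it is the intersection of half-planes, one per edge of $P$ whose index lies in the window $[i-1, j+1]$, taking the half-plane to the left of each directed edge. So as $i$ increases, half-planes are removed from the "left" end of the window; as $j$ increases, half-planes are added at the "right" end. The point $p$ is a vertex of the core exactly when it is the intersection of two of these supporting lines $\ell_a, \ell_b$ (with $a < b$ both in the window), and it lies to the left of (or on) every other supporting line in the window.

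Next I would fix the point $p$. Since $\sigma$ is a sliding sequence, consecutive pairs differ by $(1,0)$ or $(0,1)$, so both $i$ and $j$ are nondegenerate monotone nondecreasing along $\sigma$; hence $\sigma$ induces a total order on the visited pairs, and "contiguous time interval" just means a set of consecutive visited pairs. I claim the set of pairs $(i,j) \in \sigma$ for which $p$ is a core vertex is downward/upward closed in the appropriate directions. Concretely: suppose $p$ is a vertex of $\core{i-1}{j+1}$. I would argue that the two edges defining $p$ have some indices $a \le b$, and $p$ can only be a vertex while $\{a, b\} \subseteq [i-1, j+1]$ — so $i - 1 \le a$ and $b \le j+1$. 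Moreover, for $p$ to remain a vertex we need $p$ to stay to the left of all supporting lines in the window. The key observation is that the constraint "$p$ lies left of $\ell_c$ for all $c$ in the window" becomes *harder* to satisfy as the window grows and *easier* as it shrinks; but the window doesn't only grow — $i$ increasing shrinks it on the left while $j$ increasing grows it on the right. So I need to be more careful and pin down the geometry of which edges can "kill" $p$.

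The cleaner route: $p$ is a vertex of the core iff $p \in \core{i-1}{j+1}$ (the emptiness/containment part) AND $p$ is on the boundary, i.e., $p$ lies on at least two of the active supporting lines and is a $0$-dimensional face. But $p$ fixed lies on at most two supporting lines of $P$ (by general position, no three edge-lines concurrent), say $\ell_a$ and $\ell_b$ with $a \le b$ (or $p$ lies on only one, in which case it can never be a vertex, only possibly an edge-interior point — excluded). So $p$ is a vertex of $\core{i-1}{j+1}$ iff (1) $a, b \in [i-1, j+1]$, i.e. $i-1 \le a$ and $b \le j+1$, i.e. $i \le a+1$ and $j \ge b-1$; and (2) $p$ lies (weakly) left of $\ell_c$ for every edge index $c \in [i-1,j+1]$. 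Condition (1) defines, within the monotone order of $\sigma$, a contiguous interval: $i \le a+1$ is a prefix condition (it fails once $i$ exceeds $a+1$ and $i$ never decreases), and $j \ge b-1$ is a suffix condition (it starts holding once $j$ reaches $b-1$ and $j$ never decreases). The intersection of a prefix and a suffix of the totally ordered sequence $\sigma$ is contiguous. The remaining worry is condition (2): could $p$ be left of all active lines for some early pairs in this interval, then fail for a middle pair, then hold again later? I would argue no: once $j \ge b-1$ and $i \le a+1$, *within this interval* the set of active edge indices $[i-1,j+1]$ only shrinks as we move forward (since $i$ only increases and, once we're past the point where $j$ first reached $b-1$... hmm) — actually this needs the observation that along $\sigma$ restricted to the interval where (1) holds, at the left end $j$ is as small as it gets and $i$ as small as it gets, and the window is "centered" around $[a,b]$; I would show that any edge $\ell_c$ violating "$p$ left of $\ell_c$" either has $c < a$ (so it exits the window as $i$ grows, never to return) or $c > b$ (so it only enters later as $j$ grows, never to exit) — hence the set of pairs satisfying (2) among those satisfying (1) is again a prefix-intersect-suffix, hence contiguous, and its intersection with the contiguous set from (1) is contiguous.

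**Main obstacle.** The delicate point is rigorously handling condition (2): I must rule out a violating edge $\ell_c$ with $a < c < b$ (strictly between), since such an edge is inside the window for the *entire* interval where (1) holds and could in principle cut $p$ off non-monotonically. But if $p$ lies on $\ell_a$ and $\ell_b$, and $a < c < b$, then the chain of edges of $P$ from $a$ to $b$ must pass $p$'s side structure in a controlled way — I expect one shows that if any such intermediate edge-line has $p$ strictly to its right, then $p$ was never a core vertex at all (the core is empty near $p$ for *every* window containing $[a,b]$), so condition (2) either holds throughout the (1)-interval or fails throughout it for intermediate indices; the only time-dependence comes from edges with $c \le a-1$ or $c \ge b+1$, which behave monotonically as argued. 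Assembling these, the set of $\sigma$-pairs where $p$ is a core vertex is an intersection of one prefix condition and one suffix condition on a totally ordered sequence, hence a single contiguous interval. I would write this up, leaning on \Cref{lem:core_total_size} only implicitly (the finiteness is automatic) and on the general-position assumptions stated in the preliminaries.
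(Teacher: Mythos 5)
Your argument is correct. It rests on the same underlying mechanism as the paper's proof---the FIFO monotonicity of the window: edges leave from the low-index end and never return, and edges enter only at the high-index end---but it is organized differently. The paper argues locally at the transition where a vertex $v$ stops being a vertex of $\core{i-1}{j+1}$: either a defining edge was removed (and removed edges never reappear), or a newly added half-plane excludes $v$ (and that half-plane, having a larger index than $v$'s two defining edges, stays in the window at least as long as they do), so in either case $v$ cannot come back. You instead give a global characterization: $p$ is a vertex exactly when its two defining edge indices $a,b$ lie in the window and $p$ is left of every other active supporting line, and you show each of these requirements is a prefix or a suffix condition along the totally ordered sequence $\sigma$ (handling the potentially troublesome indices $a<c<b$ by noting such an edge lies in \emph{every} window containing both $a$ and $b$, so if it cuts off $p$ then $p$ is never a vertex at all), whence the set of qualifying pairs is an intersection of prefixes and suffixes and therefore contiguous. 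Both proofs are sound; the paper's is shorter, while yours is more explicit about why the intermediate constraints cannot cause non-monotone behavior and in fact pins down the interval by explicit inequalities. One small point: your uniqueness of the defining pair $(a,b)$ uses that no three edge supporting lines pass through a common point; this is the same kind of general-position assumption the paper itself relies on when it speaks of each core vertex being ``defined by'' two edges, so it is not a gap relative to the paper's standards, but it should be stated when writing the argument up.
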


\begin{proof}
   Let $v$ be a vertex of the visibility core $\core{i-1}{j+1}$ that is not a vertex of $\core{i'-1}{j'+1}$, where $(i', j')$ follows $(i, j)$ in $\sigma$. 
    There are two possible reasons for $v$ not being a vertex of $\core{i'-1}{j'+1}$: either one of its supporting edges has been removed from the sliding window, or a new half-plane has been added to the visibility core that excludes $v$. 
    In the first case, since removed edges never reappear in later windows, $v$ cannot return. 
   In the second case, the new half-plane remains in the sliding window at least as long as the edges defining $v$, again preventing $v$ from reappearing. 
    Hence, each vertex appears in at most one contiguous interval.
    \end{proof}

The above two lemmas imply the following, see also the remarks made in \cite{Chan2019Time-Windowed}.
\begin{corollary}\label{thm:fifo}
        One can maintain the intersection of half-planes, explicitly, stored as a balanced binary tree with leaves in cyclical order, under first-in-first-out updates using linear space and an amortized update time $O(\log n)$. 
\end{corollary}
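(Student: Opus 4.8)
The plan is to observe that maintaining an intersection of half-planes under first-in--first-out updates is exactly a traversal of a sliding sequence, and then to combine \Cref{lem:core_total_size,lem:contiguousInterval} with standard balanced-tree bookkeeping. Let $h_1, h_2, \dots$ be the half-planes in the order they are inserted; at every moment the active set is a contiguous window $h_a, \dots, h_b$, an insertion increments $b$, and a deletion of the oldest constraint increments $a$. Hence the index pairs $(a,b)$ form a sliding sequence in the sense of \Cref{def:sequence}, so \Cref{lem:core_total_size} shows that the total number of vertex insertions and deletions into the running intersection $\bigcap_{a\le i\le b} h_i$ is linear in the number of updates --- i.e.\ $O(1)$ amortized structural changes per update --- while \Cref{lem:contiguousInterval} shows that each vertex of the intersection is present during a single contiguous time interval.

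Next I would describe the structure and the per-update cost. We store the current intersection as a balanced binary tree whose leaves list its vertices in cyclic order, augmented so that a query line can be located against the boundary (tangent search) in $O(\log n)$ time. An insertion of $h$ is routine: binary-search for the at most two points where $\partial h$ crosses the boundary, splice out the occluded chain, and splice in the two new vertices; the spliced-out vertices are charged to the linear budget of \Cref{lem:core_total_size}, so the amortized cost is $O(\log n)$. A deletion of $h_a$ is the delicate case. If $h_a$ supports no edge of the current intersection, nothing structural happens and we pay only an $O(\log n)$ check; otherwise, removing $h_a$ uncovers a chain of vertices, each of which --- by \Cref{lem:contiguousInterval} --- now begins its unique contiguous lifetime and is therefore charged exactly once to the linear budget. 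Recovering this chain in $O(\log n)$ time plus the number of uncovered vertices is precisely what the time-windowed hull machinery of \cite{Chan2019Time-Windowed} supplies; summing the charges yields amortized $O(\log n)$ per update. The space is $O(n)$: the tree has $O(n)$ leaves, and by \Cref{lem:core_total_size} the auxiliary history needed for FIFO deletions has linear total size once constraints that have left the window are discarded.

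The main obstacle is the deletion step: naive deletion from a half-plane intersection is exactly what forces the $O(\log^2 n)$ bound in the general dynamic-convex-hull setting, so the argument must exploit the FIFO restriction. The key is \Cref{lem:contiguousInterval}: because a vertex never flickers in and out of the hull, every vertex uncovered by a deletion is genuinely new, can be charged exactly once, and the total restoration work telescopes into the linear bound of \Cref{lem:core_total_size}. I would therefore spend most of the write-up making this auxiliary history and charging scheme precise (following \cite{Chan2019Time-Windowed}) and treat the balanced-tree splicing --- for insertions and for the structural part of deletions --- as standard.
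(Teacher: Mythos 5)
Your proposal is correct and follows essentially the same route as the paper: the paper derives this corollary directly from \Cref{lem:core_total_size} and \Cref{lem:contiguousInterval}, deferring the FIFO deletion machinery to the remarks in \cite{Chan2019Time-Windowed}, exactly the two lemmas and citation your charging argument is built on. Your write-up merely expands the paper's one-line justification into an explicit sketch of the splicing and amortization, with no substantive divergence.
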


We subsequently implement a very specific data structure that we use throughout the paper:

\begin{datastructure}\label{ds:sliding}
    We define \slidingds as a data structure that for a pair of indices $(i, j)$ stores the visibility core $\core{i-1}{j+1}$ and $\core{i-1}{j+2}$, supporting the following operations in (amortized) $O(\log n)$ time:
    \begin{itemize}[noitemsep, nolistsep]
        \item \textbf{\boldmath Increment $i$:} Delete the corresponding half-plane.
        \item \textbf{\boldmath Increment $j$:} Insert a new half-plane. 
        \item \textbf{Emptiness query:} Determine whether $\core{i-1}{j+1}$ is empty and return an arbitrary point inside it if not.
        \item \textbf{Containment query:} For a point $x$, determine whether $x\in\core{i-1}{j+1}$.
        \item \textbf{Extreme-point query:} For a given vector $w$, return the $x \in \core{i-1}{j+1}$ maximizing  the dot product $\langle w,x\rangle$.
        \item \textbf{Ray-shooting query:} Return the last point of intersection between a ray $\ell$ (a directed half-line) and $\core{i-1}{j+1}$.
        \item \textbf{Angle query:} Given $u,v$ with $\core{i-1}{j+1}$ left of $\overline{u\,v}$, find $g \in \core{i-1}{j+1}$ that maximizes the angle $\sphericalangle(u, g, v)$.
        \item We also support all these queries for $\core{i-1}{j+2}$. Whenever we perform a query on $\core{i-1}{j+2}$ instead of $\core{i-1}{j+1}$, we refer to such a query as a \emph{lookahead query}.
    \end{itemize}
\end{datastructure}

\begin{restatable}{theorem}{slidingdsthm}
    The \slidingds can be implemented with $O(n)$ preprocessing and space, supports queries in $O(\log n)$ time, and updates in amortized $O(\log n)$ time. 
\end{restatable}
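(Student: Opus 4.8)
The plan is to build the structure on top of the FIFO half-plane-intersection machinery of \Cref{thm:fifo}. Recall that, by point–line duality, maintaining the convex polygon $\core{i-1}{j+1}$ under incrementing $i$ (deleting the half-plane of edge $\overline{v_{i-1}v_i}$) and incrementing $j$ (inserting the half-plane of edge $\overline{v_{j}v_{j+1}}$) is exactly a first-in–first-out update sequence on the dual point set. Since the conforming sliding sequence $\sigma$ we will use has linear length, \Cref{lem:core_total_size} and \Cref{lem:contiguousInterval} guarantee that only $O(n)$ distinct vertices ever enter or leave the core, and \Cref{thm:fifo} then gives us an explicit balanced binary tree whose leaves are the vertices of $\core{i-1}{j+1}$ in cyclic (convex) order, supporting each update in amortized $O(\log n)$ time with $O(n)$ total space. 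Maintaining $\core{i-1}{j+2}$ in parallel (a second FIFO structure running "one step ahead" on $j$) costs another constant factor; this handles the lookahead queries uniformly, since any lookahead query is just the same query asked of the second tree, so it suffices to describe the query algorithms once.

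Next I would implement each query by a standard binary search over this cyclic balanced tree; the only nonstandard ingredient is that comparisons must respect the convex structure. For the \textbf{emptiness query}, the tree is either empty or not, and if nonempty we return (say) the leftmost stored vertex, or report that the intersection is a single point or segment in the degenerate case — all in $O(\log n)$. For the \textbf{containment query} on a point $x$, we locate, via binary search on the vertex angles around an interior reference point, the one or two edges of the convex polygon that $x$ could violate, and test $x$ against their supporting half-planes. For the \textbf{extreme-point query} with direction $w$, we binary-search for the vertex maximizing $\langle w, \cdot\rangle$: this is the classical "rotating" search on a convex polygon stored in a balanced tree, which works because at each internal node we can compare the stored extreme directions of the two subtrees. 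For the \textbf{ray-shooting query} with a directed half-line $\ell$, we first do an extreme-point query in the direction of $\ell$'s normal to find the two "tangent" vertices splitting $\partial(\core{i-1}{j+1})$ into the chain facing $\ell$ and the chain facing away; the last intersection of $\ell$ with the polygon lies on one of these two chains, each of which is $x$-monotone (or monotone in $\ell$'s direction) and hence admits a binary search for the crossing edge — again $O(\log n)$.

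The one genuinely new query is the \textbf{angle query}: given $u,v$ with $\core{i-1}{j+1}$ lying left of the directed line $\overline{u\,v}$, find the vertex $g$ of the core maximizing $\sphericalangle(u,g,v)$. The key geometric fact is that the locus of points $x$ with $\sphericalangle(u,x,v)=\alpha$ is a circular arc through $u$ and $v$, and these arcs are nested as $\alpha$ increases (smaller arcs for larger angles, on the side of the line containing the core). Thus the function $x \mapsto \sphericalangle(u,x,v)$ is quasiconcave over any convex region on that side: restricted to the boundary of the convex core, it is unimodal as we walk along $\partial(\core{i-1}{j+1})$. Therefore I would do a ternary-style / tangency binary search on the balanced tree: at each internal node, using the two child subtrees' extreme vertices, decide which side contains the maximizer by comparing the angle values (or, more robustly, by an orientation test against the tangent circle through $u,v$ and the current candidate). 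Because each node stores enough summary information — in particular the convex order lets us evaluate the angle at any leaf in $O(1)$ and recurse in the unimodal direction — this runs in $O(\log n)$ time. If the maximizer is attained at $u$ or at $v$ itself (the degenerate cases singled out in \Cref{def:dominators}), the boundary search detects this at the appropriate endpoint.

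The main obstacle, and the part needing the most care, is proving that the angle function is genuinely unimodal along the convex chain and that the per-node decision in the binary search is correct even in degenerate configurations (when the core touches the line $\overline{u\,v}$, when $u$ or $v$ is a vertex of the core, or when three relevant points are nearly collinear). This requires the nested-circular-arcs argument together with the general-position assumption from the preliminaries (no three vertices of $P$ collinear), which rules out the worst degeneracies; the remaining boundary cases are handled by treating $u$ and $v$ as explicit candidates, exactly as \Cref{def:dominators} anticipates. Everything else — the FIFO tree, and the monotone-chain binary searches for the other five query types — is routine once \Cref{thm:fifo} is in hand, and the amortized $O(\log n)$ update time and $O(n)$ preprocessing and space bounds follow directly from \Cref{lem:core_total_size} and \Cref{thm:fifo}.
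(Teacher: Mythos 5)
Your proposal is correct and follows essentially the same route as the paper: maintain both cores via the FIFO convex-hull structure of \Cref{thm:fifo}, answer containment, extreme-point, and ray-shooting queries by standard $O(\log n)$ binary search on the cyclically ordered balanced tree, and handle the angle query via the nested circular-arcs argument showing that $\sphericalangle(u,\cdot,v)$ is unimodal along the boundary of the convex core, permitting a logarithmic search. The paper simply cites Dobkin--Kirkpatrick-style queries for the standard operations where you spell them out, but the substance is identical.
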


\begin{proof}
Let $(i, j)$ be the dynamic index pair from the \slidingds. We maintain $\core{i-1}{j+1}$ and $\core{i-1}{j+2}$ separately, under increments of $i$ and $j$ using \Cref{thm:fifo}. This maintains the vertices of these convex areas as balanced binary trees where the left-to-right traversal of the leaves corresponds to a clockwise traversal. 
For a convex polygon with $n$ vertices, whose vertices are stored in a balanced tree according to their cyclic order, the containment, ray shooting, and extreme-point queries can be answered in $O(\log n)$ time~\cite{DOBKIN1983}.
We can thus immediately support all queries except for angle queries.

For angle queries, we conceptually rotate and translate the plane until $\overline{u \, v}$ coincides with the horizontal line through $(0, 0)$ with the area left of $\overline{u \, v}$ lying above this line. 
For any fixed angle $\gamma > 0$, the points $\vec{x} \in \mathbb{R}^2_{\geq 0}$ for which $\sphericalangle (u, \vec{x}, v) =  \gamma$ form a convex arc.
For $\gamma_1, \gamma_2$ with $\gamma_1 > \gamma_2$ the corresponding convex arc lies strictly above the arc induced by $\gamma_2$. It follows that the function $\vec{x} \mapsto-\sphericalangle(u,\vec{x},v)$ defined for all points $\vec{x} \in \mathbb{R}^2_{\geq 0}$ is well-defined, unimodal in $\core{i-1}{j+1}$, and strictly monotone between adjacent edges of $\core{i-1}{j+1}$. 
Since we have the edges of $\core{i-1}{j+1}$ explicitly, in-order, in a balanced binary tree, it then follows that the point in $g$ that maximizes $\sphericalangle(i, g, j)$  lies on the boundary of  $\core{i-1}{j+1}$ and that we can find it through binary search along the edges of $\core{i-1}{j+1}$  in $O(\log n)$ time. 
\end{proof}

\section{Computing all reduced  dominators}\label{sec:computingDiscreteDominators}

We now give an immediate application of \Cref{obs:trivialOneCenter} and our data structures, which we will frequently use to compute the reduced dominators: 

\begin{lemma}\label{lem:computeVisibility}
Let $g \in P$ be a fixed point, and let $s, m, t \in \partial P$ with $m\in[s,t]$. 
Suppose $g$ lies in $\core{s}{t}$. 
Then there exists a unique maximal chain $[u, v] \subseteq [s, t]$ that is visible from $g$ and that contains $m$, which can be found in $O(\log n)$ time.
\end{lemma}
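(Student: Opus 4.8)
The plan is to characterize the maximal visible chain $[u,v]$ as the intersection of two one-sided visibility ``reach'' chains — one obtained by walking counter-clockwise from $m$ toward $t$, and one obtained by walking clockwise from $m$ toward $s$ — and then to compute each of these two endpoints via a single ray-shooting query after identifying the relevant reflex vertex on the shortest path. First I would observe that, since $g\in\core{s}{t}$, the point $g$ lies left of every edge supporting the chain $[s,t]$, so $g$ sees $m$ (indeed the segment $\overline{g\,m}$ stays inside $P$ because it cannot cross $\partial P$: any crossing would be with an edge of $[s,t]$, but $g$ and $m$ are both weakly left of each such edge, and a short argument using simplicity of $P$ rules this out). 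Hence a nonempty visible chain through $m$ exists. Uniqueness is immediate: the visibility polygon of $g$ intersected with $\partial P$ is a disjoint union of maximal visible chains, and exactly one of them contains $m$ (cf.\ the observation preceding \Cref{def:dominators}).

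Next I would locate $v$, the counter-clockwise endpoint. Walk from $m$ along $[m,t]$; visibility from $g$ is lost for the first time at a point $v$ where the segment $\overline{g\,v}$ becomes tangent to $\partial P$, i.e.\ $\overline{g\,v}$ grazes a reflex vertex $r$ of $P$ that ``blocks'' further progress, or $v=t$. The blocking vertex $r$ is exactly the first vertex on the shortest path $\short{g}{t}$ at which the path turns away from the segment $\overline{g\,t}$; more precisely, using \Cref{ds:shortestPath} I would retrieve $\short{g}{t}$ as $O(\log n)$ balanced trees and, using its second property, find the first edge along this path whose turn structure certifies a reflex vertex of $P$ obstructing the ``fan'' from $g$. (If $\short{g}{t}$ is a single segment, then $g$ sees $t$ and we set $v=t$.) Having found such a reflex vertex $r$, the endpoint $v$ is the point where the ray from $g$ through $r$ first hits $\partial P$ beyond $r$; this is one ray-shooting query via \Cref{ds:rayShooting}, after which I clip $v$ to $t$ if the ray would exit the chain $[m,t]$. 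The counter-clockwise endpoint $u$ is found symmetrically by the mirror argument applied to $\short{g}{s}$ and a ray from $g$ through the corresponding blocking reflex vertex, clipped to $s$. The resulting chain $[u,v]$ is visible from $g$ (every point of it is connected to $g$ by a segment not crossing $\partial P$, since we stopped exactly at the first obstructions on both sides) and maximal (extending past $u$ or $v$ would cross the blocking segment), and it contains $m$ by construction; total time is $O(\log n)$.

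The main obstacle I anticipate is making the ``first blocking reflex vertex'' step precise and showing it is exactly what the turn-query of \Cref{ds:shortestPath} returns: one must argue that the shortest path $\short{g}{t}$, which is left-turning after possibly an initial straight portion collinear with $\overline{g\,t}$, has its first genuine turn at a vertex $r$ that is reflex in $P$, that this $r$ lies on the chain $(m,t)$ (or that no such $r$ exists, giving $v=t$), and that the ray from $g$ through $r$ is the correct tangent direction — in particular that shooting along $g\to r$ lands on a point of $\partial P$ that is the true loss-of-visibility point and not some farther or nearer intersection. This requires a careful but standard argument about visibility polygons of a point inside a simple polygon and their interaction with the shortest-path tree; once set up on one side, the other side and the bookkeeping of clipping to $[s,t]$ are routine.
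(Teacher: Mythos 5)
Your algorithm is essentially the paper's: shoot a ray from $g$ along the first edge of $\short{g}{s}$ and of $\short{g}{t}$ (equivalently, through the first interior vertex of each path) and take the two hit points as $u$ and $v$, using \Cref{ds:shortestPath} and \Cref{ds:rayShooting}. However, your proof breaks at the very first step: it is \emph{not} true that $g\in\core{s}{t}$ implies that $g$ sees $m$. Membership in $\core{s}{t}$ only constrains $g$ relative to the edges of the chain $[s,t]$; the segment $\overline{g\,m}$ can be cut off by edges of the complementary chain $(t,s)$. Concretely, take a large square whose bottom edge contains $[s,t]$, and let a thin notch of $\partial P$ descend from the top edge to a point just above $[s,t]$: then $g$ placed off to the side is left of the (single) supporting line of $[s,t]$, hence in $\core{s}{t}$, yet the notch blocks $\overline{g\,m}$ for suitable $m\in[s,t]$. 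Your parenthetical claim that ``any crossing would be with an edge of $[s,t]$'' is exactly the unjustified step. The paper does not derive visibility of $m$; it \emph{tests} it in $O(\log n)$ time with \Cref{ds:shortestPath} and returns the empty chain if $g$ does not see $m$.

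The same oversight resurfaces in your correctness sketch, and it is precisely the part you defer as ``careful but standard'': asserting that $[u,v]$ is visible ``since we stopped exactly at the first obstructions on both sides'' ignores obstructions contributed by the rest of $\partial P$. The paper's proof supplies exactly this missing content. Assuming $g$ sees $m$, it considers the region $P'$ bounded by $\short{g}{s}$, $[s,t]$ and $\short{g}{t}$, splits it by $\overline{g\,m}$, and, letting $u^*$ be the last point of $[s,m]$ whose segment to $g$ meets $\short{g}{s}$ again, shows that the region bounded by $[u^*,m]$, $\overline{m\,g}$, $\overline{g\,u^*}$ contains no point of $\short{g}{s}$ and hence lies inside $P$; since $g\in\core{s}{t}$ lies left of all its edges, \Cref{obs:trivialOneCenter} gives that $g$ sees all of $[u^*,m]$, while points of $[s,u^*)$ are invisible because their segments to $g$ cross $\short{g}{s}$. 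Finally, a shortest-path shortening argument shows the ray along the first edge of $\short{g}{s}$ passes exactly through $u^*$, so the ray-shooting query returns $u=u^*$ (and symmetrically $v$). Without the containment argument and this identification of the tangent ray, your construction is right in spirit but the proof is not there. Two minor points: you do not need the turn-structure query of \Cref{ds:shortestPath}---the first edge of the path is directly available---and once correctness is established the ``clip to $t$'' step never triggers, since the hit point lies in $[m,t]$ whenever $g$ sees $m$.
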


\begin{proof}
If $g$ does not see $m$, we detect this in $O(\log n)$ time using \Cref{ds:shortestPath} and return the empty interval. 
Let $S_s$ be the shortest path from $g$ to $s$, and let $r_s$ denote the ray along its first edge. 
Define $S_t$ and $r_t$ analogously. 
The three elements $(r_s, g, r_t)$ form a wedge. 
We claim that the points $u$ and $v$ of $\partial P$ that are respectively hit by $r_s$ and $r_t$ define the desired maximal chain. 
We can find these points in $O(\log n)$ time using two ray-shooting queries (\Cref{ds:rayShooting}).

\begin{figure}[b]

    \centering
    \includegraphics{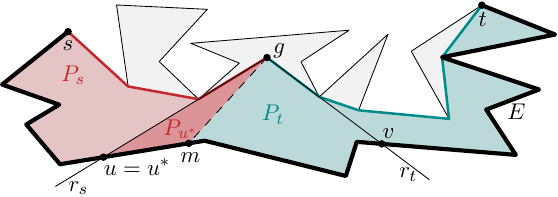}
    \caption{The inclusion-wise maximal area visible from $g$ and containing $m$ is bounded by the rays $r_s$ and $r_t$ along the first edges of the shortest paths from $g$ to $s$ and $t$, respectively. 
    These rays hit $\partial P$ in points $u$ and $v$ that define the maximal chain visible to $g$.}
    \label{fig:visiblityCompute}
\end{figure}

Consider the polygon $P'$ bounded by $S_s$, $[s, t]$, and $S_t$. 
Since $g$ sees $m$, the segment $\overline{g\,m}$ splits $P'$ into two polygons, $P_s \subseteq P$ and $P_t \subseteq P$, containing $s$ and $t$ respectively (see \Cref{fig:visiblityCompute}). 
Let $u^*$ be the maximal point in $[s, m]$ such that $\overline{g\,u^*}$ intersects $S_s$ in more than one point (i.e., not only in $g$). 
We first show that $g$ sees all of $[u^*, m]$. 
Indeed, consider the polygon $P_{u^*}$ bounded by $[u^*, m]$, $\overline{m\,g}$, and $\overline{g\,u^*}$. 
Because $g \in \core{s}{t}$, it lies left of all edges of $P_{u^*}$ and thus, by \Cref{obs:trivialOneCenter}, the visibility from $g$ to $[u^*, m]$ is not blocked by any edge in $[s,t]$.
By construction, $P_{u^*}$ contains no point of $S_s$ in its interior and hence lies within $P_s \subseteq P$. 
Therefore, $g$ indeed sees $[u^*, m]$ in $P$. 

If $s \neq u^*$, then $g$ does not see any point $u' \in [s, u^*)$, since by definition of $u^*$, the segment $\overline{g\,u'}$ intersects an edge of $S_s$ in its interior. 
It remains to show that $u^* = u$. 
Let $r^*$ be the ray from $g$ through $u^*$. 
We claim that $r^* = r_s$. 
Indeed, since $\overline{g\,u^*}$ intersects $S_s$, if this intersection occurs at a point $r$ not lying on the last edge of $S_s$, then $S_s$ could be shortened by the segment $\overline{r\,g} \subseteq \overline{u^*\,g} \subseteq P$. 
This contradicts the definition of $S_s$ as a shortest path. 
Hence $r^* = r_s$, and we conclude that $u^* = u$. 
By symmetry, the same argument applies to $r_t$, yielding the maximal visible chain $[u, v]$.
\end{proof}

We compute two linear-size sets $\reducedDom'$ and $\baddomsigma$ that contain all reduced good and reduced bad dominators, respectively. 

\subsection{A linear-size set $\reducedDom'$ that contains all reduced good dominators in  $\reducedDom$}

We compute a linear-size set $\reducedDom'$ that contains all guards in the set of reduced good dominators $\reducedDom$ (\Cref{def:reduced_bad_dom}) via a sliding window.
Formally, we maintain a pair of indices $(i, j)$ subject to incrementing $i$ and incrementing $j$.
During this procedure, we maintain $\core{i-1}{j+1}$ (and  $\core{i-1}{j+2}$) via \slidingds and we prove that this data structure can answer an advanced query:

\begin{lemma}\label{lem:finding_BUC}
    Given vertices $i$ and $j$ with $i \leq j$, and $\core{i - 1}{j + 1}$ in \slidingds, we can compute the unique good dominator for $(i, j)$ (see \Cref{def:dominators}), or return that no good dominator exists for $(i, j)$, in $O(\log n)$ time.
\end{lemma}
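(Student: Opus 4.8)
The plan is to follow the case distinction in Definition~\ref{def:dominators} directly, using the \slidingds operations as black boxes. First I would handle the degenerate case $i=j$: here the good dominator is defined to be $(i,[i-1,i+1])$, which we can output immediately in $O(1)$ time. So assume $i<j$. The definition splits into two cases according to whether $\core{i-1}{j+1}$ intersects the interior of the segment $\overline{i\,j}$ (segment-intersecting dominator, type~\ref{itm:B}) or lies entirely left of $\overline{i\,j}$ (angle-maximizing dominator, type~\ref{itm:C}); by general position these are the only possibilities when $\core{i-1}{j+1}$ is nonempty, and if $\core{i-1}{j+1}$ is empty there is no good dominator. We detect emptiness with an emptiness query; if nonempty, we decide between the two cases by a ray-shooting query along $\overline{i\,j}$ (or by containment/extreme-point queries relative to the line through $i$ and $j$) to determine whether the core meets the open segment.

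In the type~\ref{itm:B} case, the ray-shooting query on \slidingds returns the last intersection point $g$ of $\overline{i\,j}$ with $\core{i-1}{j+1}$; by Definition~\ref{def:dominators} this $g$ is exactly the guard location, and since $g\in\core{i-1}{j+1}$ we invoke \Cref{lem:computeVisibility} with $s=i-1$, $t=j+1$, $m$ any interior point of $[i,j]$ (say the midpoint) to obtain the maximal visible chain $[u_{\max},v_{\max}]\subset[i-1,j+1]$ containing $[i,j]$ in $O(\log n)$ time. In the type~\ref{itm:C} case, we first check with containment queries whether $\core{i-1}{j+1}$ contains the vertex $i$ or the vertex $j$; if so, the guard is that vertex, and we again apply \Cref{lem:computeVisibility} from that point. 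Otherwise we issue an angle query on \slidingds with the pair $(i,j)$: since $\core{i-1}{j+1}$ lies left of $\overline{i\,j}$, the angle query returns the point $g\in\core{i-1}{j+1}$ maximizing $\sphericalangle(i,g,j)$, and a final call to \Cref{lem:computeVisibility} produces $[u_{\max},v_{\max}]$. Each step is a constant number of $O(\log n)$-time operations, so the total is $O(\log n)$.

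The main subtlety—not really an obstacle but the point requiring care—is verifying that the object we construct is precisely the one Definition~\ref{def:dominators} names, so that "the unique good dominator for $(i,j)$" is well-defined and correctly reported: we must confirm that the case split (core intersecting $\overline{i\,j}$ versus lying left of it) is exhaustive and mutually exclusive under general position, that the ray-shooting query indeed returns "the last point of $\overline{i\,j}$ intersecting $\core{i-1}{j+1}$" in the sense of the definition, and that the angle-maximizing point is unique (which follows from the unimodality and strict monotonicity of $\vec{x}\mapsto-\sphericalangle(i,\vec x,j)$ on the core established in the proof of the \slidingds theorem, together with general position ruling out a positive-length segment of the core on a single angle-arc). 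Given those facts, \Cref{lem:computeVisibility} guarantees the returned chain is the unique maximal chain visible from $g$ containing the chosen interior point $m$, and since any maximal chain containing all of $[i,j]$ contains $m$, this is the chain required by the definition. Finally, if at any stage the relevant geometric object fails to exist (core empty, or the maximal chain from \Cref{lem:computeVisibility} empty, or—in the type~\ref{itm:C} branch—no point of the core achieving a valid configuration), we report that no good dominator exists for $(i,j)$, matching the last sentence of Definition~\ref{def:dominators}.
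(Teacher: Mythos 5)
Your algorithm is essentially the paper's: emptiness query, distinguish the two cases of \Cref{def:dominators} via extreme-point/ray-shooting queries on $\core{i-1}{j+1}$, use containment/angle queries to get the candidate point in the type-\ref{itm:C} case and ray shooting along $\overline{i\,j}$ in the type-\ref{itm:B} case, then a single call to \Cref{lem:computeVisibility}. However, there is one genuine gap at the end: you never verify that the chain returned by \Cref{lem:computeVisibility} actually contains all of $[i,j]$. Membership of the candidate point $g$ in $\core{i-1}{j+1}$ only says that $g$ lies left of the supporting lines of the edges of $[i-1,j+1]$; it does not guarantee that $g$ sees the whole chain $[i,j]$, because visibility can be blocked by portions of $\partial P$ outside $[i-1,j+1]$. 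Hence the maximal visible chain through your chosen anchor $m$ can be non-empty yet fail to contain $[i,j]$, in which case, by the last sentence of \Cref{def:dominators}, there is \emph{no} good dominator for $(i,j)$ --- but your listed failure conditions (core empty, returned chain empty, no valid configuration in the type-\ref{itm:C} branch) do not catch this, and your statement that the returned chain ``is the chain required by the definition'' silently presupposes that the definition's chain exists. As written, your algorithm would report a guard that is not a good dominator in this situation, which is not vacuous: it is exactly Case~3 in the proof of \Cref{lem:no_further_dominator}. The paper's proof closes precisely this hole by checking $[i,j]\subset[u_{\max},v_{\max}]$ and, on failure, certifying that no dominator exists; adding that constant-time containment test makes your argument complete.
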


\begin{proof}
    We perform an emptiness query on $\core{i-1}{j+1}$.
If the visibility core $\core{i-1}{j+1}$ is empty then by definition no dominator exists for $(i, j)$. 

We first use extreme-point queries in the two directions orthogonal to $\overline{i\,j}$ to test for points in $\core{i-1}{j+1}$ strictly left and right of $\overline{i\,j}$. 
If there is no point strictly right of $\overline{i\,j}$ then the entire visibility polygon lies left of $\overline{i\,j}$. In this case, we can, via angle queries and containment queries at $i$ and $j$, compute the point $g^*$ that is either $i$ or $j$, if $i$ or $j$ is in $\core{i-1}{j+1}$, and the point in $\core{i-1}{j+1}$ that maximizes the angle $\sphericalangle(i, g^*, j)$.
If instead the visibility core has both a point left of $\overline{i\,j}$ and a point strictly right of $\overline{i\,j}$, then we compute a point $g'$ that is in the interior of $\overline{i\, j}$ and in $\core{i-1}{j+1}$, via two ray shooting queries on $\core{i-1}{j+1}$ using the supporting ray from $i$ to $j$, and the ray from $j$ to $i$.
If there is no point left of $\overline{i\, j}$, or $\core{i-1}{j+1}$ has a point strictly right of $\overline{i\, j}$, but only intersects $\overline{i\, j}$ in either $i$ or $j$, then by definition there is no good dominator for $(i,j)$.
Note that at most one of $g^*$ or $g'$ exists.

We apply \Cref{lem:computeVisibility} to compute the maximal interval $[u_\text{max}, v_\text{max}]$ that is visible to $g^*$ (or $g'$) that includes $i$ in logarithmic time.
If $[i, j] \subseteq [u_\text{max}, v_\text{max}]$ then we return this guard as the segment-intersecting or angle-maximizing dominator for $i, j$. Otherwise, this procedure certifies that no such dominator exists. 
\end{proof}

\begin{lemma}
    \label{lem:no_further_dominator}
    Suppose that for an index pair $(i, j)$ with $i \in [2, n+1]$ and $j \in [i, 2n]$, there exists no good dominator. Then there exists no good dominator for $(i, j')$ with $j' > j$. 
\end{lemma}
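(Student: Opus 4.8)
The plan is to prove the contrapositive, with \emph{monotonicity of the visibility core} as the engine. We may assume $i<j\le j'$, since for $j=i$ the dominator $(i,[i-1,i+1])$ always exists and the hypothesis fails. The chain $[i-1,\,j'+1]$ consists of $[i-1,\,j+1]$ together with the additional edges $\overline{(j+1)(j+2)},\dots,\overline{j'(j'+1)}$, so $\core{i-1}{j'+1}$ is $\core{i-1}{j+1}$ intersected with the half-planes lying left of those additional edges; in particular $\core{i-1}{j'+1}\subseteq\core{i-1}{j+1}$. It therefore suffices to show that if a good dominator exists for $(i,j')$ then a good dominator exists for $(i,j)$.

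So let $(g',[u_{\text{max}},v_{\text{max}}])$ be a good dominator for $(i,j')$. By \Cref{def:dominators} its guard point satisfies $g'\in\core{i-1}{j'+1}\subseteq\core{i-1}{j+1}$, and $g'$ sees all of $[u_{\text{max}},v_{\text{max}}]\supseteq[i,j']\supseteq[i,j]$; I want to conclude that $g'$ witnesses a good dominator for $(i,j)$. The one extra fact to verify is that $g'$ subtends an angle at most $\pi$ at $i$ and $j$. By construction $\sphericalangle(i,g',j')\le\pi$: it equals $\pi$ when $g'$ lies on $\overline{i\,j'}$, and it is the maximized (hence $\le\pi$) angle for an angle-maximizing dominator, while the cases $g'\in\{i,j'\}$ are immediate. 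Since $g'$ sees every point of the chain $[i,j']$, the direction from $g'$ to a point moving along $[i,j']$ from $i$ to $j'$ turns monotonically, through a total of $\sphericalangle(i,g',j')\le\pi$; as $j$ is an interior point of this motion, the sub-chain $[i,j]$ turns by no more, so $\sphericalangle(i,g',j)\le\pi$.

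It remains to convert the witness $g'$ into a bona fide good dominator for $(i,j)$, and this is the step I expect to be the main obstacle. It amounts to re-running the argument in the proof of \Cref{lem:singleDominator} with the pair $(i,j)$ and the witness $g'$ in the role of its good guard: that argument uses only that $g'$ lies in $\core{i-1}{j+1}$, that $g'$ sees $[i,j]$, that $\sphericalangle(i,g',j)\le\pi$, and that every edge of $\core{i-1}{j+1}$ supports an edge of $[i-1,j+1]$; from these it produces the segment-intersecting dominator (\ref{itm:B}) of $(i,j)$ when $\core{i-1}{j+1}$ crosses the interior of $\overline{i\,j}$, and the angle-maximizing dominator (\ref{itm:C}) of $(i,j)$ otherwise, verifying in each case that the prescribed guard point lies in the fan region spanned by $g'$ over $[i,j]$ and therefore sees all of $[i,j]$. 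Care is needed only with the half-open bracketing intervals used by \Cref{def:dominators} and with the degenerate endpoint cases; the rest is a faithful transcription of \Cref{lem:singleDominator}'s proof. Taking the contrapositive then yields the lemma.
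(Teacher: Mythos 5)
Your proposal is correct in substance but follows a genuinely different route from the paper. The paper argues forward: assuming no good dominator for $(i,j)$, it classifies \emph{why} (empty core; the potential dominator point is undefined because $\core{i-1}{j+1}$ avoids $\overline{i\,j}$ appropriately; or the potential dominator exists but fails to see $[i,j]$) and shows by direct geometry that each obstruction persists when $j$ is incremented, reducing the third case to the second. You instead take the contrapositive: a good dominator $(g',[u_{\max},v_{\max}])$ for $(i,j')$ supplies a witness $g'\in\core{i-1}{j'+1}\subseteq\core{i-1}{j+1}$ that sees $[i,j]$ with $\sphericalangle(i,g',j)\le\pi$ (via the monotone angular sweep along a visible chain), and then the machinery of \Cref{lem:singleDominator} re-instantiated for the pair $(i,j)$ yields a good dominator for $(i,j)$. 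This is shorter and reuses existing machinery; the paper's version is self-contained and never needs \Cref{lem:singleDominator} outside its stated hypotheses (except in its Case~3, which in fact makes essentially your move, treating a point seeing $[i,j]$ as a good guard $(p,[i,j])$ and concluding a dominator for $(i,j)$ exists). The one caveat you should make explicit is that \Cref{lem:singleDominator} cannot be invoked as a black box here: for the guard $(g',[i,j])$ the bracketing $u\in[i'-1,i')$, $v\in(j',j'+1]$ places it at the shifted pair $(i+1,j-1)$, and the lemma's proof works with $\core{u}{v}$, which for $u=i,v=j$ is $\core{i}{j}\supsetneq\core{i-1}{j+1}$ and does not match \Cref{def:dominators} for $(i,j)$; so the argument really must be re-run with $\core{i-1}{j+1}$ and the unshifted pair. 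You list exactly the facts needed for that re-run ($g'\in\core{i-1}{j+1}$, visibility of $[i,j]$, the angle bound, and that core edges support edges of $[i-1,j+1]$), and the transcription does go through, but it is genuine work rather than a citation; likewise the degenerate cases $g'\in\{i,j'\}$ are not quite ``immediate'' (for $g'=i$ one must still distinguish whether $\core{i-1}{j+1}$ lies left of $\overline{i\,j}$ or meets its interior before concluding the type~\ref{itm:C} or type~\ref{itm:B} dominator sees $[i,j]$), and the parenthetical ``maximized hence $\le\pi$'' should be replaced by the correct reason, namely that for an angle-maximizing dominator the core lies left of $\overline{i\,j'}$, so its points subtend an angle at most $\pi$.
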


\begin{proof}
For any pair $(i', j')$, define the \emph{potential dominator} $g$ as follows:  
If the interior of $\overline{i'\,j'}$ intersects $\core{i'-1}{j'+1}$, then $g$ is the last point of intersection between $\core{i'-1}{j'+1}$ and $\overline{i'\,j'}$.  
If $\core{i'-1}{j'+1}$ lies left of $\overline{i'\,j'}$ and intersects $i$ or $j$, then $g$ is equal to $i$ or $j$, respectively. If $\core{i'-1}{j'+1}$ lies strictly left of $\overline{i'\,j'}$, then $g$ is the point in the visibility core that maximizes $\sphericalangle(i', g, j')$.  
If none of these cases apply, then $g$ is undefined.  
The remainder of the proof proceeds by case distinction, depending on whether $g$ is undefined for $(i, j)$ or does not realize a good dominator.  
Observe that a good dominator for $(i, j)$ exists if and only if the potential dominator $g$ exists and its maximal visible chain $[u_{\text{max}}, v_{\text{max}}]$ satisfies $u_{\text{max}} \in [i-1, i]$ and $v_{\text{max}} \in [j, j+1]$.

    \textbf{Case 1:  $\core{i-1}{j+1}$ is empty.} Then $\core{i-1}{j'+1}$ is also empty for all $j' > j$.

    \textbf{Case 2: $\core{i-1}{j+1}$ is not empty but $g$ is undefined. }
    We claim that for all pairs $(i, j')$ with $j' \geq j$, either this property also holds or $\core{i-1}{j'+1} = \emptyset$.
    Since the vertices of $P$ lie in general position and $\core{i-1}{j+1}$ does not intersect the interior of $\overline{i\,j}$, the visibility core $\core{i-1}{j+1}$ must be contained in the polygon $P'$ defined by $\overline{i\,j}$ and $[i,j]$. 
    
    Consider any $j'> j$. 
    We first assume that $j'$ lies strictly right of $\overline{i\,j}$ and distinguish two subcases based on the position of $j'$.
    \Cref{fig:no_better_j_2}: Suppose that $j'$ does not lie in $P'$. Then no point in $\core{i-1}{j+1}$ can see $j'$ and so there exists no good dominator for $(i,j')$. \Cref{fig:no_better_j_3}: Suppose $j'$ is in $P'$. Consider the subset of $\core{i-1}{j+1}$ that is left of $\overline{i\,j'}$. Let $a$ be the last point of intersection of $[j,j']$ and $\overline{i\,j}$. For any point $p \in \core{i-1}{i+1}$ that lies in the polygon $B$ defined by $\overline{i\,j'}$, $[j',a]$, and $\overline{a\,i}$, it must be that $p \notin \core{i-1,j'+1}$. 
    So consider a point $p \in \core{i-1}{i+1}$ that does not lie in $B$ (not in the blue area in \Cref{fig:no_better_j_3}).
    If $p$ lies strictly left of $\overline{i\,j'}$ then $p$ cannot see $i$, thus $p$ cannot a the good dominator for $(i, j')$. 
    Furthermore, the only point of $\overline{i \, j'}$ that can lie in $\core{i-1}{j'+1}$ is the vertex $j'$ itself. It follows that no segment-intersection dominator (\ref{itm:B}) exists for $(i, j')$. If $\core{i-1}{j'+1}$ would be left of $\overline{i}{j'}$ then there would be an angle-maximizing dominator (\ref{itm:C}) for $(i,j')$. However, then either $[j'-1,j']$ is collinear with $i$, which contradicts our general position assumption, or $j'$ cannot see $i$. We conclude that no good dominator exists for $(i,j')$.
    
    Suppose otherwise that $j'$ lies left of $\overline{i\,j}$. Then because $P$ lies in general position, $j'$ must lie strictly left of this line. 
    However, then the interior of $\overline{i \, j'}$ lies outside of the polygon $P'$ defined by $\overline{i\,j}$ and $[i,j]$.  We observed that $\core{i-1}{j+1}$ is contained in $P'$ and since $\core{i-1}{j'+1} \subseteq \core{i-1}{j+1}$ it follows that no segment-intersecting dominator (\ref{itm:B}) exists for $(i, j')$.
    Finally if $j'$ lies left of $\overline{i \, j}$ and $\core{i-1}{j'+1}$ is strictly left of $\overline{i\,j'}$ then no point in $\core{i-1}{j'+1}$ can see $j'$ (see \Cref{fig:no_better_j}),  Thus, there also exists no angle-maximizing dominator (\ref{itm:C}) for $(i, j')$.

 \textbf{Case 3: the potential dominator $g$ exists for $(i, j)$, but its maximal visible chain $[u_{\text{max}}, v_{\text{max}}]$ satisfies $u_{\text{max}} \notin [i-1, i]$ or $v_{\text{max}} \notin [j, j+1]$.}  
We reduce this case to Case~2.  
Let $P^\uparrow$ denote the subpolygon of $\core{i-1}{j+1}$ left of $\overline{i\,j}$.  
Consider any point $p \in P^\uparrow$.  
If $p$ sees both $i$ and $j$, then $(p, [i, j])$ forms a good guard (it satisfies Condition~II).  
By \Cref{lem:singleDominator}, there then exists a good dominator for $(i, j)$ that dominates $(p, [i, j])$, contradicting the assumption of the lemma.  
Hence, no point in $P^\uparrow$ sees both $i$ and $j$.  
Since any potential dominator for $(i, j')$ with $j' > j$ must see at least $[i, j]$, we may regard the visibility core $\core{i-1}{j+1}$ as excluding $P^\uparrow$.  
This adjusted visibility core now lies strictly right of $\overline{i\,j}$ and thus has no potential dominator.  
Consequently, the situation reduces to Case~2.  \qedhere

\begin{figure}

\begin{minipage}{0.31\textwidth}
  \centering
            \centering
        \includegraphics[page=2]{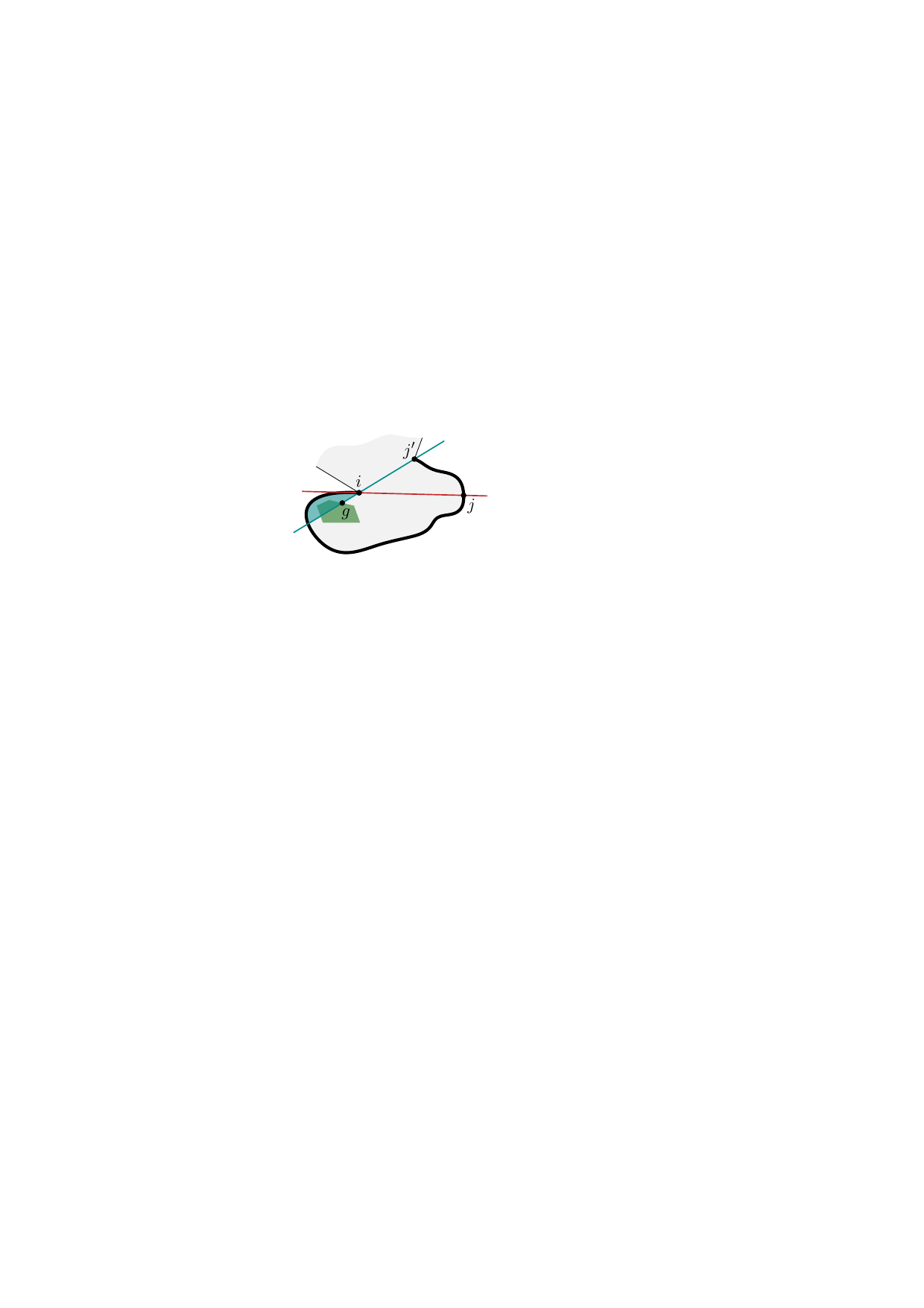}
        \caption{Any guard in the visibility core (green) will not see $j'$.}
        \label{fig:no_better_j_2}
    \end{minipage}
    \hfill
\begin{minipage}{0.31\textwidth}
        \centering
        \includegraphics[page=5]{pictures/No_better_j.pdf}
        \caption{No point in the blue polygon can be in the visibility core $\core{i-1}{j'+1}$.}
        \label{fig:no_better_j_3}
\end{minipage}
    \hfill
\begin{minipage}{0.31\textwidth}
  \centering
        \centering
        \includegraphics[page=4]{pictures/No_better_j.pdf}
        \caption{The potential angle-maximizing dominator for $(i,j')$ does not see $j'$.}
        \label{fig:no_better_j}
    \end{minipage}
\end{figure}
\end{proof}

\begin{theorem}\label{lem:compute_B_C}
    For a simple polygon $P$ of $n$ vertices, we can compute a linear-size superset $\reducedDom'$ of size $O(n)$ of guards that contains the reduced good dominators $\reducedDom$ using $O(n)$ space and $O(n \log n)$~time.
\end{theorem}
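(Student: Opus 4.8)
The plan is to run a single sliding window over index pairs $(i,j)$, maintain the visibility core $\core{i-1}{j+1}$ (and the lookahead $\core{i-1}{j+2}$) via the \slidingds, and at each pair use \Cref{lem:finding_BUC} to extract the unique good dominator for $(i,j)$ in $O(\log n)$ time; the union of all these guards will be the set $\reducedDom'$. The key is to choose the window so that (a) every reduced good dominator's defining pair $(i,j)$ is actually visited, and (b) the total number of visited pairs — hence $|\reducedDom'|$ — is $O(n)$.

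First I would argue (a). A reduced good dominator $(g,[u,v])$ has, by \Cref{def:dominators}, $u \in [i-1,i)$ and $v \in (j,j+1]$ for the pair $(i,j)$ off which it is defined, where $[u,v] \subset [i-1,j+1]$. Hence if the window visits every $(i,j)$ for which a good dominator exists, we are done. Here \Cref{lem:no_further_dominator} is the crucial structural tool: for a fixed $i$, the set of $j \ge i$ admitting a good dominator is an initial segment $\{i, i+1, \dots, j_i\}$ (possibly empty, but note $j=i$ always has the trivial dominator $(i,[i-1,i+1])$, so it is $\{i,\dots,j_i\}$ with $j_i \ge i$ whenever $\core{i-1}{i+1}\neq\emptyset$). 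This monotonicity means the region of the $(i,j)$-grid carrying good dominators is "staircase-shaped" and can be swept: starting from $(1,1)$, repeatedly try to increment $j$ — using the lookahead core $\core{i-1}{j+2}$ together with \Cref{lem:finding_BUC} to test in $O(\log n)$ time whether a good dominator exists for $(i,j+1)$ — and if not, increment $i$ (resetting $j$ appropriately, or simply continuing with the same $j$ since $j_i$ is non-decreasing would need a separate argument; the cleaner route is: for each $i$ from $2$ to $n+1$, advance $j$ from its current value while a good dominator exists, querying the lookahead core, then increment $i$). Each half-plane is inserted and deleted $O(1)$ times over the whole sweep, so by \Cref{thm:fifo} the core is maintained in amortized $O(\log n)$ per step, and the total running time is $O(n\log n)$ with $O(n)$ space.

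For (b), the bound $|\reducedDom'| = O(n)$: the sweep visits at most one $(i,j)$ pair per unit of "staircase", and since $i$ only increments (at most $n$ times) and $j$ only increments (at most $2n$ times) over the entire sweep, the number of visited pairs is $O(n)$; each contributes at most one guard to $\reducedDom'$. I should double-check the edge where, on incrementing $i$, the initial segment $j_i$ might be smaller than $j_{i-1}$ — if so the window must be allowed to "stand still" in $j$ while $i$ advances, which keeps the total still $O(n)$. Then $\reducedDom \subseteq \reducedDom'$ follows from (a) since a reduced good dominator is in particular a good dominator for its pair $(i,j)$, and the pair is visited.

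The main obstacle I anticipate is the bookkeeping around the lookahead queries and the precise monotonicity of $j_i$ in $i$: \Cref{lem:no_further_dominator} gives monotonicity of the "good-dominator" predicate in $j$ for fixed $i$, but to get a clean linear-size sweep one wants to never revisit pairs, which requires knowing how $j_i$ changes as $i$ increases — a priori it could decrease, and one must confirm that simply holding $j$ fixed while incrementing $i$ (rather than backtracking) still visits every needed pair and still terminates with $O(n)$ total work. Handling the degenerate cases ($\core{i-1}{j+1}$ empty, $i=j$, the trivial dominator, and the cases in \Cref{lem:finding_BUC} where the core touches $\overline{i\,j}$ only at an endpoint) is routine but must be folded in so that the query at each step is genuinely $O(\log n)$ and correctly reports "no good dominator" exactly when appropriate.
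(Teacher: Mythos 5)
Your infrastructure is the same as the paper's (the \slidingds, \Cref{lem:finding_BUC} to extract the dominator at each visited pair, \Cref{lem:no_further_dominator} to justify advancing $i$, and an $O(n)$-step monotone sweep in $(i,j)$, giving $O(n\log n)$ time and $O(n)$ space), but the correctness argument has a genuine gap, and it is precisely the point you flag as ``the main obstacle''. You base correctness on claim (a): that the sweep visits \emph{every} pair $(i,j)$ admitting a good dominator. That criterion is unattainable within your own $O(n)$ bound on visited pairs: by \Cref{lem:no_further_dominator}, for each fixed $i$ the pairs admitting a good dominator form an initial segment $\{i,\dots,j_i\}$, so the set of all such pairs is a two-dimensional staircase \emph{region}, which can have $\Theta(n^2)$ elements (e.g.\ in a convex polygon essentially every pair $(i,j)$ admits a segment-intersecting dominator), whereas your sweep traverses only a monotone path of $O(n)$ pairs along its upper boundary. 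Consequently pairs $(i,j')$ with $j'$ below the window's current $j$-value are necessarily skipped even though they may admit good dominators, and your proposal gives no reason why the \emph{reduced} good dominators among them are nonetheless contained in $\reducedDom'$. The monotonicity of $j_i$ in $i$, which you single out, is not the real issue: even if $j_i$ never decreased, the interior of the staircase region would still be skipped.

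What is missing is the domination invariant that the paper proves in place of (a): it maintains that every reduced good dominator defined off a pair lexicographically before the current $(i,j)$ is already in $\reducedDom'$, and it justifies each increment by showing that the dominators at the pairs being skipped are dominated by a dominator the sweep \emph{does} report. Concretely, if the dominator found at $(i,j)$ has $v_{\max}=j+1$, then for every $i'\in[i+1,j]$ the dominator of $(i',j)$ has its chain contained in $[u_{\max},v_{\max}]$, so $j$ may be advanced; if instead $v_{\max}<j+1$, a lookahead at $(i,j+1)$ either produces a dominator that likewise dominates those of $(i',j)$ for $i<i'<j$, or certifies (via \Cref{lem:no_further_dominator}) that $i$ may be advanced. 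A very similar argument could in fact be attached to your monotone-$j$ sweep (the skipped pairs' dominators are dominated by the dominator found at the pair where $j$ last advanced), but your proposal neither states nor proves it, so the inclusion $\reducedDom\subseteq\reducedDom'$ remains unestablished; the running-time and space accounting of the sweep itself is fine.
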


\begin{proof}
    We maintain a pair of indices $(i, j)$ as a sliding window. We either increment $i$ or increment~$j$, maintaining $\core{i-1}{j+1}$ and $\core{i-1}{j+2}$ via \slidingds with amortized $O(\log n)$ update time. 
    We also maintain a set $\reducedDom'$ subject to the following:

    \begin{invariant}
        \label{inv:reduced_good_dom}
         For any $i'< i$, or $i' = i$, and $j' < j$, if there exists a reduced good dominator for $(i', j')$ then it is in $\reducedDom'$.
    \end{invariant}

    \mysubpara{Our strategy.} 
    Let $(i,j)$ be the current index pair. 
    We apply \Cref{lem:finding_BUC} to determine whether a good dominator $(g,[u_\text{max},v_\text{max}])$ exists for $(i,j)$.
    If it exists, then \Cref{lem:finding_BUC} returns it and we add it to $\reducedDom'$. We then increment either $i$ or $j$ via the following strategy:
    \begin{itemize}[noitemsep, nolistsep]
        \item      If no good dominator exists for $(i, j)$ then we increment $i$.
        \item      If we add a guard $(g,[u_\text{max},v_\text{max}])$ to $\reducedDom'$, and $v_\text{max} = j+1$, then we increment $j$. 
        \item    Finally, if we do add a guard $(g,[u_\text{max},v_\text{max}])$ to $\reducedDom'$ and $v_\text{max} \in [j,j+1)$ then we test  whether for $(i, j+1)$ we would add a guard $(g',[u_\text{max}',v_\text{max}'])$ to $\reducedDom'$. This test happens in the exact same manner as for $(i, j)$, using the \emph{lookahead} queries for \slidingds.  If, for $(i, j+1)$, we would add a guard to $\reducedDom$, then we increment $j$. 
    Otherwise, we increment $i$.
    \end{itemize}

  \noindent
    As we only add one guard to $\reducedDom'$ each time we increment $i$ or $j$, it follows that we construct a set of good dominators of size $O(n)$ in $O(n \log n)$ total time.

    \mysubpara{Correctness.} We prove that we maintain Invariant~\ref{inv:reduced_good_dom}. 
    If we ever observe a pair $(i, j)$ for which there exists no good dominator, then by \Cref{lem:no_further_dominator} we note that there exists no good dominator for $(i, j')$ with $j' \geq j$ and we may safely increment $i$ in this case. 

    If we observe a pair $(i, j)$ for which there exists a good dominator $(g,[u_\text{max},v_\text{max}])$ and $v_\text{max} = j+1$, then for all $i' \in [i+1, j]$ the good dominator of $(i', j)$ is dominated by $(g,[u_\text{max},v_\text{max}])$ and we may safely increment $j$ in this case. 

  If there exists a dominator $(g,[u_\text{max},v_\text{max}])$ for $(i, j)$ where  
  $v_\text{max} < j+1$, then we check whether there exists a good dominator for $(i,j+1)$. If so, then any good dominator for $(i', j)$ with $i<i'< j$ will be dominated by this dominator and the invariant is thus maintained when increasing $j$. If not, then the previous argument implies that there is also no good dominator for $j' > j+1$. It follows that the invariant is maintained after incrementing $i$.
\end{proof}


\subsection{Computing a conforming sliding sequence \boldmath $\sigma$}\label{sec:computing_sliding_sequence}

Recall the definition of a conforming sliding sequence:

\slidingsequence*

Such a conforming sliding sequence $\sigma$ is used for all our underlying algorithms.
We show that we can compute $\sigma$ in $O(n \log n)$ time. We first show that we can check efficiently whether any guard exists for some given chain $[s,t]$ that lies inside a given visibility core. 

\begin{lemma}\label{lem:finding_guard}
    Given vertices $i$ and $j$ with $i \leq j$ and $\core{i - 1}{j + 1}$ in \slidingds, 
    let $s \in [i-1, i]$ and $t \in [j, j+1]$. 
    Given $s, t$, we can decide whether there exists a guard $(g, [s, t])$ with $g \in \core{i-1}{j+1}$ in $O(\log n)$ time. 
\end{lemma}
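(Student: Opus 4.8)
\textbf{Proof plan for \Cref{lem:finding_guard}.}
The plan is to reduce the existence of a guard $(g, [s,t])$ with $g \in \core{i-1}{j+1}$ to a constant number of containment, ray-shooting, and emptiness queries on $\core{i-1}{j+1}$, each answered in $O(\log n)$ time by \slidingds, together with a constant number of shortest-path queries via \Cref{ds:shortestPath}. The key observation is the following characterization: a point $g$ can see all of $[s, t]$ if and only if $g$ lies left of every edge intersecting the open chain $(s, t)$ \emph{and} $g$ lies in the wedge that allows it to see the two endpoints $s$ and $t$ ``around'' the chain. The first condition is exactly $g \in \core{s}{t}$ (by \Cref{obs:leftOfEdges}); since $s \in [i-1,i]$ and $t \in [j,j+1]$, the core $\core{s}{t}$ differs from $\core{i-1}{j+1}$ only by the two partial half-planes coming from the edges $[i-1,i]$ and $[j,j+1]$ truncated at $s$ and $t$. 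So the real content is: among the points of $\core{i-1}{j+1}$, which ones additionally see $s$ and see $t$?

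First I would handle visibility of the endpoints. Using \Cref{ds:shortestPath}, compute the shortest paths from $s$ and from $t$ ``into'' the region and identify the first reflex vertex encountered; this defines, for $s$, a half-line (ray) $r_s$ bounding the half-plane of points that can see $s$ without the chain $[i-1,i]$ blocking, and analogously $r_t$ for $t$. Concretely, the set of points in a neighborhood that see $s$ past the edge incident to $s$ is bounded by the supporting line through $s$ of that incident edge (or, if the shortest path bends, by the line through the first path vertex). Intersecting these two half-planes with $\core{i-1}{j+1}$ yields a convex region $Q$; the claim is that $(g,[s,t])$ is a guard for some $g$ iff $Q \ne \emptyset$, and emptiness of an intersection of $\core{i-1}{j+1}$ with two extra half-planes can be tested in $O(\log n)$ time (e.g.\ by two ray-shootings along the boundary lines and a containment check, exploiting that $\core{i-1}{j+1}$ is stored as a balanced tree in cyclic order). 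I would then argue correctness of the ``iff'': if $g \in Q$ then $g$ is left of all edges strictly inside $(s,t)$ (since $Q \subseteq \core{i-1}{j+1}$ and the only edges in $(s,t)$ not among $[i,\dots,j]$'s interior are the partial edges at $s,t$, which $Q$ respects by the half-plane conditions), and $g$ sees $s$ and $t$; by the wedge argument used repeatedly in \Cref{lem:computeVisibility} and \Cref{lem:singleDominator} (the triangle-like region bounded by $\overline{g\,s}$, $\overline{g\,t}$, and $[s,t]$ contains no boundary points in its interior), $g$ sees all of $[s,t]$. Conversely any guard $g$ for $[s,t]$ lies in $\core{s}{t} \subseteq \core{i-1}{j+1}$ and must see $s$ and $t$, hence lies in $Q$.

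I expect the main obstacle to be the careful bookkeeping of the endpoint half-planes: because $s$ and $t$ are interior points of edges rather than vertices, the ray $r_s$ (resp.\ $r_t$) bounding the ``can see $s$'' region is not simply a polygon edge line but may be the line through $s$ and the first vertex of the shortest path from a witness point to $s$; one must verify this line is well-defined and that the half-plane test composes correctly with the ones already encoded in $\core{i-1}{j+1}$, and that degenerate cases (e.g.\ $s$ itself a vertex, or $\core{i-1}{j+1}$ meeting a bounding line in a single point) are subsumed by the general-position assumption. A secondary, purely mechanical point is confirming that intersecting the explicitly-stored convex region $\core{i-1}{j+1}$ with two additional half-planes and testing emptiness stays within $O(\log n)$; this follows from \Cref{ds:rayShooting}-style queries on the balanced tree representation (\cite{DOBKIN1983}), since each extra half-plane clips off at most one contiguous arc of the boundary, locatable by binary search. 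Once these pieces are in place, the lemma follows by listing the $O(1)$ queries and summing their $O(\log n)$ costs.
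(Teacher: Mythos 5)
Your reduction of ``$g$ sees all of $[s,t]$'' to ``$g$ sees $s$ and $g$ sees $t$'' for $g\in\core{i-1}{j+1}$ is sound (it follows from \Cref{lem:computeVisibility} with $m=s$: the visible portion of $[s,t]$ from a core point is a single chain starting at $s$), but the step you lean on after that is exactly the missing content of the lemma. You assert that, within the core, visibility of the endpoint $s$ is cut out by a \emph{single} half-plane bounded by a line through $s$ obtained from ``the shortest path from a witness point to $s$'', with the witness left unspecified. This is not established and is not true in the generality you need: the set of core points seeing $s$ is the intersection of the convex core with the (star-shaped, non-convex) visibility polygon of $s$, whose windows lie on several different lines through $s$, and which window (if any) is the relevant one is precisely what has to be argued. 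The paper proves a two-half-plane characterization only in the case that no good guard exists, where $\short{s}{t}$ is left-turning and the supporting lines of its \emph{first and last} edges characterize seeing all of $[s,t]$; when a good guard exists (e.g.\ $\short{s}{t}$ is a single segment or is not left-turning), your recipe produces no meaningful line at all, and the paper instead has to fall back on the Case~1/Case~2 machinery of \Cref{lem:singleDominator}, producing the candidate points $g'$ (last intersection of $\overline{i^*\,j^*}$ with the core) and $g^*$ (angle-maximizing point). Your sketch has no counterpart to this case.

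A second, independent gap is that you decide the question by \emph{emptiness of a region} $Q$ rather than by verifying actual visibility of explicit points. Nonemptiness of $\core{i-1}{j+1}\cap H_s\cap H_t$ does not certify a guard: core points may lie outside $P$, and even core points inside $P$ can be blocked by edges of the complementary chain $[t,s]$, about which neither the core nor your two half-planes carry any information. In the degenerate instance where $\short{s}{t}$ is a straight segment your half-planes coincide with constraints already in the core, so your test collapses to ``is the core nonempty'', which is not equivalent to the existence of a guard $(g,[s,t])$ with $g\in\core{i-1}{j+1}$. The paper avoids both problems by extracting a constant number of explicit candidates ($\gamma$, $g'$, $g^*$, and the last core-intersections $g_1,g_2$ of the supporting lines of the first and last edge of $\short{s}{t}$), checking each one's visibility of $[s,t]$ via \Cref{lem:computeVisibility}, and proving that if any guard in the core exists then one of these candidates is a guard. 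To repair your proposal you would need to supply that existence argument (or an equivalent correct characterization covering the good-guard case) and replace the emptiness test by visibility checks on concrete points; the $O(\log n)$ query budget is not the issue.
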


\begin{proof}
If $s = t$ then we simply output the guard $(s, [s, t])$. 
Otherwise, define $i^*$ as $i$ if $s < i$ and  as $i+1$ if $s = i$. 
Define $j^*$ similarly as $j$ if $t > j$ and as $j - 1$ if $t = j$.
If $i^* = j^*$ then $i^*$ can see both $s$ and $t$ and we output the guard $(i^*, [s, t])$.
Otherwise, we know that $i^* < j^*$. 
We compute a constant-size set of candidate points $\mathbb{C} \subseteq \core{i-1}{j+1}$.

\textbf{Computing the candidate set $\mathbb{C}$.}
We deploy a very similar strategy to \Cref{lem:singleDominator}.
We perform an emptiness query on $\core{i-1}{j+1}$ which, if the visibility core is non-empty, returns a point $\gamma$.
If the visibility core is empty we output that there exists no guard that guards $[s, t]$. 
We perform two ray shooting queries using the supporting ray from $i^*$ to $j^*$, and the ray from $j^*$ to $i^*$, in $O(\log n)$ time.
If both rays do not hit $\core{i-1}{j+1}$ then $\core{i-1}{j+1}$ lies either strictly right or strictly left of $\overline{i^* \, j^*}$. 
We compare $\gamma$ to this supporting line to test in constant time whether the visibility core $\core{i-1}{j+1}$ is strictly left of $\overline{i^* \, j^*}$. 
If so, then we define $g^*$ as the point in $\core{i-1}{j+1}$ that maximizes the angle $\sphericalangle(i^*, g^*, j^*)$, which we can compute in $O(\log n)$ time using an angle maximizing query. 
We can also use both ray shooting queries to detect whether the interior of $\overline{i^* \, j^*}$ intersects $\core{i-1}{j+1}$, and if it does, we define $g'$ as the last point of intersection of the open directed segment $\overline{i^* \, j^*}$ and $\core{i-1}{j+1}$.

Finally, we compute the shortest path $\short{s}{t}$ from $s$ to $t$ and test whether $\short{s}{t}$ is left-turning. 
If $\short{s}{t}$ is left-turning then we take the supporting halflines $\ell_s$ and $\ell_t$ of the first and last edge of $\short{s}{t}$ (these lines are directed away from $s$ and $t$, respectively). 
We compute $g_1$ and $g_2$: the respective last points of intersection of these halflines with $\core{i-1}{j+1}$ in $O(\log n)$ time.
The set $\mathbb{C}$ is formed by $\gamma, g^*, g', g_1$ and $g_2$ (if the respective point exists). 
For each $g \in \mathbb{C}$, we observe that $g \in \core{i-1}{j+1} \subseteq \core{s}{t}$ and so we invoke \Cref{lem:computeVisibility}
to compute whether $g$ can see $[s, t]$ in $O(\log n)$ time. 
If none of the points in $\mathbb{C}$ has this property, we output that no guard $(g, [s, t])$ exists.

\textbf{Correctness.}
Suppose that no good guard $(g,[s,t])$ with $g \in \core{i-1}{j+1}$ exists. 
Then if any guard $(g, [s, t])$ with $g \in \core{i-1}{j+1}$ exists it must be that $\sphericalangle(s, g, t) > \pi$.
This in turn implies that $\short{s}{t}$ is a left-turning chain, and so a point $p' \in \core{i-1}{j+1} \subseteq \core{s}{t}$ can see $[u, v]$ if and only if $p'$ lies right of $\ell_s$ and left of $\ell_t$. 
If there exists at least one point $p' \in \core{i-1}{j+1}$ right of $\ell_s$ and left of $\ell_t$, then one of $\{ \gamma, g_1, g_2 \}$ exists and lies right of $\ell_s$ and left of $\ell_t$.  It follows that a guard $(g, [s, t])$ exists if and only if our algorithm outputs such a guard.  
Suppose otherwise that a good guard $(g,[s,t])$ with $g \in \core{i-1}{j+1}$ exists. Then $\sphericalangle(i^*, g, j^*) \leq \pi$. We make a case distinction. 

\textbf{\boldmath Case 1: $g^*$ does not exist.} Via an identical argument as in \Cref{lem:singleDominator}, we then get that $\core{i-1}{j+1}$ must intersect the interior of $\overline{i^* \, j^*}$. In this case, any point of intersection between $\core{i-1}{j-1}$ and $\overline{i^* \, j^*}$, and in particular our precomputed point $g'$, can see $[s, t]$.

\textbf{\boldmath Case 2: $g^*$ exists.}
If $g^*$ exists, then the visibility core $\core{i-1}{j+1}$ is strictly left of $\overline{i^*\,j^*}$ by definition, and thus $\sphericalangle(i^*, g^*, j^*) < \pi$.
We now define the polygon $P_g$ that is formed by $[s, t]$ and $\overline{g \, s}$ and $\overline{t \, g}$.
The point $g^*$ by definition lies in $\core{i-1}{j+1} \subseteq \core{s}{t}$.
If $g^*$ is in $P_g$ then it lies in $P_g \cap \core{s}{t}$ and thus sees all of $[s, t]$.
We now have an identical construction to Case 2 of \Cref{lem:singleDominator} and conclude that $g^*$ can see $[s, t]$. 

\textbf{Conclusion.}
If a good guard $(g, [s, t])$ with $g \in \core{i-1}{j+1}$ exists then either $g'$ or $g^*$ can see $[s, t]$. 
If no good guard with $g \in \core{i-1}{j+1}$ exists, then any point $p \in \core{i-1}{j+1} \subseteq \core{s}{t}$ can see $[s, t]$ if and only if $p$ lies left of $\ell_s$ and right of $\ell_t$. 
So, if none of $\{ \gamma, g^*, g', g_1, g_2 \}$ guard the chain $[s, t]$, then no guard in $\core{i-1}{j+1}$ can.  
\end{proof}

\begin{theorem}\label{lem:compute_sigma1}
    There exists a linear-size conforming sliding sequence $\sigma$ and it can be computed in $O(n \log n)$ time.
\end{theorem}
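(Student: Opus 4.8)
The plan is to build $\sigma$ with a single greedy sweep. Start from $(i,j)=(1,1)$ and maintain the \slidingds, which supplies both $\core{i-1}{j+1}$ and the look-ahead core $\core{i-1}{j+2}$ under increments of $i$ (delete the oldest half-plane) and of $j$ (insert a new half-plane). At the current pair $(i,j)$ I invoke \Cref{lem:finding_guard} with $s=i$ and $t=j+1$ to decide, in $O(\log n)$ time, whether a guard covering the chain $[i,j+1]$ exists; if one does I increment $j$, otherwise I increment $i$, and in either case I append the resulting pair to $\sigma$. The sweep terminates once the first coordinate has moved past $n+1$. As a preprocessing step I test via \Cref{obs:trivialOneCenter} whether $\core{1}{n+1}$ is non-empty, i.e.\ whether a single guard already covers $\partial P$, and if so return that guard; this guarantees that throughout the sweep $\nex{u}<u+n$ for every $u\le n$, so the second coordinate never reaches $2n+1$.

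Since both coordinates only ever increase, with $i\le n+2$ and $j\le 2n$, the sweep takes $O(n)$ steps and appends $O(n)$ pairs, and two consecutive appended pairs differ by $(1,0)$ or $(0,1)$ by construction; hence $\sigma$ is a linear-size sliding sequence. Each step costs one update of the \slidingds (amortized $O(\log n)$) plus one call to \Cref{lem:finding_guard} ($O(\log n)$), so the whole construction runs in $O(n\log n)$ time and $O(n)$ space.

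It remains to prove that $\sigma$ is conforming. The structural fact I lean on is that $\nexFunc$ is monotone non-decreasing: if $u_1\le u_2$, then any guard $(g,[u_1,v])$ with $u_2\le v$ also sees the sub-chain $[u_2,v]$, so $\nex{u_2}\ge v$; choosing $v=\nex{u_1}$ (the case $u_2>\nex{u_1}$ being immediate) yields $\nex{u_2}\ge\nex{u_1}$. By \Cref{lem:finding_guard} and \Cref{obs:leftOfEdges}, while the sweep has first coordinate $i$ it increments $j$ exactly as long as a guard covers $[i,j+1]$, i.e.\ as long as $\nex{i}\ge j+1$; therefore the second coordinates appearing in $\sigma$ with first coordinate $i$ form the integer interval $[\lfloor\nex{i-1}\rfloor,\lfloor\nex{i}\rfloor]$, with left endpoint $\lfloor\nex{i-1}\rfloor$ inherited as the final second coordinate of the block with first coordinate $i-1$. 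Now take any $u\in[1,n+1)$ with $u\in[i-1,i)$ and $\nex{u}\in(j,j+1]$. Monotonicity gives $\nex{i-1}\le\nex{u}\le\nex{i}$, hence $\lfloor\nex{i-1}\rfloor\le j\le\lfloor\nex{i}\rfloor$, so $(i,j)$ lies in the $i$-block of $\sigma$, which is exactly the conforming condition.

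I expect the bookkeeping at interval endpoints to be the fiddly part rather than the argument above. One must treat the case where $\nex{u}$ lands exactly on a vertex of $P$, where the half-open convention for $(j,j+1]$ shifts the index by one (here it helps that we actually \emph{reach} every pair we append, and that polygon vertices are in any case treated explicitly elsewhere). One also has to verify that the decision query is fed the correct visibility core: a guard covering $[i,j+1]$ a priori only lies in $\core{i}{j+1}$, a mild superset of the maintained core, so one either phrases the query over $\core{i}{j+1}$ (obtained from the maintained look-ahead core $\core{i-1}{j+2}$ by discarding its two outermost half-planes) or argues that those outermost supporting half-planes do not change the outcome. The conceptual content is just the monotonicity of $\nexFunc$ together with the linear bound $|\sigma|=O(n)$.
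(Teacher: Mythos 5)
Your algorithm is essentially the paper's greedy sweep (maintain the cores in the \slidingds, decide each step with \Cref{lem:finding_guard}), but the correctness argument has a genuine gap: the rule you actually analyze --- ``increment $j$ iff \emph{some} guard covers $[i,j+1]$, i.e.\ iff $\nex{i}\ge j+1$'' --- is neither implementable with the stated tools nor conforming. On implementability: \Cref{lem:finding_guard} only decides existence of a guard \emph{inside the maintained core}, and a guard covering $[i,j+1]$ need not lie in $\core{i-1}{j+1}$ or in the lookahead core $\core{i-1}{j+2}$, since it need not be left of the edges $[i-1,i]$ or $[j+1,j+2]$. Neither of your proposed repairs closes this: $\core{i}{j+1}$ is not one of the two maintained cores, and the first-in-first-out structure behind \Cref{thm:fifo} cannot temporarily ``discard the two outermost half-planes'' (that would amount to a fully dynamic hull, exactly what the paper is at pains to avoid); and the claim that those half-planes never change the outcome is left unproven and is false precisely in the boundary situations that matter.

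More importantly, even granting the idealized rule, the resulting sequence is not conforming. Whenever $\nex{i-1}$ is exactly a vertex $m$ of $P$ --- a perfectly generic event, since by \Cref{lem:verticesAreGood} maximal visibility frequently ends \emph{at} a reflex vertex --- your rule still increments $j$ at $(i-1,m-1)$ because $\nex{i-1}\ge m$, so the block with first coordinate $i$ starts at $m$ and the pair $(i,m-1)$ is skipped; yet \Cref{def:sequence} demands $(i,m-1)\in\sigma$, because $u=i-1\in[i-1,i)$ has $\nex{u}=m\in(m-1,m]$. This is exactly where your inference ``$\nex{i-1}\le j+1$, hence $\lfloor\nex{i-1}\rfloor\le j$'' fails. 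The paper avoids both problems by keying the advance decision to the core-restricted query (a guard for $[i,j]$ with $g\in\core{i-1}{j+1}$, formalized via the discrete maximum $d_{\max}$ and four invariants) and by proving conformity from witnesses $u$ lying in $[i-1,i)$ with $\nex{u}>j$: for such $u$, \Cref{obs:leftOfEdges} places the realizing guard inside the maintained core automatically, so the restricted test is already the right one --- ``upgrading'' it to the unrestricted test over all guards, as your proof does, is what introduces the error rather than what fixes it.
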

\begin{proof}
    Recall that a sliding sequence has for consecutive pairs $((i, j), (i', j'))$ that $(i' - i, j' - j) \in \{ (1, 0), (0, 1) \}$. Formally, we construct $\sigma$ by focusing on a different property. 
    We define the \emph{discrete maximum} function $d_{\max}$  
    that takes any integer $i \leq n$ and returns  the maximum integer $j^*$ such that there exists a guard $(g, [i, j^*])$ with $g \in \core{i-1}{j^*+1}$. 
    Note that for any $j \in [i, d_{\max}(i)]$, the visibility core $\core{i-1}{j^*+1} \subseteq \core{i-1}{j+1}$ and so for all $j \in [i, d_{\max}(i)]$ there exists a guard $(g, [i, j])$ with $g \in \core{i-1}{j+1}$. 
    We prove that any $(i,j)$ for which there exists a guard $(g, [u, \nex{u}])$ with $u \in [i-1,i)$ and $\nex{u} \in (j,j+1]$ is included in any sliding sequence $\sigma^*$ that includes all pairs $(i, j)$ with $j \in [i, d_{\max}(i)]$ where `no strictly better' pair $(i', j')$ exists. Formally, we construct a sliding sequence $
    \sigma^*$ such that:
    \begin{itemize}
        \item $\sigma^*$ contains all $(i, j)$ with $j \in [i, d_{\max}(i)]$ where there does not exist an $i' < i$ and $j' > j$ with $j' \in [i', d_{\max}(i')]$. 
    \end{itemize}

    We claim that $\sigma^*$ is conforming. Indeed, for fixed indices $(i, j)$, if there exists a guard $(g,[u,\nex{u}])$ with $u \in [i-1,i)$ and $\nex{u} \in (j,j+1]$, then there also exists a guard $(g',[i,j])$ with $g' \in \core{i-1}{j + 1}$ and so $j \in [i, d_{\max}(i)]$.
    Thus $(i, j) \in \sigma^*$ if there exists no strictly better pair $(i', j')$.
    Suppose for the sake of contradiction that there exist indices $i' < i$ and $j' > j$ with $j' \in d_{\max}(i')$. Then by definition, there exists a guard $(g'', [i', j'])$ with $g'' \in \core{i' - 1}{j' + 1} \subseteq \core{i - 1}{j+1}$ but this contradicts the maximality of $\nex{u}$ for our original guard $g$. 

    \mysubpara{Computing $\sigma^*$.}
    We maintain $(i, j)$ subject to incrementing either $i$ or $j$ and maintain the following invariants:

    \begin{invariants}
    \begin{enumerate}[nolistsep]
        \item We store $\core{i-1}{j+1}$ in \slidingds.
        \item The sequence $\sigma^*$ that has been computed thus-far  has  $(i,j-1)$ as its last element.
        \item There exists a guard $(g,[i,j-1])$ with $g \in \core{i-1}{j}$.
        \item The sequence $\sigma^*$ is complete up to $(i, j)$.
         More formally, $\sigma^*$ computed thus-far contains all index pairs $(i_0, j_0)$ that have both of the following two properties:
         \begin{itemize}
             \item $j_0 \in [i_0, d_{\max}(i_0)]$ where there does not exist an $i' < i_0$ and $j' > j_0$ with $j' \in [i', d_{\max}(i')]$, 
             \item $i_0 < i$, or $i_0= i$ and $j_0 < j$.
         \end{itemize}
    \end{enumerate}
    \end{invariants}
    Every time we increment $i$ or $j$, we update $\core{i-1}{j+1}$ accordingly in $O(\log n)$ time. To decide whether to increment $i$ or $j$, we invoke \Cref{lem:finding_guard} using $s = i$ and $t = j$ to check if there exists a guard $(g,[i,j])$ with $g \in \core{i-1}{j+1}$. If yes, then we add the pair $(i,j)$ to $\sigma^*$ and increment $j$. Otherwise, we add the pair $(i+1,j-1)$ to $\sigma^*$ and increment $i$.
    
    \mysubpara{Correctness.} If we start our sliding window at $(i,j) = (1,2)$ then our invariants immediately imply that when we reach $i = n +1$, the sequence $\sigma^*$ have been computed correctly in $O(n \log n)$ time. What remains is to prove that the invariants always hold. Clearly, the first invariant is maintained. The second invariant is maintained by construction because before we increment $j$, we add $(i, j)$ to $\sigma^*$ and before we increment $i$, we add $(i+1, j-1)$ to~$\sigma^*$.

    Next, consider the third invariant. If we are about to increment $j$, then exists a guard $(g,[i,j])$ with $g \in \core{i-1}{j+1} \subseteq \core{i-1}{j}$, thus after incrementing $j$ the third invariant still holds. Suppose otherwise that we are about to increment $i$.
    By the third invariant, there exists a guard $(g,[i,j-1])$ with $g \in \core{i-1}{j}$.
    This guard also guards $[i+1, j]$ and $g \in \core{i - 1}{j} \subseteq \core{i}{j}$. 
    So, the guard $(g,[i,j-1])$ implies that the third invariant is maintained after incrementing $i$. 
    
    Finally, we consider the fourth invariant.
    Before we are about to increment $j$, $(i,j)$ is added to $\sigma^*$.
    Thus, if the fourth invariant holds for up to $(i, j)$ then it must hold after adding $(i, j)$ to $\sigma^*$ and incrementing $j$. 
    If we are about to increment $i$ then there does not exist a guard $(g, [i, j])$ with $g \in \core{i-1}{j+1}$. 
    Suppose for the sake of contradiction that there exists a guard $(g,[i,j'])$ with $g \in \core{i-1}{j'+1}$ and $j' > j$. However, $g \in \core{i-1}{j'+1} \subseteq \core{i-1}{j+1}$ and $[i, j] \subseteq [i, j']$ -- a contradiction. 
    We now apply the third invariant to note that there exists a guard $(g,[i,j-1])$ with $g \in \core{i-1}{j}$. The fact that there exists no guard $(g,[i,j'])$ with $g \in \core{i-1}{j'+1}$ and $j' > j$, and one guard  $(g,[i,j-1])$ with $g \in \core{i-1}{j}$, implies that 
    implies that $d_\text{max}(i) = j-1$. It follow that there is no pair $(i,j')$ with $j' \geq j$ for which $j' \in [i, d_{\max}(i')]$, and we can safely increment~$i$.
\end{proof}

\subsection{A linear-size set $\baddomsigma$ that contains all reduced good dominators in  $\baddom$}\label{sec:computing_type_1}

Consider a conforming sliding sequence $\sigma$. We define a set $\baddomsigma$ and argue that all reduced bad dominators ($\baddom$) are contained in $\baddomsigma$.
     For $(i, j) \in \sigma$ we consider $\core{i-1}{j+1}$ and note that each vertex $g$ of $\core{i-1}{j+1}$ is defined by two edges in $[i-1, j+1]$.

\begin{definition}
     We denote by $V_\sigma$ the set of all vertices of visibility cores $\core{i-1}{j+1}$ for $(i, j) \in \sigma$. 
\end{definition}

\begin{definition}
   For each $g \in V_\sigma$, defined by edges $[a - 1, a]$ and $[b, b+1]$, we define the \emph{candidate guard} $(g,[u_{\max},v_{\max}])$ where $[u_{\max},v_{\max}]$ is the maximum visible chain from $g$ that includes both $[a - 1, a]$ and $[b, b+1]$. 
   \begin{itemize}[noitemsep]
       \item We define the $\baddomsigma$ as the set of all candidate guards $(g,[u_{\max},v_{\max}])$ for $g \in V_\sigma$.
       \item  We define $\baddomsigma(i, j)$ as $\{ (g,[u_{\max},v_{\max}]) \in \baddomsigma \mid \, g \text{ is a vertex of  } \core{i-1}{j+1} \}$. 
   \end{itemize}
\end{definition}

\noindent
We compute $\baddomsigma$ by computing for all $x \in V_\sigma$ some maximal chain  $[\ell_x,r_x]$, and applying \Cref{lem:computeVisibility}.

\begin{lemma}\label{lem:computeRx}
    Let $\sigma$ be a conforming sliding window sequence, and let $x$ be a vertex of $\core{i-1}{j+1}$. 
    Define $r_x = [b, b+1]$ as the first edge with $j+1 \leq b$ where $x$ is strictly right of $r_x$. 
    We can compute $r_x$ for all $x \in V_\sigma$ in total time $O(n \log n)$.
\end{lemma}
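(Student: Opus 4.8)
The plan is to compute all the values $r_x$ with two sweeps. First, I traverse $\sigma$ from its first pair to its last, maintaining $\core{i-1}{j+1}$ in the \slidingds (equivalently, via the first-in-first-out structure of \Cref{thm:fifo}). After each increment of $i$ or $j$ I read off which points enter and which points leave the vertex set of the core; by \Cref{lem:core_total_size} the number of such events is $O(n)$ in total, and by \Cref{lem:contiguousInterval} every $x\in V_\sigma$ is a core vertex during exactly one contiguous sub-interval of the sweep. For each $x\in V_\sigma$ I record the pair $(i,j)\in\sigma$ at which $x$ first becomes a vertex; since $x$ then lies left of every edge $[i-1,i],\dots,[j,j+1]$, the first edge that $x$ lies strictly to the right of has index at least $j+1$ automatically, so $r_x$ is exactly ``the first edge, in increasing index order, that $x$ lies strictly right of''. (If no such edge existed, $x$ would lie left of every edge of $P$, i.e.\ $x\in\core{1}{n+1}$, so a single guard would cover $\partial P$ by \Cref{obs:trivialOneCenter}; that case is handled separately, so $r_x$ is well defined with index at most $2n$.) This pass costs $O(n\log n)$ time and $O(n)$ space. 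One special case is already resolved here: if $x$ leaves the core by a $j$-increment, then $x$ lies strictly right of the very edge whose half-plane evicts it while (as the window argument shows) being left of all edges of smaller index that became constraints after $j$, so $r_x$ equals that evicting edge.

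For the remaining points I compute, offline, the answer to every query ``first edge of index $\ge j+1$ that $x$ lies strictly right of''. I build a balanced binary tree $\mathcal{T}$ over the edge indices $1,\dots,2n$ in which the node spanning a contiguous block $[p,q]$ stores the convex polygon $\core{p}{q+1}$, that is, the intersection of the left half-planes of edges $p,\dots,q$; a leaf stores one half-plane, and an internal node is the intersection of the two convex polygons at its children, computable in time linear in their total size because both are convex. A node spanning $s$ indices stores at most $\min(s,n)$ vertices, so each level of $\mathcal{T}$ has total size $O(n)$ and $\mathcal{T}$ is built in $O(n\log n)$ time and space. For a query $(x,j)$ I split $[j+1,2n]$ into $O(\log n)$ canonical nodes and scan them left to right: at a node $[p,q]$ I test $x\in\core{p}{q+1}$ by point-in-convex-polygon in $O(\log n)$ time; as long as $x$ is inside I move on, and the first time it is outside I descend that node (into the left child first, then recurse right or left according to whether $x$ lies in the left child's polygon) to pin down the exact first edge in $O(\log n)$. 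Each query costs $O(\log n)$, so all $O(n)$ queries finish in $O(n\log n)$.

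An $O(n)$-space alternative to $\mathcal{T}$ is a single sweep over the edge indices $b=1,2,\dots$ that maintains the set $A$ of query points whose answer is not yet fixed: when $b$ reaches the lower bound $j+1$ of a point, it is inserted into $A$, and when the half-plane of edge $b$ is processed we repeatedly take an extreme point of $\mathrm{conv}(A)$ in the direction opposite that half-plane and, while it lies strictly right of edge $b$, assign it $r_\cdot=[b,b+1]$ and delete it from $A$. This is correct because whenever some point of $A$ lies strictly right of edge $b$ so does a vertex of $\mathrm{conv}(A)$; each point is inserted and deleted once, so only $O(n)$ updates and extreme-point queries are performed, giving $O(n\log n)$ time with any dynamic planar convex hull supporting $O(\log n)$-time (amortized) updates.

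The main obstacle is the second step. One cannot rescan the undecided points at each of the $\Theta(n)$ edges, so the bound rests either on the segment tree $\mathcal{T}$ of visibility cores --- whose $O(n\log n)$ size hinges on the linear-time intersection of two convex polygons --- or on the dynamic convex hull of $A$, where the delicate point is that a point lying in the interior of $\mathrm{conv}(A)$ can still have to be deleted, which is why the eviction step must loop on extreme-point queries rather than perform a single one. Either route turns the task into $O(n)$ convex-polygon or convex-hull operations of $O(\log n)$ cost each, yielding the claimed $O(n\log n)$ total time.
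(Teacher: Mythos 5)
Your third-paragraph alternative is, in essence, exactly the paper's proof: the paper interleaves it with the traversal of $\sigma$, maintaining an ``active set'' of all points that have appeared as core vertices in the dynamic convex-hull structure of Brodal and Jacob, and, each time a $j$-increment adds the half-plane left of the new edge, repeatedly extracting the extreme point in the outward-normal direction and deleting it (assigning it that edge as $r_x$) until the extreme point lies inside the half-plane. Your correctness remark --- that an interior point of the hull may still need deletion, which is why one loops on extreme-point queries rather than doing a single one --- is precisely the point the paper relies on, and the $O(n)$ insertions/deletions with $O(\log n)$ cost each give the stated $O(n\log n)$ bound. So via that route your proposal is correct and essentially identical to the paper's.

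The primary route you propose (the segment tree $\mathcal{T}$ of block cores) does not, as written, meet the bound: a query decomposes $[j+1,2n]$ into $O(\log n)$ canonical nodes and performs an $O(\log n)$ point-in-convex-polygon test at each, plus an $O(\log n)$-level descent with an $O(\log n)$ test per level, so a query costs $O(\log^2 n)$, not $O(\log n)$ as you claim. With $O(n)$ queries this gives $O(n\log^2 n)$, which falls short of the lemma's $O(n\log n)$ unless you add a further idea (e.g.\ fractional cascading across the canonical nodes). The construction of $\mathcal{T}$ itself ($O(n)$ total hull size per level, linear-time intersection of two convex polygons) is fine, as is your observation that ``$x$ outside the possibly empty node polygon'' is equivalent to ``$x$ strictly right of some edge in the block''; only the per-query accounting is off. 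Since your sweep-with-dynamic-hull alternative already carries the proof, this slip does not invalidate the proposal, but you should either drop the segment-tree route or repair its query cost before claiming $O(n\log n)$ for it.
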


\begin{proof}
    We iterate over all $(i, j) \in \sigma$ in order. 
    This requires $O(n)$ updates, where each update increments either $i$ or $j$. 
    Hence, we can maintain the vertices of $\core{i-1}{j+1}$ explicitly in memory using the first-in-first-out data structure from \Cref{thm:fifo}. 
    By \Cref{lem:contiguousInterval}, each point appears as a vertex of $\core{i-1}{j+1}$ exactly once, and once removed, it never reappears.
    During this update sequence, we maintain a set of \emph{active points}: points in the plane that are not necessarily current vertices of $\core{i-1}{j+1}$ but are still under consideration. 
    Whenever we advance in $\sigma$, we add all new vertices of $\core{i-1}{j+1}$ to the active set. 
    By construction, no point is ever added twice.

    Consider the moment after we increment the second index, i.e., the current pair is $(i', j')$ after increasing $j'$. 
    Denote by $H$ the half-plane left of $[j'+1, j'+2]$. 
    We remove from the active set each point $x$ that does not lie in $H$. 
    Since $x$ was a vertex of some earlier $\core{i-1}{j+1}$, where $(i, j)$ precedes $(i', j')$ in $\sigma$, it follows that $x$ remains active exactly until $r_x = [j'+1, j'+2]$. 
    It remains to show that these operations can be performed in $O(n \log n)$ total time.

    We store the active points in the data structure of Brodal and Jacob~\cite{brodal2002dynamic}, which supports insertions and deletions in $O(\log n)$ time and and extreme-point queries in $O(\log n)$ time.
    Each time we add a vertex to the active set, we perform an insertion. 
    When $j'$ is incremented, we let $H$ be the half-plane left of $[j'+1, j'+2]$ and iteratively perform extreme-point queries in the direction of the outward normal of $H$. 
    If a query returns a vertex $x$ outside $H$, then we set $r_x = [j'+1, j'+2]$ and remove $x$ from the active set, charging $O(\log n)$ time for both the query and the update. 
    The first time a query returns a vertex inside $H$, all remaining active points lie in $H$, and the process terminates.

    By \Cref{lem:core_total_size} and \Cref{lem:contiguousInterval}, at most $O(n)$ points are ever added to the active set, and each point is removed at most once. 
    As each insertion, deletion, and query takes $O(\log n)$ time, the total running time is $O(n \log n)$.
\end{proof}

\begin{corollary}\label{cor:computeLxRx}
    For every vertex $x \in V_\sigma$, we can compute the maximal chain $[\ell_x,r_x]$ of polygon edges such that the two defining edges of $x$ are in $[\ell_x,r_x]$, and $x$ is left of every supporting line of every edge in $[\ell_x,r_x]$, in total time $O(n\log n)$.
\end{corollary}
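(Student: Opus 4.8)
The plan is to read $r_x$ directly off \Cref{lem:computeRx}, to obtain $\ell_x$ by running the mirror image of that algorithm, and then to glue the two endpoints together into the desired maximal chain.

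First I would pin down what the maximal chain is. If $x$ is a vertex of $\core{i-1}{j+1}$ for some $(i,j)\in\sigma$, then $x$ lies left of every supporting line of every edge of the chain $[i-1,j+1]$; in particular it lies left of both of its defining edges and of every edge between them. Walking counterclockwise out of this chain, $x$ stays left of every supporting line until the first edge of which $x$ is \emph{strictly} right, and this edge is exactly the $r_x$ of \Cref{lem:computeRx}. Walking clockwise instead, $x$ stays left until the first edge, in clockwise order, of which $x$ is strictly right; call this edge $\ell_x$. By \Cref{lem:contiguousInterval}, $x$ occurs as a vertex of the visited cores over a single contiguous stretch of $\sigma$, so $\ell_x$ and $r_x$ are independent of which such core we pick, and $[\ell_x,r_x]$ is the unique maximal chain of polygon edges that contains both defining edges of $x$ and along which $x$ is left of every supporting line. (If $x$ happens to be left of every edge of $P$ --- which forces $\core{1}{n+1}\neq\emptyset$ and is detected separately when testing for a single-guard solution --- we output $[\ell_x,r_x]=\partial P$.)

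By \Cref{lem:computeRx} we obtain $r_x$ for every $x\in V_\sigma$ in $O(n\log n)$ total time. For the $\ell_x$ I would run the same procedure on the reverse traversal of $\sigma$, equivalently on the polygon obtained from $P$ by reversing its orientation. Reversing the traversal (after reindexing $\partial P$ the other way) again produces a linear-size monotone sliding window of edges, leaves the visited visibility cores --- and hence the set $V_\sigma$ --- unchanged as point sets, and swaps the roles of ``left'' and ``strictly right''; so the edge the procedure returns for each $x$ is exactly the first edge, in clockwise order, of which $x$ is strictly right, i.e.\ $\ell_x$. The proof of \Cref{lem:computeRx} uses only that $\sigma$ is a linear-size sliding sequence --- through \Cref{lem:core_total_size}, \Cref{lem:contiguousInterval} and \Cref{thm:fifo} --- so it carries over verbatim, and this second pass also runs in $O(n\log n)$ time. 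Outputting $[\ell_x,r_x]$ for each $x\in V_\sigma$ then gives, by the previous paragraph, the claimed maximal chains within the stated time bound.

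The only place that needs a moment's care is this symmetry step: one must confirm that reversing the traversal of $\sigma$ still yields a monotone sliding window of half-planes, so that the time-windowed convex-hull bound of \Cref{lem:core_total_size} and the first-in-first-out hull structure of \Cref{thm:fifo} remain applicable. Everything else is a direct reuse of \Cref{lem:computeRx}.
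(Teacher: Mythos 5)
Your proposal is correct and matches the paper's argument, which is exactly the two-pass idea: the paper's proof consists of the single remark that one iterates over $\sigma$ once forward (giving $r_x$ via \Cref{lem:computeRx}) and once backward (giving $\ell_x$ by symmetry). Your additional checks --- that the reversed traversal is again a valid sliding window for \Cref{lem:core_total_size}/\Cref{thm:fifo}, and that the two endpoints indeed delimit the unique maximal chain --- are sound elaborations of the same approach rather than a different route.
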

\begin{proof}
    This is an immediate consequence of \Cref{lem:computeRx} as we can iterate over $\sigma$ once forward, and once backward. 
\end{proof}

\begin{theorem}\label{thm:baddomStabbingQuery}
    For a simple polygon $P$ of $n$ vertices, and conforming sliding sequence $\sigma$, we can compute a linear-size superset $\baddomsigma$ of size $O(n)$ of guards that contains the set of reduced bad dominators $\baddom$ using $O(n)$ space and $O(n \log n)$ time.
    Furthermore, we can construct a data structure in $O(n \log n)$ time that can answer the following query in $O(\log n)$ time:
    \begin{itemize}
        \item Given $x \in \partial P$ and $(i, j) \in \sigma$ such that $x \in [i-1,i)$ return the $(g,[u,v]) \in \baddomsigma(i,j)$ that maximizes $v \in [j, j+1]$.
    \end{itemize}
\end{theorem}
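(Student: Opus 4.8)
The plan is to split the claim into three parts: (i) $\baddom \subseteq \baddomsigma$; (ii) $|\baddomsigma| = O(n)$ and $\baddomsigma$ can be computed in $O(n\log n)$ time and $O(n)$ space; and (iii) the stabbing-style query structure. For (i), recall from \Cref{def:reduced_bad_dom} that each guard in $\baddom$ is a bad dominator that is not strictly dominated by \emph{any} other guard, and each bad dominator has a defining index pair. I would argue that if $(c,[u_{\max},v_{\max}])\in\baddom$ is defined from a vertex $c$ of $\core{i-1}{j+1}$ by edges $[a-1,a]$ and $[b,b+1]$, then because it is not strictly dominated by any guard, its defining pair $(i,j)$ must satisfy $u_{\max}\in[i-1,i)$ and $v_{\max}\in(j,j+1]$ — otherwise one could enlarge $i$ toward $u_{\max}$ or $j$ toward $v_{\max}$ and obtain a strictly larger visibility core hence a dominating guard. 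By the conformance of $\sigma$ (\Cref{def:sequence}), applied to a point $u\in[i-1,i)$ with $\nex{u}=v_{\max}$, the pair $(i,j)\in\sigma$, so $c\in V_\sigma$ and the candidate guard built from $c$ equals $(c,[u_{\max},v_{\max}])$; hence it lies in $\baddomsigma$.

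For (ii), the size bound follows directly from \Cref{lem:core_total_size}: iterating over $(i,j)\in\sigma$ the total number of distinct vertices ever appearing on $\core{i-1}{j+1}$ is $O(n)$, so $|V_\sigma|=O(n)$ and thus $|\baddomsigma|\le |V_\sigma|=O(n)$. To actually compute $\baddomsigma$: first compute a conforming $\sigma$ via \Cref{lem:compute_sigma1} in $O(n\log n)$ time; then maintain $\core{i-1}{j+1}$ explicitly through the $O(n)$ FIFO updates using \Cref{thm:fifo}, collecting each vertex $c$ together with its two defining edges $[a-1,a]$ and $[b,b+1]$ as it appears (by \Cref{lem:contiguousInterval} each vertex is seen in a single contiguous interval, so this is $O(n)$ work total plus $O(\log n)$ per vertex to read off the defining edges). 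For each such $c$, \Cref{cor:computeLxRx} gives in $O(n\log n)$ total time the maximal chain $[\ell_c,r_c]$ of edges all of whose supporting lines have $c$ on their left; since $c\in\core{\ell_c}{r_c}$, I then invoke \Cref{lem:computeVisibility} with $g=c$, $[s,t]=[\ell_c,r_c]$, and $m$ any point on $[a-1,b+1]$, to extract in $O(\log n)$ time the unique maximal visible chain $[u_{\max},v_{\max}]$ containing $[a-1,b+1]$. This is exactly the candidate guard, and summing over the $O(n)$ vertices gives $O(n\log n)$ time and $O(n)$ space.

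For (iii), I would build a segment tree (or interval tree) over the $O(n)$ intervals $[u,v]$ of the guards in $\baddomsigma$, where each interval is additionally tagged with its originating index pair: a guard $(g,[u,v])$ coming from a vertex $g$ of $\core{i-1}{j+1}$ contributes to all $(i',j')\in\sigma$ for which $g$ is a vertex of $\core{i'-1}{j'+1}$, which by \Cref{lem:contiguousInterval} is a contiguous sub-sequence of $\sigma$; I index $\sigma$ by position $1,\dots,O(n)$ and record for each guard the interval of positions $[p_1,p_2]$ for which it lies in $\baddomsigma(i',j')$. A query with $x\in[i-1,i)$ and the position $p$ of $(i,j)$ in $\sigma$ then asks: among guards whose boundary-interval contains $x$ \emph{and} whose position-interval contains $p$, return the one maximizing $v$. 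This is a two-dimensional dominance/stabbing query; I would realize it with a segment tree on the $x$-coordinate where each canonical node stores its guards sorted by position-interval endpoints (or, more simply, a priority-search-tree style structure), giving $O(\log n)$ query time and $O(n)$ space after $O(n\log n)$ preprocessing. Actually a cleaner route, which I expect to use, is to process $\sigma$ in order and maintain a dynamic segment tree over the $x$-axis under insertions/deletions of guard-intervals as their position-intervals open and close, but since we want a static queryable structure afterwards I will instead persist it or, simplest of all, exploit that each guard's position-interval is contiguous and observe that restricting to $\baddomsigma(i,j)$ is equivalent to a stabbing at a fixed $j$-layer, which a two-level segment tree handles directly.

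The main obstacle I expect is part (i): pinning down that a reduced bad dominator's defining pair is \emph{exactly} the $(i,j)$ with $u_{\max}\in[i-1,i)$ and $v_{\max}\in(j,j+1]$, so that conformance of $\sigma$ actually places it in $V_\sigma$. This requires care because the candidate guard in \Cref{def:bad-dominator} is defined relative to $(i,j)$ through $\core{i-1}{j+1}$ and the chain $[a-1,b+1]$, and one must check that a guard which is un-strictly-dominated cannot have "slack" in either index — i.e. that shrinking the window could only happen if the guard were dominated, contradicting its membership in $\baddom$. The geometric-query parts (ii) and the segment-tree part (iii) are then essentially bookkeeping on top of \Cref{lem:computeVisibility}, \Cref{cor:computeLxRx}, \Cref{lem:core_total_size}, and \Cref{lem:contiguousInterval}.
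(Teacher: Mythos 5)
Your proposal follows essentially the same route as the paper's proof: membership $\baddom\subseteq\baddomsigma$ via reducedness (so $\nex{u_{\max}}=v_{\max}$) combined with conformance of $\sigma$, the $O(n)$ bound and construction via \Cref{lem:core_total_size}, \Cref{cor:computeLxRx} and \Cref{lem:computeVisibility}, and a max-$v$ two-dimensional stabbing structure over (position-in-$\sigma$ interval) $\times$ (boundary interval), which is exactly the paper's construction of weighted rectangles $[t_1,t_2]\times[u_{\max},v_{\max}]$ queried at $(t,x)$. The ``slack'' point you flag---that the reduced bad dominator's defining pair satisfies $u_{\max}\in[i-1,i)$ and $v_{\max}\in(j,j+1]$ so that conformance applies---is asserted just as tersely in the paper's own proof, so your argument is at the same level of rigor and takes the same approach.
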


\begin{proof}
    \Cref{lem:core_total_size} implies that $|V_\sigma| \in O(n)$, so $|\baddomsigma| \in O(n)$. 
    We apply \Cref{cor:computeLxRx} and compute for each $x \in V_\sigma$ the chain  $[\ell_x,r_x]$.
We then apply for each $x \in V_\sigma$ \Cref{lem:computeVisibility}, which yields $\baddomsigma$ in $O(n \log n)$ time.     
    We next prove that $\baddomsigma$ contains the set of reduced bad dominators.
    Suppose for contradiction that there is a reduced bad dominator $(c,[u_\text{max},v_\text{max}])$ that is not in $\baddomsigma$. Then $c$ is a vertex of some $\core{i-1}{j+1}$, defined by two edges $[a-1,a]$ and $[b,b+1]$, and $u_{\max} \in [i-1,i)$ and $v_{\max} \in (j,j+1]$. If $(i,j) \in \sigma$, then $(c,[u_\text{max},v_\text{max}])$ would be a candidate guard and thus in $\baddomsigma$. So, $(i,j) \notin \sigma$. By definition of $\sigma$, it must be that $\nex{u_\text{max}} >j+1$. However, by definition of the reduced bad dominators (\Cref{def:reduced_bad_dom}) there is no guard dominating $(c,[u_\text{max},v_\text{max}])$, contradicting that $\nex{u_\text{max}} >j+1$. We conclude  all reduced dominators are in $\baddomsigma$.
    
    To construct the corresponding querying data structure, we loop over all $(i, j) \in \sigma$ in order, which discretizes time $t \in [0, |\sigma|]$ where at time $t$ our loop is at $(i_t, j_t) \in \sigma$. 
    For each $g \in V_\sigma$ there is a unique time interval $[t_1, t_2]$ where for all $t \in [t_1, t_2]$, $g$ is a vertex of $\core{i_t - 1}{j_t + 1}$.
    We loop over all $t$ and use the first-in-first-out convex hull data structure from \Cref{thm:fifo} to maintain $\core{i_t-1}{j_t+1}$ explicitly in $O(n \log n)$ time and $O(n)$ space. This computes for all $g \in V_\sigma$ their time interval $[t_1, t_2]$. 
    Let $g \in V_\sigma$, $(g, [u_\text{max}, v_\text{max}]) \in \baddomsigma$ and $[t_1, t_2]$ be the corresponding interval. 
    We create a weighted rectangle $R_g := [t_1, t_2] \times [u_\text{max}, v_\text{max}]$ in $\mathbb{R}^2$ where the weight is $v_\text{max}$. 
    This creates a set $R$ of $O(n)$ weighted rectangles in the plane, which we store in a stabbing-query data structure that for any query point $q$, returns the maximum-weighted rectangle in $R$ that intersects $q$ in logarithmic time.
    Such a stabbing query data structure of size $O(n)$ can be implemented in various ways through standard techniques (for details, we refer to the stabbing query implementations in~\cite{Agarwal2005Stabbing}). 
    Given a query point $x \in \partial P$ with $(i, j) \in \sigma$, we compute in $O(\log n)$ the corresponding time $t$ such that $(i, j) = (i_t, j_t)$.
    We then perform a stabbing query with the point $(t, x)$ and the maximum-weight rectangle corresponds to the desired query output. 
\end{proof}

\section{\boldmath An $O(k n \log n)$-time algorithm}\label{sec:knlogn}

We will now present an algorithm, which for a set $X\subseteq\partial P$ will compute $\nex{x}$ for every $x\in X$ in total time $O((n+|X|)\log n)$. This then yields an $O(kn\log n)$ algorithm by applying this subroutine $k$ times, together with \Cref{thm:good_guards}. Figures \ref{fig:4sol} and \ref{fig:5sol} show how the $\nexFunc$ function is recursively applied to different starting points in the same polygon.

\begin{figure}[b]
\begin{minipage}{0.48\textwidth}
  \centering
    \includegraphics{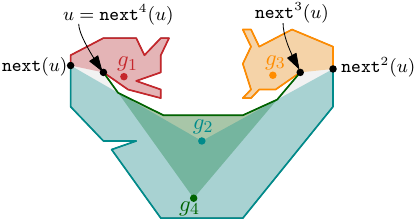}
    \caption{Illustration of optimal solution of size $4$, induced by $u\in\partial P$ that is a vertex of $P$. The guard $g_i$ is the realizing guard/dominator for $[\nexFunc^{i-1}(u),\nexFunc^ i(u)]$.}
    \label{fig:4sol}
    \end{minipage}
    \hfill
\begin{minipage}{0.48\textwidth}
    \centering
    \includegraphics{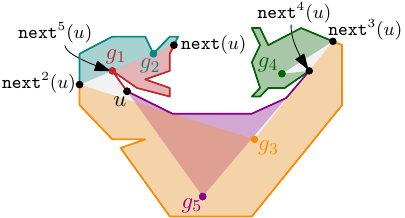}
    \caption{Illustration of non-optimal solution of size $5$, induced by $u\in\partial P$ that is a vertex of $P$. The guard $g_i$ is the realizing guard/dominator for $[\nexFunc^{i-1}(u),\nexFunc^ i(u)]$.
    }
    \label{fig:5sol}
\end{minipage}
\end{figure}

\begin{lemma}\label{lem:computeNextSimultaneous}
    Let $P$ be a polygon with $n$ vertices. Let $X\subseteq \partial P$ be given with $|X| \in O(n)$. There is an algorithm which computes a guard $(g, [x, \nex{x}])$ for every $x\in X$ in total time $O((n+|X|)\log n)$ using $O(n)$ space.
\end{lemma}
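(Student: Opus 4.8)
The goal is to compute, for every $x\in X$ with $|X|\in O(n)$, a guard $(g,[x,\nex{x}])$ in total time $O((n+|X|)\log n)$. The key structural fact, from \Cref{lem:singleDominator} and \Cref{lem:dominated_by_baddom}, is that $\nex{x}$ is realised either by a reduced good dominator, a reduced bad dominator, or an ugly dominator. The first two are captured in the linear-size sets $\reducedDom'$ and $\baddomsigma$, which by \Cref{lem:compute_B_C} and \Cref{thm:baddomStabbingQuery} can be precomputed in $O(n\log n)$ time together with a stabbing-query structure that, for a query point $x$, returns the maximum $v$ over all dominators $(g,[u,v])$ in $\reducedDom'\cup\baddomsigma$ with $x\in[u,v]$. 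The ugly dominator, however, depends on the index pair $(i,j)$ with $x\in[i-1,i)$ and $\nex{x}\in(j,j+1]$, and requires knowledge of $\core{i-1}{j+1}$; to access all of these cheaply we iterate over a conforming sliding sequence $\sigma$ (\Cref{lem:compute_sigma1}), maintaining $\core{i-1}{j+1}$ and $\core{i-1}{j+2}$ in the \slidingds.

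First I would precompute $\reducedDom'$, $\baddomsigma$, the conforming sliding sequence $\sigma$, the stabbing-query structure of \Cref{thm:baddomStabbingQuery}, and a segment tree $T$ over the intervals of $\reducedDom'\cup\baddomsigma$ supporting max-right-endpoint stabbing queries --- all in $O(n\log n)$ time and $O(n)$ space. Sort $X$ along $\partial P$. Now walk over $(i,j)\in\sigma$ in order, maintaining the \slidingds. For each $x\in X$ with $x\in[i-1,i)$, first query $T$ (and the stabbing structure of \Cref{thm:baddomStabbingQuery}) to obtain the largest $r$ such that some dominator in $\reducedDom'\cup\baddomsigma$ with left endpoint-chain containing $x$ reaches $r$; this $r$ is a candidate for $\nex{x}$ realised by a good or bad dominator. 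The value $\lfloor r\rfloor$ is a lower bound for the $j$-index of $\nex{x}$, and by the conforming property the true pair $(i,j^\star)$ with $\nex{x}\in(j^\star,j^\star+1]$ lies on $\sigma$ with $j^\star\ge\lfloor r\rfloor$. So I would advance along $\sigma$ (or look ahead with the lookahead queries for $\core{i-1}{j+2}$) until I reach the pair where the ugly-dominator candidate can be evaluated: using \Cref{def:ugly-dominator}, compute the shortest path $S$ from $x$ to $j+1$ via \Cref{ds:shortestPath}, take the supporting line $\ell$ of its first edge, ray-shoot $\ell$ against $\core{i-1}{j+1}$ (Ray-shooting query), obtain $g^\star$, and compute the farthest visible point $v^\star$ on $[j,j+1]$ via \Cref{lem:computeVisibility} --- all in $O(\log n)$. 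Finally set $\nex{x}=\max(r, v^\star)$ and return the witnessing guard.

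The main subtlety, which I would want to spell out carefully, is the bookkeeping that makes the total cost $O((n+|X|)\log n)$ rather than $O(|X|\cdot n\log n)$: each $x$ must be "charged" to a constant number of pairs $(i,j)\in\sigma$, namely the pair with $x\in[i-1,i)$ and the $O(1)$ adjacent pairs needed to locate $j^\star$ and evaluate the ugly dominator. Since $X$ is sorted along $\partial P$ and $\sigma$ visits $i$-values monotonically, each $x$ is "activated" when the sweep reaches its $i$-interval and "retired" a constant number of steps later, so each $x$ incurs $O(\log n)$ work total; there are $O(n)$ pairs in $\sigma$, each incurring $O(\log n)$ amortized update cost, giving the bound. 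A second subtlety is correctness: I must argue that when the ugly-dominator candidate is the true maximiser, the pair $(i,j)$ at which I evaluate it is exactly the one with $\nex{x}\in(j,j+1]$ --- this follows because $\sigma$ is conforming, so that pair is on $\sigma$, and because the good/bad stabbing result $r$ pins down $j$ from below while the left-turning structure of the shortest path (\Cref{cor:left-turning}) guarantees the ugly dominator's reach does not exceed $j+1$. Putting these together yields a guard $(g,[x,\nex{x}])$ for every $x\in X$ within the claimed bound.

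I expect the main obstacle to be the amortized charging argument in the presence of the lookahead queries: I need to ensure that "advancing along $\sigma$ to find $j^\star$" does not cause a single $x$ to trigger an unbounded walk, which is why the lower bound $\lfloor r\rfloor$ on $j^\star$ (and the conforming property) is essential --- it lets me jump directly rather than scan. I would present this as the key lemma-internal claim and verify it before assembling the final time bound.
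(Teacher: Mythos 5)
Your overall skeleton (precompute $\reducedDom'$, $\baddomsigma$, a conforming $\sigma$, and a stabbing structure; sweep $\sigma$ with the \slidingds; for each $x$ combine a stabbing-query candidate with an on-the-fly ugly-dominator candidate) matches the paper's proof, but the step that actually locates the pair $(i,j)$ with $\nex{x}\in(j,j+1]$ is missing, and this is a genuine gap. You propose to use $\lfloor r\rfloor$ (the reach of the best good/bad dominator covering $x$) to ``jump directly'' to $j^\star$, but $\lfloor r\rfloor$ is only a \emph{lower} bound on $j^\star$: when $\nex{x}$ is realised only by an ugly dominator it may exceed $r$ by many edges, and nothing in your argument bounds $j^\star-\lfloor r\rfloor$ by a constant. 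Your appeal to \Cref{cor:left-turning} (``the ugly dominator's reach does not exceed $j+1$'') is circular, because the ugly dominator of \Cref{def:ugly-dominator} is defined relative to the \emph{correct} pair; if you evaluate it at a too-small $j$, the returned $v^\star$ is by definition capped at $j+1<\nex{x}$, so $\max(r,v^\star)$ underestimates $\nex{x}$, and your write-up contains no test that detects this. The paper's mechanism is precisely the missing piece: once $x\in[i-1,i)$, one has $\nex{x}\in(j,j+1]$ if and only if there is \emph{no} guard $(g,[x,j+1])$ with $g\in\core{i-1}{j+2}$, which is decidable in $O(\log n)$ time by \Cref{lem:finding_guard} via the lookahead queries; one keeps incrementing $j$ while this test succeeds, and only then queries $T$ and builds the ugly candidate.

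Relatedly, your charging claim that each $x$ is ``retired a constant number of steps later'' is false: a single $x$ can force the sweep to advance $j$ by $\Theta(n)$ steps. The correct accounting (and the one the paper uses) charges each $O(\log n)$ test or update either to an increment of $j$ along $\sigma$ (of which there are only $O(n)$ over the entire run, since $\sigma$ has linear size and the sweep never moves backward) or to the dequeuing of one element of $X$; the monotonicity of $\nexFunc$ together with $X$ being sorted is what guarantees that a single forward pass over $\sigma$ serves all of $X$. With the stopping test of \Cref{lem:finding_guard} and this global charging in place your argument goes through; as written, it does not establish either correctness of the computed $\nex{x}$ or the $O((n+|X|)\log n)$ bound.
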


\begin{proof}
    As a preprocessing step, we compute in $O(n \log n)$ time via Theorems~\ref{lem:compute_B_C}, \ref{lem:compute_sigma1} and \ref{thm:baddomStabbingQuery}:
    \begin{itemize}[noitemsep]
        \item a conforming sliding sequence $\sigma$,
        \item a linear-sized set $\reducedDom'$ of $O(n)$ guards that contains the set $\reducedDom$ of reduced good dominators,
        \item a linear-sized set $\baddomsigma$ of $O(n)$ guards that contains the set $\baddom$ of reduced bad dominators,
        \item a segment tree $T$ that stores, for all $(g, [u, v]) \in \reducedDom' \cup \baddomsigma$, the interval $[u, v]$ (with~$g$).  
    \end{itemize}

We first sort the points of $X$ along $\partial P$ in $O(|X|\log|X|)$ time and insert them into a queue. 
We then traverse all pairs $(i,j) \in \sigma$ by iteratively incrementing either $i$ or $j$. We maintain the visibility core $\core{i-1}{j+1}$ using the \slidingds. 

By definition of a conforming sliding sequence $\sigma$, any $x \in X$ with $x \in [i-1,i)$ and $\nex{x} \in (j,j+1]$ satisfies $(i,j) \in \sigma$. 
We maintain the invariant that for the current pair $(i,j) \in \sigma$, the head of the queue $x$ satisfies $i \le  x $ and $j \le  \nex{x} $. 
Since $\nexFunc$ is monotone, this invariant for the head of the queue implies the same property for all remaining elements as well. 
Whenever $\nex{x}$ has been computed, we dequeue $x$.

Let $x$ denote the current head of the queue. 
For any pair $(i',j') \in \sigma$ for which $x \in [i',i'+1)$, there exists a guard $(g,[x,j'])$ by definition of $\sigma$. 
If $x \notin [i-1,i)$, we continue traversing $\sigma$  (incrementing either $i$ or $j$).
We are guaranteed that during this traversal, $j \leq \nex{x} $.
Once $(i, j)$ is such that $x \in [i-1,i)$, we have $\nex{x} \in (j,j+1]$ if and only if there exists no contiguous guard $(g,[x,j+1])$ with $g \in \core{i-1}{j+2}$. 
We therefore apply \Cref{lem:finding_guard} using the lookahead queries of the \slidingds to test in $O(\log n)$ time whether a guard $(g,[x,j+1])$ with $g \in \core{i-1}{j+2}$ exists, and, if so, we increment $j$. This way, we find the pair $(i,j)$ that corresponds to $x,\nex{x}$.

Given a pair $(i,j)$ such that $x \in [i-1,i)$ and $\nex{x} \in (j,j+1]$, we compute the corresponding guard $(g,[x,\nex{x}])$ as follows. 
We first perform a stabbing query on $T$ to find a  maximum value $v$ for which there exists a guard $(g,[u,v]) \in \reducedDom' \cup \baddomsigma$ such that $x \in [u,v]$. 
This stabbing query also returns $(g, [u, v])$. 
Every good guard is dominated by a reduced good dominator, and every bad guard is dominated by either a reduced bad dominator or an ugly dominator. 
Hence, $v \ne \nex{x}$ if and only if the only guards that can see $[x,\nex{x}]$ are ugly dominators.

We compute the shortest path from $x$ to $j+1$ in $O(\log n)$ time and obtain the supporting line $\ell$ of its first edge. 
A ray-shooting query in \slidingds\ identifies the last point of intersection $g^*$ between $\ell$ and $\core{i-1}{j+1}$. 
By definition, the only ugly dominator that sees $[x,\nex{x}]$ has $g^*$ as its guard. 
We apply \Cref{lem:computeVisibility} to compute the maximal chain $[u_{\max}, v_{\max}]$ that contains $j$ and is visible from $g^*$ in $O(\log n)$ time.

It follows that $\nex{x}$ is realized either by $(g,[u,v])$ or by $(g^*,[u_{\max}, v_{\max}])$, and we can distinguish between these cases in constant time. 
We dequeue $x$ and proceed to the next element in the queue. 
Each $O(\log n)$ step is charged either to an increment of $j$ or to the removal of one element from the queue. 
Hence, the total running time is $O((n + |X|)\log n)$.
\end{proof}

\begin{theorem}\label{thm:knlogn}
    Let $P$ be an instance of the \contArt problem where $k$ denotes the size of the optimal solution.  We can compute a set of $k$ guards that guard $\partial P$ using linear space and $O(k n \log n)$ time. 
\end{theorem}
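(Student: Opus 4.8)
The plan is to bootstrap \Cref{thm:good_guards}, \Cref{lem:compute_B_C}, and \Cref{lem:computeNextSimultaneous} into a greedy procedure that explores all relevant starting points simultaneously and iterates it $k$ times. First I would invoke \Cref{lem:compute_B_C} to compute, in $O(n\log n)$ time and $O(n)$ space, the linear-size set $\reducedDom'\supseteq\reducedDom$, and set $X$ to be the union of the vertex set of $P$ with $\{\,u \mid (g,[u,v])\in\reducedDom'\,\}$. Then $|X|\in O(n)$, and by \Cref{thm:good_guards} there is an optimal solution of size $k$ that contains a guard $(g,[x,v])$ whose left endpoint $x$ lies in $X$: either $x$ is a vertex of $P$, or the guard is a reduced good dominator and hence $x$ was added to $X$.

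Next I would iterate the greedy step. Put $X_0:=X$, and for $t=1,2,\dots$ apply \Cref{lem:computeNextSimultaneous} to $X_{t-1}$, obtaining a guard $(g,[y,\nex{y}])$ for each $y\in X_{t-1}$, and set $X_t:=\{\,\nex{y}\mid y\in X_{t-1}\,\}$. Since $|X_t|\le|X_{t-1}|\in O(n)$, each step costs $O(n\log n)$ time. Conceptually, for every $x\in X$ this runs in lockstep the sequence $x,\nex{x},\nexFunc^2(x),\dots$. I stop at the first step $t$ at which some thread satisfies $\nexFunc^t(x)\ge x+n$, i.e. its first $t$ guards already cover $\partial P$. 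By the exchange argument already spelled out in the proof of \Cref{thm:basic_runtime}, for the point $x$ supplied by \Cref{thm:good_guards} the sequence $x,\nex{x},\nexFunc^2(x),\dots$ reaches $x+n$ within $k$ steps, so some thread finishes by step $k$; and no thread can finish in fewer than $k$ steps, since its first $t<k$ guards would otherwise form a contiguous cover of $\partial P$ of size $t$, contradicting optimality of $k$. Hence the stopping step equals $k$, and the total time spent is $\sum_{t=1}^{k}O(n\log n)=O(kn\log n)$.

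It remains to output the actual $k$ guards in $O(n)$ space; storing the length-$k$ guard sequence of each of the $\Theta(n)$ threads would cost $\Theta(kn)$. I therefore run the loop above in two passes. In the first pass each thread stores only its origin $x\in X$ and its current endpoint $\nexFunc^t(x)$, discarding all intermediate guards; this detects the value $k$ and a winning origin $x^*$ in $O(kn\log n)$ time and $O(n)$ space. In the second pass I replay the greedy process from the single point $x^*$, applying \Cref{lem:computeNextSimultaneous} at step $t$ to the singleton $\{\,\nexFunc^{t-1}(x^*)\,\}$ to recover the guard $(g_t,[v_{t-1},v_t])$ with $v_0=x^*$ and $v_t=\nexFunc^{t}(x^*)$; after $k$ steps this yields the desired list of $k$ guards in $O(kn\log n)$ time and $O(n)$ space (here $k\le n$, so the output fits in linear space). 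The main obstacle is this linear-space bookkeeping together with lining up the interface of the reused subroutine; the correctness of the greedy itself needs no new geometry, being inherited directly from \Cref{thm:good_guards} and the exchange argument inside \Cref{thm:basic_runtime}.
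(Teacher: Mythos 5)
Your proposal is correct and follows essentially the same route as the paper: build $X$ from the polygon vertices and the left endpoints of guards in $\reducedDom'$, iterate \Cref{lem:computeNextSimultaneous} on $X_t=\{\nex{y}\mid y\in X_{t-1}\}$ until some thread wraps around $\partial P$, and justify optimality via \Cref{thm:good_guards} together with the exchange argument from \Cref{thm:basic_runtime}. The only (immaterial) difference is the final reconstruction step: the paper recovers the $k$ guards in one further run of \Cref{lem:computeNextSimultaneous} by feeding $\nex{x}$ back into the queue, whereas you replay from the winning origin with $k$ singleton calls---both stay within $O(kn\log n)$ time and linear space.
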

\begin{proof}
Let $X$ be the set that contains all vertices of $P$ and includes for all guards $(g, [u, v]) \in \reducedDom'$ the point $u$. \Cref{lem:compute_B_C} implies $X$ can be computed in $O(n \log n)$ time.
By \Cref{thm:good_guards} and the proof of \Cref{thm:basic_runtime}, it follows that if we apply $\nex{x}$ to each $x \in X$, recursively, $k$ times then for at least one $x \in X$ the result is a point $v > x + n$.  In other words, for this $x \in X$ the recursive application of $\nex{x}$ results in a solution of size $k$ that guards $\partial P$. 

We first compute $k$ and such a corresponding $x \in X$, and afterwards find the corresponding optimal guarding solution.
 Via \Cref{lem:computeNextSimultaneous}, we can compute $\nex{x}$ for every $x$ in $X$. We do this repeatedly, computing, and maintaining, $X_i=\{\nex{x}|x\in X_{i-1}\}$, with $X_0=X$, until there is a point $x$ such that $\nexFunc^k(x)\geq x+n$. This takes $k$ rounds, and thus a total of $O(k   n \log n)$ time, and returns the size $k$ of the optimal solution and some $x$. 
To also output the corresponding set of guards, we apply \Cref{lem:computeNextSimultaneous} once more with an input $X = \{ x \}$ where each time we find a guard $(g, [x, \nex{x}])$, we quickly add $\nex{x}$ to $X$ and the queue before we dequeue $x$.  Since \Cref{lem:computeNextSimultaneous} computes the explicit guard $(g, [x, \nex{x}])$, the result are $k$ guards that together see all of $\partial P$.  
\end{proof}

For the remainder of this paper we will assume that we know that there is no solution of size $3$ or less. This allows us to impose stronger structure, which in turn enables us to give an $O(n\log n)$ algorithm for a polygon where only solutions of size at least $4$ exist. To this end, we show that \Cref{thm:knlogn} implies an $O(n \log n)$-time test for this property:  

\begin{corollary}\label{cor:testknlogn}
    Let $P$ be a polygon consisting of $n$ vertices and $K$ be some integer. We can test in $O(K n \log n)$ time whether the \contArt problem with $P$ as its input has a solution of size $K$.   
\end{corollary}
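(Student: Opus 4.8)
The plan is to reuse the machinery behind \Cref{thm:knlogn} essentially verbatim, but to truncate the greedy iteration after exactly $K$ rounds instead of running it until an optimal solution is found. First I would compute, in $O(n\log n)$ time via \Cref{lem:compute_B_C}, the set $X$ consisting of all vertices of $P$ together with the point $u$ for every guard $(g,[u,v])\in\reducedDom'$; recall $|X|\in O(n)$. Then, exactly as in the proof of \Cref{thm:knlogn}, I would track the greedy orbit of every starting point: set $X_0=X$ and, for $t=1,2,\dots,K$, invoke \Cref{lem:computeNextSimultaneous} on the current frontier to obtain $X_t=\{\nex{x}\mid x\in X_{t-1}\}$, while keeping with each element a back-pointer to the origin $x_0\in X$ whose orbit it lies on, so that after round $t$ the value $\nexFunc^t(x_0)$ is available for every retained origin $x_0$. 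Each round costs $O((n+|X_{t-1}|)\log n)=O(n\log n)$ because $|X_{t-1}|\le|X|\in O(n)$, so together with the preprocessing the whole procedure runs in $O(Kn\log n)$ time.

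After each round $t$ I would test, in $O(n)$ additional time, whether some retained origin $x_0$ satisfies $\nexFunc^t(x_0)\ge x_0+n$, i.e.\ whether $[x_0,\nexFunc^t(x_0)]$ already covers the whole boundary $\partial P$. If this holds for some $t\le K$, the procedure answers ``yes''; if the cap $t=K$ is reached without it ever holding, it answers ``no''.

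For correctness in one direction: if the test answers ``yes'' at round $t\le K$, then the guards $(g_s,[\nexFunc^{s-1}(x_0),\nexFunc^{s}(x_0)])$ for $s=1,\dots,t$, which are returned explicitly by \Cref{lem:computeNextSimultaneous}, form a contiguous guarding solution of size $t\le K$; since a redundant guard $(u,[u,v])$ that sees a single edge $\overline{uv}$ can always be appended, this yields a solution of size exactly $K$. For the converse, suppose $P$ admits a guarding solution of size $K$, so the optimum $k$ satisfies $k\le K$. By \Cref{thm:good_guards} there is an optimal solution containing a guard $(g,[u,v])$ with $(g,[u,v])\in\reducedDom\subseteq\reducedDom'$ or with $u$ a vertex of $P$; in either case $u\in X$. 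The exchange argument inside the proof of \Cref{thm:basic_runtime} then shows that the greedy orbit starting from $x_0=u$ reaches a point $\ge x_0+n$ after at most $k\le K$ applications of $\nexFunc$, so the test succeeds at some round $t\le K$. Hence the procedure is correct and runs in $O(Kn\log n)$ time, as claimed.

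This argument is almost entirely bookkeeping on top of \Cref{thm:knlogn}, so I do not anticipate a genuine obstacle. The only points requiring a little care are interpreting ``covers $\partial P$'' correctly in the $[1,2n+1)$ parametrization (namely $[x_0,\nexFunc^t(x_0)]=\partial P$ exactly when $\nexFunc^t(x_0)\ge x_0+n$, after which the orbit is stopped), and ensuring the frontier $X_{t-1}$ stays of size $O(n)$ across all rounds — which holds because each $x_0\in X$ contributes at most one point to it and no new origins are ever introduced — so that every call to \Cref{lem:computeNextSimultaneous} costs only $O(n\log n)$.
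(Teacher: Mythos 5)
Your proposal is correct and follows the paper's approach exactly: the paper's proof of this corollary is simply to run the algorithm of \Cref{thm:knlogn} and stop it after $K$ rounds, which is precisely the truncated greedy iteration you describe. Your additional bookkeeping (orbit tracking, the coverage test $\nexFunc^t(x_0)\ge x_0+n$, and padding a smaller solution up to size exactly $K$) just makes explicit the details the paper inherits implicitly from \Cref{thm:knlogn,thm:basic_runtime}.
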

\begin{proof}
    We simply stop the algorithm described in \Cref{thm:knlogn} after $K$ rounds.
\end{proof}


\section{Computing the functions}\label{sec:functionArrangement}
In this section, we concern ourselves with computing a representation of $\nexFunc:\partial P\rightarrow\partial P$, which we think of as a function from $[1,n + 1)$ to $[1,2n + 1]$. In particular, this representation partitions $[1,n+1)$ into $O(n)$ contiguous pieces $I_1,\ldots$, where each $I_i$ is endowed with four values $A_i$, $B_i$, $C_i$ and $D_i$ such that 
\[\nex{u} = \begin{cases}
    \frac{A_1+B_1u}{C_1+D_1u} & \text{if $u\in I_1$}\\
    \frac{A_2+B_2u}{C_2+D_2u} & \text{if $u\in I_2$}\\
     & \vdots\\
\end{cases}\]
For this, we use the sequence $\sigma$ of \Cref{def:sequence} to traverse $\partial P$, maintaining the visibility core in a \slidingds. From here on, we fix the sequence $\sigma$ as the conforming sliding sequence computed by \Cref{lem:compute_sigma1}. Notably, we only need to concern ourselves with $u$, where $\nex{u}$ is realized only by ugly dominators \uglydomref{}. Central to our analysis of the construction of this function are multiple interwoven charging arguments. In particular, we will charge almost all queries to the data structure to vertices of the shortest path from $u$ to $\nex{u}$, and vertices of the visibility core itself.

\subsection{Partitioning the function domain}

\begin{lemma}\label{lem:interstingCellInducer}
    Suppose $u\in (i-1,i)$ and $\nex{u}\in (j,j+1)$.  Let $(h,[u_h,v_h])$ be the guard in $\baddomsigma(i,j)$ such that $u\in[u_h,v_h]$, and $v_h$ is maximal. Let $e$ be the edge of $\core{i-1}{j+1}$ after $h$ in counter-clockwise order. Then $\nex{u}$ is realized by an ugly dominator \uglydomref{}, if and only if%
    \begin{enumerate}[topsep=4pt, partopsep=0pt, parsep=0pt, itemsep=0pt]
        \item[(i)] $u>u_h$, and
        \item[(ii)] $\sphericalangle(u_h,h,v_h)>\pi$, and
        \item[(iii)] the edge $e$ lies right of the first edge of the shortest path $\short{h}{j+1}$.
    \end{enumerate}%
    In this case, the realizing guard $g$ of $\nex{u}$ is on edge $e$.
\end{lemma}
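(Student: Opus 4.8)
The plan is to prove both directions by carefully relating the guard $h \in \baddomsigma(i,j)$ that sees $u$ with maximal right endpoint to the structure of $\core{i-1}{j+1}$ and the shortest path $\short{h}{j+1}$. First I would establish the forward direction by contrapositive: assume one of (i)--(iii) fails and show $\nex{u}$ is realized by a good or reduced bad dominator (hence not only by an ugly one). If (ii) fails, i.e. $\sphericalangle(u_h,h,v_h)\le\pi$, then by \Cref{def:dominators} and \Cref{lem:singleDominator} there is a good dominator covering the guard $(h,[u_h,v_h])$, and since $v_h$ was chosen maximal among $\baddomsigma(i,j)$-guards seeing $u$, we need to argue this good dominator already sees $[u,\nex{u}]$ — this uses that $\nex{u}\in(j,j+1)$ so the relevant index pair is exactly $(i,j)$, and that a good dominator's reach is at least $v_h$. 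If (i) fails, i.e. $u=u_h$ (it cannot be $u<u_h$ since $h$ sees $u$ and $[u_h,v_h]$ is maximal), then $h=u_h$ forces $u$ to be an endpoint, and the candidate guard at $h$ itself already witnesses $\nex{u}=v_h$, which a bad dominator (or $h$ being a polygon vertex via \Cref{cor:computeLxRx}) realizes — so no ugly dominator is needed. If (iii) fails, i.e. $e$ lies left of (or on) the first edge of $\short{h}{j+1}$, then $h$ together with its maximal visible chain already reaches past $j$, so the bad dominator at $h$ realizes $\nex{u}$; here I would use \Cref{lem:computeVisibility} to identify that the first edge of $\short{h}{j+1}$ is exactly what bounds $h$'s visibility toward $j+1$, and the position of $e$ relative to it determines whether a \emph{later} vertex of the core (on edge $e$) can see strictly farther.

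Next I would prove the reverse direction: assuming (i)--(iii) all hold, show $\nex{u}$ is realized by an ugly dominator and, moreover, its realizing guard $g$ lies on $e$. By \Cref{lem:dominated_by_baddom}, the optimal guard for $[u,\nex{u}]$ is dominated by a bad or ugly dominator. I would rule out the bad-dominator case: a bad dominator realizing $\nex{u}$ would correspond to a vertex $c$ of $\core{i-1}{j+1}$ whose visible chain contains $[u,\nex{u}]$; since $h$ is the $\baddomsigma(i,j)$-guard seeing $u$ with maximal $v_h$, and (iii) says the next edge $e$ turns to the right of $\short{h}{j+1}$'s first edge, no vertex of the core weakly after $h$ sees $u$ at all, and no vertex strictly before $h$ sees farther than $v_h < \nex{u}$. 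So the realizing guard must lie in the \emph{interior} of an edge of the core, not at a vertex — and the ugly-dominator definition (\Cref{def:ugly-dominator}) places $g^*$ on the supporting line $\ell$ of the first edge of $\short{u}{j+1}$, intersected with the core. The key geometric claim is that this intersection point lies precisely on edge $e$: because $\short{u}{j+1}$ and $\short{h}{j+1}$ share their structure near $j+1$ (both left-turning by \Cref{cor:left-turning}), condition (iii) forces $\ell$ to cross the boundary of the core exactly on the edge $e$ following $h$.

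The main obstacle I expect is the careful bookkeeping in the reverse direction — specifically, pinning down that the ugly dominator's guard lands on $e$ rather than on some later edge of the core. This requires showing that the first edge of $\short{h}{j+1}$ and the first edge of $\short{u}{j+1}$ interact with $\core{i-1}{j+1}$ in a compatible way, which in turn relies on $v_h<j+1$ (so that $\overline{h\,v_h}$ passes a reflex vertex, as in \Cref{lem:verticesAreGood}) and on the convexity of the core to guarantee the line $\ell$ exits the core on exactly one edge. I would handle this by a separating-line argument analogous to Case~2 of \Cref{lem:singleDominator}: the tangent/supporting lines of the core edges at $h$ and at the far endpoint of $e$ bracket where $\ell$ can cross, and condition (iii) is exactly the statement that $\ell$ falls into the wedge bounded by $e$. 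A secondary subtlety is handling the boundary cases where $h$ coincides with a polygon vertex or where $\core{i-1}{j+1}$ degenerates, but since we have already assumed (via \Cref{cor:testknlogn}) that no solution of size $\le 3$ exists, the shortest paths involved have enough interior vertices (\Cref{cor:left-turning}) to make the generic argument go through.
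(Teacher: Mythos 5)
Your overall decomposition differs from the paper's: the paper proves the forward direction directly (assuming ugly realization, it identifies the last vertex $h'$ of $\core{i-1}{j+1}$ right of the supporting line $\ell_u$ of the first edge of $\short{u}{j+1}$, shows $h'=h$ and that the ugly guard lies on $e$, and then reads off (i)--(iii) from that picture), whereas you argue the forward direction by a condition-by-condition contrapositive. As sketched, that route has genuine gaps. For the case ``(ii) fails'': dominance of $(h,[u_h,v_h])$ by a good dominator only guarantees a guard reaching $v_h$, but the whole point of the lemma is that $\nex{u}$ may exceed $v_h$, so ``a good dominator's reach is at least $v_h$'' does not exclude that an ugly dominator realizes $\nex{u}$; the paper instead obtains (ii) geometrically, by showing that $u_h$ and $v_h$ lie strictly left of the edge $e$ that carries the ugly guard. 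For the case ``(i) fails'': you assert $\nex{u}=v_h$ when $u=u_h$, but this is exactly what must be proved; the paper's argument uses \Cref{cor:left-turning} to place a reflex vertex in the interior of $\overline{h\,u_h}$, so that the supporting line of the first edge of $\short{u}{j+1}$ passes through $h$ and the would-be ugly guard coincides with the vertex $h$ itself (also, ``$h=u_h$'' and the appeal to \Cref{cor:computeLxRx} are not meaningful here).

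Your reverse direction also has a circularity: you rule out a bad-dominator realization using ``$v_h<\nex{u}$'', but that inequality is not available until one shows, from (i)--(iii), that some point visible from $u$ sees strictly beyond $v_h$. The paper's proof does exactly this: (i) yields a point $g'$ on $e$ visible from $u$ (by continuity past $h$), and (ii) together with (iii) — $e$ lying right of the first edge of $\short{h}{j+1}$, whose supporting line is that of $\overline{h\,v_h}$ — forces $g'$ to see past $v_h$; then maximality of $v_h$ in $\baddomsigma(i,j)$ shows $u$ cannot see the core vertex following $h$, so $\ell_u$ must cross $e$, which both produces the ugly dominator and pins its guard on $e$. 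This last step is precisely the ``main obstacle'' you flag but leave unresolved with a vaguer separating-line sketch; without it, neither the claim that the realizing guard is ugly nor the final claim that $g$ lies on $e$ is established.
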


\begin{proof}
    Suppose $\nex{u}$ is realized by an ugly dominator \uglydomref{} $(g,[u,\nex{u}])$. First, if $u=u_h$, then $u_h\in (i-1,i)$ and then by \Cref{cor:left-turning}, $\overline{h\,u_h}$ has a reflex-vertex in its interior. But then $h$ is the last intersection point of the supporting line of the first edge of the shortest path $\short{u}{j+1}$, and thus $g=h$, and in particular $v_h=\nex{u}$. Thus $\nex{u}$ is realized by a type \baddomref{} dominator instead. Hence $(i)$ holds.
    
    Next, by definition, $g$ is the second intersection point of the supporting line $\ell_u$ of the first edge of the shortest path $\short{u}{j+1}$ with $\core{i-1}{j+1}$. As $\nex{u}$ is realized by an ugly dominator \uglydomref{} $g$ is not a vertex of $\core{i-1}{j+1}$.  Let $C$ be the counter-clockwise chain of the boundary of $\core{i-1}{j+1}$ that is right of $\ell_u$. 
    Let $h'$ be the last vertex in $C$.
    Let $[u',v']$ be the maximal visibility of $h'$ in $[i-1,j+1]$. 
    This is unique, as $h'\in\core{i-1}{j+1}$ and $\short{u}{\nex{u}}$ is left turning (by \Cref{cor:left-turning}). As $h'$ is right of $\ell_u$, it can see $u$, so $u'\leq u$. 
    Furthermore, by maximality of $\nex{u}$, and the fact that $h'$ induces a bad dominator \baddomref{} whose visibility contains $[u',v']$, it must be that $g$ sees further than $v'$. It follows that the point $g$ lies right of the first edge of the shortest path $\short{h'}{j+1}$, which coincides with the supporting line of $\overline{h'\,v'}$.
    Let $x'$ be the vertex of $\short{h'}{j+1}$ defining this first edge. Any other vertex of $C$, and any other vertex of $\core{i-1}{j+1}$ that sees $u$, is left of $\overline{h'\,x'}$ and thus has $x'$ on its shortest path to $j+1$. Thus it cannot see $v'$. 
    Further, the guard $g$ must be on the edge after $h'$ in counter-clockwise order. 
    As $g$ is below the supporting line of $\overline{h'\,v'}$, so is the entirety of the edge $e$. Thus $h'$ sees its two defining edges and hence is in $\baddomsigma(i,j)$, and in particular, $h'=h$, and $[u',v']=[u_h,v_h]$, and $g$ is on $e$, and $(iii)$ holds. By general position, $x'$ is not colinear with $e$, and thus we have that both $u'=u_h$ and $v'=v_h$ lie strictly left of $e$, and thus $(ii)$ holds.

    Now, conversely, assume $(i)$, $(ii)$, and $(iii)$ hold. Then (i) implies that $u$ can see a point $g'$ on the edge $e$. By $(iii)$, this guard $g'$ sees further than $v_h$. But then it must be dominated by an ugly dominator \uglydomref{}. In particular, the next vertex (in counter-clockwise order) after $h$ of $\core{i-1}{j+1}$ can see the furthest among all guards on the edge $e$, and in particular further than $h$. Thus, $u$ does not see this next vertex, and the supporting line of the first edge of the shortest path $\short{u}{j+1}$ intersects the edge $e$, defining the the ugly dominator \uglydomref{} $g$ of $u$, concluding the proof.
\end{proof}

The consequence of the above lemma is formalized in the following lemma, stating that we can compute $O(n)$ interior-disjoint intervals such that $u\in\partial P$ is in one of these intervals if and only if $\nex{u}$ is realized by an ugly dominator \uglydomref{}. Each such interval is defined by an index $i$, such that $u\in (i-1,i)$, an index $j$, such that $\nex{u}\in(j,j+1)$, and an edge $e$ of $\core{i-1}{j+1}$, such that the realizing guard associated to it will be on edge $e$. We call the set that has these three edges associated to it $I_{i,j,e}$. Let $\mathcal{T}$ be the set of all such triples $(i,j,e)$ such that $I_{i,j,e}\neq\emptyset$. 

\begin{lemma}
    The sets $\{I_{i,j,e}|(i,j,e)\in\mathcal{T}\}$ can be computed in $O(n\log n)$ time given~$\baddomsigma$.
\end{lemma}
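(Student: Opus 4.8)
The plan is to sweep once over the conforming sequence $\sigma$ and, for each pair $(i,j)\in\sigma$, carve the unit interval $(i-1,i)$ into the intervals $I_{i,j,e}$ it contributes to $\mathcal{T}$. Throughout the sweep I maintain $\core{i-1}{j+1}$ and the look-ahead core $\core{i-1}{j+2}$ explicitly as balanced trees via \Cref{thm:fifo}/the \slidingds; over the $O(n)$ increments of $\sigma$ this costs $O(n\log n)$ time and $O(n)$ space. I also use the given set $\baddomsigma$ together with the stabbing structure of \Cref{thm:baddomStabbingQuery}, which on input $u$ returns the guard $(h,[u_h,v_h])\in\baddomsigma(i,j)$ with $u\in[u_h,v_h]$ maximizing $v_h$ — exactly the guard $h$ that \Cref{lem:interstingCellInducer} attaches to $u$.

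The key simplification is supplied by \Cref{lem:interstingCellInducer}: given $(i,j)$ and the responsible guard $h$, conditions $(ii)$ and $(iii)$ depend only on $h$ (and on $j$, through $\short{h}{j+1}$), not on $u$, while $(i)$ only trims the endpoint $u_h$. Hence on any maximal sub-interval of $u\in(i-1,i)$ on which $\nex{u}\in(j,j+1)$ and the responsible guard is a fixed $h$, either the whole sub-interval (minus $u_h$) lies in the single set $I_{i,j,e}$ with $e$ the core edge following $h$, or it contributes nothing. The algorithm is therefore: for each $(i,j)\in\sigma$, first locate $J_{i,j}=\{u\in(i-1,i):\nex{u}\in(j,j+1)\}$; since $\nexFunc$ is monotone (as in \Cref{lem:computeNextSimultaneous}, $J_{i,j}$ is an interval) its endpoints are the thresholds where $\lfloor\nex{u}\rfloor$ changes, and these I find with the $O(\log n)$ guard-existence test of \Cref{lem:finding_guard} applied to $\core{i-1}{j+2}$, advancing a single monotone frontier pointer and charging each test to an increment of $\sigma$, so that all $J_{i,j}$ are obtained in $O(n\log n)$ time. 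Then I traverse $J_{i,j}$ from left to right, maintaining $h(u)$ with the stabbing structure: $u\mapsto h(u)$ is a staircase — competing visibility intervals $[u_h,v_h]$ ranked by right endpoint leave each guard responsible on one contiguous piece — whose breakpoints are either some $u=u_{h'}$ or a change of the first edge of $\short{u}{j+1}$ (which changes the line $\ell_u$ and hence $h(u)$). At each piece I run a constant number of shortest-path and angle queries ($O(\log n)$ each, via \Cref{ds:shortestPath}/the \slidingds) to test $(ii)$ and $(iii)$ for the current $h$, and when both hold I emit $I_{i,j,e}$, trimmed to exclude $u_h$, with $e$ the core edge after $h$; by \Cref{lem:interstingCellInducer} the realizing guard of $I_{i,j,e}$ is a point on $e$, which is what we record with the triple.

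For the running time I still need that the total number of processed breakpoints — equivalently $|\mathcal{T}|$ — is $O(n)$, and this is the main obstacle. I would establish it by a charging argument built on \Cref{lem:interstingCellInducer}: for $u\in I_{i,j,e}$ the core vertex $h''$ immediately counter-clockwise after $h$ (the far endpoint of $e$) sees strictly farther along $[j,j+1]$ than $h$, while $u$ does not see $h''$ at all, so the whole configuration is pinned down by $h''$ and the pair $(i,j)$. Charging each nonempty $I_{i,j,e}$ to $h''$ and combining \Cref{lem:contiguousInterval} (each point is a core vertex over a single contiguous block of $\sigma$) with \Cref{lem:core_total_size} (only $O(n)$ distinct points ever occur as core vertices) bounds the number of (core vertex, alive-block) incidences by $O(n)$; the delicate point is that within one such block a vertex $h''$ is charged only $O(1)$ times, because the accompanying data — the guard $h$ before it, hence $e$, and the index pair $(i,j)$ — is determined up to $O(1)$ choices by the monotone way $u$, $\nex{u}$, and the first edge of $\short{u}{j+1}$ evolve. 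Controlling how often that first edge (and thus $h(u)$) changes, so that the staircases do not accumulate more than $O(n)$ relevant breakpoints over all of $\sigma$, is the genuinely technical step, and it is where the structural analysis of the shortest paths $\short{u}{\nex{u}}$ promised earlier in the section — in particular that, once no solution of size at most $3$ exists, such a path has only a bounded number of well-behaved interior vertices — does the real work; the remaining bookkeeping is routine, and the total running time is $O(n\log n)$ with $O(n)$ space.
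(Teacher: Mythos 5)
Your overall sweep framework (traverse $\sigma$, maintain the cores, use the stabbing structure of \Cref{thm:baddomStabbingQuery} to find the responsible guard, and test the conditions of \Cref{lem:interstingCellInducer} once per piece) matches the paper's intent, but there is a genuine gap at exactly the step you flag: you never establish that the number of pieces you process --- equivalently $|\mathcal{T}|$, or the total number of staircase breakpoints --- is $O(n)$. You defer this to the ``structural analysis of the shortest paths'' under the no-size-$3$-solution assumption, i.e.\ to the pivot-event machinery of the next subsection (\Cref{lem:Lxarrangement}, \Cref{thm:theUgly}). That machinery is needed only for the \emph{finer} refinement of each $I_{i,j,e}$ into the pivot cells $C_{i,j,e,x,y}$; it is not what bounds $|\mathcal{T}|$, and leaving the bound as ``the genuinely technical step does the real work'' means the lemma is not actually proved. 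A second, related inaccuracy: the breakpoints of $u\mapsto h(u)$ are not triggered by changes of the first edge of $\short{u}{j+1}$. The guard $h(u)$ is determined purely by the arrangement of the intervals $[u_h,v_h]$ of $\baddomsigma(i,j)$ (maximal right endpoint among those containing $u$), so its breakpoints occur only at interval endpoints; the first edge of $\short{u}{j+1}$ enters the definition of the ugly dominator and the later pivot refinement, not the identity of $h(u)$.

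The bound you are missing is in fact elementary, which is how the paper argues. The cells $C_{i,j}=\{u\in[i-1,i)\mid \nex{u}\in(j,j+1]\}$ are contiguous and there are only $O(n)$ nonempty ones, one per element of the linear-size sequence $\sigma$, computable in $O(n\log n)$ time via \Cref{lem:core_total_size} and \Cref{lem:contiguousInterval}. Refining each $C_{i,j}$ by the arrangement of $\{[u_h,v_h]\mid (h,[u_h,v_h])\in\baddomsigma(i,j)\}$ adds at most $O(|\baddomsigma|)=O(n)$ breakpoints in total: the cells $C_{i,j}$ are pairwise disjoint, each guard of $\baddomsigma$ has fixed interval endpoints, and each endpoint can create a breakpoint in at most one cell (each guard is moreover alive on a single contiguous block of $\sigma$ by \Cref{lem:contiguousInterval}). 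Hence there are $O(n)$ cells $C_{i,j,h}$ overall; each is associated by one stabbing query with the guard maximizing $v_h$, the conditions of \Cref{lem:interstingCellInducer} are checked once per cell in $O(\log n)$ time, and the cells associated with the same guard are contiguous, so a linear scan merges them into the sets $I_{i,j,e}$. No charging to core vertices, no control of how often the first shortest-path edge changes, and no appeal to the absence of a size-$3$ solution is needed for this lemma.
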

\begin{proof}
    Let us first consider the partition of $\partial P$ into sets $C_{i,j}=\{u\in \partial P|u\in [i-1,i),\nex{u}\in(j,j+1]\}$. Each $C_{i,j}$ is contiguous, and there are at most $O(n)$ non-empty such $C_{i,j}$. These can also be computed by \Cref{lem:core_total_size} and \Cref{lem:contiguousInterval} in $O(n\log n)$ time. Next, consider the refinement of the set $C_{i,j}$ via the arrangement of $\{[u_h,v_h]|(h,[u_h,v_h])\in\baddomsigma(i,j)\}$ into cells $C_{i,j,h}$. By \Cref{lem:contiguousInterval}, each guard in $\baddomsigma$ is in a contiguous subsequence of $\{\baddomsigma(i,j)|(i,j)\in\sigma\}$, there are again only $O(n)$ non-empty sets among all such $C_{i,j,h}$, and they can be computed in $O(n\log n)$ time via $\baddomsigma$ and $C_{i,j}$. Finally, associate every $C_{i,j,h}$ with the guard $(h^*,[u_h,v_h])\in\baddomsigma (i,j)$ (or more precisely, the edge $e$ of $\core{i-1}{j+1}$ that is after $h^*$ in counter-clockwise order) that maximizes $v_h$ subject to $C_{i,j,h}\subseteq [u_h,v_h]$. This can be done via \Cref{thm:baddomStabbingQuery}. Now consider only those $C_{i,j,h}$ such that its associated guard and edge fulfill the conditions of \Cref{lem:interstingCellInducer}. The set of cells among them associated to the same guard $(h^*,[u_h,v_h])$ is contiguous, as $i\in[u_h,v_h]$, and hence any later $C_{i,j',h'}$ is also contained in $[u_h,v_h]$. The union of these sets associated to $h^*$ and with it, associated to $e$ define the set $I_{i,j,e}$. By construction, there are at most $O(n)$ many of these which are non-empty, and these can be computed in $O(n\log n)$ time from $C_{i,j,h}$ via a simple linear scan.
\end{proof}

\begin{definition}
    For $u\in I_{i,j,e}$ we define $x_u$ to be the vertex of $\short{i-1}{j+1}$ that defines the first edge of the shortest path $\short{u}{j+1}$. We call $x_u$ the left pivot of $u$. We similarly define the right pivot $y_u$ to be the vertex of $\short{i-1}{j+1}$ defining the first edge of the shortest path $\short{g}{j+1}$, where $g$ is the guard realizing $\nex{u}$. With this, we define the sets
    \begin{itemize}[noitemsep]
        \item $L_{i,j,e,x}=\{u\in I_{i,j,e}|x=x_u\}$, and 
        \item $R_{i,j,e,y}=\{u\in I_{i,j,e}|y=y_u\}$.
    \end{itemize}
    The collection of sets $L_{i,j,e,x}$ (and $R_{i,j,e,y}$) for all $x$ (and $y$) partition $I_{i,j,e}$. Their intersections $C_{i,j,e,x,y}=L_{i,j,e,x}\cap R_{i,j,e,y}$ together define another partition of $I_{i,j,e}$.
\end{definition}

\begin{observation}
    $L_{i,j,e,x}$ and $R_{i,j,e,y}$ are contiguous subsets of $I_{i,j,e}$ and interior-disjoint.
\end{observation}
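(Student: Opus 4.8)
First, interior-disjointness needs no work: for a fixed triple $(i,j,e)$ the maps $u\mapsto x_u$ and $u\mapsto y_u$ are functions on $I_{i,j,e}$, so the sets $L_{i,j,e,x}$ (over varying $x$) are pairwise disjoint, and likewise the $R_{i,j,e,y}$. The content of the observation is contiguity, and here I would lean on two facts that hold throughout $I_{i,j,e}$: since the index $i$ is fixed, $I_{i,j,e}$ is a sub-interval of the single polygon edge $\overline{v_{i-1}v_i}$; and since $i$ and $j$ are fixed, the convex region $\core{i-1}{j+1}$, its edge $e$, and the vertex $j+1$ do not change as $u$ ranges over $I_{i,j,e}$.

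For $L_{i,j,e,x}$, the plan is to use the funnel from the edge $\overline{v_{i-1}v_i}$ to the point $j+1$. For any $u$ on this edge, the left pivot $x_u$ is the first turn vertex of $\short{u}{j+1}$, i.e.\ the point at which the first edge $\overline{u\,x_u}$ is tangent to one of the two convex chains of the funnel (or $j+1$ itself, if $u$ sees $j+1$). As $u$ moves monotonically along $\overline{v_{i-1}v_i}$, this tangent point slides monotonically along the funnel and never revisits a vertex — the standard property underlying the linear-time funnel traversal. Hence $\{u : x_u = x\}$ is an interval of the edge, and its intersection with the interval $I_{i,j,e}$ is exactly $L_{i,j,e,x}$, which is therefore contiguous.

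For $R_{i,j,e,y}$, I would reduce contiguity to two monotonicity statements. By \Cref{lem:interstingCellInducer}, the realizing guard $g_u$ of $\nex{u}$ lies on the fixed edge $e$ for every $u\in I_{i,j,e}$, and it is the last crossing of $\core{i-1}{j+1}$ by the supporting line $\ell_u$ of the first edge of $\short{u}{j+1}$. The first statement is that $u\mapsto g_u$ is continuous and strictly monotone on $I_{i,j,e}$. Granting it, $y_u$ is the first turn vertex of $\short{g_u}{j+1}$, and — by the funnel from $e$ to $j+1$, applied exactly as above — the fiber $\{g\in e:\text{first turn vertex of }\short{g}{j+1}=y\}$ is a sub-interval of $e$; pulling this sub-interval back through the strictly monotone map $u\mapsto g_u$ shows $R_{i,j,e,y}$ is an interval of $I_{i,j,e}$.

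The monotonicity of $u\mapsto g_u$ is where the real work lies, and I expect it to be the main obstacle. On any maximal sub-interval of $I_{i,j,e}$ on which the left pivot is a fixed vertex $x$, the line $\ell_u$ runs through $x$, so $g_u=\ell_u\cap e$ is the image of $u$ under the perspectivity with center $x$ from the supporting line of $\overline{v_{i-1}v_i}$ to the supporting line of $e$; since $g_u$ exists for every such $u$, these two lines are never parallel there, so this perspectivity, and hence $u\mapsto g_u$, is strictly monotone on the sub-interval. To chain the sub-intervals I would show that $g_u$ does not reverse direction at a breakpoint $u^{*}$ where the left pivot switches from $x$ to $x'$: at such a breakpoint $u^{*},x,x'$ are collinear with $x'$ on the same ray out of $u^{*}$ as $x$ — either because $x'$ is the next vertex of $\short{u^{*}}{j+1}$ and $x$ is absorbed into a straightened first edge, or because a reflex vertex $x'$ enters the interior of the first edge $\overline{u^{*}x}$. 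In both cases the angular velocity of $\ell_u$ about its pivot has the same sign on the two sides of $u^{*}$, so $g_u=\ell_u\cap(\text{line of }e)$ keeps moving in the same direction, yielding global strict monotonicity. The fiddly step will be confirming that these are the only elementary combinatorial changes to the first edge of $\short{u}{j+1}$ as $u$ slides along $\overline{v_{i-1}v_i}$, which I would read off the funnel structure (the first edge is always tangent to a fixed pair of convex chains) together with the left-turning structure of the relevant shortest paths from \Cref{cor:left-turning}.
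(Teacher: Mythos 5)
Your argument is correct in outline, but note that the paper offers no proof of this observation at all: the disjointness is dismissed as ``by definition'', and contiguity is treated as immediate from the same monotone-sweep picture that is then used, again without explicit proof, inside \Cref{lem:computeCijexy} (there the left pivots are exactly the vertices of $\short{g_t}{s}$, the right pivots the vertices of $\short{g_s}{\nex{t}}$, and $g_u$ is asserted to stay on $\overline{g_s\,g_t}$). Your treatment of the $L$-sets --- the funnel from the edge containing $I_{i,j,e}$ to the fixed target $j+1$, with the first turning vertex sliding monotonically and never revisited --- is precisely the justification the paper has in mind. For the $R$-sets you take a heavier route than needed: you reduce to strict monotonicity of $u\mapsto g_u$ and then argue piecewise via perspectivities with a breakpoint analysis. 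Two remarks there. First, weak monotonicity already suffices to pull an interval on $e$ back to an interval of $I_{i,j,e}$, so strictness buys you nothing. Second, your breakpoint step (``same sign of angular velocity, hence $g_u$ keeps moving the same way'') silently uses that the old and the new pivot lie on the same side of the supporting line of $e$, i.e.\ strictly between $u^{*}$ and $g_{u^{*}}$ on $\ell_{u^{*}}$; this does hold because the ugly dominator is by definition the \emph{last} intersection of the directed line $\ell_u$ with $\core{i-1}{j+1}$, beyond the pivot, but it must be said, since the direction of motion of the intersection point with the line of $e$ depends on which side of that line the center of the perspectivity lies. A lighter argument, closer to the paper's implicit one and symmetric to your $L$-argument, avoids $u\mapsto g_u$ altogether: both pivots are vertices of the fixed left-turning chain $\short{i-1}{j+1}$, the right pivot is the last interior vertex of $\short{u}{\nex{u}}$ (equivalently the tangency vertex on that chain seen from $\nex{u}$), and since $\nexFunc$ is non-decreasing, $\nex{u}$ sweeps monotonically along $[j,j+1]$ as $u$ increases, so the right pivot also sweeps monotonically along the chain, giving contiguity of the $R$-sets by the same funnel reasoning.
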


\begin{lemma}\label{lem:computeCijexy}
    For every $I_{i,j,e}$ we have 
    \[|\{x,y|C_{i,j,e,x,y}\neq\emptyset\}|=O(|\{x|L_{i,j,e,x}\neq\emptyset\}|+|\{y|R_{i,j,e,y\neq\emptyset}\}|).\]
    Further, the set of all non-empty $C_{i,j,e,x,y}$ on $I_{i,j,e}$ can be computed in $O(|\{x|L_{i,j,e,x}\neq\emptyset\}|+|\{y|R_{i,j,e,y\neq\emptyset}\}|+ \log n)$ time.
\end{lemma}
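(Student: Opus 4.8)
The plan is to prove the two claims separately: the cardinality bound is a purely combinatorial statement about refining two interval partitions, and the construction is essentially a merge with a single shortest-path bootstrap.

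\textbf{Cardinality.} By the Observation preceding the lemma, both $\{L_{i,j,e,x}\}_x$ and $\{R_{i,j,e,y}\}_y$ are partitions of the interval $I_{i,j,e}$ into contiguous (hence sub-interval) pieces. Writing $p=|\{x\mid L_{i,j,e,x}\neq\emptyset\}|$ and $q=|\{y\mid R_{i,j,e,y}\neq\emptyset\}|$, the first partition is determined by its $p-1$ interior breakpoints and the second by its $q-1$ interior breakpoints. I would then observe that $\{C_{i,j,e,x,y}\}$ is precisely the common refinement, i.e.\ the partition of $I_{i,j,e}$ whose interior breakpoints are the union of the two breakpoint sets (treating every piece as half-open, so no degenerate one-point cell arises and each nonempty $L_{i,j,e,x}\cap R_{i,j,e,y}$ is exactly one refinement piece). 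Hence the refinement has at most $(p-1)+(q-1)$ interior breakpoints, so at most $p+q-1$ nonempty cells, which gives $|\{x,y\mid C_{i,j,e,x,y}\neq\emptyset\}|\le p+q-1=O(p+q)$.

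\textbf{Construction.} I would (i) list the $L$-partition as a sequence of sub-intervals sorted along $I_{i,j,e}$, (ii) do the same for the $R$-partition, and (iii) merge the two sorted breakpoint lists; between two consecutive merged breakpoints both $x_u$ and $y_u$ are constant, so each resulting piece is exactly one nonempty $C_{i,j,e,x,y}$, and step (iii) costs $O(p+q)$. A convenient fact for (i) and (ii) is that $I_{i,j,e}\subseteq(i-1,i)$ is a subsegment of a single edge of $P$, so intersecting it with a line is an $O(1)$ operation. For (i): let $u_0$ be the left endpoint of $I_{i,j,e}$, compute $\short{u_0}{j+1}$ via \Cref{ds:shortestPath} to read off $x_{u_0}$, and retrieve the fixed chain $\short{i-1}{j+1}$ in balanced-tree form, all in $O(\log n)$. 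Then walk forward along $\short{i-1}{j+1}$ from $x_{u_0}$: when the current pivot $c$ is followed on the chain by $c'$, the value of $u$ at which $x_u$ changes from $c$ to $c'$ is the intersection of $I_{i,j,e}$ with the line through $c$ and $c'$, obtained in $O(1)$. This enumerates the $p-1$ breakpoints in $O(p+\log n)$ time. Step (ii) is symmetric, using that the realizing guard $g$ of $\nex{u}$ moves along edge $e$ as $u$ moves, so its pivot $y_u$ also walks monotonically along $\short{i-1}{j+1}$, with transitions located the same way; this costs $O(q+\log n)$. In total $O(p+q+\log n)$.

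The hard part is the structural input used in (i)--(ii): one must know that $x_u$ is always a vertex of $\short{i-1}{j+1}$ and advances \emph{monotonically} along that chain as $u$ sweeps $I_{i,j,e}$ (and likewise for $y_u$), so that a single forward scan finds all transitions. Without monotonicity, a forward scan does not suffice and one would pay an $O(\log n)$ shortest-path query per cell, turning the additive $\log n$ into a multiplicative one. I would derive the $x_u$-monotonicity from the funnel structure of shortest paths from the edge $[i-1,i]$ to $j+1$, and the $y_u$-monotonicity from the fact that $g$ traces edge $e$ projectively while $\nex{u}$ correspondingly traces $[j,j+1]$ --- exactly the properties the section already announces it will establish and charge against vertices of $\short{u}{\nex{u}}$ and of the visibility core. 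Granting these, both parts of the lemma follow as above, since the contiguity of $L_{i,j,e,x}$ and $R_{i,j,e,y}$ is the stated Observation and everything else is a standard merge.
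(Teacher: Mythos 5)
Your proposal is correct and takes essentially the same route as the paper: the cardinality bound is the common refinement of the two contiguous partitions, and the construction enumerates the monotonically advancing left and right pivots using the shortest-path structure in $O(\log n)$ startup plus constant time per pivot, followed by a linear merge exploiting contiguity. The paper packages the pivot enumeration slightly differently—it identifies the left and right pivots as exactly the vertices of $\short{g_t}{s}$ and $\short{g_s}{\nex{t}}$, where $g_s,g_t$ are the guards on $e$ realizing $\nex{s},\nex{t}$ for the endpoints $s,t$ of $I_{i,j,e}$, and reports these paths wholesale—but the monotonicity/funnel-convexity fact you defer to is precisely the (equally briefly justified) ingredient the paper invokes.
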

\begin{proof}
    Let $I_{i,j,e}=[s,t]\subseteq [i-1,i]$. Let $g_s$ be the guard on edge $e$ realizing $\nex{s}$, and let $g_t$ be the guard on edge $e$ realizing $\nex{t}$. For every $u\in [s,t]$, the realizing guard $g_u$ is in $\overline{g_s\, g_t}$. The left pivot of $u$ is the first vertex of the shortest path $\short{g_u}{u}$ and in particular, by the convexity of $\short{i-1}{j+1}$ on the shortest path $\short{g_t}{s}$. Conversely, for every vertex $x$ in the shortest path $\short{g_t}{s}$ there is a $u$ such that $\ell_u=x$. Similarly, the set of right pivots for $u\in[s,t]$ is precisely the set of vertices in the shortest path $\short{g_s}{\nex{t}}$. These can be computed in time $O(\log n + |\{x|L_{i,j,e,x}\neq\emptyset\}|+|\{y|R_{i,j,e,y\neq\emptyset}\}|)$ using \Cref{ds:shortestPath}. The non-empty sets $L_{i,j,e,x}$ and $R_{i,j,e,y}$ are contiguous, and hence the non-empty sets $C_{i,j,e,x,y}$ can be computed by a simple linear scan in time $O(|\{x|L_{i,j,e,x}\neq\emptyset\}|+|\{y|R_{i,j,e,y\neq\emptyset}\}|)$ concluding the proof.
\end{proof}

\begin{lemma}\label{lem:computeFunctions}
    Let $C_{i,j,e,x,y}$ be given. Then in $O(1)$ time one can compute values $A$, $B$, $C$, and $D$ such that 
    \[\forall u\in C_{i,j,e,x,y}:\nex{u}=\frac{A+Bu}{C+Du}.\]
\end{lemma}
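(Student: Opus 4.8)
The plan is to write $\nex{u}$, for $u$ in a single cell $C_{i,j,e,x,y}$, as the composition of two \emph{perspectivities} --- central projections from one line onto another --- and to use the classical fact that a perspectivity, read in affine parametrizations of the two lines, is a fractional-linear map; since the composition of two fractional-linear maps is again fractional-linear and is obtained from a $2\times2$ matrix product, the four coefficients follow in $O(1)$ time.

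First I would fix the geometry on the cell. Write $L_u$, $L_e$, $L_v$ for the supporting lines of the polygon edge $[i-1,i]$, the core edge $e$, and the polygon edge $[j,j+1]$ respectively; recall that every edge of a visibility core lies on the supporting line of a polygon edge, so $L_e$ is such a line. By the definition of $I_{i,j,e}$ and of the left pivot $x$, for every $u\in C_{i,j,e,x,y}$ the first edge of $\short{u}{j+1}$ is $\overline{u\,x}$, so its supporting line is the line through $u$ and $x$; by \Cref{lem:interstingCellInducer} the realizing guard $g_u$ of $\nex{u}$ is the last intersection of this line with $\core{i-1}{j+1}$ and lies on $e$. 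Hence $g_u$ is the intersection of the line through $u$ and $x$ with $L_e$, a proper point interior to the segment $e$ for $u$ in the open cell. By the definition of the right pivot $y$, the first edge of $\short{g_u}{j+1}$ is $\overline{g_u\,y}$, and the window of the visibility polygon of $g_u$ at the reflex vertex $y$ meets $[j,j+1]$ exactly at the farthest point of $[j,j+1]$ visible from $g_u$, that is, at $\nex{u}$. So $\nex{u}$ is the intersection of the line through $g_u$ and $y$ with $L_v$.

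Next I would parametrize $L_u$ affinely by the real value $u$, the line $L_e$ affinely by a parameter $s$, and $L_v$ affinely by the real value $v$. Then the map sending $u$ to the $s$-parameter of $g_u$ is the perspectivity with center $x$ from $L_u$ to $L_e$, and the map sending that $s$ to $v$ is the perspectivity with center $y$ from $L_e$ to $L_v$; their composition is $u\mapsto\nex{u}$. Expanding the collinearity determinant of the center, a point of the source line, and its image on the target line shows that such a perspectivity has the form $s=\frac{a+bu}{c+du}$, where the coefficients are $2\times2$ determinants built from the coordinates of the center and of two points of each line --- so each coefficient vector is computed in $O(1)$ time, and composing the two maps multiplies the corresponding $2\times2$ coefficient matrices, yielding $\nex{u}=\frac{A+Bu}{C+Du}$ in $O(1)$ time. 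A perspectivity whose center lies off both lines is a projective isomorphism, so each coefficient matrix is invertible and so is their product; in particular $C+Du$ is not identically zero and the formula is well defined.

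The step I expect to need the most care is checking that no line configuration degenerates, so that "intersection of two lines" really returns $g_u$ and $\nex{u}$ on the whole cell. General position supplies this: $x$ and $y$ are polygon vertices while $L_e$ and $L_v$ support polygon edges, so $x\notin L_e$ and $y\notin L_e$, which keeps both perspectivities non-degenerate; and the definition of $I_{i,j,e}$ together with \Cref{lem:interstingCellInducer} forces $g_u$ to lie in the interior of $e$ and $\nex{u}$ in the interior of $[j,j+1]$ throughout the open cell, so the line through $u$ and $x$ is transverse to $L_e$ there and the line through $g_u$ and $y$ is transverse to $L_v$ there. A parameter value at which two of these lines happen to be parallel is merely a pole of the Möbius map lying outside the cell, which does not affect the statement.
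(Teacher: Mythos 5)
Your proposal is correct and follows essentially the same route as the paper: decompose $u\mapsto\nex{u}$ into the map $u\mapsto g_u$ (intersection of the line through $u$ and the left pivot $x$ with the supporting line of $e$, justified by \Cref{lem:interstingCellInducer}) followed by $g_u\mapsto\nex{u}$ (intersection of the line through $g_u$ and the right pivot $y$ with the supporting line of $[j,j+1]$), observe that each is a M\"obius transformation with coefficients given in $O(1)$ time by the fixed coordinates, and use closure of M\"obius transformations under composition. Your added remarks on perspectivities, the $2\times 2$ matrix product, and non-degeneracy are just a more explicit rendering of the same argument.
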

\begin{proof}
    For $u\in C_{i,j,e,x,y}$, $\nex{u}$ is realized by an ugly dominator \uglydomref{}. That is, its realizing guard $g$ is the intersection of the supporting line of $\overline{u\, x}$ with (the supporting line of) $e$. Similarly, given $g$, $\nex{u}$ is the intersection of the supporting line of $\overline{g\, y}$ with (the supporting line of) $[j,j+1]$. Both of these functions are M\"obius transformations, i.e., functions of the form $\frac{A+Bu}{C+Du}$, where $A$, $B$, $C$, and $D$ depend on the coordinates of the start- and end points of $[i,i+1]$, $e$, $[j,j+1]$, $x$ and $y$. As M\"obius transformations are closed under concatenation, their concatenation is of the form $\frac{A+Bu}{C+Du}$ as well, concluding the proof.
\end{proof}

Thus, if in total only $O(n)$ of the sets $L_{i,j,e,x}$ and $R_{i,j,e,y}$ are non-empty, then we can compute a representation of the $\nexFunc$ function of $O(n)$ pieces. To this end, consider the set $\hat{\mathcal{U}}$ of points $u\in\partial P$ such that $\nex{u}$ is realized by only bad guards. The realizing guard $(g,[u,\nex{u}])$ has that $\sphericalangle(u,g,\nex{u})>\pi$, and that $\short{u}{\nex{u}}$ is left-turning.  Consider the subset $\mathcal{U}\subseteq\hat{\mathcal{U}}$ for which the shortest path $\short{u}{\nex{u}}$ has at least two inner vertices. By \Cref{lem:interstingCellInducer} and \Cref{cor:left-turning}, $u\in \partial P $ such that neither $u$ nor $\nex{u}$ is a vertex of $P$ and $\nex{u}$ is realized by an ugly dominator \uglydomref{} in~$\mathcal{U}$. For a vertex $x$ of $P$, let $L_x\subseteq\mathcal{U}$ be the set of points $u\in U$ such that $x$ is the first inner vertex of the shortest path $\pi_u=\short{u}{\nex{u}}$. Let $\#L_x$ be the number of connected components of $L_x$ as a subset of $\mathcal{U}$. In the next section, we show that 
\[\sum_{\text{$x$ vertex of $P$}}\#L_x\in O(n).\]
This in turn implies that 
\[\sum_{(i,j,e)\in\mathcal{T}} \left|\left\{x|L_{i,j,e,x}\neq\emptyset\right\}\right|\leq \left(\sum_{(i,j,e)\in\mathcal{T}}2\right)+\left(\sum_{\text{$x$ vertex of $P$}}\#L_x\right)\in O(n).\]
A symmetric argument shows $\sum_{(i,j,e)\in\mathcal{T}}\left|\left\{y|R_{i,j,e,y}\neq\emptyset\right\}\right|\in O(n)$.

\subsection{Bounding the number of pivot events}
We begin by showing a stronger version of \Cref{thm:good_guards}. For this, we first define a relation on the vertices of the polygon, where $x\prec y$, if $x$ appears before $y$ in $\pi_u=\short{u}{\nex{u}}$ for some $u\in \mathcal{U}$, which will induce a strict order.

\begin{lemma}\label{lem:totalOrder}
    Let $P$ be given. If no three contiguous guards can guard all of $\partial P$, then the transitive closure of the relation $\prec$ is a partial order.
\end{lemma}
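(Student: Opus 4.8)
The transitive closure $\prec^{*}$ of $\prec$ is transitive by construction, and it is irreflexive because no vertex of $P$ can occur twice on a simple chain $\pi_{u}$; hence $\prec^{*}$ is a strict partial order if and only if $\prec$ contains no directed cycle. I would argue this by contradiction. Suppose $x_{0}\prec x_{1}\prec\dots\prec x_{m-1}\prec x_{0}$ is a cycle of minimum length $m\ge 2$, and for each $t$ fix a witness $u_{t}\in\mathcal U$ with $x_{t}$ preceding $x_{t+1}$ (indices mod $m$) on $\pi_{u_{t}}=\short{u_{t}}{\nex{u_{t}}}$. Since $\mathcal U$ contains only points whose $\nexFunc$-value is realized by bad guards, and since the optimum has size at least $4$ so no single guard covers $\partial P$, every chain $[u_{t},\nex{u_{t}}]$ is a proper arc of $\partial P$. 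The plan is to read off from the cycle at most three contiguous guards covering $\partial P$, contradicting the hypothesis.

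The first ingredient is an \emph{ordering lemma}: for $u\in\mathcal U$, the inner vertices $p_{1},\dots,p_{\kappa}$ of $\pi_{u}$ (all reflex, with $\kappa\ge 2$ by \Cref{cor:left-turning}) lie on the counter-clockwise arc $[u,\nex{u}]$, in the same order $p_{1},\dots,p_{\kappa}$ along this arc as along $\pi_{u}$, with $u$ strictly before $p_{1}$ and $p_{\kappa}$ strictly before $\nex{u}$. I would derive this from the fact that $\pi_{u}$ is left-turning: walking along $\pi_{u}$ one always turns left, so every reflex vertex of $P$ that $\pi_{u}$ passes through lies on the fixed side of the directed chain that also carries the arc $[u,\nex{u}]$, and a left-turning (hence ``convex'') chain meets that arc in the stated order. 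The corollary I need is that whenever $x\prec y$ is witnessed by $u$, the points $u,x,y,\nex{u}$ occur in this cyclic order on $\partial P$, the realizing bad guard $g_{u}$ of $\nex{u}$ (as described in \Cref{lem:verticesAreGood} and \Cref{lem:interstingCellInducer}) sees the whole arc $[u,\nex{u}]\supseteq[x,y]$, and $\sphericalangle(u,g_{u},\nex{u})>\pi$.

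Applying this to every witness, $x_{0},\dots,x_{m-1}$ appear in this cyclic order on $\partial P$, and the arc $A_{t}:=[u_{t},\nex{u_{t}}]$ strictly contains the counter-clockwise sub-arc $[x_{t},x_{t+1}]$; hence $\bigcup_{t}A_{t}\supseteq\bigcup_{t}[x_{t},x_{t+1}]=\partial P$, so $g_{u_{0}},\dots,g_{u_{m-1}}$ together cover $\partial P$. When $m\le 3$ this already contradicts the assumption; for $m=2$ one even gets two guards, since $A_{0}\supseteq[x_{0},x_{1}]$ and $A_{1}\supseteq[x_{1},x_{0}]$ are the two complementary arcs determined by $x_{0}$ and $x_{1}$.

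The remaining and main obstacle is to rule out a minimal cycle of length $m\ge 4$. Here I would first use minimality: no two non-consecutive $x_{i},x_{j}$ are $\prec$-comparable, which together with the ordering lemma forces $\pi_{u_{t}}\cap\{x_{0},\dots,x_{m-1}\}=\{x_{t},x_{t+1}\}$ for every $t$. I then want to show either that some $\pi_{u_{t}}$ must in fact pass through a third $x_{j}$ (a chord, contradicting minimality), or that three of the arcs $A_{t}$ already cover $\partial P$, exploiting that each $A_{t}$ overshoots $x_{t+1}$ up to $\nex{u_{t}}$ (the pointwise-maximal reach from $u_{t}$) together with the reflex-angle property $\sphericalangle(u_{t},g_{u_{t}},\nex{u_{t}})>\pi$ of the realizing guards. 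The delicate point is bounding how far around $\partial P$ consecutive arcs $A_{t-1},A_{t}$ can fail to overlap, which I expect to control by lifting to the universal cover of $\partial P$ and using the maximality of $\nexFunc$ to force the lifted endpoints $u_{t}<\nex{u_{t}}$ to advance monotonically; establishing this bound is where the real work lies, while the rest is bookkeeping on top of \Cref{cor:left-turning}, \Cref{lem:verticesAreGood}, and \Cref{lem:interstingCellInducer}.
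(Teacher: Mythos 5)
Your plan hinges on an ``ordering lemma'' that is false, and this is precisely the point where the geometry of bad guards is subtle. For $u\in\mathcal{U}$ the realizing guard $g$ sees all of $[u,\nex{u}]$ with $\sphericalangle(u,g,\nex{u})>\pi$, and the inner vertices of the left-turning geodesic $\short{u}{\nex{u}}$ are reflex vertices that this path wraps around: walking the path from $u$ to $\nex{u}$, the subpolygon bounded by the chain $[u,\nex{u}]$ lies to its right, while the exterior notches of the wrapped reflex vertices lie to its left. Hence these inner vertices lie on the \emph{complementary} chain $[\nex{u},u+n]$, not on $[u,\nex{u}]$, and along $\partial P$ they appear in \emph{clockwise} order (this is exactly the fact the paper uses when it notes that the cycle vertices $x_1,\dots,x_m$ are sorted clockwise along the boundary). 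Consequently, when $x\prec y$ is witnessed by $u$, the counter-clockwise arc $[x,y]$ \emph{contains} $[u,\nex{u}]$ rather than being contained in it, so your conclusion that the arcs $A_t=[u_t,\nex{u_t}]$ cover $\partial P$, and hence that a cycle of length $m\le 3$ already yields at most three contiguous guards, does not follow: the guards $g_{u_t}$ only guard the $A_t$, which may jointly miss most of the boundary.

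Even apart from this, the core of the lemma---extracting three contiguous guards from an \emph{arbitrary} cycle---is exactly the part you leave open (``establishing this bound is where the real work lies''), and it is not bookkeeping. The paper does not argue via minimal cycles or coverage by the intervals $[u_i,\nex{u_i}]$ at all. It truncates the cycle to one revolution, concatenates the subpaths of the $\pi_i$ between consecutive cycle vertices into a closed loop $L$ inside $P$, proves $L$ is simple and clockwise, and then shows one can always pick three points of $L$ on three distinct subpaths admitting tangent half-planes whose common intersection is bounded and contains all three points; ray-shooting from each such point away from the intersection of the other two tangent lines produces boundary points $p_1^*,p_2^*,p_3^*$ such that each of the three chains they delimit is contained in some $[u_i,\nex{u_i}]$, giving three contiguous guards and the desired contradiction. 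The existence of these three tangents is itself a delicate argument (the paper's triangle-sliding construction around the unique local minimum of the projection function), so both the small-cycle case and the general case in your proposal need to be reworked along genuinely different lines.
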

\begin{proof}
    It suffices to show that if there is a cycle in $\prec$, i.e., there are $u_1,\ldots,u_M\in\mathcal{U}$ such that for $i < M$ the path $\pi_i=\short{u_i}{\nex{u_i}}$ has the vertex $x_i$ before $x_{i+1}$ and $\pi_M$ has the vertex $x_M$ before $x_1$, then a guarding set of size $3$ exists. Hence, let us assume such a cycle exists. Consider the prefix $x_1,\ldots,x_m,x_{m+1}$ where $x_{m+1}$ is the first $x_i$ that is more than one full revolution ahead of $x_1$, that is, $x_1\in[x_{m},x_{m+1}]$. Then $x_1,\ldots,x_m$ are sorted clockwise along the boundary of $P$, as all shortest paths $\pi_i$ are left-turning. We may assume that $x_{m+1}\in[x_1,x_2]$, by removing a prefix from $x_1,\ldots,x_{m+1}$.
    

    Let $\overline{\pi_i}$ the subpath of $\pi_i$ from $x_i$ to $x_{i+1}$. Observe that each $\overline{\pi_i}$ is left-turning by definition of $\mathcal{U}$. Let $P_i$ be the portion of $\partial P$ between $x_i$ and $x_{i+1}$, i.e. $P_i = [x_i,x_{i+1}]$. The concatenation of all $\overline{\pi_1},\ldots,\overline{\pi_m}$ forms a closed loop $L$ in $P$, after short-cutting $\overline{\pi_1}$ and $\overline{\pi_{m}}$ via their intersection point (they must intersect, as $x_{m}\leq x_1\leq x_{m+1}<x_2$).

    \textbf{The loop \boldmath $L$ is simple and clockwise.} First, observe that $\overline{\pi_i}$ and $\overline{\pi_{i+1}}$ intersect in $x_{i+1}$ (or the intersection point between $x_1$ and $x_m$), and as $\overline{\pi_i}$ and $\overline{\pi_{i+1}}$ are both left-turning shortest paths they cannot intersect in any other point. It remains to show that $\overline{\pi_i}$ and $\overline{\pi_j}$ do not intersect if $j\not\in\{i-1,i,i+1\}$. Assume otherwise. Then either $x_j$ or $x_{j+1}$ lies in $P_i$ contradicting the clockwise order of $x_i$, $x_{i+1}$, $x_j$ and $x_{j+1}$. Thus, $L$ is a simple loop. 
    Furthermore, as each $\pi_i$ is left-turning and contained in $P$, the loop $L$ must be clockwise.

    \begin{figure}
    
    \centering
    \includegraphics{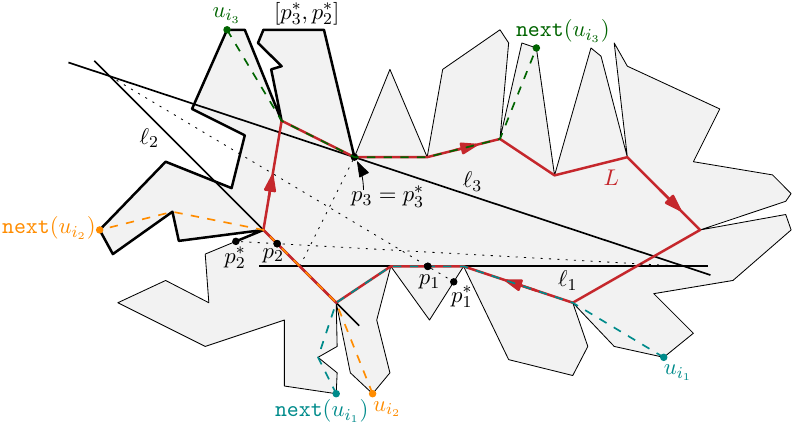}
    \caption{Construction of three guards guarding all of $\partial P$ from the proof of \Cref{lem:totalOrder}. Illustrated is the fact that $[u_{i_1},\nex{u_{i_1}}]$ contains $[p_3^*,p_2^*]$, which implies that $g_{i_1}$ sees $[p_3^*,p_2^*]$.}
    \label{fig:tripleGuard}
\end{figure}

    \textbf{At most three guards guard all of \boldmath$\partial P$.} Let $x\in L$ be a point that is in the interior of some $\overline{\pi_i}$. A half-plane $h$ defined by a line $\ell$ is said to be a tangent at $x$ if $x\in \ell$, and $h$ does not contain $\overline{\pi_i}$ in its interior. At the intersection point $x'$ between $\overline{\pi_i}$ and $\overline{\pi_{i+1}}$ we may choose (and need to specify) the membership of $x'$ to be either $\overline{\pi_{i}}$ or $\overline{\pi_{i+1}}$. 
    In other words, a half-plane $h$ defined by a line $\ell$ is a tangent at $x'$, if $x'\in \ell$, and $h$ does not contain one of $\overline{\pi_{i}}$ and $\overline{\pi_{i+1}}$.
    
    Let us for now suppose, there are three points $p_1$, $p_2$, and $p_3$ on $L$, with tangents $h_1$, $h_2$ and $h_3$ defined by $\ell_1$, $\ell_2$, and $\ell_3$, such that $(i)$ each $h_i$ contains all three points $p_1$, $p_2$, and $p_3$, $(ii)$ $h_1\cap h_2\cap h_3$ is bounded, and $(iii)$ for the three $\pi_{i_1}$, $\pi_{i_2}$ and $\pi_{i_3}$ containing $p_1$, $p_2$ and $p_3$, no two are equal. Then there are three guards guarding all of $\partial P$ (refer to \Cref{fig:tripleGuard}): As $h_1\cap h_2\cap h_3$ is bounded, we can pair $p_1$ with a point $p_1^*\in \partial P$ that is in $h_{2}\cap h_{3}$, if $p_1$ is not already on $\partial P$, by shooting a ray in $P$ starting in $p_1$ away from the intersection point $\ell_{2} \cap \ell_{3}$. 
    As $h_1$ contains the point $\ell_2 \cap \ell_3$, and the ray cannot intersect $\pi_{i_1}$ a second time before hitting the polygon boundary, we have that $p_1^*\in P_{i_1}$, unless $\pi_1$ was already in $P_{i_1}$, in which case we define $p_1^*=p_1$. Similarly, we find for $p_2$ and $p_3$ the points $p_2^*$ and $p_3^*$. The points $p_1^*$, $p_2^*$, and $p_3^*$ appear along $\partial P$ in clockwise order. Finally, note that $p_2^*$ and $p_3^*$ are in $h_1$, and thus do not lie in $[u_{i_1},\nex{u_{i_1}}]$, which implies that $[p_{3}^*,p_{2}^*]\subseteq [u_{i_1},\nex{u_{i_1}}]$.\footnote{Recall that intervals between points on $\partial P$ are defined in counter-clockwise order.} In particular, there is a contiguous guard $(g_1,[p_3^*,p_2^*])$. Similarly, there are guards $(g_2,[p_1^*,p_3^*])$ and $(g_3,[p_2^*,p_1^*])$, which concludes the proof.

      \begin{figure}[b]
        \centering
        \includegraphics{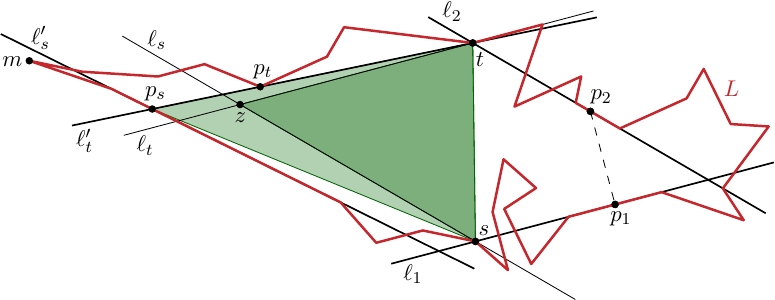}
        \caption{Construction of the points $p_s$ and $p_t$ by transforming the green triangle defined by $s$, $t$, and $z$ into the light green triangle.}
        \label{fig:triangleConstruction}
    \end{figure}

    \textbf{\boldmath The points $p_1$, $p_2$, and $p_3$ such that $(i)$, $(ii)$, and $(iii)$ hold, exist.}
    Observe that indeed any number of such points already induce a set of three such points. The only condition that might fail, by removing such points is $(ii)$, but as the intersection of tangents together form a convex, bounded set, there must also be a subset of just three such tangents whose intersection is bounded.

    We interpret $L$ as a simple polygon oriented clockwise. We begin, by picking $p_1$ and $h_1$ arbitrarily such that $p_1$ is not the intersection point of two paths $\pi_i$ and $\pi_{i+1}$. Let $p_2$ result from $p_1$ by shooting a ray from $p_1$ orthogonal to $\ell_1$. Pick $h_2$ arbitrarily, such that $p_1$ is not on $h_2$. Without loss of generality, $\ell_1$ and $\ell_2$ are not parallel, as otherwise $[\nex{u_{i_1}},u_{i_1}]$ and $[\nex{u_{i_2}},u_{i_2}]$ are disjoint, and there is even a set of guards guarding all of $\partial P$ of size $2$. Assume without loss of generality that $\ell_1$ and $\ell_2$ intersect right of the line segment $\overline{p_1\,p_2}$. As $L$ is a simple loop, there is a chain $[s,t]\subseteq L$ such that $s$ is on $\ell_1$, $t$ is on $\ell_2$, and all of $[s,t]$ is in $h_1\cap h_2$, see Figure~\ref{fig:triangleConstruction}. 
    If there is a point in $[s,t]$ with a tangent parallel to $\overline{p_1\,p_2}$, then we can choose $p_3$ as this point, as all of $[s,t]$ is left of $\overline{p_1\,p_2}$. Now assume that there is no point in $[s,t]$ with a tangent parallel to $\overline{p_1\,p_2}$.
    Consider the function $f$ which maps $x\in[s,t]$ to the distance from its orthogonal projection on $\ell_1$ to $p_1$. The function $f$ can only have one local minimum in $[s,t]$, as otherwise there is a point between two such local minima with tangent parallel to $\overline{p_1\,p_2}$. If $f$ has its local minimum at either $s$ or $t$, the chain $[s,t]$ is left-turning and there must be a point $p_3$ in $[s,t]$ whose tangent includes $p_1$ and $p_2$, as $[s,t]$ does not intersect $\overline{p_1\,p_2}$. Let instead $m\in(s,t)$ be the point where this minimum is attained.

    Now root a halfspace $h_s$ with supporting line $\ell_s$ at $s$ that is parallel to $\ell_2$ and contains $t$, and similarly root a halfspace $h_t$ with supporting $\ell_t$ at $t$ that is parallel to $\ell_1$ and contains $s$. If $[s,m]$ enters the interior of $h_s$, then the first such point $x$ has a tangent which intersects both $\ell_1$ and $\ell_2$ left of $\overline{p_1\,p_2}$, defining the desired point $p_3$. Symmetrically, we find a point $p_3$ if $[m,t]$ intersects the interior of $h_t$. So, assume that no point of $[s,t]$ enters the interior of $h_s\cap h_t$.
    The point $m$ is an intersection point of two $\pi_{i_m}$ and $\pi_{i_m+1}$, as $L$ is right-turning at~$m$.

    Let $z$ be the intersection point of $\ell_s$ and $\ell_t$. Again refer to \Cref{fig:triangleConstruction} for the following construction. The triangle defined by $s$, $t$ and $z$ does not contain any point of $[s,t]$ in its interior. We transform this triangle by moving $z$ towards $m$ until it hits either $[s,m]$ or $[m,t]$. Without loss of generality, let it hit $[m,t]$. We then continue moving $z$, but away from $t$, instead of towards $m$. The result is a triangle that touches both $(s,m]$ and $[m,t)$, and still does not contain any point of $[s,t]$ in its interior. Let $p_s$ and $p_t$ be the two points where the triangle touches $(s,m]$ and $[m,t)$. If $p_s$ is $m$, then we choose the tangent $h_s'$ defined by $\ell_s'$ at $p_s$ according to $\pi_{i_m}$. If $p_t$ is $m$, then we choose the tangent $h_t'$ defined by $\ell_t'$ at $p_t$ according to $\pi_{i_m+1}$. Otherwise, we chose the tangent such that it is colinear with the face of the triangle it is touching. This way, we obtain two tangents that intersect, and their intersection contains the triangle and, in particular, $s$, $t$, $p_1$, and $p_2$. Further, the line $\ell_s'$ must intersect $\ell_1$ left of $p_1$, as otherwise there is a point in $(s,p_s)$ that is another local minimum of $f$, contradicting the fact that $m$ is the only local minimum of $f$. 
    It follows that, $p_1$, $p_2$, $p_3$, and $p_4$ fulfill properties $(i)$, $(ii)$, and $(iii)$, and in particular, a subset of at most three of them as well, concluding the proof.
\end{proof}

A consequence of the proof of \Cref{lem:totalOrder} is, that \emph{any} minimal solution of size at least $4$ contains at least one good guard, strengthening \Cref{thm:good_guards}. We do not make use of this fact. Instead, the consequence we are after is the following.

\begin{corollary}\label{cor:strictOrder}
    Let $P$ be a simple polygon where no three contiguous guards can guard all of $\partial P$.  Then there is strict order $\prec^*$ of the vertices of the polygon such that for any $u\in\mathcal{U}$, the first inner vertex of the shortest path $\short{u}{\nex{u}}$ is the minimum w.r.t. the order $\prec^*$, of all inner vertices in $\short{u}{\nex{u}}$.
\end{corollary}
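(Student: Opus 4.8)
The plan is that this corollary is an almost immediate consequence of \Cref{lem:totalOrder}. That lemma states that, under the hypothesis that no three contiguous guards suffice, the transitive closure of the relation $\prec$ is a partial order on the vertex set of $P$; in particular this closure is irreflexive (acyclic) and transitive. Since $P$ has only finitely many vertices, the order-extension principle (for finite posets, simply a topological sort) lets us extend this strict partial order to a strict total order $\prec^*$ on the vertices, that is, one in which $x\prec y$ implies $x\prec^* y$ for all vertices $x,y$ of $P$.

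It then remains to verify the minimality property. Fix $u\in\mathcal{U}$. By definition of $\mathcal{U}$, the shortest path $\short{u}{\nex{u}}$ has at least two inner vertices, and, being a shortest path in a simple polygon, it is a simple polygonal chain, so all of its inner vertices are distinct. Let $x$ be its first inner vertex (the one nearest $u$) and let $z$ be any other inner vertex. Then $x$ appears before $z$ along $\short{u}{\nex{u}}$, so by the definition of $\prec$ we have $x\prec z$, and hence $x\prec^* z$, since $\prec^*$ extends the transitive closure of $\prec$. As $z$ was an arbitrary inner vertex distinct from $x$, the vertex $x$ is the unique $\prec^*$-minimum among all inner vertices of $\short{u}{\nex{u}}$, which is exactly the claim.

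There is essentially no obstacle here beyond what \Cref{lem:totalOrder} already provides; the only minor care needed is to pass to the transitive closure of $\prec$ before invoking the order-extension principle (as $\prec$ literally defined need not be transitive), and to observe that acyclicity of this closure---which is precisely what the proof of \Cref{lem:totalOrder} establishes by ruling out $\prec$-cycles---is what guarantees the existence of a \emph{strict} (not merely reflexive) linear extension. Note that we do not need $\prec^*$ explicitly or computed efficiently; its mere existence, together with the minimality property above, is all that the counting argument for $\sum_x \#L_x \in O(n)$ in the remainder of this subsection will exploit.
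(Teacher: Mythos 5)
Your proof is correct and follows the paper's route: the paper also derives the corollary directly from \Cref{lem:totalOrder} (its proof is literally ``immediate consequence''), and your write-up just makes explicit the standard steps—take the transitive closure, linearly extend the resulting strict partial order by topological sort, and note that the first inner vertex precedes every other inner vertex of $\short{u}{\nex{u}}$ under $\prec$, hence under $\prec^*$. No gaps; the extra care about transitive closure and strictness is exactly the right (and only) thing to check.
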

\begin{proof}
    This is an immediate consequence of \Cref{lem:totalOrder}.
\end{proof}

Let us now consider for any vertex $x$ of $P$ the subset $I_x\subseteq \mathcal{U}$ of points $u\in\mathcal{U}$ such that $x$ is on the shortest path $\short{u}{\nex{u}}$. By \Cref{cor:strictOrder}, $u\in L_x$ if and only if $x=\min_{\prec^*}\{y|y\in I_u\}$. \Cref{fig:Lx_not_connected1} illustrates that the set $L_x$ is not necessarily connected, but there is some contiguity for the set $I_x$, which we prove in the next lemma.

\begin{figure}

\begin{minipage}{0.48\textwidth}
  \centering
    \includegraphics[page=1]{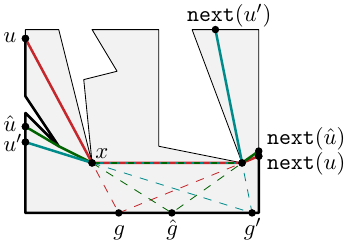}
    \caption{The set $L_x$ is not necessarily connected, but for any two points $u$ and $u'$ in $L_x$, the vertex $x$ stays on the shortest path for any $\hat{u}$ in between $u$ and $u'$, i.e., $\hat{u}\in I_x$.}
    \label{fig:Lx_not_connected1}
    \end{minipage}
    \hfill
\begin{minipage}{0.48\textwidth}
    \centering
    \includegraphics[page=2]{pictures/LxDisconnected.pdf}
    \caption{For any point $u' \in L_x$ strictly right of $\overline{x\,y}$, the potential ugly dominator~\uglydomref{} $g'$ cannot see $\nex{u}$.\\
    \textcolor{white}{hidden}}
    \label{fig:Lx_not_connected2}
\end{minipage}
\end{figure}

\begin{lemma}\label{lem:contiguoutyLx}
    Let $P$ be a simple polygon where no three contiguous guards can guard all of $\partial P$. Let $u,u'\in L_x$ for a vertex $x$ of $P$, with $u < u'$. Then at least one of $[u,u']$ and $[u',u+n]$ is contained in $I_x$.
\end{lemma}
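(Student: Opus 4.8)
The plan is to choose the right one of the two arcs and then show that $x$ lies on $\short{\hat u}{\nex{\hat u}}$ for every $\hat u$ on it. Write $\pi=\short{u}{\nex u}$ and $\pi'=\short{u'}{\nex{u'}}$. By \Cref{cor:strictOrder} together with \Cref{cor:left-turning}, both paths are left-turning, have $x$ as their first inner vertex, and — since $u,u'$ are not vertices of $P$ — start with the edges $\overline{u\,x}$ and $\overline{u'\,x}$; moreover the inner vertices of $\pi$ lie on the arc $(u,\nex u)$ and those of $\pi'$ on $(u',\nex{u'})$, so $x\in(u,\nex u)$ and $x\in(u',\nex{u'})$, and in particular $\nex u$ and $\nex{u'}$ are counterclockwise-after $x$. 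Let $\gamma\subset\partial P$ be the arc from $u$ to $u'$ that does not contain $x$; by general position $x$ is interior to exactly one of the two arcs, so $\gamma$ is well defined and equals either $[u,u']$ or $[u',u+n]$. It suffices to prove $\gamma\subset I_x$.

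The key ingredient is a product-structure fact about the reflex vertex $x$: there are two counterclockwise-consecutive sub-arcs $B^{\mathrm{in}}$ (ending at $x$) and $B^{\mathrm{out}}$ (beginning at $x$) of the part of $\partial P$ visible from $x$ such that $x\in\short{s}{t}$ iff $s\in B^{\mathrm{in}}$ and $t\in B^{\mathrm{out}}$. Indeed, $x\in\short{s}{t}$ exactly when $s$ and $t$ each see $x$ and the two-edge chain $s,x,t$ is taut at $x$; tautness at a reflex vertex means $s$ lies in a fixed closed half-plane bounded by the supporting line of one incident edge of $P$, and $t$ in the fixed closed half-plane bounded by the supporting line of the other, so the condition factors, and each factor meets $\partial P$ in an arc abutting $x$ along the respective incident edge. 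I expect the delicate point to be showing that the two relevant arcs are \emph{connected}: in general $\mathrm{Vis}(x)\cap\partial P$ has several components, which is precisely why $L_x$ can be disconnected, so one must argue that $u$ and $u'$ see $x$ through the \emph{same} component. This is where \Cref{fig:Lx_not_connected2} enters: letting $y$ be the second inner vertex of $\pi$, the point $u$ lies on one side of the line $\overline{x\,y}$ because $\pi$ turns left at $x$; if $u'$ lay strictly on the other side, then the only ugly dominator \uglydomref{} available at $u'$ would see strictly less far than $\nex u$, forcing $\nex{u'}<\nex u$ and contradicting monotonicity of $\nexFunc$ together with $u<u'$. Hence $u,u'$ lie on a common arc $B^{\mathrm{in}}$ abutting $x$.

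Granting this, the proof finishes cleanly. Since $B^{\mathrm{in}}$ is an arc containing both $u$ and $u'$ with $x$ as one endpoint, the sub-arc of $B^{\mathrm{in}}$ between $u$ and $u'$ cannot contain $x$, so it equals $\gamma$, and $\gamma\subset B^{\mathrm{in}}$. For $\hat u\in\gamma$, monotonicity of $\nexFunc$ places $\nex{\hat u}$ on the counterclockwise arc from $\nex u$ to $\nex{u'}$; since $\nex u,\nex{u'}\in B^{\mathrm{out}}$ are both counterclockwise-after $x$ and $B^{\mathrm{out}}$ is an arc whose counterclockwise-first endpoint is $x$, this entire arc lies inside $B^{\mathrm{out}}$, whence $\nex{\hat u}\in B^{\mathrm{out}}$. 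Thus $\hat u\in B^{\mathrm{in}}$ and $\nex{\hat u}\in B^{\mathrm{out}}$, so $x\in\short{\hat u}{\nex{\hat u}}$. Finally, $\hat u\in\mathcal U$: the path $\short{\hat u}{\nex{\hat u}}$ already has $x$ as an inner vertex, and a short monotone-interpolation argument between $u,u'\in\mathcal U$ (using the hypothesis that no three contiguous guards suffice, so that the structural results of this section apply to $\hat u$ as well) shows that $\nex{\hat u}$ is realized only by bad guards and that $\short{\hat u}{\nex{\hat u}}$ has at least two inner vertices. Hence $\hat u\in I_x$, and since $\gamma$ is one of $[u,u']$, $[u',u+n]$, the lemma follows. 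The step I anticipate being the main obstacle is the connectedness of $B^{\mathrm{in}}$ and $B^{\mathrm{out}}$ — i.e. ruling out, via monotonicity of $\nexFunc$ and the ``wrong side of $\overline{x\,y}$'' analysis, that $u$ and $u'$ see $x$ through different pieces of $\mathrm{Vis}(x)\cap\partial P$.
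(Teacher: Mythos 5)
Your proof rests on a ``product-structure fact'' that is false: the set of pairs $(s,t)$ with $x\in\short{s}{t}$ is in general \emph{not} of the form $B^{\mathrm{in}}\times B^{\mathrm{out}}$. Tautness of the concatenation $s\to x\to t$ at the reflex vertex $x$ is not two independent half-plane conditions; it is the coupled condition that the wedge at $x$ spanned by the directions towards $s$ and towards $t$ which contains the exterior wedge of $P$ at $x$ has angle at most $\pi$. The two half-plane constraints you state (one per incident edge) are necessary but not sufficient: already in an L-shaped polygon with reflex corner $x$ one can choose $s$ and $t$, each visible from $x$ and each satisfying its half-plane constraint (so each lies in your $B^{\mathrm{in}}$, respectively $B^{\mathrm{out}}$, since each occurs in \emph{some} taut pair through $x$), while $s$ still sees $t$ directly past the corner, so $\short{s}{t}$ is a single segment avoiding $x$. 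Hence the concluding step ``$\hat u\in B^{\mathrm{in}}$ and $\nex{\hat u}\in B^{\mathrm{out}}$, so $x\in\short{\hat u}{\nex{\hat u}}$'' has no justification, and this is the heart of your argument. (The final sentence, that $\hat u\in\mathcal{U}$, is also only asserted, not proved.)

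A second, smaller error: the inner vertices of the left-turning geodesic $\short{u}{\nex{u}}$ lie on the complementary chain $(\nex{u}, u+n)$, not on $(u,\nex{u})$ --- compare the proof of \Cref{thm:good_guards}, where the blocking reflex vertex $r$ satisfies $r\notin[u,v]$. So placing $x$ in $(u,\nex{u})$ is backwards; it happens not to change which arc you select as $\gamma$, but it shows the bookkeeping needs redoing. The paper avoids any product claim: after normalizing so that $u'\in[u,\nex{u}]$ (this is the choice between $[u,u']$ and $[u',u+n]$), it shows --- by the same ``wrong side of $\overline{x\,y}$'' ugly-dominator argument that you invoke only for your connectedness side-claim --- that $u'$ lies left of the second edge of $\short{u}{\nex{u}}$; this makes both $\short{u}{\nex{u'}}$ and $\short{u'}{\nex{u}}$ left-turning paths through $x$, and then monotonicity of $\nexFunc$ together with the fact that $\short{\hat u}{\nex{\hat u}}$ cannot properly cross these two bracketing geodesics forces $x\in\short{\hat u}{\nex{\hat u}}$ (with at least two inner vertices) for every $\hat u\in[u,u']$. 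To salvage your approach you would need to replace the product structure by such a sandwiching argument; as written, the proof does not go through.
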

\begin{proof}
    Recall that $u \in L_x$ means that $u \in \mathcal{U}$, so the shortest path $\pi_u=\short{u}{\nex{u}}$ is left-turning and has at least two inner vertices, and $x$ is the first inner vertex of $\pi_u$. Observe that $[u,\nex{u}]$ and $[u',\nex{u'}]$ intersect, as otherwise the two left-turning shortest paths $\pi_u=\short{u}{\nex{u}}$ and $\pi_{u'}=\short{u'}{\nex{u'}}$ cannot share a vertex.
    Assume that $u'\in[u,\nex{u}]$, and thus $\nex{u}\in[u',\nex{u'}]$, Otherwise, we can reverse the roles of $u$ and $u'$ via $u\gets u+n$.
    Refer to \Cref{fig:Lx_not_connected2} for an illustration of the proof.
    We first show that $u'$ must be left of the second edge of $\pi_u$, i.e. the edge after $x$. Suppose for contradiction that $u'$ is strictly right of the second edge of $\pi_u$. Let $y$ be the vertex after $x$ of $\pi_u$. Note that $y$ exists as $\pi_u$ has at least two inner vertices. Then $y \notin \pi_{u'}$, as $y \in \pi_{u'}$ would contradict that $\pi_{u'}$ is left-turning and contains $x$. It follows that $\nex{u'} \in [y,x]$. Because the guard $(g',[u',\nex{u'}])$ is an ugly dominator~\uglydomref{}, $g'$ is the intersection point of the supporting lines of the first and last edges of $\pi_{u'}$, and thus on the supporting line of the edge $\overline{u'\,x}$, and strictly after $x$. Hence, $g'$ is in the polygon bounded by $\overline{x\,y}$ and $[y,x]$. However, then $g'$ does not see $\nex{u}$, contradicting that it sees all of $[u',\nex{u'}]\ni \nex{u}$.
    
    We conclude that $u'$ is left of the second edge of $\pi_u$. This implies that both shortest paths $\short{u}{\nex{u'}}$ and $\short{u'}{\nex{u}}$ are also left-turning shortest paths that contain $x$. As for any point $\hat{u} \in [u,u']$ we have that $\nex{\hat{u}} \in [\nex{u},\nex{u'}]$, the shortest path $\short{\hat{u}}{\nex{\hat{u}}}$ cannot properly intersect $\short{u}{\nex{u'}}$ and $\short{u'}{\nex{u}}$. Thus, $\short{\hat{u}}{\nex{\hat{u}}}$ must be left-turning with at least two inner vertices and also have $x \in \short{\hat{u}}{\nex{\hat{u}}}$.
\end{proof}

\begin{corollary}\label{cor:Ixstar}
Let $P$ be a simple polygon where no three contiguous guards can guard all of $\partial P$. Then for every vertex $x$ of~$P$ there is a contiguous subset $I_x^*\subseteq I_x \subseteq \mathcal{U}$ such that $L_x\subseteq I_x^*$. In particular, 
\[u\in L_x\iff x=\min_{\prec^*}\{y|u\in I_y^*\}.\]
\end{corollary}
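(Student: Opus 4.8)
The plan is to derive \Cref{cor:Ixstar} directly from \Cref{lem:contiguoutyLx} and \Cref{cor:strictOrder}. First I would define $I_x^*$: by \Cref{lem:contiguoutyLx}, any two points $u < u'$ of $L_x$ have the property that one of the two boundary arcs between them, $[u,u']$ or $[u',u+n]$, is entirely contained in $I_x$. The natural candidate for $I_x^*$ is the ``arc-span'' of $L_x$, i.e.\ the smallest contiguous subset of $\partial P$ (more precisely, of the parameterization $[1,n+1)$ lifted appropriately) that contains $L_x$ and is contained in $I_x$. I would argue this is well-defined: take $u_0 = \min L_x$ in the cyclic sense and let $I_x^*$ be the maximal contiguous arc starting at $u_0$ that is contained in $I_x$ and contains all of $L_x$. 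The main thing to check is that $L_x$ really does fit inside a single such arc — that one cannot have three points of $L_x$ forcing $I_x$ to ``wrap around'' inconsistently.

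The key step is the following claim: $L_x$ is contained in a contiguous arc $A$ with $A \subset I_x$. Suppose not; then since $L_x$ is not necessarily connected, by compactness/finiteness (there are $O(n)$ connected components) we can pick points $u_1, u_2, u_3 \in L_x$, one in each of three ``spread out'' components, such that no single arc avoiding one of the three gaps contains all three while staying in $I_x$. Applying \Cref{lem:contiguoutyLx} to each of the three pairs $(u_1,u_2)$, $(u_2,u_3)$, $(u_3,u_1)$ tells us that for each pair one of the two arcs between them lies in $I_x$; a short combinatorial case analysis on which of the two arcs is chosen for each pair shows that in every case the three chosen arcs either cover all of $\partial P$ (so $I_x = \partial P$ and we may take $I_x^* = \partial P$) or their union is a single contiguous arc containing $u_1, u_2, u_3$ and contained in $I_x$, contradicting the assumption. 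Thus $L_x$ lies in a contiguous arc $I_x^* \subset I_x$. Because $L_x \subset I_x^* \subset I_x \subset \mathcal{U}$, and $I_x \subset \mathcal{U}$ by definition of $I_x$, we obtain $I_x^* \subset \mathcal{U}$.

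Finally I would establish the equivalence $u \in L_x \iff x = \min_{\prec^*}\{y \mid u \in I_y^*\}$. The forward direction: if $u \in L_x$, then by \Cref{cor:strictOrder}, $x$ is the $\prec^*$-minimum inner vertex of $\short{u}{\nex{u}}$; in particular $u \in I_x \supset I_x^*$... wait, here I must be careful about the direction of inclusion, so I would instead argue: $u \in L_x$ gives $u \in I_x^*$ (since $L_x \subset I_x^*$), and for any $y$ with $u \in I_y^*$ we have $u \in I_y$, so $y$ lies on $\short{u}{\nex{u}}$; since $x$ is the $\prec^*$-minimum such vertex, $x \preceq^* y$, i.e.\ $x = \min_{\prec^*}\{y \mid u \in I_y^*\}$. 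The reverse direction: if $x = \min_{\prec^*}\{y \mid u \in I_y^*\}$ then in particular $u \in I_x^* \subset \mathcal{U}$, so $\short{u}{\nex{u}}$ is left-turning with at least two inner vertices; its $\prec^*$-minimum inner vertex $x'$ satisfies $u \in I_{x'}$, and one checks $u \in I_{x'}^*$ as well (since $u \in L_{x'}$ would follow once we know $x'$ is the minimum, but this needs the minimality assumption to pin down $x' = x$) — the cleanest route is: $u \in L_{x'}$ by \Cref{cor:strictOrder} applied to $u \in \mathcal{U}$, hence $u \in I_{x'}^*$, hence $x \preceq^* x'$ by minimality, but $x'$ is itself the $\prec^*$-minimum inner vertex so $x' \preceq^* x$, giving $x = x'$ and thus $u \in L_x$.

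The main obstacle I anticipate is the case analysis in the claim that $L_x$ fits in one contiguous arc of $I_x$ — specifically handling the ``wrap-around'' possibilities in the cyclic parameterization and confirming that the only way the three \Cref{lem:contiguoutyLx}-arcs fail to merge into one arc is when they cover $\partial P$ entirely (the degenerate case $I_x = \partial P$). A secondary subtlety is keeping the inclusion directions straight: $L_x \subset I_x^* \subset I_x \subset \mathcal{U}$, and making sure the definition of $I_x^*$ as a maximal arc in $I_x$ containing $L_x$ is consistent (it could a priori not be unique if $I_x$ has several components, which is exactly why the one-arc claim is needed).
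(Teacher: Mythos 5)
Your proposal is correct and follows the same route as the paper, which simply declares the corollary an immediate consequence of \Cref{lem:contiguoutyLx} and \Cref{cor:strictOrder}; you are just filling in the details the authors left implicit. One small simplification: the three-point case analysis is unnecessary, since if $L_x$ met two distinct connected components of $I_x$ (as a subset of $\partial P$), then for two such points neither boundary arc between them could lie in $I_x$, already contradicting \Cref{lem:contiguoutyLx}, so $L_x$ sits in a single connected component of $I_x$, which is the desired contiguous $I_x^*$.
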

\begin{proof}
    This is an immediate consequence of \Cref{cor:strictOrder} and \Cref{lem:contiguoutyLx}.
\end{proof}

\begin{lemma}\label{lem:Lxarrangement}
    Let $P$ be a simple polygon where no three contiguous guards can guard all of $\partial P$. Then
    \[\sum_{\text{$x$ vertex of $P$}}\#L_x\in O(n).\]
\end{lemma}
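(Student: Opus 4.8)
The plan is to reduce the claim to a one-dimensional arrangement-complexity bound. First I would unpack \Cref{cor:Ixstar}: for each polygon vertex $x$ it supplies a ``contiguous'' subset $I_x^*$ of $\mathcal{U}$, which I write as $I_x^* = \mathcal{U}\cap A_x$ for a closed arc $A_x\subseteq\partial P$, with $L_x\subseteq I_x^*$ and, for every $u\in\mathcal{U}$,
\[
u\in L_x \iff x=\min_{\prec^*}\{\,y\mid u\in I_y^*\,\},
\]
where $\prec^*$ is the strict total order on the polygon vertices from \Cref{cor:strictOrder}. Since $u\in\mathcal{U}$, the condition $u\in I_y^*$ is the same as $u\in A_y$, so the map that sends $u\in\mathcal{U}$ to the unique $x$ with $u\in L_x$ agrees on $\mathcal{U}$ with the function $f(p)=\min_{\prec^*}\{\,y\mid p\in A_y\,\}$, defined wherever $p$ lies in at least one $A_y$. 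In particular $L_x=\mathcal{U}\cap f^{-1}(x)$ for every vertex $x$.

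The load-bearing step is then the observation that $f$ is a step function on the circle $\partial P$ with only $O(n)$ breakpoints: there are at most $n$ arcs $A_y$ and hence at most $2n$ arc-endpoints, and as $p$ moves between two consecutive such endpoints the \emph{set} $\{\,y\mid p\in A_y\,\}$, and therefore its $\prec^*$-minimum $f(p)$, cannot change. (This is the reason no Davenport--Schinzel factor appears: $f$ is the lower envelope of \emph{intervals} under a priority order, not of arbitrary functions.) I would also record here that $\mathcal{U}=\bigcup_x L_x$ is covered, up to finitely many points, by the $O(n)$ interior-disjoint intervals $\{I_{i,j,e}\mid(i,j,e)\in\mathcal{T}\}$ constructed earlier, so $\mathcal{U}$ is a union of $O(n)$ intervals and in particular has $O(n)$ connected components, contributing $O(n)$ endpoints in total.

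Next I would overlay the $O(n)$ breakpoints of $f$ with the $O(n)$ endpoints of the connected components of $\mathcal{U}$. This cuts $\partial P$ into $O(n)$ elementary cells such that, on each cell, the predicate ``$p\in\mathcal{U}$'' is constant and, when it holds, the value $f(p)$ is constant. Consequently each $L_x=\mathcal{U}\cap f^{-1}(x)$ is a union of whole cells, so $\#L_x$ is at most the number of cells contained in $L_x$ (each connected component of $L_x$ contains a cell, and distinct components use disjoint cells). Because the sets $L_x$ are pairwise disjoint over distinct vertices $x$, the cells charged for different $x$ are disjoint, whence $\sum_x\#L_x$ is bounded by the total number of cells, i.e.\ $O(n)$.

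I expect the main obstacle to be bookkeeping rather than a new idea: pinning down the identification $I_x^*=\mathcal{U}\cap A_x$ behind the word ``contiguous'' in \Cref{cor:Ixstar} (including the endpoint conventions that make each $L_x$ a genuine union of cells), and verifying from the $I_{i,j,e}$-construction that $\mathcal{U}$ has only $O(n)$ connected components. The genuinely new content is the single observation that the priority lower envelope $f$ of the intervals $A_y$ has linear complexity because its value can change only at an interval endpoint; everything else is a routine overlay-and-charge argument.
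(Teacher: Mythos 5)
Your core argument is the paper's own proof of this lemma: the $n$ contiguous sets $I_x^*$ contribute $O(n)$ endpoints, between two consecutive endpoints the set $\{\,y \mid u\in I_y^*\,\}$ and hence its $\prec^*$-minimum is constant, and by \Cref{cor:Ixstar} each $L_x$ is therefore a union of these $O(n)$ pieces, which are disjoint across distinct vertices $x$; summing the component counts over $x$ gives $O(n)$. So the approach is essentially identical, just written out with the ``priority lower envelope'' language made explicit.

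One side claim in your write-up is wrong as stated, though it is harmless for the lemma. You assert that $\mathcal{U}=\bigcup_x L_x$ is covered, up to finitely many points, by the intervals $I_{i,j,e}$, in order to conclude that $\mathcal{U}$ has $O(n)$ connected components. The containment the paper establishes goes the other way: points $u$ for which neither $u$ nor $\nex{u}$ is a vertex and $\nex{u}$ is realized by an ugly dominator lie in $\mathcal{U}$, so essentially $\bigcup_{(i,j,e)\in\mathcal{T}} I_{i,j,e}\subseteq\mathcal{U}$; but $\mathcal{U}$ also contains points whose $\nex{u}$ is realized only by bad guards via a reduced bad dominator (with at least two inner vertices on $\short{u}{\nex{u}}$), and such points need not lie in any $I_{i,j,e}$. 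You only invoke this claim because you chose to count topological components of $L_x$ inside $\partial P$. Since $\#L_x$ is defined as the number of components of $L_x$ \emph{as a subset of} $\mathcal{U}$ --- the same relative sense in which $I_x^*$ is ``contiguous'' in \Cref{cor:Ixstar} --- the overlay with the component structure of $\mathcal{U}$ is unnecessary, and deleting that step leaves exactly the paper's argument. If you did insist on topological components in $\partial P$, you would need an independent $O(n)$ bound on the number of components of $\mathcal{U}$, which your appeal to the $I_{i,j,e}$ construction does not supply.
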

\begin{proof}
    Consider for every vertex $x$ of $P$ the contiguous subset $I_x^*\subseteq I_x\subseteq \mathcal{ U}$ as defined in \Cref{cor:Ixstar}. These sets define $O(n)$ interior disjoint contiguous intervals in $\mathcal{U}$, where in every interval all points in the interval are contained in the same sets $I_x^*$. In particular, $u\mapsto \min_{\prec^*}\{y|u\in I_y^*\}$ is constant for each interval. Hence for $\# L_x$, which is the number of connected components of $L_x$ as a subset of $\mathcal{U}$, we have:
    \[\sum_{\text{$x$ vertex of $P$}}\#L_x\in O(n).\qedhere\]
\end{proof}


\begin{lemma}\label{thm:theUgly}
    Let $P$ be a simple polygon where no three contiguous guards can guard all of $\partial P$. Then
    \[\sum_{(i,j,e)\in\mathcal{T}}(|\{x|L_{i,j,e,x}\neq\emptyset\}|+|\{x|R_{i,j,e,x}\neq\emptyset\}|)\in O(n).\]
\end{lemma}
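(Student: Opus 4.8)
The plan is to derive the $O(n)$ bound from \Cref{lem:Lxarrangement} (for the left pivots) together with its orientation-reversed analogue (for the right pivots), via a short charging argument that uses $|\mathcal{T}|\in O(n)$, as established when the intervals $I_{i,j,e}$ were constructed. First I fix a triple $(i,j,e)\in\mathcal{T}$ and list the non-empty sets $L_{i,j,e,x}$ in the order they occur along $I_{i,j,e}$; since each $L_{i,j,e,x}$ is a contiguous subset of $I_{i,j,e}$ and they are pairwise interior-disjoint, they form an ordered partition of $I_{i,j,e}$ into $k=k_{i,j,e}$ pieces, and $|\{x\mid L_{i,j,e,x}\neq\emptyset\}|=k$.

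The heart of the argument is that every piece of this partition except possibly the first and the last is a complete connected component of $L_x$, where $x$ is its defining vertex. I use two facts, both immediate from \Cref{lem:interstingCellInducer} and \Cref{cor:left-turning}: $(a)$ every $u\in I_{i,j,e}$ lies in $\mathcal{U}$, because $u\in(i-1,i)$ and $\nex{u}\in(j,j+1)$ are not vertices of $P$ while $\nex{u}$ is realized by an ugly dominator; and $(b)$ the left pivot $x_u$ equals the first inner vertex of $\pi_u=\short{u}{\nex{u}}$, since the first edge of $\short{u}{j+1}$ coincides with the first edge of $\short{u}{\nex{u}}$ (the observation used in the proof of \Cref{lem:dominated_by_baddom}). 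Hence $L_{i,j,e,x}=L_x\cap I_{i,j,e}$. Now let $L_{i,j,e,x}$ be a piece that is neither first nor last: it lies strictly inside $I_{i,j,e}$, and immediately on either side of it (still inside $I_{i,j,e}\subseteq\mathcal{U}$) the left pivot is a different vertex, so those neighbouring points do not belong to $L_x$. As $L_{i,j,e,x}$ is itself contiguous, it is a maximal connected subset of $L_x$ within $\mathcal{U}$, i.e.\ a connected component of $L_x$. Distinct triples have interior-disjoint $I_{i,j,e}$, so across all triples these interior pieces are pairwise distinct connected components of the various $L_x$. Therefore the first and last piece of each triple contribute at most $2$ to the count while every other piece is charged injectively to a connected component, giving $\sum_{(i,j,e)\in\mathcal{T}}\max(k_{i,j,e}-2,0)\le\sum_{x\text{ vertex of }P}\#L_x$ and hence
\[\sum_{(i,j,e)\in\mathcal{T}}|\{x\mid L_{i,j,e,x}\neq\emptyset\}|\ \leq\ 2|\mathcal{T}|+\sum_{x\text{ vertex of }P}\#L_x\ \in\ O(n)\]
by \Cref{lem:Lxarrangement}.

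For the right pivots I would carry out the mirror of this subsection: define $R_y\subseteq\mathcal{U}$ as the set of $u$ whose path $\pi_u$ has $y$ as its \emph{last} inner vertex, note from \Cref{def:ugly-dominator} and \Cref{lem:computeFunctions} that the right pivot $y_u$ is exactly this last inner vertex, and establish the orientation-reversed analogues of \Cref{lem:totalOrder}, \Cref{cor:strictOrder}, \Cref{cor:Ixstar}, and \Cref{lem:Lxarrangement} (reversing the traversal direction of $\partial P$ turns ``first inner vertex'' into ``last inner vertex'' and left-turning shortest paths into right-turning ones) to obtain $\sum_y\#R_y\in O(n)$. The same charging argument as above, now using that for $I_{i,j,e}=[s,t]$ the right pivots range exactly over the shortest path $\short{g_s}{\nex{t}}$ (as shown in the proof of \Cref{lem:computeCijexy}), then yields $\sum_{(i,j,e)\in\mathcal{T}}|\{y\mid R_{i,j,e,y}\neq\emptyset\}|\in O(n)$, and adding the two bounds proves the lemma. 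The genuinely delicate points are verifying $(a)$ and $(b)$ (and their right-sided versions), which certify that the $L$- and $R$-cells are cut out precisely by the first/last inner vertex of $\pi_u$, and spelling out the orientation-reversed version of \Cref{lem:Lxarrangement}; both are conceptually routine, so I expect the mirrored derivation of \Cref{lem:Lxarrangement} to absorb most of the remaining effort.
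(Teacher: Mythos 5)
Your proposal is correct and follows essentially the same route as the paper: the charging inequality $\sum_{(i,j,e)\in\mathcal{T}}|\{x\mid L_{i,j,e,x}\neq\emptyset\}|\leq 2|\mathcal{T}|+\sum_{x}\#L_x$ combined with \Cref{lem:Lxarrangement}, and a symmetric (dual) argument for the right pivots, is exactly the paper's proof. You merely spell out the details the paper leaves implicit (that each interior piece of the partition of $I_{i,j,e}$ is a full connected component of some $L_x$, and that the left/right pivots coincide with the first/last inner vertices of $\short{u}{\nex{u}}$), which is consistent with the paper's intent.
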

\begin{proof}
    This is an immediate consequence of \Cref{lem:Lxarrangement}, together with the dual nature of $L_x$ and $R_x$, and the fact that
    \[\sum_{(i,j,e)\in\mathcal{T}}|\{x|L_{i,j,e,x}\neq\emptyset\}|\leq \left(\sum_{(i,j,e)\in\mathcal{T}}2\right)+\left(\sum_{\text{$x$ vertex of $P$}}\#L_x\right)\in O(n).\qedhere\]
\end{proof}

\begin{theorem}\label{thm:computeFunction}
    Let $P$ be a simple polygon consisting of $n$ vertices. Suppose, $P$ cannot be guarded with three or less guards. Then a representation of $\nexFunc:[1,n]\rightarrow[1,n]$ consisting of $O(n)$ disjoint intervals, where on each disjoint interval $I$, we have that $\nex{u}=\frac{A_I+B_Iu}{C_I+D_Iu}$, can be computed in total time $O(n\log n)$.
\end{theorem}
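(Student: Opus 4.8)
The plan is to describe $\nexFunc$ as the common refinement of two layers, each contributing only $O(n)$ pieces. The first layer covers those $u$ for which $\nex u$ is realized by a reduced good or reduced bad dominator; there $\nex u$ is \emph{constant} on each relevant piece, and there are $O(n)$ pieces because $|\reducedDom|,|\baddom|\in O(n)$. The second layer covers those $u$ for which $\nex u$ is realized only by an ugly dominator; there $\nex u$ is a genuine M\"obius transformation, and \Cref{thm:theUgly} bounds the number of distinct pieces by $O(n)$. As preprocessing I would run \Cref{lem:compute_B_C}, \Cref{lem:compute_sigma1}, and \Cref{thm:baddomStabbingQuery} to obtain, in $O(n\log n)$ time and $O(n)$ space, a conforming sliding sequence $\sigma$, linear-size sets $\reducedDom'\supseteq\reducedDom$ and $\baddomsigma\supseteq\baddom$, and a stabbing data structure $T$ on the $O(n)$ intervals $\{[u,v]:(g,[u,v])\in\reducedDom'\cup\baddomsigma\}$, each weighted by its right endpoint $v$.

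For the first layer, let $r(x)=\max\{v:(g,[u,v])\in\reducedDom'\cup\baddomsigma,\ x\in[u,v]\}$ for $x$ in the domain $[1,n+1)$. Since $r$ is the upper envelope of $O(n)$ constant-valued indicator functions of intervals, it is piecewise constant with $O(n)$ breakpoints, all occurring at interval endpoints; I would compute the corresponding partition by one left-to-right sweep of the sorted endpoints while maintaining the maximum weight over currently active intervals in a priority queue, in $O(n\log n)$ time.

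For the second layer I would use \Cref{lem:interstingCellInducer} and its consequence to compute, in $O(n\log n)$ time, the $O(n)$ interior-disjoint intervals $\{I_{i,j,e}:(i,j,e)\in\mathcal{T}\}$ whose union is exactly $\{u:\nex u\text{ is realized by an ugly dominator }\uglydomref{}\}$. For every $(i,j,e)\in\mathcal{T}$ I would then apply \Cref{lem:computeCijexy} to refine $I_{i,j,e}$ into its non-empty cells $C_{i,j,e,x,y}$, and \Cref{lem:computeFunctions} to compute for each cell the coefficients with $\nex u=\frac{A+Bu}{C+Du}$. Because $P$ admits no solution of size at most $3$, \Cref{thm:theUgly} gives $\sum_{(i,j,e)\in\mathcal{T}}\bigl(|\{x:L_{i,j,e,x}\neq\emptyset\}|+|\{x:R_{i,j,e,x}\neq\emptyset\}|\bigr)\in O(n)$; combined with the per-$I_{i,j,e}$ cell bound of \Cref{lem:computeCijexy}, this produces only $O(n)$ M\"obius pieces, in $O(n\log n)$ time total (the $O(\log n)$ term of \Cref{lem:computeCijexy} is paid once per the $O(n)$ triples in $\mathcal{T}$, and \Cref{lem:computeFunctions} costs $O(1)$ per cell).

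Finally I would merge the $O(n)$ breakpoints of $r$ with the $O(n)$ breakpoints coming from the cells $C_{i,j,e,x,y}$ and from the gaps between consecutive $I_{i,j,e}$, obtaining $O(n)$ intervals in $O(n)$ time. On an interval contained in some $C_{i,j,e,x,y}$, $\nex u$ is realized by an ugly dominator and hence equals the M\"obius transformation stored with that cell. On an interval disjoint from $\bigcup I_{i,j,e}$, no ugly dominator realizes $\nex u$; since every guard is good or bad, the guard realizing $\nex u$ is dominated by a reduced good dominator (\Cref{lem:singleDominator}) or, the ugly case being excluded, by a reduced bad dominator (\Cref{lem:dominated_by_baddom}), which is a member of $\reducedDom'\cup\baddomsigma$ whose right endpoint is $\nex u$ by maximality, so $r(u)\ge\nex u$; conversely every such dominator containing $u$ induces a guard starting at $u$, so $r(u)\le\nex u$, and thus $\nex u=r(u)$, a constant that we record as the degenerate M\"obius transformation $(A,B,C,D)=(r(u),0,1,0)$. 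This yields the claimed $O(n)$-piece representation in $O(n\log n)$ total time. I expect no serious obstacle: all the genuinely hard work (the construction of the $I_{i,j,e}$, the linear bound behind \Cref{thm:theUgly}, and the M\"obius computation) is already available, so the only point needing care is the correctness of the first layer — that outside $\bigcup I_{i,j,e}$ the stabbing value $r$ coincides exactly with $\nexFunc$ — which rests on the dominator classification of \Cref{sec:classification} together with the maximality defining $\nexFunc$; that the common refinement stays linear is immediate since both layers contribute only $O(n)$ breakpoints.
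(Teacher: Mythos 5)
Your proposal is correct and follows essentially the same route as the paper: it combines the $O(n\log n)$ computation of $\reducedDom'$ and $\baddomsigma$ with \Cref{lem:interstingCellInducer}, \Cref{lem:computeCijexy}, \Cref{lem:computeFunctions}, and \Cref{thm:theUgly}, exactly the ingredients the paper cites, splitting the domain into constant pieces realized by reduced good/bad dominators and M\"obius pieces realized by ugly dominators. The only difference is that you spell out explicitly (via the upper envelope $r$ of the dominator intervals and the merge step) the constant-piece layer that the paper leaves implicit, which is a harmless elaboration rather than a different argument.
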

\begin{proof}
    This is an immediate consequence of the computation of all $O(n)$ guards in $\reducedDom'$, $\baddomsigma$ in $O(n\log n)$ time together with \Cref{lem:interstingCellInducer}, \Cref{lem:computeCijexy}, 
    \Cref{lem:computeFunctions}, and \Cref{thm:theUgly}. 
\end{proof}

\section{\boldmath An $O(n\log n)$-time algorithm}\label{sec:nlogn}

Given the presented results, we can finally present our main result. After $O(n\log n)$ time we can either output a solution of size at most $3$, or have a representation of $\nexFunc$ as a piecewise M\"obius transform, consisting of $O(n)$ pieces. Note that given this data, we can get a $(+1)$-approximation $k'$ of $k$ in $O(n)$ time by repeatedly applying $\nex{\cdot}$ to an arbitrary $x\in\partial P$ \cite[Lemma 7]{RobsonSpaldingZheng2024_AnalyticArcCover} via a linear sweep of the representation of $\nexFunc$. We finally use the techniques of \cite[Section 4]{aggarwal1989finding}, which allows us to check in $O(n\log k)$ if there is any point $x\in\partial P$ such that $\nexFunc^{k'-1}(x)\geq x+n$.

\begin{restatable}{theorem}{upperboundThm}
    The \contArt problem can be solved in $O(n\log n)$ time.
\end{restatable}
\begin{proof}
We first check in $O(n\log n)$ time via \Cref{cor:testknlogn}, if there is a solution of size at most three. Otherwise, via \Cref{thm:computeFunction},
we compute the representation of $\nexFunc$ as a piecewise M\"obius transform, consisting of $O(n)$ pieces, in $O(n \log n)$ time. Via this representation, select an arbitrary $x \in \partial P$ and we compute the smallest value $k'$ such that $\nexFunc^{k'}(x) \geq x+n$ in $O(k' \log n)$ time by repeatedly applying $\nexFunc$ in the following manner:
set $k'  = 1$.  In $O(\log n)$ time, we find the piece of $\nexFunc$ such that $x$ lies in its domain. 
We apply the function in constant time to get a value $x'$.
If $x' > x+ n$ then we return $k$ and otherwise we increment $k'$ and continue this process with $x'$.  By \cite[Lemma 7]{RobsonSpaldingZheng2024_AnalyticArcCover}, $k\leq k'\leq k+1$ for the value $k'$ that we return. 

What remains is to decide whether $k = k'$ or $k = k' - 1$. 
To this end, we  present an $O(n\log k)$ algorithm to compute a point $\hat{x}\in \partial P$, such that $\nexFunc^ {k'-1}(\hat{x})$ overtakes $\hat{x}$, or, decide no such point exists.
Formally, we find an $\hat{x}$ such that $\nexFunc^ {k'-1}(\hat{x})\geq \hat{x}+n$ or conclude that no such $\hat{x}$ exists.
In the latter case, the sequence of contiguous guards that computed $k'$ is an optimal solution.
In the former case, we compute from $\hat{x}$ an optimal solution in $O(k \log n)$ time by again recursively applying $\nexFunc$.

Our algorithm is based on the approach of~\cite[Section 4]{aggarwal1989finding}. 
Let $(I_1,f_1), \ldots, (I_m,f_m)$, with $m \in O(n)$, be the intervals and the corresponding functions partitioning $[1,2n+1)$ given by the piecewise M\"obius transform of the $\nexFunc$ function.
Given the pairs $\{ (I_i, f_i) \}$ and the set of $O(n)$ starting points $X \subset [1, n + 1)$, we wish to evaluate these functions on $X$.

\begin{properties}
These functions have two crucial properties:
\begin{enumerate}
    \item\label{prop:one} For any two functions $f_i, f_j$, let $I_i^j \subseteq I_i$ denote the interval such that for $x \in I_i^j$ it holds that $f_i(x) \in I_j$. Then the compound $f_i \circ f_j$ on $I_i^j$ is another constant-complexity function that, on a Real RAM, can be evaluated in constant time. Indeed, if the guard corresponding to $f_i$ and $f_j$ are ugly dominators then the compound of two Möbius transforms is another Möbius transform. Conversely, if $f_j$ is realized by a good or bad dominator then then $f_i \circ f_j$ is a constant $v$, where $v$ is the right endpoint of the corresponding guard $(g, [u, v])$. 
    \item\label{prop:two} The points in $X$ stay in-order when recursively applying the $\nexFunc$ function. In particular, consider some $x \in X$ and suppose that, at some point, applying $\nexFunc$ projects $x$ onto $I_j$ in between points $y \in I_j \cap X$ and $z \in I_j \cap X$. Then as we recursively apply the $\nexFunc$ to $y$, $z$, and the projection of $x$ these points will stay in the order that has the projection of $y$ first, followed by the projection of $x$, followed by the projection of $z$. 
\end{enumerate}
\end{properties}

These two properties allow us to evaluate $\nexFunc$ recursively by using lazy evaluation along a balanced binary tree that stores $X$.
In particular, partition $X$ such that $X_i$ denotes the subset of $X$ that is contained in $I_i$.

\subparagraph{Our algorithm.}
We iterate over integers $i \in [1, m]$ (note that this means we go around the polygon twice).
Throughout this iteration, we maintain an ``evaluation tree'' $T_i$.
Each leaf in $T_i$ stores a (unique) point that lies in $X_j$ for some $j < i$, and together the leaves stores all points in $\bigcup_{j < i} X_j$.
For a point $x \in X_j$, $j < i$, we then define the \emph{evaluated point} $F_i(x)$, which is the first point obtained by applying $\nexFunc$ recursively from $x$ until it exceeds the left boundary of $I_i$. We maintain the following information for each inner node $v$ of $T_i$:
\begin{enumerate}[noitemsep]
    \item A single function called a \emph{symbol.}  A symbol is either the identify function, or the compound of functions $f_j$ for $j < i$. 
    \item Real values $L(v)$ and $R(v)$ that are obtained by applying the symbols in the leaf-to-$v$ path for the leftmost and rightmost leaf in the subtree rooted at $v$, respectively.
    \item An integer $k^*(v)$ that is equal to the number of functions that have been compounded to obtain the symbol at $v$.
    \item An integer $k_{\max}(v)$, that is equal to the maximum over all leaf-to-$v$ paths of the sum of $k^*(x)$ for $x$ on that path (these paths include $v$). 
\end{enumerate}

\noindent

\begin{invariants}
We maintain the following four invariants at each iteration $i$:
\begin{enumerate}[noitemsep]
    \item\label{inv1} 
    For each leaf $x$ of $T_i$, the compound of all symbols on its leaf-to-root path, applied to $x$, yields $F_i(x)$. 
    \item\label{inv2} The leaves of $T_i$ are \emph{almost} sorted by $F_i(x)$. That is, there is a most one pair of consecutive leaves $x^*,y^*$ with $x^*$ left of $y^*$ for which $F_i(x) > F_i(y)$, for all other consecutive pairs $x,y$ with $x$ left of $y$ it holds that  $F_i(x) \leq F_i(y)$. Furthermore, if there is such a decreasing pair, then for the leftmost leaf $a$ and the rightmost leaf $b$ it holds that $F_i(a)\geq F_i(b)$. 
    \item\label{inv3} 
    The values $L(v)$, $R(v)$, $k^*(v)$, and $k_{\max}(v)$ for all nodes $v \in T_i$ are correct according to their definition.
    \item\label{inv4} For the root $r$ of $T_i$ it holds that $k_{\max}(r) \leq k' - 2$.
\end{enumerate}
\end{invariants}

\begin{figure}
    \centering
    \includegraphics{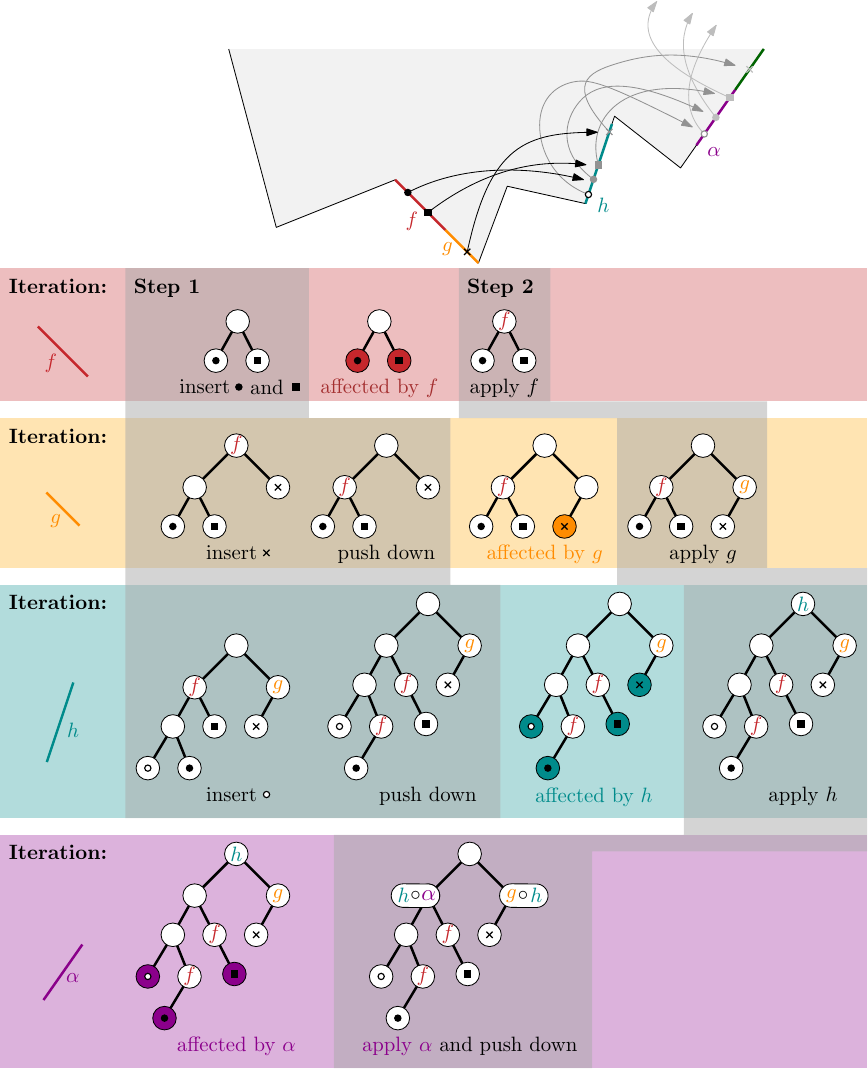}
    \caption{The figure illustrating the algorithm used to maintain our evaluation tree. Our algorithm considers the pieces of the piecewise function that is $\nexFunc$, iterating over domains corresponding to the functions $f$, $g$, $h$, and $\alpha$ in order. 
    We show the steps performed by our algorithm without applying any tree balancing. 
    }
    \label{fig:function_tree}
\end{figure}

\noindent
As our algorithm iterates over $i \in [1, m]$, we maintain these invariants in three steps:

\begin{itemize}
    \item The first step inserts all points of $X_i$ into $T$ and restores Invariants~\ref{inv1}-\ref{inv3}.
    \item The second step then applies the function $f_i$ to all leaves $x \in X$ for which $F_i(x)$ lies in $I_i$ to ensure that Invariants~\ref{inv1}-\ref{inv3} hold at the start of iteration $i+1$.
    \item 
The third step ensures that we remove any leaves $\ell$ for which the sum of $k^*(x)$ for $x$ on the path from $\ell$ to the root  equals $k' - 1$ from $T$ to maintain Invariant~\ref{inv4}. 
\end{itemize}

\noindent
Finally, we deal with tree rotations that maintain the balance of $T_i$.  

\subparagraph{Step 1: Inserting the new points.}
At the start of iteration $i$, we want to insert each point $x \in X_i$ into $T_i$ as a leaf, where $x$ is placed in-between the two consecutive leaves $a, b \in T_i$ where $F_i(a)$ precedes $x$ and $F_i(b)$ succeeds $x$. We can find the place to insert $x$ in $O(\log n)$ time using the values $L(v)$ and $R(v)$ stored in all nodes $v \in T_i$.
Starting at the root $r$ with children $u$ and $v$, we compare $x$ to $[L(u), R(u)]$ and $[L(v), R(v)]$.
Due to Property~\ref{prop:two}, exactly one of these intervals contains $x$ and we traverse into that subtree.  

Inserting any value $x \in X_i$ into $T_i$ does not break Invariant~\ref{inv2} as $F_i(x) = x$ for $x \in X_i$. 
To restore Invariant~\ref{inv1}, we must ensure that for the new leaf $x$ its root-to-leaf path contains no symbols. 
We use a ``push down'' procedure on the symbols we encounter during a root-to-leaf traversal to $x$: 
traverse the root to $x$, one node at a time, maintaining the compound $f$ of all symbols seen so far. 
As we arrive at a node $v \in T_i$, compound into $f$ the symbol at $v$. 
If $x$ lies in the left child of $v$, compound $f$ into the right child $u$ of $v$.
Update $L(u) \gets f(L(u))$ and $R(u) \gets f(R(u))$ and increment $k^*(u)$ and $k_{\max}(u)$ by adding to them the total number of functions that are compounded into $f$. 
Finally, erase the symbol at $v$ and set $k_{\max}(v) \gets k_{\max}(v) - k^*(v)$ and $k^*(v) \gets 0$. 
We then continue with the left child of $v$. If $x$ lies in the right child, we do the symmetric procedure. 

This procedure inserts a point $x \in X_i$ into $T$ in $O(\log n)$ time. Since each point is inserted exactly once, the total time of all insertions is~$O(n \log n)$.

\subparagraph{Step 2: applying the function $f_i$.}
After all points in $X_i$ are inserted into $T_i$, 
there is again a contiguous set $X'$ in $T_i$ that contains all leaves which store a point whose evaluated point $F_i(x)$ lies in $I_i$. Note that here we include the points in $X_i$ in the set $X'$, as they have already been inserted into $T_i$.
We want to (implicitly) apply $f_i$ to all these points. As these leaves form a contiguous interval, there are $O(\log n)$ subtrees in our tree that together represent $X'$. 
We want to compound the function $f_i$ to the symbols of the root nodes of these $O(\log n)$ subtrees.
We observe that the union $U$ of all paths from the root up to (but excluding) these $O(\log n)$ nodes has size $O(\log n)$.

It is crucial that the compounds represented by our symbols stay in-order. That is, in any leaf-to-root path that contains $f_i$, $f_i$ should appear as the last symbol (or, more precisely, as the last function that is compounded into the last symbol). 
To maintain this order, we again use a ``push down'' strategy across the nodes in $U$. That is, if we wish to add the symbol for $f_i$ to a node $v^*$ then we traverse the root to $v^*$, making sure that we never traverse a node in $U$ twice. 
If we encounter a node $v \neq v^*$ then we read the symbol at that node. We compound the symbol into its two children, update the $L,R,k^*,k_{\max}$ values accordingly, and erase the symbol at $v$. 
This maintains Invariants~\ref{inv1} and~\ref{inv3} for all nodes that are incident to a node in $U$, until we reach $v^*$. At $v^*$ there is now a single symbol that is the compound of the path from $v^*$ to the root. We compound $f_i$ with this symbol to satisfy Invariant~\ref{inv1} for $i+1$ and we increment $k^*(v^*)$ and $k_{\max}(v^*)$ by one. 

What remains is to restore Invariant~\ref{inv1} and~\ref{inv3} for all nodes in $U$, which no longer have a symbol.
Invariant~\ref{inv1} is correct by construction: for any leaf $\ell$ in a subtree rooted at $U$, consider the node $v^*$ on its root-to-leaf path that is incident to $u$. 
We compounded all symbols in the path from $v^*$ to the root into $v^*$ and thus invariant~\ref{inv1} is restored for $\ell$. 
We restore Invariant~\ref{inv3} by traversing $U$ in a bottom-up fashion.
Consider a node $v \in U$ with left child $a$ and right child $b$. 
As $v$ stores no symbol, it follows from Property~\ref{prop:two} that $L(v) = L(a)$ and $R(v)= R(b)$.  The value $k^*(v) = 0$ and $k_{\max}(v)$ is the maximum of $k_{\max}(a)$ and $k_{\max}(b)$.  Note that invariant~\ref{inv2} is not affected by this procedure. 
Since there are $O(\log n)$ nodes in $U$, we can charge the $O(\log n)$ time spent by this tree traversal to the interval $I_i$. Since each interval $I_i$ is charged at most once, this requires $O(n \log n)$ total time.

\subparagraph{Step 3: maintaining Invariant~\ref{inv4}.} After adding the function application of $f_i$ to $T_i$, it may be that invariant~\ref{inv4} is invalidated, i.e. for the root $r$ of $T_i$ it may be that $k_{\max}(r) > k' -2$. Observe that is must be that $k_{\max}(r) = k' -1$ in this case, as this value can only increase by one after applying $f_i$. We want to repeatedly identify a leaf $x$ in $T$ for which the sum of $k^*(x)$ for nodes $x$ on its root-to-leaf-path equals $k'-1$. We then remove this leaf from $T$. To identify such a leaf, we can follow a root-to-leaf path traversing from a node $v$ to a node $u$ if $k_{\max}(v) = k_{\max}(u) + k^*(v)$. 
If this brings us to a leaf $\ell$ storing a value $x$, we compute $F_{i+1}(x)$ in $O(\log n)$ time by applying all symbols on the leaf-to-root path to the value $x$, in-order. If $F_{i+1}(x) \geq x + n$, then we have found a solution of size $k'-1$. We output this solution and terminate. If $F_{i+1}(x) < x + n$, we remove $x$ from $T$ and any inner nodes that no longer have a leaf as descendant as a result. Finally, we recompute $k_{\max}(v)$ , $R(v)$ and $L(v)$ on this leaf-to-root path in bottom-up fashion. We repeatedly find and remove such a leaf until $k_{\max}(r)$ for the root $r$  of $T_i$ is at most $k'-2$, restoring invariant~\ref{inv4}.

\subparagraph{Tree rotations.} To rebalance the tree efficiently during the process, we need to support tree rotations in constant time. The invariants can be maintained in constant time for tree rotations in a straightforward manner. Invariant~\ref{inv2} is not affected by a rotation. To maintain invariant~\ref{inv1}, we push down the symbols at the two nodes we will perform the rotation on to their children, remove their symbols, and set their $k^*$ value to zero, before performing the rotation. To maintain invariant~\ref{inv3}, we simply recompute $L$, $R$, and $k_{\max}$ for the two affected nodes based on the invariant~\ref{inv3} values and the symbols stored in their respective children.

This procedure maintains our invariants, and thus finds whether there exists a value $\bar{x}$ for which $\nexFunc^{k' - 1}(\bar(x)) \geq \bar(x) + n$ in $O(n \log n)$ total time. Since the optimal solution has either size $k'$ or $k' - 1$, this implies the theorem.
\end{proof}

\section{Lower bound}\label{sec:lowerBound}

We complement our results in the \realRAM-model with a tight lower bound. We reduce from \setInter.  

\begin{problemstatement}[\setInter]
    Given two lists $A,B\subseteq[1,n^3]$ of integers, each of size $n$, is $A\cap B=\emptyset$?
\end{problemstatement}

\begin{theorem}\label{thm:setHardness}
In any comparison-based model of computation, the \setInter problem on inputs $A$ and $B$ with $|A| = n+1$ and $|B| = n$ takes $\Omega(n\log n)$ comparisons, even if $A$ and $B$ are integer sets and one of them is sorted.
\end{theorem}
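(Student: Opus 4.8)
The plan is to reduce from \textsc{Element Distinctness} (or use a direct adversary argument), but the cleanest route is to invoke the classical lower bound for searching in a sorted list combined with a decision-tree counting argument. First I would recall Yao's result and the folklore argument: in a comparison-based model, any algorithm that, given a sorted list $A$ of size $n$ and a list $B$ of size $n$, decides whether $A\cap B=\emptyset$ can be used to solve a problem with $n!$ distinguishable outcomes, or alternatively one reduces directly from sorting. The key observation is that even knowing $A$ is sorted, locating each element of $B$ relative to $A$ requires, in the worst case, $\Omega(\log n)$ comparisons \emph{on average} across the $n$ elements of $B$, and these searches cannot share information in a way that beats $n\log n$ total.

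Concretely, I would argue as follows. Consider inputs where $A=\{2,4,6,\dots,2n\}$ is fixed and sorted, and $B=\{b_1,\dots,b_n\}$ where each $b_i$ is chosen from the $n+1$ ``gaps'' determined by $A$ (i.e. each $b_i$ is an odd number in $[1,2n]$, or we allow $b_i\in\{1,3,\dots,2n+1\}$ so there are $n+1$ choices per element) — note these inputs all satisfy $A\cap B=\emptyset$. A correct algorithm must in particular, on every such input, certify disjointness, which (by a standard argument about comparison trees) forces the algorithm's execution to determine, for each $b_i$, which gap of $A$ it lies in; otherwise two inputs that differ only in moving some $b_i$ across an element $a_j\in A$ — one disjoint, one not — would be indistinguishable, contradicting correctness. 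Hence the comparison tree restricted to this family of inputs must have at least $(n+1)^n$ leaves (one reachable leaf per gap-assignment vector, since distinct assignments are distinguished), so its depth, i.e. the worst-case number of comparisons, is at least $\log_2\big((n+1)^n\big) = n\log_2(n+1) = \Omega(n\log n)$. This gives the claimed bound with $A$ sorted; since $[1,2n]\subset[1,n^3]$ for $n\ge 2$, the range constraint is satisfied, and padding handles small $n$.

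The main obstacle is making rigorous the step ``the algorithm must determine each $b_i$'s gap.'' The subtlety is that an algorithm certifying $A\cap B=\emptyset$ need not literally localize each $b_i$; it only needs a comparison pattern that \emph{implies} disjointness. I would handle this by the standard adversary/indistinguishability argument: fix two gap-assignment vectors that differ in a single coordinate $i$, and interpolate the value of $b_i$; if the algorithm's comparison outcomes are the same on both, then by continuity/consistency of comparisons it is also the same on the boundary case $b_i=a_j$ for the intervening $a_j$, where $A\cap B\ne\emptyset$, contradicting that the algorithm's output must differ. This shows any two distinct gap-assignment vectors reach distinct leaves, giving the $(n+1)^n$ lower bound on leaves. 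The only remaining care is that we are in a comparison-based model where the allowed comparisons are between input elements (and possibly constants), which is exactly the setting of \realRAM restricted to order comparisons; I would cite \cite{YaoLowerBound} for the precise model and the statement that such searching/disjointness tasks require $\Omega(n\log n)$ even with $A$ presorted, and present the counting argument above as the self-contained proof.
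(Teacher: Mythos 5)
Your overall strategy is essentially the paper's: fix a combinatorial family of disjoint integer instances over a sorted $A$ (you use gap-assignment vectors, the paper uses permutation-type instances with $A[\pi(i)-1]<B[i]<A[\pi(i)]$), argue via an adversary exchange that a correct comparison tree must send distinct members of the family to distinct leaves, and conclude a depth bound of $\log\bigl((n+1)^n\bigr)$ resp.\ $\log(n!)$, both $\Omega(n\log n)$. So this is not a different route, but there is a genuine gap in your counting step.

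Your indistinguishability argument is carried out only for two gap-assignment vectors that differ in a \emph{single} coordinate: there the sandwiching is valid, because every comparison involving $b_i$ is against a value that is identical in the two inputs, so identical outcomes at $v<a_j<v'$ force the same outcome at $b_i=a_j$, and the resulting intersecting instance is a legitimate integer input. From this you conclude that ``any two distinct gap-assignment vectors reach distinct leaves,'' but that inference does not follow: distinguishing all Hamming-adjacent pairs does not imply global injectivity (the Hamming graph on $\{0,\dots,n\}^n$ admits a proper coloring with only $n+1$ colors, namely the coordinate sum modulo $n+1$), and it is full injectivity that the $(n+1)^n$ leaf count requires. For two vectors $V\neq V'$ differing in several coordinates the exchange breaks down: a comparison of $b_i$ against some $b_l$ can have the same outcome in both inputs even though $V_l\neq V'_l$, and then setting $b_i=a_j$ while keeping the other coordinates of $V$ may change an outcome, so the hybrid need not follow the same path. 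The natural repair via convexity of the leaf region (interpolating along the segment from $V$ to $V'$ until coordinate $i$ hits $a_j$) yields a witness whose remaining coordinates are generally not integers, so correctness on integer inputs alone gives no contradiction without an additional realization argument. The paper avoids this by arguing per leaf rather than per pair: every correct execution must gather comparisons that, with transitivity, certify $A[m]\neq B[i]$ for \emph{all} pairs, and for the chosen family this forces the certificate to pin each $B[i]$ strictly between two consecutive elements of $A$; hence each leaf determines the entire assignment, and injectivity for all pairs follows at once. To complete your proof, either adopt this certificate-level argument, or show that the order relations collected on the path, together with $b_i=a_j$, can be realized by an integer instance in the allowed range.
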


\begin{proof}
Although this follows almost immediately from the entropy argument in~\cite{YaoLowerBound}, we present an argument for completeness.
We prove the lower bound for the special case where $A$ is sorted and all elements of $A \cup B$ are distinct.

Let $A[0] < A[1] < \cdots < A[n]$. 
We claim that any algorithm that correctly decides that $A \cap B = \emptyset$ must certify, for every element $B[i]$, in which gap of $A$ it lies. 
Indeed, suppose that after the algorithm terminates on some input, the comparison outcomes do not imply that
$
    A[t-1] < B[i] < A[t]$ for any fixed $t \in \{1,\ldots,n\}$. 
Then there exists a comparison-consistent input in which $B[i]$ is equal to one of the neighboring elements of $A$, while all comparison outcomes remain unchanged. 
The algorithm would then behave identically on an input with empty intersection and on an input with non-empty intersection, contradicting correctness.

Now consider, for every permutation $\pi$ of $\{1,\ldots,n\}$, the integer input $I_\pi$ defined by
$
    A[t] = 2t \quad\text{for } t=0,\ldots,n$ and $ B[i] = 2\pi(i)-1$ for $i=1,\ldots,n.
$
Thus
$
    A[\pi(i)-1] < B[i] < A[\pi(i)] $ for every $i\in\{1,\ldots,n\}$, and $A\cap B=\emptyset$.

For two distinct permutations $\pi_1$ and $\pi_2$, there exists an index $i$ such that $\pi_1(i)\neq \pi_2(i)$. 
Hence $B[i]$ lies in a different gap of $A$ in $I_{\pi_1}$ and in $I_{\pi_2}$. 
Since any correct certificate must determine this gap, the two inputs cannot yield the same certificate. 
Therefore there are at least $n!$ distinct certificates.
A comparison-based algorithm using $h$ comparisons can distinguish at most $2^h$ certificates. 
Hence $
    2^h \geq n!$ 
and so $h \geq \log_2(n!) = \Omega(n\log n)$.
\end{proof}

We now give a construction reducing \setInter where $A$ and $B$ are integer sets and $A$ is sorted to computing the minimum number of guards for a given polygon. For this we construct three gadgets, that we place on the boundary of a big polygon. The big polygon is a right isosceles triangle, whose base has length $2n^3$ (we refer to \Cref{fig:lowerbound}).

\begin{figure}[t]

    \centering
    \includegraphics{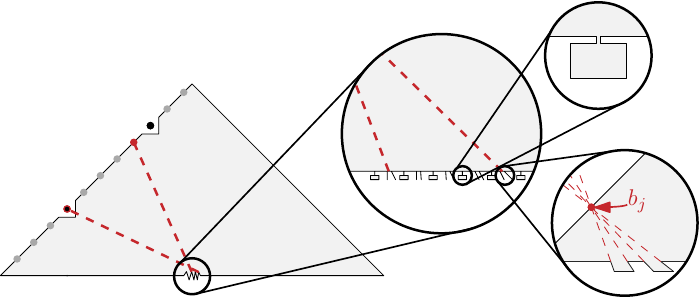}
    \caption{The lower bound construction. Black points are in the set $A$, red points in the set $B$. Shown is an instance where $A\cap B\neq\emptyset$.}
    \label{fig:lowerbound}
\end{figure}

The first gadget, called the \emph{blocker gadget} (Figure~\ref{fig:blocker}), is used to encode the sorted list $A$. We remove for each $a \in A$ a unit square centered at the boundary of the triangle at 
$(a,a)$.
Note that we can insert these gadgets along the triangle boundary in clockwise order, as $A$ is sorted. 
Due to this gadget, there there can exist no guard $(g, [u, v])$ 
where $g = (a, a)$. 

The second gadget, called the \emph{laser gadget} (Figure~\ref{fig:laser}), is used to encode the unsorted list~$B$. At the midpoint of the base, the interval from $(n^3, 0)$ to $(n^3 + n, 0)$, we will place evenly spread laser gadgets. For each $b \in B$, the laser gadget consists of two thin 
extensions to the polygon. 
Specifically, we iterate over $j \in [n]$ in increasing order, where each index $j$ has the sub interval $I_j$ of the base of the triangle from $(n^3 + 0.5 j, 0)$ to $(n^3 + 0.5j + 0.25, 0)$.  
Given $I_j$, take the lines from $(b_j, b_j)$ through four points within $I_j$,
and intersect those with some well-chosen line below the x-axis, e.g. $y=-1$, to obtain the desired vertices.
Now, to guard the base of the triangle corresponding to $I_j$, a guard must be placed exactly on the point $(b_j,b_j)$. Crucially, the placement of this gadget depends on the index $j$ and not the value $b_j$. Thus, we can specify all edges along the base of the triangle in clockwise order by iterating over $j$ from low to high. 

\begin{figure}[b]
    \centering
\begin{minipage}{0.31\textwidth}
  \centering
            \centering
        \includegraphics[page=1]{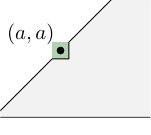}
        \caption{Blocker gadget.}
        \label{fig:blocker}
    \end{minipage}
    \hfill
\begin{minipage}{0.31\textwidth}
        \centering
        \includegraphics[page=2]{pictures/gadgets.pdf}
        \caption{Laser gadget.}
        \label{fig:laser}
\end{minipage}
    \hfill
\begin{minipage}{0.31\textwidth}
  \centering
        \centering
        \includegraphics[page=3]{pictures/gadgets.pdf}
        \caption{Disjoiner gadget.}
        \label{fig:disjoiner}
    \end{minipage}
\end{figure}

The last gadget, the \emph{disjoiner gadget} (Figure~\ref{fig:disjoiner}), is used to ensure that no guard can guard more than one laser gadget. 
Between each interval $I_j$ and $I_{j+1}$ on the base of the triangle, we create a disjoiner gadget by placing a small rectangle underneath the base of the polygon, connecting it with a small channel. We also place a disjoiner gadget before $I_1$ and after $I_n$. 

\mysubpara{Output.}
The output is a simple polygon $P$, specified by its edges in clockwise order. The placement of the gadgets for $A$ depends on the values in $A$, which is allowed since $A$ is sorted. In contrast, the placement of the laser gadgets on the intervals $I_j$ and of the disjoiners between consecutive intervals depends only on the index $j$, not on the values in $B$. Thus, the clockwise order of these gadgets can be produced without sorting $B$.

\mysubpara{Robustness.}
While the above construction is not robust; i.e. to guard the polygonal extension of the interval $I_j$ a guard needs to be at exactly $(b_j,b_j)$, note that it can be made robust by widening the extensions slightly, thereby allowing the guard to be in a small ball around $(b_j,b_j)$, as long as the ball is small enough that it is entirely contained in the blocker gadget.  

\begin{lemma}\label{lem:disjoinerWork}
    No contiguous guard can guard two points $u$ and $v$ between which a disjoiner gadget lies. I.e., no contiguous guard can guard a chain $[u, v]$ where the open chain $(u,v)$ contains a disjoiner gadget. Also, no contiguous guard can guard two disjoiner gadgets at once.
\end{lemma}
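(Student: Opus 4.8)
The plan is to use the structural property of the disjoiner gadget already recorded in the construction: any guard that sees more than one edge of a disjoiner $D$ must lie strictly below the base line of the triangle, and hence inside the ``pocket'' of $D$ --- the region $\mathrm{Pkt}(D)$ formed by its channel together with its rectangle. (This holds because $D$'s rectangle is wide while the channel joining it to the base is a slit of tiny width $\delta$, so the part of $D$ visible from any point on or above the base lies in one thin cone through the slit, which by the chosen proportions meets at most one edge of $D$, while the channel walls are essentially vertical and hence invisible from on or above the base.) The only additional feature I use is that $\mathrm{Pkt}(D)$ is joined to the rest of $P$ solely through the slit of $D$, a sub-segment of the base of length $\delta$, so every point of $P$ that lies on or above the base and is seen by some $g\in\mathrm{Pkt}(D)$ must lie in the open cone $W_g$ with apex $g$ spanned by the two endpoints of that slit.

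For the first statement, suppose for contradiction that $(g,[u,v])$ is a contiguous guard and some disjoiner $D$ lies in the open chain $(u,v)$. Then $[u,v]$ contains every edge of $D$, so $g$ sees at least two edges of $D$ and therefore $g\in\mathrm{Pkt}(D)$. Since $D$ is contained in $(u,v)$, the point $u$ lies outside $D$; in particular it is not on $D$'s slit, and (the cases where $u$ lies below the base inside another gadget are dispatched by the same separation argument as in the second statement) it lies on or above the base. Hence $g$ sees $u$ only if $u\in W_g$. At the level of the base, $W_g$ is exactly $D$'s slit, so no point of the base off the slit lies in $W_g$. For $u$ above the base I invoke the parameter choices: $g$ lies at depth at least the channel depth below the base, while $u$ lies at horizontal distance at most $1$ from $D$ (all gadgets are packed into a length-$1$ sub-interval of a base of length $n^3$), and a short computation shows that for the chosen slit width and channel depth the cone $W_g$, though it widens above the base, still misses every point of $P$ outside $D$. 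Thus $g$ does not see $u$, contradicting that $(g,[u,v])$ is a guard.

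For the second statement, suppose a contiguous guard $g$ guards two distinct disjoiners $D_1$ and $D_2$. Then $g$ sees every edge of each of them, so $g\in\mathrm{Pkt}(D_1)\cap\mathrm{Pkt}(D_2)$. But distinct disjoiners are placed at pairwise separated positions along the base (with a laser gadget, or at least an untouched stretch of base, between any two), so $\mathrm{Pkt}(D_1)$ and $\mathrm{Pkt}(D_2)$ are disjoint, a contradiction.

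The main obstacle is the middle step of the first statement: ruling out that $g\in\mathrm{Pkt}(D)$ sees a point $u$ lying strictly above the base. On the base itself this is immediate, since there $W_g$ coincides with the slit; but a point of a laser gadget can have height $\Theta(n^3)$, so $W_g$ widens considerably as one moves upward, and one must use the concrete relationship between the slit width $\delta$, the channel depth, and the size of the length-$1$ window the gadgets occupy to conclude that $W_g$ still avoids everything outside $D$. Making these inequalities line up --- and checking that the gadgets remain constructible in linear time with such tiny parameters --- is the only delicate point; the rest is direct.
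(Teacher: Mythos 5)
There is a genuine gap in your treatment of the first statement, and it is not merely the unfinished ``short computation'' you flag --- the claim you would need is false. You try to derive the contradiction by showing that $g\in\mathrm{Pkt}(D)$ cannot see the \emph{endpoint} $u$, and for this you assert that the visibility cone $W_g$ through the slit ``misses every point of $P$ outside $D$.'' No choice of slit width or channel depth can make that true: every ray from $g$ out through the slit continues into the main body of the polygon and eventually hits $\partial P$ at a point outside the gadget (for instance on the top edges $t_1,t_2$ high above the base, or on a laser-gadget rectangle). So a guard in the pocket genuinely does see boundary points outside $D$, and nothing prevents $u$ (or $v$) from being one of them. Consequently the strategy ``$g$ does not see $u$'' cannot be pushed through, no matter how the parameters are tuned.

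The fix --- and the route the paper takes --- is to get the contradiction from \emph{interior} points of the chain rather than from its endpoints. Since the disjoiner lies in the open chain $(u,v)$, the closed chain $[u,v]$ necessarily contains the points of the base at every sufficiently small distance $\varepsilon>0$ to the left and right of the channel entrance. A guard that sees the whole gadget (in particular the upper corners of its rectangle) must lie strictly below the base, and from below the base it cannot see any point \emph{on} the base line outside the slit, in particular not these $\varepsilon$-close base points; this already contradicts that $g$ sees all of $[u,v]$, with no quantitative relation between slit width, channel depth and the length-$1$ window needed. Your argument for the second statement (pocket membership for each gadget, pockets of distinct gadgets disjoint) is fine and matches the paper's, but the first statement as you argue it does not go through.
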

\begin{proof}
    Any contiguous guard that guards from $u$ to $v$ has to guard the entirety of the disjoiner gadget. In particular, it must guard the upper right and left corner of the rectangle. Hence, the guard must be placed below the $x$-axis. Then, this guard cannot guard the points on the base of the triangle that are a small distance $\varepsilon >0$ right and left of entrance to the disjoiner gadget. Thus this guard cannot see the entire chain $[u,v]$.

    As a contiguous guard that sees an interior disjoiner gadget has to be placed inside the rectangle defining the gadget, and no guard can be placed in the rectangle of two different disjoiner gadget, no two disjoiner gadgets can be guard by the same contiguous guard.
\end{proof}

\begin{lemma}\label{lem:construction}
    The constructed polygon can be guarded by $2n+2$ contiguous guards if and only if $A\cap B = \emptyset$.
\end{lemma}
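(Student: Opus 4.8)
The plan is to prove both directions of the equivalence, relying heavily on the gadget properties already established, in particular \Cref{lem:disjoinerWork} for the disjoiner gadgets and the stated intersection property of the laser gadgets.

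\textbf{The ``only if'' direction (few guards $\Rightarrow$ disjointness).} Suppose $P$ can be guarded with $2n+2$ contiguous guards. There are $n+1$ disjoiner gadgets (one between each pair of consecutive laser gadgets, plus one before the first and one after the last). By \Cref{lem:disjoinerWork}, no contiguous guard can cover two disjoiner gadgets, and any guard covering (part of the boundary containing) a disjoiner gadget must sit inside that gadget's rectangle. Hence the $n+1$ disjoiner gadgets partition $\partial P$ into $n+1$ contiguous arcs, each arc containing exactly one laser gadget; covering the full boundary therefore requires at least one ``dedicated'' guard per disjoiner and, separately, enough guards to cover each of the $n$ laser-gadget arcs. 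A counting argument (each disjoiner needs its own guard, each of the $n$ laser arcs needs at least one further guard, and the two ``extremal'' arcs near the apex of the triangle need coverage) yields that exactly $2n+2$ guards force the following tight situation: for every $b_j\in B$, the laser gadget for $b_j$ is guarded by a \emph{single} guard. By the laser-gadget property, such a guard must lie on the boundary of the big triangle at the point $(b_j,\min(b_j,n^3-b_j))$. But the blocker gadget for any $a\in A$ removes a unit square centered at $(a,\min(a,n^3-a))$, so no guard can be placed there. Thus $b_j\neq a$ for all $a\in A$, i.e.\ $A\cap B=\emptyset$.

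\textbf{The ``if'' direction (disjointness $\Rightarrow$ few guards).} Suppose $A\cap B=\emptyset$. We exhibit $2n+2$ guards. For each of the $n+1$ disjoiner gadgets, place one guard inside its rectangle, positioned so that it additionally sees a small neighbourhood of the channel entrance on the base; together with the next item this will let consecutive pieces overlap. For each $b_j\in B$, since $b_j\notin A$, the point $p_j=(b_j,\min(b_j,n^3-b_j))$ on the triangle boundary is not removed by any blocker gadget, so we may place a guard at $p_j$; by construction this guard sees both thin rectangular extensions of the laser gadget for $b_j$ as well as the stretch of the triangle boundary between $p_j$'s neighbouring blockers. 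This accounts for $n$ guards for the laser gadgets and $n+1$ for the disjoiners, a total of $2n+1$; one more guard placed near the apex of the triangle covers the remaining top portion of the boundary not yet covered, giving $2n+2$ in all. A short check that the union of the visibility chains of these guards is all of $[1,n+1]$ — consecutive chains overlap because each disjoiner guard and each laser guard was chosen to see slightly past the channel entrances and past the adjacent blockers — completes this direction.

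\textbf{Main obstacle.} The delicate part is the exact guard count and the overlap bookkeeping: one must verify that $2n+2$ is simultaneously necessary (no slack that would let a single guard double up on a laser gadget and still have a laser gadget guarded by a boundary point) and sufficient (the chains genuinely tile $\partial P$ with the extremal/apex guards, and the blockers do not accidentally force an extra guard even when $A\cap B=\emptyset$). I would handle necessity by a clean charging argument against the $n+1$ disjoiners and the $n$ laser arcs, and sufficiency by an explicit left-to-right (clockwise) walk along $\partial P$ naming the covering chain of each of the $2n+2$ guards and checking consecutive overlaps. The geometric facts about a single laser gadget requiring a boundary guard at $p_j$, and about blockers obstructing exactly the point $(a,\min(a,n^3-a))$, are already asserted in the construction, so they may be invoked directly.
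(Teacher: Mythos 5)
Your ``only if'' direction is essentially the paper's argument: use \Cref{lem:disjoinerWork} to charge $n+1$ guards to the disjoiner gadgets and $n+1$ further guards to the $n+1$ boundary regions they separate, so that at $2n+2$ guards each laser gadget is covered by a single guard, which must stand at $(b,\min(b,n^3-b))$, forcing $b\notin A$. (One imprecision: only $n$ of the $n+1$ arcs contain a laser gadget; the remaining arc is the long one running around the apex and the blockers, not ``two extremal arcs''. This does not affect the count.)

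The ``if'' direction, however, has a genuine gap: your last guard, placed \emph{near the apex}, cannot cover the remaining boundary. The blocker gadgets carve unit-square notches into the two top edges $t_1,t_2$, and the walls of these notches are part of $\partial P$ and must be guarded. A point at or near the apex lies (essentially) on the supporting lines of both $t_1$ and $t_2$; a viewer whose perpendicular distance to an edge is smaller than the notch depth cannot see the inner (deep) wall of a far-away notch on that edge, because the line of sight enters the removed square before reaching the wall. For the same reason such notches also cast small shadows on the base as seen from the apex, so even parts of the base outside the middle interval would be missed. Your other guards do not rescue this: the laser guards sit \emph{on} the edges (so they cannot see into notches on their own edge; if all of $A\cup B$ lies below $n^3/2$, every notch and every laser guard is on $t_1$), and the disjoiner guards sit below the base and, by the proof of \Cref{lem:disjoinerWork}, cannot even see an $\varepsilon$-neighbourhood of their channel entrance on the base---which also undercuts your overlap bookkeeping that relies on them seeing ``slightly past'' the entrances. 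The paper avoids all of this by placing the final guard at the base midpoint $(n^3/2,0)$, which is far from both supporting lines, so it sees the entire arc from the last disjoiner counter-clockwise to the first, including every notch interior; the abutment with the neighbouring pieces is then provided by the laser guards, which the paper explicitly notes see the base edges up to the adjacent disjoiner entrances. To repair your proof, replace the apex guard by this base-midpoint guard and check visibility of the notch walls from there.
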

\begin{proof}
    Suppose $A\cap B=\emptyset$. Place one guard in all $n+1$ disjoiner gadgets, one guard at every $(b,b)$ for $b \in B$, and finally one guard at $(n^3,0)$. The first $n+1$ guards guard the disjoiner gadgets. The fact that $A\cap B=\emptyset$ implies that each point $(b,b)$ is contained in the polygon. Furthermore, the placement of the laser gadgets in the middle of the base implies that the visibility from $(b,b)$ to the corresponding laser gadget (including the edge of the base to the preceding disjoiner gadget and the edge to the following disjoiner gadget) is visible to $(b, b)$. So, the second $n$ guards guard each of the laser gadgets and the adjacent edges. Finally, the guard at $(n^3,0)$ guards the entire rest of the polygon boundary, that is, from the last disjoiner gadget up to the first disjoiner gadget in counter-clockwise direction.

    Now suppose, a solution of size $2n+2$ exists. By \Cref{lem:disjoinerWork}, at least $n+1$ of those must guard the $n+1$ areas of the polygon that are separated by disjoiner gadgets. That leaves $n+1$ unguarded disjoiner gadgets, which each require their own guard. Hence each area that is separated by a disjoiner gadget has to be guarded by a single guard. In particular, every laser gadget has to be guarded by a single guard. But this is only possible, if $(b,b)$ is in the polygon for every $b\in B$. This in turn is only possible, if $B\cap A=\emptyset$, concluding the proof.    
\end{proof}

\begin{theorem}
    Given a polygon $P$, and a integer $k$, deciding whether $\partial P$ can be guarded by $k$ guards, requires $\Omega(n\log n)$ time in the worst case.
\end{theorem}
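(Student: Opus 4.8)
The plan is to combine the construction of \Cref{lem:construction} with the comparison lower bound of \Cref{thm:setHardness} in a single reduction. Given an instance $(A,B)$ of \setInter with $|A|=|B|=n$ and $A$ already sorted, I would first build the polygon $P$ consisting of the right isosceles triangle together with the blocker, laser, and disjoiner gadgets exactly as described above. As argued in the paragraph preceding \Cref{lem:disjoinerWork}, this construction runs in $O(n)$ time: the blocker gadgets are inserted while sweeping along $A$ in its given sorted order, the $n+1$ disjoiner gadgets are input-independent, and each laser gadget for $b\in B$ is obtained from a constant number of ray–rectangle intersections, all while describing $\partial P$ in a single clockwise pass. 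The resulting polygon has $\Theta(n)$ vertices.

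Next I would fix $k:=2n+2$ and appeal to \Cref{lem:construction}, which states that $\partial P$ admits a set of $k$ contiguous guards if and only if $A\cap B=\emptyset$. Thus any algorithm that, on input $(P,k)$, decides whether $\partial P$ can be guarded by $k$ guards also solves \setInter on $(A,B)$ after the linear-time construction. If this decision algorithm ran in $o(n\log n)$ time in a comparison-based model such as the \realRAM, we would obtain an $o(n\log n)$-comparison algorithm for \setInter with one sorted input, contradicting \Cref{thm:setHardness}. Since $|P|=\Theta(n)$, this yields an $\Omega(n\log n)$ lower bound in the number of polygon vertices, which together with the $O(n\log n)$ upper bound establishes the $\Theta(n\log n)$ characterization of \contArt.

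The only substantive content beyond this bookkeeping is the correctness of the gadgets, i.e.\ the geometric facts already packaged into \Cref{lem:disjoinerWork} and \Cref{lem:construction}: that a guard covering both rectangular extensions of the laser gadget for $b$ must sit essentially at the triangle-boundary point associated with $b$; that the blocker for $a\in A$ excludes exactly that point when $a=b$; that with the laser gadgets compressed into a length-$1$ window at the base midpoint no blocker spuriously occludes a laser gadget from its intended guard when $b\notin A$; and that each disjoiner gadget both requires its own guard and prevents a single guard from spanning two consecutive regions. I expect this geometric verification to be the main obstacle, but it is precisely what \Cref{lem:disjoinerWork} and \Cref{lem:construction} supply, so the remaining argument is the routine reduction described above.

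\begin{proof}
    Given an instance $(A,B)$ of \setInter with $A$ sorted, construct $P$ as above in $O(n)$ time; $P$ has $\Theta(n)$ vertices. By \Cref{lem:construction}, $\partial P$ can be guarded by $2n+2$ contiguous guards if and only if $A\cap B=\emptyset$. Hence a $T(n)$-time algorithm deciding whether $\partial P$ can be guarded by $k$ guards gives an $O(n)+T(\Theta(n))$-time algorithm for \setInter with one sorted input. By \Cref{thm:setHardness} the latter requires $\Omega(n\log n)$ comparisons, so $T(n)\in\Omega(n\log n)$ in any comparison-based model of computation, in particular on the \realRAM.
\end{proof}
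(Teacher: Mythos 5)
Your proposal is correct and follows essentially the same route as the paper: it combines the linear-time gadget construction and \Cref{lem:construction} with the comparison lower bound of \Cref{thm:setHardness}, exactly as the paper's own (very short) proof does. The only detail the paper mentions that you omit is that the vertex coordinates have size polynomial in the input, which is a minor point needed to ensure the reduction is legitimate in the stated model.
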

\begin{proof}
    This is an immediate consequence of \Cref{thm:setHardness} and \Cref{lem:construction}, together with the fact that the construction from \Cref{lem:construction} takes $O(n)$ time, and the size of the coordinates of the vertices is polynomial in the input size.
\end{proof}

\begin{remark}
    The hardness does not depend on the fact that the construction is in general position. In fact, the instance remains hard, even after perturbing every coordinate slightly.
\end{remark}

\section{Conclusion}

We presented matching upper and lower bounds for the $\contArt$ problem.
A natural direction for further research is to investigate whether our algorithmic ideas extend to polygons with holes. For polygons with holes, there are two natural variants: either one must guard the boundaries of the holes in addition to the outer boundary, or only the outer boundary is required to be guarded. In both cases, at least the outer boundary must be covered. For our approach, however, several crucial components break down when holes are allowed. Not only do key data structures, such as the shortest-path structure, no longer behave as required, but certain structural properties also fail to generalise. In particular, \Cref{thm:good_guards} shows that for simple polygons any solution can be modified so that it contains a good guard. Given the $O(n)$ reduced good dominators, implementing $\nexFunc$ yields the $O(kn \log n)$ algorithm. In polygons with holes, by contrast, there exist configurations in which no such modification is possible (see \Cref{fig:hardWithHoles}). The essential insight behind the proof of \Cref{thm:good_guards} is that every bad guard is bounded by reflex vertices of the outer boundary, which themselves must be guarded. If instead the bounding vertices occur on holes, we can no longer force the presence of a good guard. It is therefore likely that new structural insights are required to obtain near-linear-time algorithms for polygons with holes, if near-linear time is achievable at all.

Another direction concerns the model of computation. Throughout this work we have ultilised the $\realRAM$ model, which is standard in computational geometry. One may ask how our results behave under different models of computation. For the classical $\artGall$ problem, Abrahamsen, Adamaszek, and Miltzow~\cite{abrahamsen_et_al:LIPIcs.SoCG.2017.3} showed that even when polygon vertices are integers, an optimal solution may require guard positions with irrational coordinates and the \realRAM is therefore arguably a necessary model of computation. The $\contArt$ problem does not exhibit this phenomenon. In fact, the existing solutions~\cite{theEnemy, RobsonSpaldingZheng2024_AnalyticArcCover} imply a mild upper bound on bit complexity: if $P$ is described using $\log n$ bits per coordinate, then any optimal solution requires at most $\tilde{O}(kn)$ bits. A formal analysis is given in \cite{RobsonSpaldingZheng2024_AnalyticArcCover}.

Switching computational models affects both the lower and upper bounds. In any comparison-based model, the $\Omega(n \log n)$ lower bound remains valid, but in models such as the $\wordRAM$, this lower bound no longer applies. On the upper-bound side, in models where the cost of function evaluation depends on input bit length, our algorithm incurs additional overhead. Our $O(n \log n)$ algorithm computes the $k$-fold composition of Möbius transforms. After $O(k)$ preprocessing, the resulting composition can be evaluated in $O(1)$ time in the $\realRAM$ model. In the $\wordRAM$ model, however, the bit complexity of the composed transform is linear in $k$, since the parameters become products of the coordinates of $k$ input points. Thus, if the input is representable with $n$ bits, the running time increases to roughly $O(kn \polylog n)$. This raises the question of whether matching upper and lower bounds for $\contArt$ can also be obtained in the $\wordRAM$.
We note that in simple polygons, the bit-complexity required to represent a solution is frequently a problem. 
For example, Kahan and Snoeyink~\cite{Kahan1996BitComplexity} show that there exists a polygon $P$ with points $u, v \in P$ such that the minimum-link path from $u$ to $v$ requires $\Theta(n^2 \log n)$ bits to represent even then $P$, $u$ and $v$ have small rational coordinates.
The construction is to make $P$ a spiralling path, where the complexity of the minimum-link path `compounds' to require $\Theta(k)$ bits after $k$ edges,

While we do not provide a formal argument, it appears unlikely that any algorithm in the $\wordRAM$ model can achieve running time $O(k^{2-\varepsilon})$. The $k$-fold composition of $\nex{\cdot}$ may contain intervals on which $\nexFunc^k(u)$ is expressed by a single Möbius transform whose bit complexity is $\Theta(k)$ (see \Cref{fig:kSquared}). This occurs whenever a long subsequence of the form $[u,\nex{u}], [\nexFunc(u),\nexFunc^2(u)], \ldots, [\nexFunc^{k-1}(u),\nexFunc^{k}(u)]$ of length $\Theta(k)$ is realized by ugly dominators. Consequently, a solution may require $\Theta(k^2)$ bits. We believe that one can construct a polygon $P$ that resembles the spiraling path as in~\cite[Figure~1]{Kahan1996BitComplexity} where the optimal solution can only be represented by using $\Omega(k^2)$ bits.
The key idea is to have a spiraling path with a gadget at the start and the end, that forces the optimal solution to have a sequence of ugly dominators that coincide with the minimum-link path from $u$ to $v$ (see \Cref{fig:kSquared}). 
In particular, we believe that one can combine ideas from \Cref{fig:4sol,fig:5sol} with \Cref{fig:kSquared} to construct instances where every optimal solution has bit complexity $\Omega(k^2)$. We do not provide the formal argument and include only this sketch to support the claim that we find it likely that in the $\wordRAM$ model, the $\contArt$ problem inherently requires $\Omega(kn)$ time.

\begin{figure}[H]

\begin{minipage}{0.48\textwidth}
    \centering
    \includegraphics{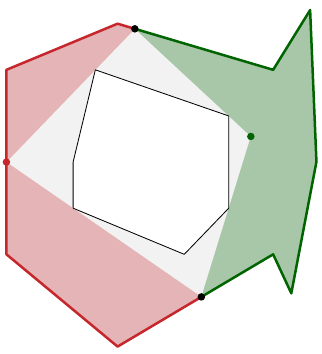}
    \caption{Illustration of a set of guards guarding the outer boundary of a polygon $P$ with a hole. The solution cannot readily be transformed via \Cref{thm:good_guards} to have at least one good guard.}
    \label{fig:hardWithHoles}
    \end{minipage}
    \hfill
\begin{minipage}{0.48\textwidth}
    \centering
    \includegraphics{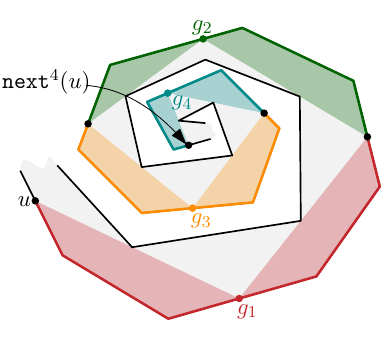}
    \caption{Illustration of compounding bit complexity in the \wordRAM-model. The bit complexity of $\nexFunc^ k(u)$ may be linear in $k$. The bit complexity of a minimal set of guards may be quadratic in $k$.}
    \label{fig:kSquared}
\end{minipage}
\end{figure}

\bibliography{refs}

@inproceedings{Agarwal2005Stabbing,
author = {Agarwal, Pankaj K. and Arge, Lars and Yi, Ke},
title = {An optimal dynamic interval stabbing-max data structure?},
year = {2005},
isbn = {0898715857},
publisher = {Society for Industrial and Applied Mathematics},
address = {USA},
booktitle = {ACM-SIAM Symposium on Discrete Algorithms (SODA)},
pages = {803–812},
url = {https://dl.acm.org/doi/10.5555/1070432.1070546},
numpages = {10},
location = {Vancouver, British Columbia},
series = {SODA '05}
}

@inproceedings{Guibas1987Optimal,
author = {Guibas, L. and Hershberger, J.},
title = {Optimal shortest path queries in a simple polygon},
year = {1987},
isbn = {0897912314},
publisher = {Association for Computing Machinery},
address = {New York, NY, USA},
doi = {10.1145/41958.41964},
booktitle = {Symposium on Computational Geometry (SoCG)},
pages = {50–63},
numpages = {14},
}

@article{chvatal1975combinatorial,
  title     = {A Combinatorial Theorem in Plane Geometry},
  author    = {Chvátal, Václav},
  journal   = {Journal of Combinatorial Theory, Series B},
  volume    = {18},
  number    = {1},
  pages     = {39--41},
  year      = {1975},
  doi       = {10.1016/0095-8956(75)90061-1},
  publisher = {Elsevier}
}

@inproceedings{Han2002Deterministic,
author = {Han, Yijie},
title = {Deterministic sorting in O(n log log n) time and linear space},
year = {2002},
isbn = {1581134959},
publisher = {Association for Computing Machinery},
address = {USA},
doi = {10.1145/509907.509993},
booktitle = {ACM Symposium on Theory of Computing (STOC)},
pages = {602–608},
location = {Montreal, Quebec, Canada},
series = {STOC '02}
}

@misc{RobsonSpaldingZheng2024_AnalyticArcCover,
author    = {Robson, Eliot W. and Spalding-Jamieson, Jack and Zheng, Da Wei},
  title     = {The Analytic Arc Cover Problem and its Applications to Contiguous Art Gallery, Polygon Separation, and Shape Carving},
year      = {2024},
archivePrefix = {arXiv},
eprint = {2412.15567}
}

@InProceedings{theEnemy,
  author =	{Biniaz, Ahmad and Maheshwari, Anil and Merrild, Magnus Christian Ring and Mitchell, Joseph S. B. and Odak, Saeed and Polishchuk, Valentin and Robson, Eliot W. and Rysgaard, Casper Moldrup and Schou, Jens Kristian Refsgaard and Shermer, Thomas and Spalding-Jamieson, Jack and Svenning, Rolf and Zheng, Da Wei},
  title =	{{Polynomial-Time Algorithms for Contiguous Art Gallery and Related Problems}},
  booktitle =	{Symposium on Computational Geometry (SoCG)},
  pages =	{20:1--20:21},
  series =	{Leibniz International Proceedings in Informatics (LIPIcs)},
  ISBN =	{978-3-95977-370-6},
  ISSN =	{1868-8969},
  year =	{2025},
  volume =	{332},
  publisher =	{Schloss Dagstuhl -- Leibniz-Zentrum f{\"u}r Informatik},
  address =	{Dagstuhl, Germany},
  URN =		{urn:nbn:de:0030-drops-231720},
  doi =		{10.4230/LIPIcs.SoCG.2025.20},
  annote =	{Keywords: Art Gallery Problem, Computational Geometry, Combinatorics, Discrete Algorithms}
}

@article{EidenbenzStammWidmayer2001,
  author    = {Stephan J. Eidenbenz and Christoph Stamm and Peter Widmayer},
  title     = {Inapproximability Results for Guarding Polygons and Terrains},
  journal   = {Algorithmica},
  volume    = {31},
  number    = {1},
  pages     = {79--113},
  year      = {2001},
  doi       = {10.1007/s00453-001-0040-8}
}

@article{SchuchardtHecker1995,
  author    = {Dietmar Schuchardt and Hans-Dietrich Hecker},
  title     = {{Two NP-Hard Art-Gallery Problems for Ortho-Polygons}},
  journal   = {Mathematical Logic Quarterly},
  volume    = {41},
  number    = {2},
  pages     = {261--267},
  year      = {1995},
  doi       = {10.1002/malq.19950410212},
  publisher = {Wiley-Blackwell}
}

@ARTICLE{Rourke1983hard,
  author={O'Rourke, J. and Supowit, K.},
  journal={IEEE Transactions on Information Theory}, 
  title={Some NP-hard polygon decomposition problems}, 
  year={1983},
  volume={29},
  number={2},
  pages={181-190},
  doi={10.1109/TIT.1983.1056648}}

@InProceedings{abrahamsen_et_al:LIPIcs.SoCG.2017.3,
  author    = {Abrahamsen, Mikkel and Adamaszek, Anna and Miltzow, Tillmann},
  title     = {{Irrational Guards are Sometimes Needed}},
  booktitle = {Symposium on Computational Geometry (SoCG)},
  pages     = {3:1--3:15},
  series    = {Leibniz International Proceedings in Informatics (LIPIcs)},
  ISBN      = {978-3-95977-038-5},
  ISSN      = {1868-8969},
  year      = {2017},
  volume    = {77},
  publisher = {Schloss Dagstuhl - Leibniz-Zentrum für Informatik},
  address   = {Dagstuhl, Germany},
  urn       = {urn:nbn:de:0030-drops-71946},
  doi       = {10.4230/LIPIcs.SoCG.2017.3}
}

@inproceedings{Hershberger1993Ray,
author = {Hershberger, John and Suri, Subhash},
title = {A pedestrian approach to ray shooting: shoot a ray, take a walk},
year = {1993},
isbn = {0898713137},
publisher = {Society for Industrial and Applied Mathematics},
address = {USA},
booktitle = {ACM-SIAM Symposium on Discrete Algorithms (SODA)},
pages = {54–63},
numpages = {10},
location = {Austin, Texas, USA},
series = {SODA '93},
url = {https://dl.acm.org/doi/10.5555/313559.313611}
}

@InProceedings{Eades2020Visibility,
  author =	{Eades, Patrick and van der Hoog, Ivor and L\"{o}ffler, Maarten and Staals, Frank},
  title =	{{Trajectory Visibility}},
  booktitle =	{Scandinavian Symposium and Workshops on Algorithm Theory (SWAT)},
  pages =	{23:1--23:22},
  series =	{Leibniz International Proceedings in Informatics (LIPIcs)},
  ISBN =	{978-3-95977-150-4},
  ISSN =	{1868-8969},
  year =	{2020},
  volume =	{162},
  publisher =	{Schloss Dagstuhl -- Leibniz-Zentrum f{\"u}r Informatik},
  address =	{Dagstuhl, Germany},
  doi =		{10.4230/LIPIcs.SWAT.2020.23}
}

@inproceedings{Biedl2017Mehrabi,
  author       = {Therese Biedl and
                  Saeed Mehrabi},
  title        = {Grid-Obstacle Representations with Connections to Staircase Guarding},
  booktitle    = {Graph Drawing and Network Visualization ({GD})},
  series       = {Lecture Notes in Computer Science},
  volume       = {10692},
  pages        = {81--87},
  publisher    = {Springer},
  year         = {2017},
  doi          = {10.1007/978-3-319-73915-1\_7},
}

@article{MAHDAVI2020163,
title = {Covering orthogonal polygons with sliding k-transmitters},
journal = {Theoretical Computer Science},
volume = {815},
pages = {163-181},
year = {2020},
issn = {0304-3975},
doi = {https://doi.org/10.1016/j.tcs.2020.02.008},
url = {https://www.sciencedirect.com/science/article/pii/S030439752030092X},
author = {Salma Sadat Mahdavi and Saeed Seddighin and Mohammad Ghodsi},
keywords = {Computational geometry, Art gallery, Covering, -transmitters, Sliding camera},
}

@ARTICLE{lee1986VertexNPhard,
  author={Lee, D. and Lin, A.},
  journal={IEEE Transactions on Information Theory}, 
  title={Computational complexity of art gallery problems}, 
  year={1986},
  volume={32},
  number={2},
  pages={276-282},
  doi={10.1109/TIT.1986.1057165}}

@article{Bonnet2020Parametrized,
author = {Bonnet, \'{E}douard and Miltzow, Tillmann},
title = {Parameterized Hardness of Art Gallery Problems},
year = {2020},
publisher = {Association for Computing Machinery},
address = {New York, NY, USA},
volume = {16},
number = {4},
issn = {1549-6325},
url = {https://doi.org/10.1145/3398684},
doi = {10.1145/3398684},
journal = {ACM Transactions on Algorithms},
articleno = {42},
numpages = {23}
}

@article{Rieck2024Dispersive,
author = {Rieck, Christian and Scheffer, Christian},
title = {The dispersive art gallery problem},
year = {2024},
issue_date = {Feb 2024},
publisher = {Elsevier Science Publishers B. V.},
address = {NLD},
volume = {117},
number = {C},
issn = {0925-7721},
url = {https://doi.org/10.1016/j.comgeo.2023.102054},
doi = {10.1016/j.comgeo.2023.102054},
journal = {Computational Geometry: Theory and Applications},
month = feb,
numpages = {16},
keywords = {Art gallery problem, Dispersion, Polyominoes, L 1-metric, r-visibility, Vertex guards}
}

@inproceedings{KrohnNilsson2012,
  title        = {The Complexity of Guarding Monotone Polygons},
  author       = {Erik Krohn and Bengt J. Nilsson},
  booktitle    = {Canadian Conference on Computational Geometry (CCCG)},
  year         = {2012},
  pages        = {167--172},
  address      = {Charlottetown, Prince Edward Island, Canada},
  publisher    = {CCCG}
}

@inproceedings{abrahamsen2018ETRhard,
author = {Abrahamsen, Mikkel and Adamaszek, Anna and Miltzow, Tillmann},
title = {The art gallery problem is $\exists\mathbb{R}$-complete},
year = {2018},
isbn = {9781450355599},
publisher = {Association for Computing Machinery},
address = {New York, NY, USA},
url = {https://doi.org/10.1145/3188745.3188868},
doi = {10.1145/3188745.3188868},
booktitle = {ACM SIGACT Symposium on Theory of Computing (STOC)},
pages = {65–73},
numpages = {9},
keywords = {Existential Theory of the Reals, Art Gallery Problem},
location = {Los Angeles, CA, USA},
series = {STOC 2018}
}

@InProceedings{stade2025interiorBoundaryHard,
  author =	{Stade, Jack},
  title =	{{The Point-Boundary Art Gallery Problem Is $\exists\mathbb{R}$-Hard}},
  booktitle =	{Symposium on Computational Geometry (SoCG)},
  pages =	{74:1--74:23},
  series =	{Leibniz International Proceedings in Informatics (LIPIcs)},
  ISBN =	{978-3-95977-370-6},
  ISSN =	{1868-8969},
  year =	{2025},
  volume =	{332},
  publisher =	{Schloss Dagstuhl -- Leibniz-Zentrum f{\"u}r Informatik},
  address =	{Dagstuhl, Germany},
  URL =		{https://drops.dagstuhl.de/entities/document/10.4230/LIPIcs.SoCG.2025.74},
  URN =		{urn:nbn:de:0030-drops-232269},
  doi =		{10.4230/LIPIcs.SoCG.2025.74},
  annote =	{Keywords: Art Gallery Problem, Complexity, ETR, Polygon}
}

@inproceedings{Kahan1996BitComplexity,
author = {Kahan, Simon and Snoeyink, Jack},
title = {On the bit complexity of minimum link paths: superquadratic algorithms for problems solvable in linear time},
year = {1996},
isbn = {0897918045},
publisher = {Association for Computing Machinery},
address = {New York, NY, USA},
url = {https://doi.org/10.1145/237218.237342},
doi = {10.1145/237218.237342},
booktitle = {Proceedings of the Twelfth Annual Symposium on Computational Geometry},
pages = {151–158},
numpages = {8},
location = {Philadelphia, Pennsylvania, USA},
series = {SCG '96}
}

@misc{merrild2024contiguous,
  title     = {The Contiguous Art Gallery Problem is Solvable in Polynomial Time},
  author    = {Merrild, Magnus Christian Ring and Rysgaard, Casper Moldrup and Schou, Jens Kristian Refsgaard and Svenning, Rolf},
  year      = {2024},
  archivePrefix = {arXiv},
  eprint       = {2412.13938}
}

@misc{biniaz2024contiguous,
  title     = {Contiguous Boundary Guarding},
  author    = {Biniaz, Ahmad and Maheshwari, Anil and Mitchell, Joseph S. B. and Odak, Saeed and Polishchuk, Valentin and Shermer, Thomas C.},
  year      = {2024},
  archivePrefix = {arXiv},
  eprint       = {2412.15053}
}

@article{DOBKIN1983,
title = {Fast detection of polyhedral intersection},
journal = {Theoretical Computer Science (TSC)},
volume       = {27},
pages        = {241--253},
year = {1983},
doi = {https://doi.org/10.1016/0304-3975(82)90120-7},
author = {David P. Dobkin and David G. Kirkpatrick}
}

@article{aggarwal1989finding,
  title={Finding minimal convex nested polygons},
  author={Aggarwal, Alok and Booth, Heather and O'Rourke, Joseph and Suri, Subhash and Yap, Chee K},
  journal={Information and Computation},
  volume={83},
  number={1},
  pages={98--110},
  year={1989},
  publisher={Elsevier},
doi = {https://doi.org/10.1016/0890-5401(89)90049-7}
}

@inproceedings{brodal2002dynamic,
  title={Dynamic planar convex hull},
  author={Brodal, Gerth St{\o}lting and Jacob, Riko},
  booktitle={Symposium on Foundations of Computer Science (FOCS)},
  publisher    = {{IEEE} Computer Society},
  pages={617--626},
  year={2002},
  organization={IEEE},
  doi={10.1109/SFCS.2002.1181985}
}

@article{YaoLowerBound,
  author       = {Andrew Chi{-}Chih Yao},
  title        = {Lower Bounds for Algebraic Computation Trees with Integer Inputs},
  journal      = {{SIAM} Journal on Computing},
  volume       = {20},
  number       = {4},
  pages        = {655--668},
  year         = {1991},
  url          = {https://doi.org/10.1137/0220041},
  doi          = {10.1137/0220041}
}

@article{Chan2019Time-Windowed,
title = "Two Approaches to Building Time-Windowed Geometric Data Structures",
author = "Chan, Timothy M. and John Hershberger and Simon Pratt",
year = "2019",
doi = "10.1007/s00453-019-00588-3",
language = "English (US)",
volume = "81",
pages = "3519--3533",
journal = "Algorithmica",
issn = "0178-4617",
publisher = "Springer New York",
number = "9",
}

@inproceedings{Palios2013-star-cover,
author = {Palios, Leonidas and Tzimas, Petros},
title = {Minimum r-Star Cover of Class-3 Orthogonal Polygons},
year = {2014},
isbn = {978-3-319-19314-4},
publisher = {Springer-Verlag},
address = {Berlin, Heidelberg},
url = {https://doi.org/10.1007/978-3-319-19315-1_25},
doi = {10.1007/978-3-319-19315-1_25},
booktitle = {International Workshop on Combinatorial Algorithms (IWOCA)},
pages = {286–297},
numpages = {12},
keywords = {Output-sensitive, Visibility, r-star, Decomposition, Cover, Orthogonal polygon},
location = {Duluth, MN, USA}
}

@article{Worman2007Decomposition,
author = {Worman, Chris and Keil, Mark},
title = {Polygon Decomposition and the Orthogonal Art Gallery Problem},
journal = {International Journal of Computational Geometry \& Applications},
volume = {17},
number = {02},
pages = {105-138},
year = {2007},
doi = {10.1142/S0218195907002264},
}

@inproceedings{Tomas2013ThinOrth,
author = {Tom\'{a}s, Ana Paula},
title = {Guarding thin orthogonal polygons is hard},
year = {2013},
isbn = {9783642401633},
publisher = {Springer-Verlag},
address = {Berlin, Heidelberg},
url = {https://doi.org/10.1007/978-3-642-40164-0_29},
doi = {10.1007/978-3-642-40164-0_29},
booktitle = {Fundamentals of Computation Theory (FCT)},
pages = {305–316},
numpages = {12},
location = {Liverpool, UK},
series = {FCT'13}
}

@InProceedings{BiedlMehrabi2016_rGuardingThinOrthogonal,
  author    = {Therese Biedl and Saeed Mehrabi},
  title     = {{On r-Guarding Thin Orthogonal Polygons}},
  booktitle = { International Symposium on Algorithms and Computation (ISAAC)},
  pages     = {17:1--17:13},
  series    = {Leibniz International Proceedings in Informatics (LIPIcs)},
  volume    = {64},
  year      = {2016},
  publisher = {Schloss Dagstuhl – Leibniz-Zentrum für Informatik},
  address   = {Dagstuhl, Germany},
  doi       = {10.4230/LIPIcs.ISAAC.2016.17},
  ISBN      = {978-3-95977-026-2},
  ISSN      = {1868-8969},
  url       = {https://drops.dagstuhl.de/entities/document/10.4230/LIPIcs.ISAAC.2016.17}
}

@article{Lee1979ptimal,
author = {Lee, D. and Preparata, F.},
year = {1979},
month = {07},
pages = {415-421},
title = {An Optimal Algorithm for Finding the Kernel of a Polygon},
volume = {26},
journal = {Journal of the ACM (JACM)},
doi = {10.1145/322139.322142}
}

@article{Laurentini1999GuardingWalls,
  author  = {Aldo Laurentini},
  title   = {Guarding the walls of an art gallery},
  journal = {The Visual Computer},
  year    = {1999},
  volume  = {15},
  number  = {6},
  pages   = {265--278},
  doi     = {10.1007/s003710050177},
  url     = {https://doi.org/10.1007/s003710050177}
}

@inproceedings{15D_terrain,
  author       = {James King and
                  Erik Krohn},
  title        = {Terrain Guarding is NP-Hard},
  booktitle    = {{ACM-SIAM} Symposium on Discrete
                  Algorithms ({SODA})},
  pages        = {1580--1593},
  publisher    = {{SIAM}},
  year         = {2010},
  doi          = {10.1137/1.9781611973075.128}
}

@article{s-covering,
  author       = {Rajeev Motwani and
                  Arvind Raghunathan and
                  Huzur Saran},
  title        = {Covering Orthogonal Polygons with Star Polygons: The Perfect Graph
                  Approach},
  journal      = {Journal of Computer and System Sciences},
  volume       = {40},
  number       = {1},
  pages        = {19--48},
  year         = {1990},
  doi          = {10.1016/0022-0000(90)90017-F}
}

@article{DBLP:journals/jda/BergDM17,
  author       = {Mark de Berg and
                  Stephane Durocher and
                  Saeed Mehrabi},
  title        = {Guarding monotone art galleries with sliding cameras in linear time},
  journal      = {Journal of Discrete Algorithms},
  volume       = {44},
  pages        = {39--47},
  year         = {2017},
  doi          = {10.1016/J.JDA.2017.04.005}
}

\end{document}